\def\ltsim{\raise 2pt \hbox {$<$} \kern-1.1em \lower 4pt \hbox {$\sim$}}
\def\ltapprox{\raise 2pt \hbox {$<$} \kern-1.1em \lower 5pt \hbox {$\approx$}}
\def\gtsim{\raise 2pt \hbox {$>$} \kern-1.1em \lower 4pt \hbox {$\sim$}}
\def\gtapprox{\raise 2pt \hbox {$>$} \kern-1.1em \lower 5pt \hbox {$\approx$}}
\newtheorem{theorem}{Theorem}[section]
\theoremstyle{remark}
\begin{document}

\begin{titlepage}
\pagestyle{empty}

\begin{center}

 { \Huge{\textsc{ Permutation Orbifolds   \vfill
                  in   \vfill
                  Conformal Field Theories   \vfill
                  and      \vfill
                  String Theory} \vfill}}

\vfill
\vfill
\vfill
\vfill
\vfill

{ \Huge{ \textit{Michele Maio} } }

\end{center}

\newpage

This work has been accomplished at the Dutch Institute for Subatomic Physics (Nikhef) in Amsterdam and has been financially supported by the Foundation for Fundamental Research of Matter (FOM) which is part of the Dutch Organization  for Scientific Research (NWO).
\vfill
\noindent Cover page: Marine Landscape; author: Peter van Buren.\\
This work has been printed by Ipskamp Drukkers.\\
ISBN 978-94-91211-93-5.
\vfill
\copyright Michele Maio 2011.\\
All rights reserved. Without limiting the rights under copyright reserved above, no part of this book may be reproduced, stored in or introduced into a retrieval system or transmitted, in any form or by any means (electronic, mechanical, photocopying, recording or otherwise) without the written permission of both the copyright owner and the author of the book.

\cleardoublepage

{\doublespacing
\begin{center}
{\LARGE \sc  Permutation Orbifolds   \vfill
                  in   \vfill
                  Conformal Field Theories   \vfill
                  and   \vfill
                  String Theory  \vfill
}
\end{center}
\vfill
\begin{center}
{\large \sc Een wetenschappelijke proeve op het gebied van de Natuurwetenschappen, Wiskunde en Informatica}
\end{center}
\vfill
\begin{center}
{\Large \sc Proefschrift}
\end{center}
\vfill
\begin{center}
{\large \sc
ter verkrijging van de graad van doctor\\
aan de Radboud Universiteit Nijmegen\\
op gezag van de rector magnificus\\
prof. mr. S.C.J.J. Kortmann,\\
volgens besluit van het college van decanen\\
in het openbaar te verdedigen op woensdag 5 oktober 2011\\
om 10:30 uur}
\end{center}
\begin{center}
door\\
{\large \sc Michele Maio\\
geboren op 22 Maart 1981\\
te Avellino (Itali\"{e})}
\end{center}
}

\vfill

\newpage

\begin{tabular}{ll}
{\sc \Large Promotor:} & {\sc \Large Prof. dr. A.N.J.J. Schellekens}\\
 &\\
 &\\
{\sc \Large Manuscriptcommissie} & {\sc \Large Prof. dr. R.H.P. Kleiss}\\
 & {\sc \Large Prof. dr. C. Schweigert} (Universit\"{a}t Hamburg)\\
 & {\sc \Large Prof. dr. E.P. Verlinde} (Universiteit van Amsterdam)\\
\end{tabular}

\vfill
\newpage

\cleardoublepage

\end{titlepage}

\large
\newpage

\baselineskip 4ex

\newcommand{\ltae}{\raisebox{-0.6ex}{$\,\stackrel
{\raisebox{-.2ex}{$\textstyle <$}}{\sim}\,$}}
\newcommand{\gtae}{\raisebox{-0.6ex}{$\,\stackrel
{\raisebox{-.2ex}{$\textstyle >$}}{\sim}\,$}}

\baselineskip 4ex

\cleardoublepage
\pagenumbering{Roman}

\begin{titlepage}
\thispagestyle{empty}
\raggedleft
\large

\textit{Chemile, to you again, for the last time.}\par
\small \textit{-M.}\par

\vfill

\large \textsc{Learn all the rules and}\par
\large \textsc{then break some of them}\par
\textit{Nepalese Tantra}

\mbox{}
\clearpage{\pagestyle{empty}\cleardoublepage}
\end{titlepage}

\cleardoublepage

\pagestyle{empty}

\begin{center}
{\huge \bf Publications}
\end{center}

\vfill

\begin{center}
This Ph.D. thesis is the outcome of three years of research carried out at the National Institute for Subatomic Physics (Nikhef) in Amsterdam (The Netherlands) in the field of Theoretical Physics.
It is based on the following publications:
\end{center}

\vfill

\begin{enumerate}
\item
M. Maio and A. N. Schellekens,\\
{\it Permutation Orbifolds of Heterotic Gepner Models},\\
Nucl. Phys. B {\bf 848} (2011) 594-628 [arXiv: 1102.5293 [hep-th]];
\item
M. Maio and A. N. Schellekens,\\
{\it Permutation Orbifolds of N=2 Supersymmetric Minimal Models},\\
Nucl. Phys. B {\bf 845} (2011) 212-245 [arXiv: 1011.0934 [hep-th]];
\item 
M. Maio and A. N. Schellekens,\\
{\it Formula for Fixed Point Resolution Matrix of Permutation Orbifolds},\\
Nucl. Phys. B {\bf 830} (2010) 116-152 [arXiv: 0911.1901 [hep-th]];
\item
M. Maio and A. N. Schellekens,\\
{\it Complete Analysis of Extensions of $D(n)_1$ Permutation Orbifolds},\\
Nucl. Phys. B {\bf 826} (2010) 511-521 [arXiv: 0907.3053 [hep-th]];
\item
M. Maio and A. N. Schellekens,\\
{\it Fixed Point Resolution in Extensions of Permutation Orbifolds},\\
Nucl. Phys. B {\bf 821} (2009) 577-606 [arXiv: 0905.1632 [hep-th]].
\end{enumerate}

\cleardoublepage

\phantomsection
\addcontentsline{toc}{chapter}{Table of Contents}
\tableofcontents
\cleardoublepage

\phantomsection
\addcontentsline{toc}{chapter}{List of Tables}
\listoftables
\cleardoublepage

\phantomsection
\addcontentsline{toc}{chapter}{List of Figures}
\listoffigures
\cleardoublepage

\pagestyle{headings}

\pagenumbering{arabic}

\chapter{Introduction}

{\flushright
{\small 
\textit{O voi che siete in piccioletta barca,}\par
\textit{desiderosi d'ascoltar, seguiti}\par
\textit{dietro al mio legno che cantando varca,}\par
\textit{tornate a riveder li vostri liti:}\par
\textit{non vi mettete in pelago, ch\'e, forse,}\par
\textit{perdendo me, rimarreste smarriti.}\par
\emph{(Dante, Div. Comm.)}\par
}
}

String Theory has enjoyed a growing interest and has attracted the attention of scientists over the last twenty years because it is a leading candidate for deriving all the four interactions from a single framework.

The Standard Model, built in the seventies as a theory of point-like particles, is the best working model that we have at our disposal at the moment for electro-magnetic, strong and weak interactions, but it is not completely satisfactory. First, because gravity is left out: in fact, there is a huge incompatibility between quantum mechanics and general relativity, due to the fact that their union results in a non-renormalizable theory, and this makes the inclusion of gravity impossible. Secondly, the Standard Model has too many free parameters that have to be determined empirically and no-one knows why, for example, the gauge group is what it is.

String Theory addresses both these problems. First of all, it includes quantum gravity in a consistent way, where General Relativity is re-obtained as a low-energy approximation. Secondly, it does not have any free dimensionless parameter (there is only one dimensionful parameter, the tension of the string or equivalently the string constant $\alpha'$, which sets the scale for the theory). The Standard Model parameters are still not determined, but reinterpreted as vacuum expectation values (v.e.v.'s) of several ``moduli'' fields. These fields specify couplings and background and are not fixed by the theory, since by definition they have a flat potential (assuming Supersymmetry, see below). One of them is the dilaton field whose expectation value determines the string coupling constant $g_s$, which enters the calculations of loop corrections as Feynman-like diagrams. Moreover, also the Standard Model gauge group, as it appears at low energies, is not fixed by the full theory.

However, the theory has a very serious problem, namely the presence of extra dimensions. This implies the existence of other dimensions besides the four that we observe in our spacetime. Within String Theory, spacetime is predicted to be ten dimensional. So, where are the extra dimensions and why do we not experience them? The reason is that they are probably curled up in some compact manifold of the size of the Planck length and hence too small to be detected, at least at the present.

String Theory's main constituents are not point particles but one-dimensional extended objects called strings. Actually this is not quite correct, because the theory is much richer: besides strings, it includes also any sort of $p$-branes, i.e. $p$-dimensional spatial membranes, which have their own dynamics.

Another striking feature is that there exist several equivalent ways of describing the same theory, each representation having its own name (see figure\footnote{Figure taken from the website $http://wordassociation1.net/symmetry.html$.} \ref{mtheory_pic}). They describe different ``\emph{corners} of our world'' and are related by an intricate web of dualities.
\begin{figure} [ht]
\centering
\includegraphics[scale=0.55]{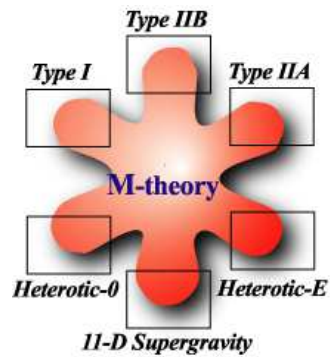}
\caption{M-Theory moduli space.}
\label{mtheory_pic}
\end{figure} 
Just to give some examples, Type $IIA$ and Type $IIB$ theories are \emph{T-dual} of each other, meaning that Type $IIA$ theory on a circle of radius $R$ is equivalent to Type $IIB$ theory on a circle of radius ${\alpha'/R}$. Analogously, $E_8\times E_8$ Heterotic theory is \emph{S-dual} to $SO(32)$ Heterotic theory, in the sense that $E_8\times E_8$ Heterotic theory at coupling $g_s$ is equivalent to $SO(32)$ Heterotic theory at coupling $1/g_s$; similarly, Type $IIB$ is self-dual under $S$-duality. 

An important feature of String Theory is Supersymmetry. Among other things, Supersymmetry implies the existence of additional matter: to each already-existing particle Supersymmetry associates a supersymmetric partner, whose spin differs by one half from the spin of that particle. Hence, each bosonic (fermionic) particle has a fermionic (bosonic) superpartner. 
Supersymmetry is important in String Theory for several reasons.
First of all, dark matter. Dark matter seems to exist in the universe and appears to require weakly interacting massive particles. Supersymmetric partners provide us with suitable dark matter candidates. 
Secondly, the hierarchy problem. In a quantum field theory, the Higgs mass diverges quadratically, making it hard to explain why it is actually so small. Supersymmetry instead allows us to cancel quadratic divergences in the calculation of loop corrections for the Higgs mass. These quadratic divergences originate from loop diagrams where fermions run in the loop. With Supersymmetry extra diagrams need to be considered, where also the bosonic partners of the fermions run in the loop, thus contributing with a minus sign to the total amplitude. The final divergence is only logarithmic and can be easily dealt with renormalization.
Thirdly, coupling unification. In supersymmetric extensions of the Standard Model, the superpartners contribute also to the beta function of the electromagnetic, strong and weak coupling constants, modifying their runnings such that at very high energy (of order $10^{16}$ GeV) they have the same value and hence are unified. Even if it does not have to be this way, this is often considered an extremely attractive feature of Supersymmetry.
Finally, non-physical tachyons. The construction of string spectra often produces tachyons. Supersymmetry helps in projecting out tachyons from the particle spectrum. Nevertheless, there are examples (e.g. the $O(16)\times O(16)$ heterotic string \cite{AlvarezGaume:1986jb,Dixon:1986iz}) with no Supersymmetry and also with no tachyons. 
Despite all these nice features of Supersymmetry, our world, in the way we experience it, is not supersymmetric and hence Supersymmetry must be broken.

The applications of String Theory extend in many directions. There are phenomenological directions, such as the construction of a supersymmetric Standard Model, with the inclusion of gravity and supersymmetry breaking at the TeV scale; there are highly theoretical directions related to \emph{the} possible formulation of the theory; there are connections with gauge theories and the $AdS/CFT$ correspondence; there are interesting applications to black holes, which represent a theoretical laboratory to test any quantum theory of gravity, with the inclusion of both quantum mechanics and general relativity, reproducing the original setup of the early universe, when gravity was as strong as the other forces.

In this thesis our main focus will be on mathematical aspects, in particular Conformal Field Theories (CFT's), and on the phenomenology of String Theory. These two topics are indeed closely connected. 
When we talk about phenomenology we are asking the question whether and how a ten-dimensional theory can reproduce a four-dimensional model at low energies with the right properties. It is by now clear to most people in the field that there does not exist a unique answer to this question: very many models can be constructed which possess the correct number of families and the correct gauge group, at least in the vicinity of the Standard Model.

The idea that only one way existed to obtain the Standard Model has been already given up long time ago. The reason for that is the huge amount of possibilities that are available in building four-dimensional string theories. This is what is known as the landscape. It seems unreasonable that only one out of maybe-infinitely many constructions would do the job. It is instead more reasonable to expect that there are many four dimensional models with Standard-Model-like features in the landscape. Then the correct question to ask in this case would not be which particular model is the real model, but rather how rare and how frequent certain properties (e.g. family number, gauge group, etc.) are. It would definitely be disappointing if it turns out that we live in the least probable universe!

The first problem one has to deal with is getting rid of the six extra dimensions. The standard geometric approach is to consider compactifications on ``small'' six-dimensional manifolds which preserve some supersymmetry. These manifolds are not completely arbitrary, but constrained by supersymmetry to be of a special type, the so-called Calabi-Yau manifolds \cite{Candelas:1985en}. By changing the compactification, the four-dimensional physics changes as well. However, in this approach, the family number is related to topological properties of the Calabi-Yau (in particular, its Euler number), which is normally much larger than three. Also, the typical gauge groups are too big and contain the standard model gauge group as a subgroup. Moreover, in terms of generating four-dimensional spectra, the geometric approach does not go very far.

Moduli fields are related to deformations of the Calabi-Yau manifold, controlling its size and shape.  Sometimes, for particular values of the parameters, which are v.e.v.'s of the moduli fields, the geometric description has an equivalent formulation in terms of Conformal Field Theory. It is already remarkable that the interacting CFT at those points can be solved exactly. In some ways the CFT approach is more general than the geometric one. The extra spatial dimensions are related to the central charge of the CFT and, when treated in this perspective, they do not need to admit a geometric interpretation at all. The power of CFT manifests itself when one builds four-dimensional theories. Through the formalism of simple-current extensions, a huge number of modular invariant partition functions (MIPF's), and hence spectra, can be built for any given CFT. Each of these so-called ``simple-current invariants'' gives rise to a spectrum with a given number of families and gauge group, whose likelihood within the landscape can be studied statistically. We will see how this is done in detail towards the end.

\section{This thesis}
In this thesis we consider the CFT approach to String Theory. As already mentioned, simple-current invariants will be the main tool. These are partition functions that exist because the CFT has very special fields, called simple currents, in its spectrum. Sometimes these simple currents admit ``fixed points''. Then the CFT built out of extensions has non-trivial modular matrices that are not known. The problem of determining these matrices is called the ``fixed point resolution'' \cite{Schellekens:1990xy}. We will define both simple currents and fixed points in the main chapters.

More precisely, we study permutations of identical CFT's and their orbifolds \cite{Klemm:1990df}, limiting ourselves to the order-two case. We address the problem of extensions of these permutations and resolve the fixed points of its simple current. This is a purely mathematical problem, but with interesting physical implications. As an application, we apply our results to string model building of four-dimensional spectra.

The structure of this thesis is as follows. We have divided it into four parts. Part I includes the first three chapters, Part II the following two. 
\begin{itemize}
\item Part I deals with two-dimensional Conformal Field Theories and in particular we define permutation orbifolds, simple-current extensions and fixed points.

\item Part II deals with applications of our results to String Theory and addresses the problem of constructing four-dimensional models using extensions of the permutation orbifold.

\item Part III summarizes our conclusions and contains discussions about additional research directions and future possible work.

\item Part IV contains some technical appendices. All the material that would have spoiled the readibility of the work has been collected here.
\end{itemize}
In chapter \ref{paper1} we introduce the subject of permutation orbifolds in conformal field theories. We establish our notation and define the problem. Simple currents arising in the orbifold have a very special structure: they are diagonal representations of the simple currents in the mother theory. In addition, they always have fixed points of various kind. We study the fixed point resolution for those currents and derive explicit expressions for the ``$S^J$'' matrices in some particular examples. Specifically, we address the problem when the mother theory is a current algebra of $SU(2)_k$ and $SO(N)_1$. These specific cases are interesting in their own right, since they involve very non-trivial tricks that will eventually lead to the final answer. \\
In chapter \ref{paper2} we give more examples of ``$S^J$'' matrices. In particular, we consider spinor currents of the $D(n)_1$ series, which have integer spin when $n$ is multiple of four and half-integer when $n$ is even but not a multiple of four. The main tool here is triality of $SO(8)$. Although half-integer spin currents cannot be used to extend the chiral algebra, when combined with other half-integer spin currents (for example in a CFT built as a tensor product of several blocks) they can give rise to integer-spin currents where the fixed point resolution becomes an issue. \\
In chapter \ref{paper3} we find a general formula for the resolution of fixed points in extensions of permutation orbifolds by its (half-)integer-spin simple current. This formula is based on an ansatz that we are able to infer from the examples studied in the two previous chapters. We check that our ansatz makes sense, namely that it gives a unitary and modular invariant $S$ matrix. We also compute the fusion rules for several conformal field theories, including cases with a huge number of primary fields, and find non-negative integer coefficients. We conclude that our ansatz provides us with a very robust formula for solving the fixed point problem in extended orbifolds. \\
In chapter \ref{paper4} we make a first move towards string theory. In the back of our minds we are thinking about Gepner models, hence we study here permutations of $N=2$ superconformal minimal models. We combine permutations and extensions and find a very interesting mathematical structure relating various conformal field theories. In particular, it turns out that the supersymmetric version of the $N=2$ orbifold is obtained by extending the non-supersymmetric orbifold by a very specific simple current. Moreover, we will see that in the supersymmetric orbifold the chiral extension transforms some fields into simple currents. This was not expected a priori. Hence these new currents will be called ``exceptional''. They have completely different origin from all the currents encountered so far and admit sometimes fixed points. The resolution of those fixed points is still an open problem. \\
In chapter \ref{paper5} we are finally able to study permutations in heterotic Gepner models. Our permutations will be of order two only. The spectra obtained with our CFT approach fully agree with those that were previously known in the literature. However, the power of simple currents in conformal field theory manifest itself at this point by making it possible to generate a huge number of four-dimensional modular invariant partition functions. Since standard Gepner models are not expected to produce a significant number of three-family models, we apply the so-called lifting procedure to them in order to make three families more and more frequent in this kind of four-dimensional string theory constructions. \\
In chapter \ref{conclusions} we conclude with some remarks and discussions about possible related work. \\
In the appendix we collect all the supporting material (e.g. tables, theorems) that is relevant but would have slowed down the reading of the manuscript.

Throughout this thesis, we consider mostly $\mathbb{Z}_2$ permutation orbifolds. Hence, often we will refer to it simply as the permutation orbifold, unless clearly stated otherwise.

\section{Notation}
In this section we summarize the notation that we use throughout this work about permutation orbifolds, $N=2$ minimal models and their permutations, Gepner models, simple current extensions.
\begin{itemize}
\item \underline{Permutation orbifold}\\
In the permutation orbifold $(\mathcal{A}\times\mathcal{A})/\mathbb{Z}_2$ various kinds of fields arise from the fields in the mother theory $\mathcal{A}$. We denote them as follows:
\begin{itemize}
\item diagonal: $(i,\psi)$, $\psi=0,\,1$,
\item off-diagonal: $\langle i,j\rangle$, $i\neq j$,
\item twisted: $\widehat{(i,\psi)}$, $\psi=0,\,1$,
\end{itemize}
with $i,\,j\in\mathcal{A}$.
In particular, the so-called \textit{un-orbifold current}, which is
\begin{itemize}
\item $(0,1)$, anti-symmetric representation of the identity,
\end{itemize}
is immediately relevant, since the extension by this field un-does the permutation orbifold and gives back the tensor product CFT.\\
The orbifold $S$ matrix was derived by Borisov, Halpern and Schweigert \cite{Borisov:1997nc}: we will often call it $S^{BHS}$.

\item \underline{$N=2$ minimal models}\\
$N=2$ superconformal minimal models are rational CFT's, fully specified by their ``level'' $k$, which fixes both the central charge $c=\frac{3k}{k+2}$ and the field content. Their primary fields are labelled by the multi-index 
\begin{equation}
\phi_{l,m,s}\equiv(l,m,s)\,,
\nonumber
\end{equation}
where
\begin{itemize}
\item $l=0,\dots,k$ is an $SU(2)_k$ label;
\item $m=-k+1,\dots,k+2$ is a $U(1)_{2(k+2)}$ label;
\item $s=-1,\dots,2$ is a $U(1)_{4}$ label.
\end{itemize}
Moreover, these labels satisfy a given field identification and obey a given constraint:
\begin{itemize}
\item $(l,m,s)\sim(k-l,m+k+2,s+2)$,
\item $l+m+s=0$ mod $2$.
\end{itemize}
Very special $N=2$ fields are:
\begin{itemize}
\item $0\equiv(0,0,0)$, identity;
\item $T_F\equiv(0,0,2)$, world-sheet supercurrent;
\item $S_F\equiv(0,1,1)$, spectral flow operator.
\end{itemize}

\item \underline{Permutations of $N=2$ minimal models}\\
In the study of permutation of $N=2$ minimal models a few other fields become important:
\begin{itemize}
\item $(T_F,0)$, symmetric representation of the world-sheet supercurrent: the extension by this current gives a non-supersymmetric CFT;
\item $(T_F,1)$, anti-symmetric representation of the world-sheet supercurrent: the extension by this current gives the super-symmetric orbifold;
\item $\langle 0,T_F\rangle$, the world-sheet supercurrent: it is a fixed point of both $(T_F,\psi)$ and it splits in two fields in those extensions;
\item $(S_F,0)$, the symmetric representation of the spectral-flow operator: it is used to impose space-time supersymmetry in the permuted Gepner model.
\end{itemize}

\item \underline{Gepner models}\\
Gepner models are tensor products of the space-time $SO(10)$ factor times $r$ internal $N=2$ minimal models, plus extensions by the subgroup generated by the space-time supercurrent and the world-sheet supercurrents
\begin{itemize}
\item $S_{\rm st}\otimes(S_F)^r$,
\item $V_{\rm st}\otimes (0 \otimes \dots \otimes T_{F,i} \otimes \dots \otimes 0)$, $i=1,\dots r$.
\end{itemize}
$S_{\rm st}$ and $V_{\rm st}$ are the spinor and vector representations of $SO(10)$, $T_{F,i}$ is the world-sheet supercurrent of the $i^{\rm th}$ internal $N=2$ factor. \\
Gepner models are conveniently labelled by their levels; a hat on one of the $k$'s denotes a lift on that factor:
\begin{itemize}
\item $(k_1,\dots,k_i,\dots,k_r)$,
\item $(k_1,\dots,\hat{k}_i,\dots,k_r)$.
\end{itemize}
Permuted Gepner models are denoted by brackets (a hat for the lifts):
\begin{itemize}
\item $(k_1,\dots,\langle k_i,k_i\rangle,\dots,k_r)$,
\item $(k_1,\dots,\langle k_i,k_i\rangle,\dots,\hat{k}_j,\dots,k_r)$.
\end{itemize}

\item \underline{Simple current extensions}\\
Simple currents will be generically denoted by $J$, unless we are talking about specific currents, in which case they will be denoted by their own names. Similarly, fixed points are generically denoted by $f$.\\
Quantities in theories $\tilde{\mathcal{A}}$ extended by simple currents are normally called by the same name they have in the original theory $\mathcal{A}$, but in addition they carry a tilde. For example, if $S$ is the S matrix of some CFT, then $\tilde{S}$ is the S matrix of the extended CFT.
\end{itemize}

\cleardoublepage

\part{CONFORMAL FIELD THEORY}
\chapter*{About Part I}

{\flushright
{\small 
\textit{I may climb perhaps to no great heights,}\par
\textit{but I will climb alone.}\par
\emph{(E. Rostand, Cyrano de Bergerac)}\par
}
}

Part I deals with two-dimensional Conformal Field Theories \cite{Belavin:1984vu}. CFT is in principle an independent subject in its own right, which shares many applications in other areas of Physics, from Condensed Matter to Quantum Information. Two-dimensional conformal systems are very special, because only in two dimensions the conformal group admits an infinite-dimensional algebra whose generators are the Virasoro operators. Supersymmetric CFT extensions contain the Virasoro algebra as a sub-algebra and can be treated similarly to non-supersymmetric CFT's. The existence of this well-defined mathematical structure allows us to split the theory in two (almost independent) sectors, one holomorphic (right-movers) and one anti-holomorphic (left-movers). Modular invariance of the partition function puts additional constraints on which left-moving representations can couple to which right-moving ones. 

Modular invariance means that the one-loop partition function is invariant under reparameterizations of the torus. Topologically different tori are characterized by inequivalent values of the modulus $\tau$, where inequivalent means that two values $\tau_1$ and $\tau_2$ are not related by an $SL(2,\mathbb{R})$ transformation, $\tau\rightarrow\frac{a\tau+b}{c\tau+d}$ ($ad-bc=1$).
Geometrically, the modular generators interchange the two fundamental cycles ($S$ transformation: $\tau\rightarrow -\frac{1}{\tau}$) or act as Dehn twists ($T$ transformation: $\tau\rightarrow \tau +1$) of the torus. Algebraically, the generators act on the characters of the theory. A character is defined as a trace over the full Hilbert space generated by the conformal algebra, which in the simplest case contains only the Virasoro operators:
\begin{equation}
\chi_i(\tau)=Tr_{\mathcal{H}_i} e^{2\pi i\tau(L_0-\frac{c}{24})}\,.
\end{equation}
The characters summarize all the information about the full representation, i.e. not just single primary fields but also their descendants, and suitable combinations of characters define a partition function. The generators $S$ and $T$ of the modular group act as a matrix representation on the characters:
\begin{equation}
\chi_i(-\frac{1}{\tau})=\sum_j S_{ij} \chi_j(\tau)\,,\qquad
\chi_i(\tau+1)=\sum_j T_{ij} \chi_j(\tau)
\end{equation}
where $T_{ij}=e^{2\pi i(h_i-\frac{c}{24})}\delta_{ij}$ is a diagonal matrix of phases depending on the weights $h_i$ of the representations of the CFT and $S$ is a symmetric and unitary matrix satisfying the constraints $(ST)^3=S^2$.

The $S$ matrix is a fundamental object in a CFT, because it determines the fusion rules of two representations
\begin{equation}
(i)\cdot(j)=\sum_{k}N_{ij}^{\phantom{ij}k} (k)\,,
\end{equation}
with positive-integer coefficients $N_{ij}^{\phantom{ij}k}$, via the Verlinde formula \cite{Verlinde:1988sn}. Some fields have simple fusion with any other field in the theory and they are called ``simple currents'' \cite{Schellekens:1990xy}. The word current is used to characterize these special fields, because they can be regarded as additional generators, which in turn can be used to enlarge the conformal algebra and define a new extended conformal field theory. Simple currents are probably the most powerful tool available in a CFT. The reason is that to each simple current one can associate a modular invariant partition function. In practical models the number of these currents can be huge and as a consequence the number of spectra that can be constructed is huge too. In a CFT integer-spin simple currents are mostly relevant, since fractional-spin simple currents act as automorphism of the chiral algebra, permuting the characters while preserving the fusion rules, so we will not consider them in this work.

Sometimes a simple current leaves a representation fixed. When this happens, the fixed representation is called a ``fixed point'' of the current. From the MIPF corresponding to a given current, one can organize characters into orbits of that current and define an ``extended CFT'', where the extension is provided by the simple current. Generically some fields will be projected out in the extension, but others may appear corresponding to resolved fixed points.

It is not always easy to infer the modular matrices of the new extended theory in terms of those of the original theory. In particular, if the current has got fixed points, then one has to go through a non-trivial formalism to be able to write down the $S$ matrix (on the contrary, the $T$ matrix is always trivially determined). The reason is that the fixed points get ``split'' in the extensions, in the sense that each of them generates many fields with identical characters on which the action of the $S$ matrix is ambiguous. This formalism involves a set of ``$S^J$'' matrices which can be used to parameterize the full $S$ matrix. These matrices are model dependent and need to be determined case by case. They are already known for Wess-Zumino-Witten (WZW) models, for coset theories and their extensions. The next case to consider is the permutation orbifold and it is addressed here.

Consider a generic CFT and take the tensor product with itself. The tensor product theory has got a manifest $\mathbb{Z}_2$ symmetry which interchanges the two factors. We call the theory where this symmetry has been modded out from the tensor product the ``permutation orbifold''. In this thesis we only consider $\mathbb{Z}_2$ permutation orbifolds. Both the spectrum and the modular matrices have been known for quite some time, but the formalism of simple-current extensions was missing until a couple of years ago. In fact, the reason is that the permutation orbifolds admits simple currents in its spectrum and those simple currents have fixed points. Hence, the set of $S^J$ matrices was needed in order to compute the full $S$ matrix. This was a highly non-trivial task, but finally we are now able to present the answer, in the form of an ansatz, for the $S^J$ matrices of the permutation orbifold for all its simple currents. 

The formula appearing at the end of Part I is very powerful and it works perfectly (in the sense of satisfying some very stringent constraints and giving positive-integer fusion coefficients), even for very non-trivial rational CFT's with a huge number of primary fields. 

\chapter{The Permutation Orbifold}
\label{paper1}

{\flushright
{\small 
\textit{Does this Aleph exist in the heart of a stone?}\par
\textit{Did I see it there in the cellar when I saw all things,}\par
\textit{and have I now forgotten it?}\par
\textit{Our minds are porous and forgetfulness seeps in;}\par
\textit{I myself am distorting and losing,}\par
\textit{under the wearing away of the years, the face of Beatriz.}\par
\textit{(J. L. Borges, El Aleph)}\par
}
}

\section{Introduction}
In this chapter we study the fixed point resolution in simple-current extensions of two-dimensional conformal field theories (CFT's) \cite{Belavin:1984vu}. CFT's are very well established tools not only within String Theory, but also in other systems such as Condensed Matter and Quantum Information, hence representing an independent field of study in their own right. 

Symmetries play a crucial role. A CFT is by definition built on conformal symmetries, which in two dimensions are generated by an infinite-dimensional algebra, which in the simplest case is just the Virasoro algebra, but it becomes larger when additional generators are included, as in the case of $N=1$ or $N=2$ super-Virasoro algebra. 

In this work we will consider additional symmetries. The first one is the permutation symmetry. Such a symmetry is present when a CFT is made out of tensor products of smaller CFT's and when there are at least two identical factors in the product that can be permuted. The theory that remains after that the permutation symmetry has been modded out is called the ``permutation orbifold''. 

The other symmetry that we will consider is more subtle \cite{Schellekens:1989am,Intriligator:1989zw}. It exists when the CFT admits simple currents, namely fields with simple fusion rules:
\begin{equation}
(J)\cdot (i)= (Ji)\,.
\end{equation}
The word ``simple'' refers to the fact that the fusion of the current $J$ with any other field $\phi_i\equiv i$ contains only one term $Ji$ on the r.h.s. The word ``current'' refers to the role of this field as a symmetry generator. Simple currents form a cyclic Abelian group under fusion multiplication, sometimes called the center of the CFT. We will normally consider \textit{rational} CFT's, which by definition have a finite number of fields. Acting by powers of $J$ allows us to organize fields into orbits $(i,Ji,J^2i,\dots,J^{N-1}i)$, where $N$ is the order of the current, i.e. $J^N=0$ (we denote the identity field by $0$). One can also define a charge associated to the current $J$: it is the monodromy charge $Q_J(i)$ that a field $i$ carries. By definition:
\begin{equation}
Q_J(i)=h_J+h_i-h_{Ji} \quad {\rm mod}\,\,\mathbb{Z}\,,
\end{equation}
$h_i$ being the weight of the primary field $i$. 
The quantity $e^{2\pi i Q_J(i)}$ can be regarded as a symmetry generator. In order to mod out this symmetry from the theory, one has to keep only states which are invariant under this generator, namely states with integer monodromy charge, project out everything else and finally add the twisted sector. The modded-out theory contains the integer-monodromy orbits as primary fields and is often referred to as the ``extended'' conformal field theory, because the algebra has been enlarged by the inclusion of the current generator. 

In this chapter we are going to combine both the simple current and the permutation symmetries in order to study extensions of the permutation orbifold. The generic set up is as follows. We  start with a given CFT, take the tensor product of $\lambda$ copies of it and mod out by the cyclic symmetry $Z_\lambda$, which generates the full permutation group $S_\lambda$. The field content of such cyclic orbifold theories was worked out already long ago by Klemm and Schmidt \cite{Klemm:1990df} who were able to read off the twisted fields using modular invariance. Later, Borisov, Halpen and Schweigert \cite{Borisov:1997nc} introduced an orbifold induction procedure, providing a systematic construction of cyclic orbifolds, including their twisted sector, and determining orbifold characters and, in the $\lambda=2$ case, their modular transformation properties. Generalizations to arbitrary permutation groups were done by Bantay \cite{Bantay:1997ek,Bantay:1999us}.

Extensions with integer spin simple currents \cite{Schellekens:1989am,Intriligator:1989zw} are essential tools in conformal field theories (see \cite{Schellekens:1990xy} for a review). In string theory, they appear when it is needed to make projections (e.g. GSO projection) or implement constraints (such as world-sheet supersymmetry constraints, or the so-called $\beta$-constraints in Gepner models \cite{Gepner:1987vz,Gepner:1987qi}, which impose world-sheet and space-time supersymmetry). Simple current extensions are also used to implement field identification in coset models \cite{Gepner:1989jq,Schellekens:1989uf}.

The modular $S$ and $T$ matrices of the extended theory can be easily derived from those of the original theory if all the orbits generated by the current $J$ have length strictly larger than one. Length-one orbits, denoted by $f$, are fixed points of $J$, namely $J\cdot f=f$. Fixed points exist only for currents with integer or half-integer spin. For integer-spin currents, fixed points are kept in the extension. In the modular invariant partition function (MIPF), the fixed point contribution always comes with an overall multiplicative factor, typically as
\begin{equation}
N_f \sum_f \bar{\chi}_f(\bar{\tau}) \chi_f(\tau)\,.
\end{equation}
The factor $N_f$ is interpreted as the number of fields $f_\alpha\equiv(f,\alpha)$, with $\alpha=1,\dots,N_f$, all having identical characters, in which $f$ is resolved. This means that in the extended theory the single field $f$ splits up into $N_f$ fields $f_\alpha$. The resolved fields $f_\alpha$ contribute to the partition function as
\begin{equation}
\label{paper1: fp mipf contribution}
\sum_\alpha \bar{\chi}_{f,\alpha}(\bar{\tau}) \chi_{f,\alpha}(\tau)\,,
\qquad \chi_{f,\alpha}(\tau)=m_\alpha \chi_f(\tau)\,,
\qquad \sum_\alpha (m_\alpha)^2=N_f\,.
\end{equation}
However, since there is a priori no information on how the modular matrix $S$ acts on the label $\alpha$, it will be generically undetermined. In literature, this problem is known as the fixed point resolution. When this is the case, the knowledge of the full $S$ matrix is parametrized by a set of  ``$S^J$'' matrices \cite{Fuchs:1996dd}, one for each simple current $J$: knowing all the $S^J$ matrices amounts to knowing the $S$ matrix of the extended theory. Fixed points can also appear for half-integer spin currents, and the corresponding matrices $S^J$ are important when these currents are combined to form integer spin currents. Furthermore, simple current fixed points and their resolution matrices are essential ingredients for determining the boundary coefficients in a large 
class of rational CFT's \cite{Bianchi:1991rd,Fuchs:2000cm}.

The determination of fixed point matrices $S^J$ was first considered in \cite{Schellekens:1989uf}. There an empirical approach was used, based on the information that these matrices must satisfy modular group properties. Hence an {\it ansatz} could be guessed in some simple cases from the known fixed point spectrum. These {\it ans\"atze} were proved and extended in \cite{Fuchs:1995zr}. Starting from these results, the $S^J$ matrices are now known in many cases, such as for WZW models \cite{Schellekens:1990xy,Schellekens:1999yg} and coset models \cite{Schellekens:1989uf}. 

Here we would like to determine the set of $S^J$ matrices for cyclic permutation orbifolds. In this work we will restrict ourselves to $Z_2$ permutation orbifolds of an original CFT and to order-two simple currents. We will manage to determine the $S^J$ matrices in a few, but interesting, cases, namely for the integer-spin currents of the $SU(2)_2$ WZW model and for the $B(n)_1$ and $D(n)_1$ series. The method we use is based on the fact that the extensions corresponding to these cases are CFT's whose $S$ matrix can also be obtained by other means and hence it is already known. However, even though strictly speaking the $S^J$ matrices are not needed to construct the $S$ matrix of these extension, the result still provides important new information. In particular, we expect that the solutions we present here for an infinite series of special examples will give insights into the general case, and, as we will see in chapter \ref{paper3}, will lead to a universal {\it ansatz} that can be checked explicitly.

This chapter is organized as follows.\\
In section \ref{Definition of the problem} we define the problem that we would like to address, namely the resolution of the fixed points in extensions of permutation orbifolds.\\
Before going into the details of the problem, in section \ref{Simple currents of the orbifold CFT} we study a bit more systematically the structure of simple currents and corresponding fixed points in orbifold CFT's. In particular, we will see which simple currents and fixed points can arise in the orbifold theory and how they are related to the simple currents and fixed points of the mother theory. This is an application of \cite{Borisov:1997nc}.\\
Section \ref{Example SU(2)k} provides an example where the mother theory is $SU(2)_k$.\\
Next we move to the main problem, i.e. the fixed point resolution in extensions of permutation orbifolds. We present the results in section \ref{Fixed point resolution in SU(2)k orbifolds} and section \ref{Fixed point resolution in SO(N)1 orbifolds} for $SU(2)_1$ and $SO(N)_1$. We say something about arbitrary level $k$ as well.\\
Our analysis of these special cases give crucial hints to determine the general formula, valid for any CFT.
The solution to the general problem will be given in chapter \ref{paper3}. Finally, we would like to remark again that in this work we will mostly be concerned with $\mathbb{Z}_2$ permutations ($\lambda=2$) and order-two currents ($J^2=0$).

The content of this chapter is based on \cite{Maio:2009kb}.

\section{The problem}
\label{Definition of the problem}
Given a certain CFT $\mathcal{A}$, we would like to look at the orbifold theory with $\lambda=2$: 
\begin{equation}
 \mathcal{A}_{\rm perm} \equiv (\mathcal{A}\times \mathcal{A})/\mathbb{Z}_2\,.
\end{equation}
Modding out by $\mathbb{Z}_2$ means that the spectrum must contain fields that are symmetric under the interchange of the two factors. This theory admits an untwisted and a twisted sector. The untwisted fields are those combinations of the original tensor product fields that are invariant under this flipping symmetry. Their weights are simply given by the sum of the two weights of each single factor. Twisted fields are required by modular invariance. In general, for any field $\phi_i$ in the original CFT $\mathcal{A}$, there are exactly $\lambda$ twisted fields in the orbifold theory, labelled by $\psi=0,1,\dots,\lambda-1$.

If there is any integer or half-integer spin simple current in $\mathcal{A}$, it gives rise to an integer spin simple current in the orbifold CFT, which can be used to extend $\mathcal{A}_{\rm perm}$. In the extension, some fields are projected out while the remaining organize themselves into orbits of the current. Typically untwisted and twisted fields do not mix among themselves. As far as the new spectrum is concerned, these orbits become the new fields of the extended orbifold CFT, but we do not normally know the new $S$ matrix. From now on we will write $\tilde{S}$ with a tilde to denote the $S$ matrix of the extended theory.

If there are no fixed points, i.e. orbits of length one, the $S$ matrix of the extended theory, $\tilde{S}$, is simply given by the $S$ matrix of the unextended theory (in case of permutation orbifolds it is the BHS $S$ matrix given in \cite{Borisov:1997nc}) multiplied by the order of the extending simple current. Unfortunately, often this is not the case: normally there will be fixed points and the extended $S$ matrix cannot be easily determined.

Using the formalism developed in \cite{Fuchs:1996dd}, we can trade our ignorance about $\tilde{S}$ with a set of matrices $S^J$, one for every simple current $J$, according to the formula
\begin{equation}
\label{main formula for f.p. resolution}
\tilde{S}_{(a,i)(b,j)}=\frac{|G|}{\sqrt{|U_a||S_a||U_b||S_b|}}\sum_{J\in G}\Psi_i(J) S^J_{ab} \Psi_j(J)^{\star}\,,
\end{equation}
These $S^J_{ab}$'s are non-zero only if both $a$ and $b$ are fixed points. This equation can be viewed as a Fourier transform and the $S^J$'s as Fourier coefficients of $\tilde{S}$. The prefactor is a group theoretical factor acting as a normalization and the $\Psi_i(J)$'s are the group characters acting as phases. In our calculations, where all the simple currents have order two, the normalization prefactor is $1/2$ and the group characters are just signs. 
As conjectured in \cite{Fuchs:1996dd} and proved in \cite{Bantay:1996jt}, the $S^J$ matrices describe the modular transformation properties of the one-point function on the torus with the insertion of the simple current $J(z)$.
Unitarity and modular invariance of $\tilde{S}$ implies unitarity and modular invariance of the $S^J$'s \cite{Fuchs:1996dd}:
\begin{equation}
\label{modular}
S^J\cdot (S^J)^\dagger=1 \qquad (S^J\cdot T^J)^3=(S^J)^2\,.
\end{equation}
Here $T^J$ denotes the $T$ matrix of the unextended theory restricted to the fixed points of $J$.

In this way, the problem of finding $\tilde{S}$ is equivalent to the problem of finding the set of matrices $S^J$. 

The matrices $S^J$ are restricted not only by modular invariance and unitarity, but also by the condition that the full matrix 
$\tilde{S}_{(a,i)(b,j)}$ acts on a set of characters with positive integer coefficients, that the Verlinde formula \cite{Verlinde:1988sn} yields non-negative
integer coefficients and that there is a corresponding set of fusing and braiding matrices that satisfy all hexagon and pentagon
identities. In other words, all the usual conditions of rational conformal
field theory should be  satisfied. However, all these additional constraints are very hard to check, and modular invariance and unitarity
are very restrictive already. Experience so far suggests that for generic formulas ({\it i.e.} formulas valid for an entire class, as opposed
to special solutions valid only for a single RCFT) this is sufficient. We do not know any general results concerning the uniqueness
of the solutions to (\ref{modular}), but there is at least one obvious, and irrelevant ambiguity. If $S^J$ satisfies (\ref{modular}), clearly
$U^{\dagger} S^J U$ satisfies it for any unitary matrix $U$ that commutes with $T$. Since we are aiming for a generic solution, we
may assume that $T$ is non-degenerate; accidental degeneracies in specific cases cannot affect a generic formula. This reduces $U$
to a diagonal matrix of phases. The matrix $\tilde{S}_{(a,i)(b,j)}$ must be symmetric, and this has implications for the symmetry of the
matrix $S^J$. In particular, if $J$ is of order 2 (the case considered here), the matrix $S^J$ must be 
symmetric itself \cite{Fuchs:1996dd}. This requirement reduces $U$ to a diagonal matrix of signs. These signs are irrelevant:
they simply correspond to a relabeling of the two components of each resolved fixed point field. Note that the matrix $S$ itself
also satisfies (\ref{modular}), but here there is no such ambiguity: $S$ acts on positive characters, and any non-trivial sign choice
would affect the positivity of $S_{0i}$. However, $S^J$ acts on {\it differences} of characters, and hence satisfies no such 
restrictions. 

In this chapter we want to address exactly this problem, but in the case of permutation orbifolds. Suppose we know (and we do!) the $S$ matrix of the orbifold theory, then extend it by any of its simple currents; what is the matrix $\tilde{S}$ of the new extended theory? Equivalently, given the fact that there will be fixed points in the extension, what are the matrices $S^J$ for all the integer spin simple currents $J$? Hence, we are dealing with the fixed point resolution in extensions of permutation orbifolds.

\section{Currents of $\mathcal{A}_{\rm perm}$}
\label{Simple currents of the orbifold CFT}
Consider a CFT $\mathcal{A}$ which admits a set of integer-spin simple currents $J$. This means that the $S$ matrix satisfies the sufficient and necessary condition \cite{Dijkgraaf:1988tf} $S_{J0}=S_{00}$, where $0$ denotes the identity field of $\mathcal{A}$. Every CFT has at least one simple current, namely the identity. Here we would like to determine the simple currents of the orbifold theory $\mathcal{A}_{\rm perm}$. The only thing we need is the orbifold $S$ matrix given by BHS \cite{Borisov:1997nc}. 
Recall that $\mathcal{A}_{\rm perm}$ has different kinds of fields: untwisted (which are of diagonal or off-diagonal type) and twisted and that the identity field of the orbifold theory is the symmetric representation of the identity ``$0$'' of the original CFT, here denoted by $(0,0)$.

It is probably useful to recall the BHS $S$ matrix. The convention for the orbifold fields is as follows. Orbifold twisted fields carry a hat: $\widehat{(i,\psi)}$; off-diagonal fields are denoted by $\langle i,j\rangle$, with $i\neq j$; diagonal fields by $(i,\psi)$. Here $i,j$ are fields of the mother theory and $\psi=0,\dots,\lambda-1$. 
The untwisted fields are those combinations of the original tensor product fields that are invariant under this flipping symmetry. Their weights are simply given by the sum of the two weights of each single factor. There are two kinds of untwisted fields:
\begin{itemize}
\item \textit{diagonal}, $(i,\chi)$, with $\chi=0,\,1$, corresponding to the combination $\phi_i\otimes\phi'_i + (-1)^\chi \phi'_i\otimes\phi_i$, where $\phi'_i$ denotes the first non-vanishing descendant of the $\mathcal{A}-$field $\phi_i$, ($\chi=0$ for the symmetric and $\chi=1$ for the anti-symmetric representations);
\item \textit{off-diagonal}, $\langle m,n\rangle$, with $m<n$, corresponding to the combination $\phi_m\otimes\phi_n + \phi_n\otimes\phi_m$.
\end{itemize}
Twisted fields are required by modular invariance \cite{Klemm:1990df}. In general, for any field $\phi_i$ in $\mathcal{A}$, there are two twisted fields in the orbifold theory, labelled by $\chi=0,1$. We denote twisted fields by $\widehat{(i,\chi)}$. 
The typical weights of the fields are:
\begin{itemize}
\item $h_{(i,\chi)}=2 h_i$
\item $h_{\langle i,j\rangle}=h_i+h_j$
\item $h\widehat{(i,\chi)}=\frac{h_i}{2}+\frac{c}{24}\frac{(\lambda^2-1)}{\lambda}+\frac{\chi}{\lambda}$
\end{itemize}
for diagonal, off-diagonal and twisted representations. Here, $h_i\equiv h_{\phi_i}$, $c$ is the central charge of $\mathcal{A}$ and $\lambda=2$.
Sometimes it can happen that the naive ground state has dimension zero: then one must go to its first non-vanishing descendant whose weight is incremented by integers. 

\def\ket#1{|#1\rangle}

There are two possible reasons why a ``naive" ground state dimension might vanish, so that the actual ground state weight is larger by some integer value. If a ground state $i$ has dimension one, the naive dimension of $(i,1)$ vanishes. The one has to go to the first non-vanishing excited state of $i$. Similarly, the conformal weight of an excited twist field $(\chi=1)$ is
larger than that of the unexcited one $(\chi=0)$ by half an integer, unless some odd excitations of the ground state vanish. In CFT, every state $| \phi_i\rangle$, except the vacuum, always has an excited state $L_{-1} | \phi_i\rangle$. Furthermore, in $N=2$ CFT's even the
vacuum has an excited state $J_{-1} \ket{0}$. Therefore, in $N=2$ permutation orbifolds, the conformal weights of all ground states is equal to the typical values given above, except when a state $\ket{i}$ has ground state dimension 1. Then the conformal weight is larger by one unit.

The orbifold $S$ matrix for $\lambda=2$ was derived by Borisov, Halpern and Schweigert \cite{Borisov:1997nc} and reads:
\begin{eqnarray}
\label{BHS off-diag}
S_{\langle i,j\rangle\langle p,q\rangle}&=&S_{ip}\,S_{jq}+S_{iq}\,S_{jp} \nonumber\\
S_{\langle i,j\rangle(p,\psi)}&=&S_{ip}\,S_{jp} \nonumber\\
S_{\langle i,j\rangle\widehat{(p,\psi)}}&=&0
\end{eqnarray}
\begin{eqnarray}
\label{BHS diag}
S_{(i,\psi)(j,\chi)}&=&\frac{1}{2}\,S_{ij}\,S_{ij} \nonumber\\
S_{(i,\psi)\widehat{(p,\chi)}}&=&\frac{1}{2}\,e^{2\pi i\psi/2} \,S_{ip}
\end{eqnarray}
\begin{eqnarray}
\label{BHS twisted}
S_{\widehat{(p,\psi)}\widehat{(q,\chi)}}&=&\frac{1}{2}\,e^{2\pi i(\psi+\chi)/2} \,P_{ip}
\end{eqnarray}
where the $P$ matrix is defined by $P=\sqrt{T}ST^2S\sqrt{T}$, as first introduced in \cite{Bianchi:1990yu}. Sometimes we will write $S^{BHS}$ to refer to the orbifold $S$ matrix.

\subsection{Simple currents}
Let us start with the off-diagonal fields of the orbifold and ask if any of them can be a simple current. If $i$ and $j$ are two arbitrary fields of the original CFT $\mathcal{A}$ and $\langle i,j\rangle$ the corresponding off-diagonal field in the orbifold, in order for the latter to be a simple current we have to demand that its $S^{BHS}$ matrix satisfies
\begin{equation}
 S_{\langle i,j\rangle(0,0)}=S_{(0,0)(0,0)}
\end{equation}
which, upon using BHS formula, amounts to satisfying the constraint
\begin{equation}
 S_{i0} S_{j0}=\frac{1}{2} S_{00} S_{00}
\end{equation}
for the $S$ matrix of the original CFT $\mathcal{A}$. This relation is never satisfied because of the constraint $S_{i0}\geq S_{00}$, which holds for unitary CFT's. Consequently there are no simple currents coming from off-diagonal fields.

Let us do the same analysis for twisted fields. Twisted fields are denoted by $\widehat{(k,\psi)}$, where $k$ is a field in $\mathcal{A}$ and $\psi=0,\,1$. Now the constraint 
 \begin{equation}
 S_{\widehat{(k,\psi)}(0,0)}=S_{(0,0)(0,0)}
\end{equation}
translates into
\begin{equation}
 \frac{1}{2}S_{k0}=\frac{1}{2} S_{00} S_{00}\,.
\end{equation}
This is also never satisfied, because of the same unitarity constraints as before. Once again there are no simple currents coming from twisted fields. 

Finally let us study the more interesting situation of diagonal fields as simple currents. A diagonal field is denoted by $(i,\psi)$, where $i$ is a field in $\mathcal{A}$ and $\psi=0,\,1$ corresponding respectively to symmetric and anti-symmetric representation. Here the constraint
\begin{equation}
 S_{(i,\psi)(0,0)}=S_{(0,0)(0,0)}
\end{equation}
gives
\begin{equation}
 \frac{1}{2}S_{i0} S_{i0}=\frac{1}{2} S_{00} S_{00}\,,
\end{equation}
which is satisfied if and only if $i$ is a simple current. 

Hence we conclude that, despite the fact that the existence of simple currents in the orbifold theory is in general related to the $S$ matrix of the original CFT, there always exist definite simple currents in the orbifold theory: they are the symmetric and anti-symmetric representations of those diagonal fields corresponding to the simple currents of the original theory. In particular, since in $\mathcal{A}$ there is at least one simple current, namely the identity, in $\mathcal{A}_{\rm perm}$ there will be at least two, namely $(0,0)$ (trivial, because it plays the role of the identity) and $(0,1)$. The latter, known as the un-orbifold current for reasons that will become clear later on, will turn out to play a crucial role.

We will soon see that this pattern is respected for $SU(2)_k$ WZW models. They admit one integer-spin simple current (the identity) for $k$ odd and two (one of which is again the identity) integer-spin simple currents for $k$ even. Consequently, we will always find $(0,0)$ and $(0,1)$ as orbifold simple currents when $k$ is odd; when $k$ is even, there will be two additional ones denoted by $(k,0)$ and $(k,1)$.

\subsection{Fixed points}
Given our simple currents of the $\mathcal{A}_{\rm perm}$ theory, hereafter denoted by $(J,\psi)$ with $J$ a simple current of $\mathcal{A}$ and $\psi=0\,,1$, we now move on to study the structure of their fixed points. For this purpose, the correct strategy is to compute the fusion coefficients. 

\subsubsection{Twisted sector}
Let us start from the twisted sector. For twisted fixed points we have to demand that
\begin{equation}
 N_{(J,\phi)\widehat{(f,\psi)}}^{\phantom{(J,\phi)\widehat{(f,\psi)}}\widehat{(f,\psi)}}=1\,.
\end{equation}
On the other hand, if $N$ is an arbitrary field of the orbifold theory, in terms of the $S$ and $P$ matrix of the original theory we have
\begin{eqnarray}
N_{(J,\phi)\widehat{(f,\psi)}}^{\phantom{(J,\phi)\widehat{(f,\psi)}}\widehat{(f,\psi)}} 
&=&  \sum_{N} 
\frac{S_{(J,\phi)N}S_{\widehat{(f,\psi)}N}S_{\phantom{\dagger}N}^{\dagger\phantom{N}\widehat{(f,\psi)}}}{S_{(0,0)N}}=
\nonumber\\  &=&  
\sum_{\langle p,q\rangle}
\frac{S_{(J,\phi)\langle p,q\rangle} S_{\widehat{(f,\psi)}\langle p,q\rangle} S_{\phantom{\dagger}\langle p,q\rangle}^{\dagger\phantom{\langle p,q\rangle}\widehat{(f,\psi)}}}{S_{(0,0)\langle p,q\rangle}}+
 \nonumber\\ &+&  
 \sum_{(j,\chi)}
\frac{S_{(J,\phi)(j,\chi)}S_{\widehat{(f,\psi)}(j,\chi)}S_{\phantom{\dagger}(j,\chi)}^{\dagger\phantom{(j,\chi)}\widehat{(f,\psi)}}}{S_{(0,0)(j,\chi)}}+
 \nonumber\\ &+& 
 \sum_{\widehat{(p,\chi)}}
\frac{S_{(J,\phi)\widehat{(p,\chi)}}S_{\widehat{(f,\psi)}\widehat{(p,\chi)}}S_{\phantom{\dagger}\widehat{(p,\chi)}}^{\dagger\phantom{\widehat{(p,\chi)}}\widehat{(f,\psi)}}}{S_{(0,0)\widehat{(p,\chi)}}}=
 \nonumber\\ 
&=& (BHS)= \nonumber\\ 
&=& \frac{1}{2} \sum_j 
 \left[ \frac{(S_{Jj})^2 }{(S_{0j})^2} S_{fj} S_{\phantom{\dagger}j}^{\dagger\phantom{j}f}+
 e^{i\pi\phi} \frac{S_{Jj}P_{fj} P_{\phantom{\dagger}j}^{\dagger\phantom{j}f}}{S_{0j}}  \right]\,.
\end{eqnarray}
More in general one has
\begin{equation}
 N_{(J,\phi)\widehat{(f,\psi)}}^{\phantom{(J,\phi)\widehat{(f,\psi)}}\widehat{(f',\psi')}} =
 \frac{1}{2} \sum_j 
 \left[ \frac{(S_{Jj})^2 }{(S_{0j})^2} S_{fj} S_{\phantom{\dagger}j}^{\dagger\phantom{j}f'}+
 e^{i\pi(\phi+\psi-\psi')} \frac{S_{Jj}P_{fj} P_{\phantom{\dagger}j}^{\dagger\phantom{j}f'}}{S_{0j}}  \right]\,.
\end{equation}

It is important to remember that here we want $\widehat{(f,\psi)}$ to be a fixed point of $(J,\phi)$, i.e.
\begin{equation}
 N_{(J,\phi)\widehat{(f,\psi)}}^{\phantom{(J,\phi)\widehat{(f,\psi)}}\widehat{(f',\psi')}} = 
 \delta_f^{f'} \delta_\psi^{\psi'}\,.
\end{equation}
By itself, $f$ does not have to be a fixed point of $J$ in the original theory.
For an arbitrary field $i$, the following is true \cite{Intriligator:1989zw,Schellekens:1989dq}:
\begin{equation}
\label{simple-current-fundamental-relation}
 \frac{S_{Ji}}{S_{0i}}=e^{{2 \pi i (h_J +h_i -h_{J\cdot i})}}\,.
\end{equation}
In the exponent, we recognize the monodromy charge $Q_J(i)$ of $i$ with respect to $J$:
\begin{equation}
Q_J(i)=h_J+h_i-h_{J\cdot i} \,\,\,\,{\rm mod}\,\,\mathbb{Z}\,.
\end{equation}
Now use formula (\ref{simple-current-fundamental-relation}) in the first sum. In the following, we will restrict ourselves to order-2 simple currents. Because of the square and the fact that the monodromy charge of $j$ is either integer of half-integer\footnote{For order-2 simple currents.}, the exponent cancels out. Then we are left with $S$ times $S^\dagger$, which gives $\delta_f^{f'}$. 

We need to be more careful with the second piece, which involves the integer-valued \cite{Pradisi:1995qy,Gannon:1999is} $Y_{Jf}^{\phantom{Jf}f'}$-tensor. Our constraint reads then
\begin{equation}
 \delta_f^{f'} \delta_\psi^{\psi'} = \frac{1}{2}\delta_f^{f'} + 
 e^{i\pi(\phi+\psi-\psi')}\frac{1}{2}Y_{Jf}^{\phantom{Jf}f'}\,,
\end{equation}
which reduces either to
\begin{equation}
 e^{i\pi\phi}Y_{Jf}^{\phantom{Jf}f'}=\delta_f^{f'}\qquad (\psi=\psi')\,,
\end{equation}
when $\psi=\psi'$, or to
\begin{equation}
 e^{i\pi(\phi+\psi-\psi')}Y_{Jf}^{\phantom{Jf}f'}=-\delta_f^{f'}\qquad (\psi\neq\psi')\,,
\end{equation}
when $\psi\neq\psi'$. 
Since we are considering currents with order 2, we can simplify the minus sign on the r.h.s. with $e^{i\pi(\psi-\psi')}$ on the l.h.s., thus re-obtaining the same expression of the case $\psi=\psi'$ for our constraint, which explicitly reads:
\begin{equation}
\label{Paper1: SPP/S}
 e^{i\pi\phi} \sum_j \frac{S_{Jj}P_{fj} P_{\phantom{\dagger}j}^{\dagger\phantom{j}f'}}{S_{0j}}=\delta_f^{f'}\,.
\end{equation}
In order to solve it, let us study for the moment the equation:
\begin{equation}
 \sum_j x_j P_{fj} P_{\phantom{\dagger}j}^{\dagger\phantom{j}f'}=\delta_f^{f'}\,,
\end{equation}
for some $x_j$. Define a vector $v_f$ with components
\begin{equation}
 (v_f)_j:=x_j P_{fj}\,.
\end{equation}
Then we have
\begin{equation}
 \sum_j (v_f)_j P_{\phantom{\dagger}j}^{\dagger\phantom{j}f'}=\delta_f^{f'}\,.
\end{equation}
The vector $v_f$ is then orthogonal to all the columns of the matrix $P$, except for the column $f$ with which it has unit scalar product. Since $P$ is unitary, this implies that
\begin{equation}
 (v_f)_j=P_{fj}\,,
\end{equation}
which by definition yields\footnote{A shorter derivation is the following. Consider a diagonal matrix $X$ whose diagonal entries are $x_j$. Then the constraint in matrix form is: $PXP^\dagger=1$. Recalling that $PP^\dagger=1$ by unitarity, one can write $P(X-1)P^\dagger=0$, which gives the solution $X=1$.}
\begin{equation}
 x_j=1 \qquad \forall j\,.
\end{equation}

Going back to our situation where $x_j=e^{i\pi\phi} S_{Jj}/S_{0j}$, we arrive at the final form of our constraint:
\begin{equation}
 e^{i\pi\phi}S_{Jj}=S_{0j}\,.
\end{equation}
Let us first notice that when $J$ is the identity, there is no news, since this constraint is either trivially satisfied (for $\psi=0$ all the twisted fields are fixed points of the identity) or impossible (for $\psi=1$ there are no fixed points coming from the twisted sector). When instead $J$ is not the identity, we find that $\widehat{(p,\chi)}$ is a fixed point of $(J,\psi)$ in the following cases (according to (\ref{simple-current-fundamental-relation})):
\begin{itemize}
\item if $\psi=0$, when $p$ has integer monodromy charge with respect to $J$, i.e. $Q_J(p)=0$;
\item if $\psi=1$, when $p$ has half-integer monodromy charge with respect to $J$, i.e. $Q_J(p)=\frac{1}{2}$.
\end{itemize}
These conditions hold for integer-spin currents. Generalized expressions will be needed for currents with half-integer spin. We will give them later.

\subsubsection{Off-diagonal fields}
Similar arguments apply for the untwisted sector. Starting with off-diagonal fixed points one has
\begin{eqnarray}
 N_{(J,\phi)\langle p,q\rangle}^{\phantom{(J,\phi)\langle p,q\rangle}\langle p,q\rangle} 
 &=& \sum_N 
 \frac{S_{(J,\phi)N}S_{\langle p,q\rangle N} 
S_{\phantom{\dagger}N}^{\dagger\phantom{N}\langle p,q\rangle}}{S_{(0,0)N}}=
 \nonumber\\  &=& 
 \sum_{\langle i,j\rangle}
 \frac{S_{(J,\phi)\langle i,j\rangle} S_{\langle p,q\rangle\langle i,j\rangle} 
S_{\phantom{\dagger}\langle i,j\rangle}^{\dagger\phantom{\langle i,j\rangle}\langle p,q\rangle}}{S_{(0,0)\langle i,j\rangle}}+
 \nonumber\\ &+&  
 \sum_{(i,\psi)}
\frac{S_{(J,\phi)(i,\psi)}S_{\langle p,q\rangle(i,\psi)} S_{\phantom{\dagger}(i,\psi)}^{\dagger\phantom{(i,\psi)}\langle p,q\rangle}}{S_{(0,0)(i,\psi)}}+
 \nonumber\\ &+&  
 \sum_{\widehat{(i,\psi)}}
\frac{S_{(J,\phi)\widehat{(i,\psi)}}S_{\langle p,q\rangle\widehat{(i,\psi)}} S_{\phantom{\dagger}\widehat{(i,\psi)}}^{\dagger\phantom{\widehat{(i,\psi)}}\langle p,q\rangle}}{S_{(0,0)\widehat{(i,\psi)}}}=
 \nonumber\\ &=&   (BHS) =\nonumber\\ &=&
 N_{Jp}^{\phantom{Jp}p} N_{Jq}^{\phantom{Jq}q} + N_{Jp}^{\phantom{Jp}q} N_{Jq}^{\phantom{Jq}p}\,.
\end{eqnarray}
This must be equal to $1$. Moreover $N_{ij}^{\phantom{ij}k}$ are positive integers. Hence we have two possibilities:
\begin{itemize}
 \item either
 \begin{equation}
\left\{
 \begin{array}{l}
   N_{Jp}^{\phantom{Jp}p} = N_{Jq}^{\phantom{Jq}q} = 1 
   \qquad \Rightarrow \,\,p \,\,\& \,\,q {\rm \,\,are\,\, fixed \,\,points\,\, of\,\,} J\\ 
   N_{Jp}^{\phantom{Jp}q} = N_{Jq}^{\phantom{Jq}p} = 0 
 \end{array}
\right.
\end{equation}
\item or
 \begin{equation}
 \left\{
  \begin{array}{l}
   N_{Jp}^{\phantom{Jp}p} = N_{Jq}^{\phantom{Jq}q} = 0 \\ 
   N_{Jp}^{\phantom{Jp}q} = N_{Jq}^{\phantom{Jq}p} = 1 
   \qquad \Rightarrow \,\,p \,\,\& \,\,q {\rm \,\,are\,\, in\,\,the\,\,same\,\,} J{\rm -orbit, i.e. \,\,} p=Jq
  \end{array}
 \right.
 \end{equation}
\end{itemize}

\subsubsection{Diagonal fields}
For diagonal fixed points one has
\begin{eqnarray}
 N_{(J,\phi)(i,\psi)}^{\phantom{(J,\phi)(i,\psi)}(i,\psi)} 
 &=& \sum_N 
 \frac{S_{(J,\phi)N}S_{(i,\psi)N}S_{\phantom{\dagger}N}^{\dagger\phantom{N}(i,\psi)}}{S_{(0,0)N}}=
 \nonumber\\  &=& 
 \sum_{\langle p,q\rangle}
\frac{S_{(J,\phi)\langle p,q\rangle}S_{(i,\psi)\langle p,q\rangle} S_{\phantom{\dagger}\langle p,q\rangle}^{\dagger\phantom{(i,\psi)}(i,\psi)}}{S_{(0,0)\langle p,q\rangle}}+
 \nonumber\\ &+&  
 \sum_{(j,\chi)}
\frac{S_{(J,\phi)(j,\chi)} S_{(i,\psi)(j,\chi)}S_{\phantom{\dagger}(j,\chi)}^{\dagger\phantom{(j,\chi)}(i,\psi)}}{S_{(0,0)(j,\chi)}}+
 \nonumber\\ &+&  
 \sum_{\widehat{(j,\chi)}}
\frac{S_{(J,\phi)\widehat{(j,\chi)}} S_{(i,\psi)\widehat{(j,\chi)}}S_{\phantom{\dagger}\widehat{(j,\chi)}}^{\dagger\phantom{\widehat{(j,\chi)}}(i,\psi)}}{S_{(0,0)\widehat{(j,\chi)}}}=
 \nonumber\\ &=&   (BHS) =\nonumber\\ &=&
 \frac{1}{2} N_{Ji}^{\phantom{Ji}i}(N_{Ji}^{\phantom{Ji}i} + e^{i\pi\phi})\,.
\end{eqnarray}
Again we must demand 
\begin{equation}
 N_{(J,\phi)(i,\psi)}^{\phantom{(J,\phi)(i,\psi)}(i,\psi)} =1\,;
\end{equation}
then the only solution is when\footnote{We can exclude the other possibility $\phi=1$ and $N_{Ji}^{\phantom{Ji}i}=2$, because $J$ is a simple current.} $N_{Ji}^{\phantom{Ji}i}=1$, i.e. $i$ is a fixed point of $J$, and $\phi=0$, i.e. these fixed points appear only for the symmetric diagonal representation of the simple current.

\section{Example: $SU(2)_k$}
\label{Example SU(2)k}
Here we consider some examples of the previous general theory. We take our CFT to be an $SU(2)_k$ WZW model and work out spectrum and fusion rules of the orbifold theory.

Let us recall a few facts about affine Lie algebras \cite{Knizhnik:1984nr,Gepner:1986wi}. In an affine Lie algebra with group $G$, the weights of the highest weight representations $\lambda$ are given by
\begin{equation}
h(\lambda)=\frac{\frac{1}{2}C(\lambda)}{k+g}\,,
\end{equation}
where $C(\lambda)$ denotes the quadratic Casimir eigenvalue, $g$ is the dual Coxeter number (equal to half the Casimir of the adjoint representation) and $k$ is the level. The central charge is
\begin{equation}
c(G,k)=\frac{k\,{\rm dim}\,G}{k+g}
\end{equation}
and the matrix element is
\begin{equation}
S(\lambda,\mu)=const \cdot \sum_w \epsilon(w)\exp\left(-\frac{2\pi i}{k+g}(w(\lambda+\delta),\mu+\delta)\right)\,.
\end{equation}
Here the sum is over all the elements of the Weyl group and $\epsilon$ is the determinant of $w$. The normalization constant is fixed by unitarity and the requirement $S_{00}>0$.

Now we can apply these general pieces of information to our $SU(2)_k$ models (and later to $B(n)_1$ and $D(n)_1$ series).

\subsection{Generalities about $SU(2)_k$ WZW model}
In the $SU(2)_k$ theory, the level $k$ specifies both the central charge
\begin{equation}
 c=\frac{3 k}{k+2}
\end{equation}
and the spectrum of the primary fields through their weights
\begin{equation}
 h_{2j} =\frac{j (j+1)}{k+2},\qquad 2j=0,1,\dots k.
\end{equation}
Moreover, the field corresponding to the last value $2j=k$ is a simple current\footnote{Note that $j$ is either integer or half-integer. An equivalent notation is to set $l=2j$, with $l=0,\dots,k$, and hence $h_l=\frac{l(l+2)}{4(k+2)}$.} of order two, the fusion being:
\begin{equation}
 (k)\times (2j) = (k-2j).
\end{equation}
Its weight is $h_{2j=k}=\frac{k}{4}$. This is integer or half-integer if $k$ is even. Furthermore, in the latter case, there is also a fixed point, given by the median value $2j=\frac{k}{2}$:
\begin{equation}
 (k)\times (\frac{k}{2}) = (\frac{k}{2}).
\end{equation}
There are no fixed points for odd $k$.

We can label these $k+1$ fields using their value of $j$. It will be convenient to call them
\begin{equation}
 \{\phi_{2j}\}=\{\phi_0 \,,\quad \phi_1\,,\quad\dots\quad\,,\quad \phi_k\,\}. 
\end{equation}
The $S$ matrix is given by \cite{Cappelli:1986hf}
\begin{equation}
 S_{2j,\,2m}=\sqrt{\frac{2}{k+2}}\sin{\left[\frac{\pi}{k+2}(2j+1)(2m+1)\right]}.
\end{equation}

\subsection{$SU(2)_k \otimes SU(2)_k /\mathbb{Z}_2$ Orbifold: field spectrum}
Now let us consider the orbifold theory at some particular level $k$. The notation we will be using is as follows. First of all we need to distinguish the three types of fields in the orbifold theory: diagonal, off-diagonal and twisted fields.

Diagonal fields are generated by taking the symmetric tensor product of each field in the original theory with itself or the antisymmetric tensor product with the same field with its first non-vanishing descendant. Hence there are $2(k+1)$ diagonal fields, that will be denoted as:
\begin{equation}
 (2j,\psi)\qquad\psi=0,\,1
\end{equation}
with $2j=0,1,\dots k$. Here $\psi=0$ ($\psi=1$) labels the symmetric (anti-symmetric) representation.
These fields have weights
\begin{equation}
 h_{(2j,\psi)}=2\,\frac{j(j+1)}{k+2}+\delta_{2j,0}\delta_{\psi,1}.
\end{equation}
The factor $2$ in front comes from the sum of weights of the fields appearing in the tensor product. In the anti-symmetric representation ($\psi=1$) of the identity ($2j=0$), one has to include the contribution to the weight coming from the Virasoro operators $L_{-1}$. The ground state is degenerate with dimension three due to the three $SU(2)$ generators.

Off-diagonal fields are obtained by taking the symmetric tensor product of each field in the original theory with a different field. Hence there are $\frac{k(k+1)}{2}$ off-diagonal fields, that will be denoted as:
\begin{equation}
 \langle\phi_{2i},\phi_{2j}\rangle\qquad 2i<2j.
\end{equation}
These fields have weights
\begin{equation}
 h_{\langle\phi_{2i},\,\phi_{2j}\rangle}=\frac{i(i+1)}{k+2}+\frac{j(j+1)}{k+2},
\end{equation}
which is simply the sum of the weights of the fields in the tensor product.

Twisted fields of any permutation orbifold theory were described in \cite{Klemm:1990df}. After adapting their result to our $\mathbb{Z}_2$ orbifold, we find that there are two twisted fields associated to each primary of the original theory. Hence there are $2(k+1)$ twisted fields, that will be denoted as:
\begin{equation}
 \widehat{(2j,\psi)}\qquad \psi=0,\,1,
\end{equation}
with $2j=0,1,\dots k$ as usual. Their weights are given by:
\begin{equation}
 h_{\widehat{(2j,\psi)}}=\frac{1}{2}\left[\frac{j(j+1)}{k+2}+\psi\right]+\frac{3k}{16(k+2)}.
\end{equation}

The next step is to compute the $S$ matrix for this orbifold theory using the BHS formulas (\ref{BHS off-diag}, \ref{BHS diag}, \ref{BHS twisted}). Using the Verlinde formula \cite{Verlinde:1988sn} we will then be able to compute the fusion rules, which will allow us to look for simple currents in the orbifold theory.

\subsection{$SU(2)_k \otimes SU(2)_k /\mathbb{Z}_2$ Orbifold: currents and fixed points}
\label{su2 fixed point table}
From the results corresponding to a few values of $k$, we can determine important generalizations for arbitrary $k$.\\
First of all, for all $k$ there is at least one non-trivial integer spin simple current, namely $(0,1)$ with $h=1$, whose fixed points are all the off-diagonal fields. Their number is $\binom{k+1}{2}=\frac{k(k+1)}{2}$.\\
In addition, if $k$ is even, there are  other two integer spin simple currents\footnote{These are actually the only ones with integer spin.}. They are the symmetric and anti-symmetric diagonal fields corresponding to the last value $2j=k$: $(k,0)$ and $(k,1)$, both with $h=\frac{k}{2}$. This reflects the general structure of the $SU(2)_k$ simple currents. Their fixed points are also easily determined. For the current $(k,0)$ they come from diagonal, off-diagonal and twisted fields according to some rules which are given below, while those of $(k,1)$ come only from off-diagonal and twisted fields.

Summarizing:

\begin{tabular}{l l}
&\\
 Simple current & Fixed point\\
\hline
$(0,1)$, \, $h=1$ & \textit{all the $\frac{k(k+1)}{2}$ off-diagonal fields}\\
$(k,0)$, \, $h=\frac{k}{2}$ & \textit{$2$ diag. + $\frac{k}{2}$ off-diag. + $(k+2)$ twisted fields}\\
$(k,1)$, \, $h=\frac{k}{2}$ & \textit{$\frac{k}{2}$ off-diag. + $k$ twisted fields}\\
&\\
\end{tabular}

The rule to construct the fixed points of the additional simple currents when $k$ is even is as follows. 

The diagonal fields appearing as fixed points of $(k,0)$ are always the two fields in the middle: $(\frac{k}{2},0)$ and $(\frac{k}{2},1)$. These are $\frac{k}{2}$ and have weights
\begin{equation}
 h_{(\frac{k}{2},0)}=h_{(\frac{k}{2},1)}=\frac{1}{8}\frac{k(k+4)}{k+2}\,.
\end{equation}

The off-diagonal fields appearing as fixed points are the same for both the two additional currents and are given by the fields $\langle\phi_{2i} ,\phi_{k-2i}\rangle$, i.e. the fields $2i$ and $k-2i$ belong to the same orbit under $J\equiv \phi_k$. The weights of these off-diagonal fixed points are:
\begin{equation}
 h_{\langle\phi_{2i} ,\phi_{k-2i}\rangle}=\frac{1}{k+2}\left[i^2+\left(\frac{k}{2}-i\right)^2+\frac{k}{2}\right],
\end{equation}
with $2i=0,1,\dots,k$.

The fixed points coming from the twisted sector are ``complementary'' for the two additional simple currents, in the sense that $(k,0)$ has $\widehat{(4j,\psi)}$, $\psi=0,\,1$ and $2j=0,1,\dots,k$, as fixed points\footnote{Explicitly, these fixed points are $\widehat{(0,\psi)},\,\,\widehat{(2,\psi)},\,\,\widehat{(4,\psi)},\,\dots,\,\widehat{(k,\psi)}$, $\psi=0,\,1$, with the first argument even. In total, there are $k+2$ of them.}, while $(k,1)$ has $\widehat{(4j+1,\psi)}$, $\psi=0,\,1$ and $2j=0,1,\dots,k-1$, as fixed points\footnote{Explicitly, these fixed points are $\widehat{(1,\psi)},\,\,\widehat{(3,\psi)},\,\,\widehat{(5,\psi)},\,\dots,\,\widehat{(k-1,\psi)}$, $\psi=0,\,1$, with the first argument odd. In total, there are $k$ of them.}. Their weights are:
\begin{equation}
h_{\widehat{(4j,\psi)}}=\frac{1}{2}\left[\frac{2j(2j+1)}{k+2}+\psi\right]+\frac{3}{16(k+2)}
\end{equation}
and
\begin{equation}
h_{\widehat{(4j+1,\psi)}}=\frac{1}{2}\left[\frac{1}{k+2}\left(2j+\frac{1}{2}\right)\left(2j+\frac{1}{2}+1\right)+\psi\right]+\frac{3}{16(k+2)}
\end{equation}
for $\widehat{(4j,\psi)}$ and $\widehat{(4j+1,\psi)}$ respectively.

\subsection{Fixed point resolution in $SU(2)_k$ orbifolds}
\label{Fixed point resolution in SU(2)k orbifolds}
We would like to determine the $S^J$ matrices corresponding to the simple currents given above using formula (\ref{main formula for f.p. resolution}) which relates them to the $S$ matrix of the extended theory via the group characters $\Psi_i(J)$. As we will now explain, we know what the $S^J$ matrix is in the case $J\equiv (0,1)$. It is given by an expression analogous to the off-diagonal/off-diagonal BHS $S$ matrix, but with a minus (instead of the plus) sign. This is a fortunate situation because the current $J\equiv (0,1)$ is omnipresent, since it appears for all values of the level $k$. The other two currents that appear occasionally are slightly more complicated since they involve twisted fields.

\subsection{$S^J$ matrices}
\subsubsection{$S^J$ matrix for $J\equiv (0,1)$}
The general procedure when we make an extension via integer spin simple currents is as follows: keep states that are invariant under the symmetry generated by the current, namely those with integer monodromy charge w.r.t. $J$, and organize fields into orbits. Fixed points are particular orbits: orbits with length one. 

Consider the current $J\equiv (0,1)$ of order $2$. The extension projects out the twisted fields, since they are all non-local w.r.t. this current. Only untwisted fields are left, both diagonal and off-diagonal. Off-diagonal fields are fixed points of $(0,1)$, so they get doubled by the extension, while diagonal fields group themselves into orbits of length two containing symmetric and anti-symmetric representation of each original field. It is interesting to see that the resulting theory is equal to the tensor product $SU(2)_k\otimes SU(2)_k$. What happens is the following. The length-two orbits come from diagonal fields and correspond to fields $\phi_{2i}\otimes \phi_{2i}$ of the tensor product, while the two fields coming from the fixed points correspond to $\phi_{2i}\otimes \phi_{2j}$ and $\phi_{2j}\otimes \phi_{2i}$ (with $2i \neq 2j$) of the tensor product. The weights indeed match exactly. So in the end we have the result:
\begin{equation}
\left(\mathcal{A}\otimes \mathcal{A}/\mathbb{Z}_2 \right)_{(0,1)}=\mathcal{A}\otimes \mathcal{A} ·
\end{equation}
The subscript $(0,1)$ means that we are taking the extension by the $(0,1)$ current.
This result is not limited to $\mathcal{A}=SU(2)_k$, but is true for any rational CFT. The reason is that this simple current extension is in fact the inverse of the permutation orbifold procedure. This justifies the name of \textit{un-orbifold} current to denote the field $(0,1)$. The argument follows from the
fact that the permutation orbifold splits the original chiral algebra in a symmetric and an anti-symmetric part,
and the representation space of the current $(0,1)$ is precisely the latter. By extending the chiral algebra with
this current we re-constitute the original chiral algebra of $\mathcal{A}\otimes \mathcal{A}$. This result extends
straightforwardly to the other representations, and of course the twisted field must be projected out, since by construction 
they are non-local with respect to $\mathcal{A}\otimes \mathcal{A}$.

Resolving the fixed points is equivalent to finding a set of $S^J$ matrices such that
\begin{equation}
\label{main formula for f.p. resolution 2}
\tilde{S}_{(a,i)(b,j)}=\frac{|G|}{\sqrt{|U_a||S_a||U_b||S_b|}}\sum_{J\in G}\Psi_i(J) S^J_{ab} \Psi_j(J)^{\star}\,,
\end{equation}
where $\tilde{S}$ is the full extended $S$ matrix, $a$ and $b$ denote the fixed points of $J$, while $i$ and $j$ the fields into which the fixed points are resolved. For $J\equiv (0,1)$ we know that the extended theory is the tensor product theory, whose $S$ matrix is the tensor product of the $S$ matrices of the two factors. When we extend w.r.t. $(0,1)$, only two terms contribute on the r.h.s., namely $S^0\equiv S^{BHS}$ and $S^J$. The indices $a$ and $b$ run over the off-diagonal fields. Hence it is natural to write down the following ansatz for $S^J$ for $J=(0,1)$:
\begin{equation}
\label{SJ_(offdiag.-offdiag.)}
S^J_{\langle mn\rangle\langle pq\rangle}=S_{mp}S_{nq}-S_{mq}S_{np}\,.
\end{equation}
This is unitary and satisfies the modular constraint $(S^JT^J)^3=(S^J)^2$. Here $S_{mp}$ is the $S$ matrix of the original theory\footnote{As an exercise, one could try to write this $S^J$ matrix explicitly for $k=2$. With our conventional choice for the labels of the fields, it turns out to be numerically equal to minus the $S$ matrix of the original $SU(2)_2$ theory isomorphic to the Ising model: $S^J=-S_{SU(2)_2}$.\label{footnote_SJ}}. Note that there is an apparent sign ambiguity: the matrix elements
depend on the labelling of the off-diagonal fields, because the field $\langle p,q\rangle$ might just as well have been labelled $\langle q,p\rangle$. According to our previous discussion, this is irrelevant, since it merely amounts to a basis choice among the two split fields originating from $\langle p,q\rangle$. It is easy to check that the matrix $\tilde S$ computed with  (\ref{main formula for f.p. resolution}) is indeed the one of the tensor product, {\it i.e.} $S_{mp}S_{nq}$.

\subsubsection{$S^J$ matrix for $J\equiv (k,0)$}
The order-$2$ current $J\equiv (k,0)$ arises only when $k$ is even, so in this subsection we will restrict to such values. The first thing we need to do is to determine the orbits of the current, since they become the fields of the extended theory.

Either by looking at explicit low values of $k$ or by general arguments, one can observe a few facts about orbits of $J\equiv (k,0)$.\\
First, form the diagonal sector, $J$ couples symmetric (anti-symmetric) representation of a field $\phi_{2j}$ with symmetric (anti-symmetric) representation of its image $J\cdot \phi_{2j}=\phi_{k-2j}$ into length-2 orbits. In particular, the field $(\frac{k}{2},0)$ can couple only to itself, hence it must be a fixed point. Similarly for the field $(\frac{k}{2},1)$. So, there are exactly $k$ length-2 orbits and two fixed points coming from diagonal fields.\\
Secondly, from the off-diagonal sector, only $\langle\phi_{2i},\phi_{2j}\rangle$ with $2i$ and $2j$ either both even or both odd  survive the projection, because only those have a well-defined monodromy charge. Moreover, $J$ couples the field $\langle\phi_{2i},\phi_{2j}\rangle$ with its image $J\cdot\langle\phi_{2i},\phi_{2j}\rangle=\langle\phi_{k-2i},\phi_{k-2j}\rangle$. In particular, fields of the form $\langle\phi_{2j},\phi_{k-2j}\rangle$ must be fixed points. There are $\frac{1}{2}\left((\frac{k}{2})^2-\frac{k}{2}\right)$ length-2 orbits and $\frac{k}{2}$ fixed points coming from off-diagonal fields. In this formula, we divide by $2$ because generically fields are coupled into orbits. The contribution within brackets comes from the number of off-diagonal fields that are not projected out minus the number of off-diagonal fixed points.\\
Finally, there are no orbits coming from the twisted sector, but only $k+2$ fixed points.

Putting everything together, the theory extended by $J\equiv (k,0)$ has $3k+8$ fixed points (i.e. twice the number given in section \ref{su2 fixed point table}) plus $\frac{k(k+6)}{8}$ length-2 orbits.

Here an ansatz for $S^J$ is at this stage unknown for generic values of the level $k$. However, we have worked out the simpler case $k=2$, which is closely related to the Ising model. We will discuss it shortly.

\subsubsection{$S^J$ matrix for $J\equiv (k,1)$}
Also in this case $k$ must be even in order for the current $J\equiv (k,1)$ to be present. The orbit structure here is, \textit{mutatis mutandis}, analogous to the previous one.\\
From the diagonal sector, $J$ couples symmetric (anti-symmetric) representation of a field $\phi_{2j}$ with anti-symmetric (symmetric) representation of its image $J\cdot \phi_{2j}=\phi_{k-2j}$ into length-2 orbits. In particular, the fields $(\frac{k}{2},0)$ and $(\frac{k}{2},1)$ must couple to each other, contributing an additional orbit. There are exactly $k+1$ length-2 orbits and no fixed points coming from diagonal fields.\\
From the off-diagonal sector, one has the same length-2 orbits as for the previous case above. So there are again $\frac{1}{2}\left((\frac{k}{2})^2-\frac{k}{2}\right)$ orbits and $\frac{k}{2}$ fixed points coming from off-diagonal fields.\\
As above, there are no orbits coming from the twisted sector, but only $k$ fixed points.

Putting everything together, the theory extended by $J\equiv (k,1)$ has $3k$ fixed points (i.e. twice the number as given in section \ref{su2 fixed point table}) plus $\frac{k(k+6)}{8}+1$ length-2 orbits.

Also here an ansatz for $S^J$ is at this stage unknown, except for the case $k=2$, given below.

\subsection{$S^J$ matrices for $k=2$}
\label{SJ of SU2 level 2}
The case $k=2$ is particularly simple to analyze, because the matrices involved are relatively small, but it is also very interesting, because it gives us a lot of insights. 

First of all, as we have already remarked in footnote \ref{footnote_SJ},
\begin{equation}
S^{J\equiv (0,1)}=-S_{SU(2)_2}\,,
\end{equation}
resolving the three fixed points of the current $(0,1)$ (see table \ref{table S^J1_k=2}). It is important to remark here that the form of the $S^J$ matrix depends very much on the choice of the labels for the mother CFT: once we reshuffle the labeling of the original $SU(2)_2$ spectrum, the $S^J$ does not simply change by a reshuffling of its rows and columns since some entries can drastically change as well.
\begin{table}[ht]
\caption{Fixed point Resolution: Matrix $S^{J\equiv (0,1)}$}
\centering
\begin{tabular}{c|c c c}
\hline \hline\\
$S^{J\equiv (0,1)}$ & $\langle\phi_0,\phi_1\rangle$ & $\langle\phi_0,\phi_2\rangle$ & $\langle\phi_1,\phi_2\rangle$ \\ 
\hline &&&\\
$\langle\phi_0,\phi_1\rangle$   & $-\frac{1}{2}$        & $-\frac{\sqrt{2}}{2}$  & $-\frac{1}{2}$         \\
$\langle\phi_0,\phi_2\rangle$   & $-\frac{\sqrt{2}}{2}$ & $0$                    & $\frac{\sqrt{2}}{2}$   \\
$\langle\phi_1,\phi_2\rangle$   & $-\frac{1}{2}$        & $\frac{\sqrt{2}}{2}$   & $-\frac{1}{2}$
\end{tabular}
\label{table S^J1_k=2}
\end{table}

By numerical checks of unitarity and modular properties\footnote{Namely, one checks that $S^J$ satisfies $S^J (S^J)^\dagger=1$ and $(S^J T^J)^3=(S^J)^2$.}, one can guess the $S^J$ matrix of the third current $(2,1)$:
\begin{equation}
S^{J\equiv (2,1)}=-S_{SU(2)_2}\,.
\end{equation}
This is numerically equal to the previous one if we order the fixed point fields according to their conformal weights  in the same way as for the first current (see table \ref{table S^J3_k=2}). Indeed, the origin of this equality is that these two extensions are isomorphic to each other, having their fixed points and orbits equal weights.
\begin{table}[ht]
\caption{Fixed point Resolution: Matrix $S^{J\equiv (2,1)}$}
\centering
\begin{tabular}{c|c c c}
\hline \hline\\
$S^{J\equiv (2,1)}$ & $\widehat{(1,0)}$ & $\langle\phi_0,\phi_2\rangle$ & $\widehat{(1,1)}$ \\ 
\hline &&&\\
$\widehat{(1,0)}$   & $-\frac{1}{2}$        & $-\frac{\sqrt{2}}{2}$  & $-\frac{1}{2}$         \\
$\langle\phi_0,\phi_2\rangle$   & $-\frac{\sqrt{2}}{2}$ & $0$                    & $\frac{\sqrt{2}}{2}$   \\
$\widehat{(1,1)}$   & $-\frac{1}{2}$        & $\frac{\sqrt{2}}{2}$   & $-\frac{1}{2}$
\end{tabular}
\label{table S^J3_k=2}
\end{table}

It is a bit more complicated to determine the $S^J$ matrix of the second current $(2,0)$. We would like to use the main formula (\ref{main formula for f.p. resolution 2}) where we need the $S$ matrix of the extended theory. Observe that the extended theory has 16 primaries, of which $2 \times 7$ come from the seven fixed points of $J$, all with known conformal weights. Moreover, it also has central charge $c\leq 3$. There are not many options one has to consider. Indeed, one can show that the extended theory coincides with the tensor product theory $SU(3)_1 \times U(1)_{48}$ extended by a particular integer spin simple current of order three. We denote it here by $(1,16)$. It has no fixed points and its $S$ matrix is known. Explicitly:
\begin{equation}
(SU(2)_2\times SU(2)_2/\mathbb{Z}_2)_{(2,0)}=(SU(3)_1\times U(1)_{48})_{(1,16)}.
\end{equation}
Using (\ref{main formula for f.p. resolution 2}), we can now determine the unknown $S^{J\equiv (2,0)}$ by brute-force calculation. The result is given in table \ref{table S^J2_k=2} (one can find more details in the original paper \cite{Maio:2009kb}).
\begin{table}[ht]
\caption{Fixed point Resolution: Matrix $S^{J\equiv (2,0)}$}
\centering
\begin{tabular}{c|c c c c c c c}
\hline \hline\\
$S^{J\equiv (2,0)}$ & $(1,0)$ & $(1,1)$ & $\langle\phi_0,\phi_2\rangle$ & $\widehat{(0,0)}$ & $\widehat{(0,1)}$ & $\widehat{(2,0)}$ & $\widehat{(2,1)}$ \\ 
\hline &&&\\
$(1,0)$             & $2 i a$  & $2 i a$  & 0        & $2 i b$  & $-2 i b$ & $-2 i b$ & $2 i b$ \\
$(1,1)$             & $2 i a$  & $2 i a$  & 0        & $-2 i b$ & $2 i b$  & $2 i b$  & $-2 i b$\\
$\langle\phi_0,\phi_2\rangle$   & 0        & 0        & 0        & $2 i a$  & $-2 i a$ & $2 i a$  & $-2 i a$\\
$\widehat{(0,0)}$   & $2 i b$  & $-2 i b$ & $2 i a$  & $-2 i d$ & $-2 i d$ & $2 i c$  & $2 i c$ \\
$\widehat{(0,1)}$   & $-2 i b$ & $2 i b$  & $-2 i a$ & $-2 i d$ & $-2 i d$ & $2 i c$  & $2 i c$\\
$\widehat{(2,0)}$   & $-2 i b$ & $2 i b$  & $2 i a$  & $2 i c$  & $2 i c$  & $2 i d$  & $2 i d$\\
$\widehat{(2,1)}$   & $2 i b$  & $-2 i b$ & $-2 i a$ & $2 i c$  & $2 i c$  & $2 i d$  & $2 i d$
\end{tabular}
\label{table S^J2_k=2}
\end{table}
The numbers $a,\,b,\,c,\,d$ above are given by: $a=\frac{1}{4}$, $b=\frac{1}{4\sqrt{2}}$, $c=\frac{\sqrt{2-\sqrt{2}}}{8}$, $d=\frac{\sqrt{2+\sqrt{2}}}{8}$. One can check that the matrix above is unitary, modular invariant and produces sensible fusion coefficients.

A few remarks are in order. First, it is interesting to observe that the numbers $a$ and $b$ are related to the $S$ matrix of the original $SU(2)_2$ CFT, while $c$ and $d$ come from the corresponding $P$ matrix, $P=T^{1/2}ST^2ST^{1/2}$.

Second, this matrix is not the only possible one. There in fact exists a few other
consistent\footnote{I.e. unitary, modular invariant and producing non-negative integer fusion coefficients.} possibilities for $S^J$ where some entries have different sign, due to other sign conventions in (\ref{main formula for f.p. resolution 2}) for the split fixed points..

\section{Example: $SO(N)_1$}
\label{Fixed point resolution in SO(N)1 orbifolds}
Another interesting example of fixed point resolution that we have worked out is the $SO(N)_1$ permutation orbifold. This is a relatively straightforward case since we know the extended theories of all of its integer spin simple current extensions. In fact, they can be derived from the same arguments given in section \ref{SJ of SU2 level 2} for the $SU(2)_2$ permutation orbifold. In the easier cases, the $S^J$ matrix can be computed using (\ref{SJ_(offdiag.-offdiag.)}), since the extension of the orbifold theory gives back the tensor product theory (or a theory isomorphic to it); in more complicated situations, the $S^J$ matrix can be derived from (\ref{main formula for f.p. resolution 2}) and the knowledge of the full, i.e. extended, $S$ matrix via the embedding that we have mentioned before. This embedding works as follows:
\begin{equation}
\label{SO embedded in SU}
\xymatrix{ 
 SO(N)_{perm} \ar[r] \ar[dr]_{ext} & SO(2N) \\
   & SU(N)\times U(1)  \ar[u]_{ext'}  }
\end{equation}
i.e. the extension of the permutation orbifold gives $SU(N)\times U(1)$ whose extension (with another particular current) is $SO(2N)$, the group where the permutation orbifold is embedded.

Let us remind the reader a few facts about these two CFT's \cite{Knizhnik:1984nr,Gepner:1986wi}. The $U(1)_R$ CFT at radius $R$ has central charge $c=1$, $R$ primary fields labelled by $u=0,1,\dots,R-1$ with weight
\begin{equation}
h_u=\frac{u^2}{2R}\,\,{\rm mod}\, \mathbb{Z}.
\end{equation}
Its $S$ matrix and corresponding fusion rules are given by
\begin{eqnarray}
S_{uu'}=\frac{1}{\sqrt{R}}\,e^{-2\pi i \frac{uu'}{R}},\\
(u)\cdot(u')=(u+u')\,\,{\rm mod}\,R.
\end{eqnarray}
The $SU(N)_1=A(N-1)_1$ CFT has central charge $c=N-1$, $N$ primary fields labelled by $s=0,1,\dots,N-1$ with weight
\begin{equation}
h_s=\frac{s^2(N-1)}{2N}\,\,{\rm mod}\, \mathbb{Z}.
\end{equation}
Its $S$ matrix and corresponding fusion rules are given by
\begin{eqnarray}
S_{ss'}=\frac{1}{\sqrt{N}}\,e^{2\pi i \frac{ss'}{N}},\\
(s)\cdot(s')=(s+s')\,\,{\rm mod}\,N.
\end{eqnarray}

For our study of $SO(N)$ at level one, we only need to determine the level of the $SU(N)$ and the radius of the $U(1)$ factors. After a few trials, it is not difficult to convince ourselves that the level of the $SU(N)$ factor is one and the radius of the $U(1)$ factor is $16 N$, while the integer spin simple current (with order $N$) that we need to extend this product group in order to get $SO(2N)$ is\footnote{It is convenient to label fields in the tensor product by pairs $(s,u)$, with $s$ and $u$ labeling fields of the two factors. Sometimes other labels can be used, e.g. one single label $l$, with $l=s\cdot R + u$ or vice versa $s=l\,{\rm mod}\,R$ and $u=\left[\frac{l}{R}\right]$, squared brackets denoting the integer part.} $(\#,16)$, where the first entry denotes a particular field of the $SU(N)_1$ CFT depending\footnote{E.g. for low values of $N$, $\#=4$.} on the value $N$ and the second entry another particular, but given, field of the $U(1)_{16N}$ CFT. Explicitly,
\begin{equation}
(SO(N)_1\times SO(N)_1/\mathbb{Z}_2)_{\rm ext}=(SU(N)_1\times U(1)_{16N})_{\rm ext'}\,.
\end{equation}
The $S$ matrix of the tensor product theory is simply the tensor product of the two $S$ matrices, $S^{\otimes}_{(s,u)(s',u')}=S_{ss'}S_{uu'}$, while the $S$ matrix of the extended theory, $\tilde{S}$, is the tensor product $S$ matrix multiplied by the order $N$ of the current \cite{Schellekens:1990xy}. Hence the $S$ matrix of the extended tensor product $(SU(N)_1\times U(1)_{16N})_{(\#,16)}$ is:
\begin{equation}
\label{S matrix of ext SUN times U1}
\tilde{S}_{(su)(s'u')}=\frac{1}{4}\,\exp\left\{\frac{2\pi i}{N}\left(ss'-\frac{uu'}{16}\right)\right\}\,,
\end{equation}
where the factor $N$ in the denominator is cancelled by the order $N$ in the numerator. This gives the following fusion rules:
\begin{equation}
 (s,u)\cdot (s',u')=((s+s')\,{\rm mod}\,N,(u+u')\,{\rm mod}\,16N)\,.
\end{equation}

Recall that in the extended theory only certain fields $(s,u)$ appear, namely those with integer monodromy charge with respect to the current $(\#,16)$. It is given by 
\begin{equation}
\label{monodromy in SO(N)}
 Q_{(\#,16)}(s,u)=-\frac{\#\cdot s(N-1)+u}{N} \,{\rm mod}\,\mathbb{Z}\,.
\end{equation}
This allows us to analytically relate the labels $s$ and $u$ of the fields in the extension to the fields in the permutation orbifolds, by comparing the weights of the fields in the permutation, $\{h_{\rm perm}\}$, with the ones in the extension, $h_{s,u}=h_s+h_u$, and choosing $s$ and $u$ such that (\ref{monodromy in SO(N)}) is satisfied. This will be crucial when we use (\ref{main formula for f.p. resolution 2}).

Let us move now to study the fixed point resolution of the $SO(N)_1$ permutation orbifolds, distinguishing the case of $N$ even and $N$ odd.

\subsection{$B(n)_1$ series}
The $B(n)_1=SO(N)_1$, $N=2n+1$, series has central charge $c=\frac{N}{2}$ and three primary fields $\phi_i$ with weight $h_i=0,\frac{1}{2},\frac{N}{16}$ ($i=0,1,2$ respectively). The $S$ matrix is the same as the Ising model, as shown in table \ref{table S_Bn_1}.
\begin{table}[ht]
\caption{$S$ matrix for $B(n)_1$}
\centering
\begin{tabular}{c|c c c}
\hline \hline\\
$S_{B(n)_1}$ & $h=0$ & $h=\frac{1}{2}$ & $h=\frac{N}{16}$ \\ 
\hline &&&\\
$h=0$             & $\frac{1}{2}$        & $\frac{1}{2}$         & $\frac{\sqrt{2}}{2}$    \\
$h=\frac{1}{2}$   & $\frac{1}{2}$        & $\frac{1}{2}$         & $-\frac{\sqrt{2}}{2}$   \\
$h=\frac{N}{16}$  & $\frac{\sqrt{2}}{2}$ & $-\frac{\sqrt{2}}{2}$ & $0$
\end{tabular}
\label{table S_Bn_1}
\end{table}

The $B(n)_1$ series has two simple currents\footnote{And only two, because $N$ is odd. This will be different for the $D(n)_1$ series.}, namely the fields with $h_0=0$ (the identity) and $h_1=\frac{1}{2}$. 
In the tensor product they give rise to integer spin simple currents and can both be used to extend the permutation orbifold. Hence, according to our notation, $(B(n)_1)_{\rm perm}$ has four integer spin simple currents arising from the symmetric and anti-symmetric representations of $\phi_0$ and $\phi_1$. Explicitly they are: $(0,0)$, $(0,1)$, $(1,0)$ and $(1,1)$. This situation is very similar to the one already studied in section \ref{SJ of SU2 level 2}. The extension w.r.t. the identity $(0,0)$ is trivial. The extension w.r.t. the current $(0,1)$ projects out all the twisted sector and gives back the tensor product theory $B(n)_1\times B(n)_1$; the fixed points are all the three off-diagonal fields ($h_{\langle0,1\rangle}=\frac{1}{2}$, $h_{\langle0,2\rangle}=\frac{N}{16}$, $h_{\langle1,2\rangle}=\frac{N}{16}+\frac{1}{2}$) and hence the corresponding $S^J$, with $J=(0,1)$, is given by (\ref{SJ_(offdiag.-offdiag.)}). \\
Also easy is the extension w.r.t. the current $(1,1)$: it is indeed isomorphic to the previous one. The fixed points are the off-diagonal field $\langle\phi_0,\phi_1\rangle$ ($h=\frac{1}{2}$) and the two twisted fields coming from $\phi_2$ (with $h=\frac{N}{16}$ and $\frac{N}{16}+\frac{1}{2}$). All their weights are equal to the weights of the fixed points of the current $(0,1)$, hence,  if we label them according to $h$, the $S^J$ matrix for the  current $(1,1)$ is numerically the same as for $(0,1)$. \\
A bit more involved is the $S^J$ matrix for the current $(1,0)$. For this, we need to use the main formula (\ref{main formula for f.p. resolution 2}).

\subsubsection{$(B(n)_1)_{\rm perm}$ $S^J$ matrix for $J=(1,0)$}
There are seven fixed points for the current $J=(1,0)$ of the permutation orbifold $(B(n)_1)_{\rm perm}$, coming from all possible sectors. From the  diagonal fields, we have $(2,0)$ and $(2,1)$ (both have $h=\frac{N}{8}$), from the off-diagonal $\langle\phi_0,\phi_1\rangle$ (with $h=\frac{1}{2}$) and from the twisted $\widehat{(0,0)}$ ($h=\frac{N}{32}$), $\widehat{(0,1)}$ ($h=\frac{N}{32}+\frac{1}{2}$), $\widehat{(1,0)}$ ($h=\frac{N+8}{32}$) and $\widehat{(1,1)}$ ($h=\frac{N+8}{32}+\frac{1}{2}$). We know the original $S$ matrix for these fields, given by $S^{BHS}$. We also know the $S$ matrix of the extended theory, $\tilde{S}$ as in (\ref{S matrix of ext SUN times U1}), given by the embedding (\ref{SO embedded in SU}). Hence, to obtain the desired matrix, we can use the simplified version of the main formula (\ref{main formula for f.p. resolution 2}) which reads:
\begin{equation}
\label{main formula for f.p. resolution simplified}
\tilde{S}_{(a,i)(b,j)}=\frac{1}{2}\left[S^{BHS}_{ab}+\Psi_i(J)S^J_{ab}\Psi_j(J)^\star \right]\,.
\end{equation}

Before giving the $S^J$ matrix, there is a very important issue that we should cover first. We mentioned before that the labels of the permutation and those of the extension are different but related. How can we exactly relate them? Recall that in the extension fields are defined by orbits of the current, with all the fields in the same orbit having same weight (modulo integer) and same $S$ matrix (see \cite{Schellekens:1990xy}). Within each orbit in the extended theory, we choose the field with lowest weight as representative of the split fields coming from the fixed point resolution. According to this convention, every fixed point gets split in two fields $(s_1,u_1)$ and $(s_2,u_2)$ given by:
\begin{itemize}
\item ${\rm if \,\,n=3,4,7,8,11,12,\dots} \Leftrightarrow\, 
{\rm if\,\,} \left[\frac{n-1}{2}\right] {\rm is \,\, odd \,\,}$
\begin{eqnarray}
(2,0) \,\, \longrightarrow&  (0,2N) & \&\qquad (0,14N) \nonumber \\
(2,1) \,\, \longrightarrow&  (2,14N+8) & \&\qquad (N-2,2N-8) \nonumber
\end{eqnarray}
\item ${\rm if \,\,n=5,6,9,10,13,14\dots} \Leftrightarrow\, 
{\rm if\,\,} \left[\frac{n-1}{2}\right] {\rm is \,\, even \,\,}$
\begin{eqnarray}
(2,0) \,\, \longrightarrow&  (2,14N+8) & \&\qquad (N-2,2N-8) \nonumber \\
(2,1) \,\, \longrightarrow&  (0,2N) & \&\qquad (0,14N) \nonumber
\end{eqnarray}
\item ${\rm for \,\, all \,\, n}$
\begin{eqnarray}
\langle\phi_0,\phi_1\rangle \,\, \longrightarrow&  (1,4) & \&\qquad (N-1,16N-4) \nonumber \\
\widehat{(0,0)} \,\, \longrightarrow&  (0,N) & \&\qquad (0,15N) \nonumber \\
\widehat{(0,1)} \,\, \longrightarrow&  (2,15N+8) & \&\qquad (N-2,N-8) \nonumber \\
\widehat{(1,0)} \,\, \longrightarrow&  (N-1,N-4) & \&\qquad (1,15N+4) \nonumber \\
\widehat{(1,1)} \,\, \longrightarrow&  (3,12-N) & \&\qquad (1,N+4) \nonumber
\end{eqnarray}
\end{itemize}
This table also fixes the order of which field we call ``split field $1$'' and ``split field $2$''. We must use fields only from the first set or only from the second set when computing $S^J$. Both the two sets will give the same result, but we cannot choose field representative randomly without losing unitarity and/or modular invariance. It is interesting to check that the orbits corresponding to the two split fields are ``conjugate'' of each other, in the sense that $s_1+s_2 =0$ mod $N$ and $u_1+u_2=0$ mod $16 N$.

The $S^J$ matrix is now given below. It is expressed in terms of the $S$ and $P$ matrices\footnote{The $P$ matrix is $P=T^{1/2}ST^2ST^{1/2}$ and for the $B(n)_1$ series reads:
\begin{equation}
 P=\left(
\begin{array}{ccc}
\cos\left(\frac{\pi N}{8}\right) & \sin\left(\frac{\pi N}{8}\right)  & 0 \\
\sin\left(\frac{\pi N}{8}\right) & -\cos\left(\frac{\pi N}{8}\right) & 0 \\
0                                & 0                                 & 1 
\end{array}
\right)\,,\nonumber
\end{equation}
where $N=2n+1$.}
of the mother $B(n)_1$ theory; also a sign $\epsilon$ appears, depending on the value of $N=2n+1$, $\epsilon=(-1)^{\left[\frac{n-1}{2}\right]}$, square brackets denoting the integer part. We have checked that it is unitary ($S^J (S^J)^{\dagger}=1$), modular invariant ($(S^J T^J)^3=-1=(S^J)^2$) and it gives correct fusion coefficients.

\begin{eqnarray}
S^J_{(2,0)(2,0)} &=& -\frac{1}{2}\,i^N \nonumber \\
S^J_{(2,0)(2,1)} &=& \frac{1}{2}\,i^N \nonumber \\
S^J_{(2,0)\langle\phi_0,\phi_1\rangle} &=& -\frac{1}{2} -S_{20}\,S_{21}
                        \,\,=\,\, 0 \nonumber \\
S^J_{(2,0)\widehat{(0,0)}} &=& -\epsilon \,\frac{1}{2} \,e^{\frac{\epsilon\pi iN}{4}} -\frac{1}{2}\,S_{20}
                        \,\,=\,\, -i\,\, \frac{1}{2} \sin \left(\frac{\pi N}{4}\right) \nonumber \\
S^J_{(2,0)\widehat{(0,1)}} &=& -\epsilon \,\frac{1}{2} \,e^{-\frac{\epsilon\pi iN}{4}} -\frac{1}{2}\,S_{20}
                        \,\,=\,\, i\,\, \frac{1}{2} \sin \left(\frac{\pi N}{4}\right) \nonumber \\
S^J_{(2,0)\widehat{(1,0)}} &=& \epsilon \,\frac{1}{2} \,e^{\frac{\epsilon\pi iN}{4}} -\frac{1}{2}\,S_{21}
                        \,\,=\,\, i\,\, \frac{1}{2} \sin \left(\frac{\pi N}{4}\right) \nonumber \\
S^J_{(2,0)\widehat{(1,1)}} &=& \epsilon \,\frac{1}{2} \,e^{-\frac{\epsilon\pi iN}{4}} -\frac{1}{2}\,S_{21}
                        \,\,=\,\, -i\,\, \frac{1}{2} \sin \left(\frac{\pi N}{4}\right) \nonumber
\end{eqnarray}
\begin{eqnarray}
S^J_{(2,1)(2,1)} &=& -\frac{1}{2}\,i^N \nonumber \\
S^J_{(2,1)\langle\phi_0,\phi_1\rangle} &=& -\frac{1}{2} -S_{20}\,S_{21}
                        \,\,=\,\, 0 \nonumber \\
S^J_{(2,1)\widehat{(0,0)}} &=& \epsilon \,\frac{1}{2} \,e^{-\frac{\epsilon\pi iN}{4}} +\frac{1}{2}\,S_{20}
                        \,\,=\,\, -i\,\, \frac{1}{2} \sin \left(\frac{\pi N}{4}\right) \nonumber \\
S^J_{(2,1)\widehat{(0,1)}} &=& \epsilon \,\frac{1}{2} \,e^{\frac{\epsilon\pi iN}{4}} +\frac{1}{2}\,S_{20}
                        \,\,=\,\, i\,\, \frac{1}{2} \sin \left(\frac{\pi N}{4}\right) \nonumber \\
S^J_{(2,1)\widehat{(1,0)}} &=& -\epsilon \,\frac{1}{2} \,e^{-\frac{\epsilon\pi iN}{4}} +\frac{1}{2}\,S_{21}
                        \,\,=\,\, i\,\, \frac{1}{2} \sin \left(\frac{\pi N}{4}\right) \nonumber \\
S^J_{(2,1)\widehat{(1,1)}} &=& -\epsilon \,\frac{1}{2} \,e^{\frac{\epsilon\pi iN}{4}} +\frac{1}{2}\,S_{21}
                        \,\,=\,\, -i\,\, \frac{1}{2} \sin \left(\frac{\pi N}{4}\right) \nonumber
\end{eqnarray}
\begin{eqnarray}
S^J_{\langle\phi_0,\phi_1\rangle\langle\phi_0,\phi_1\rangle} &=& \frac{1}{2} -(S_{00}\,S_{11}+S_{01}\,S_{01}) \,\,=\,\, 0 \nonumber \\
S^J_{\langle\phi_0,\phi_1\rangle\widehat{(0,0)}} &=& -\frac{i}{2} \nonumber \\
S^J_{\langle\phi_0,\phi_1\rangle\widehat{(0,1)}} &=& \frac{i}{2} \nonumber \\
S^J_{\langle\phi_0,\phi_1\rangle\widehat{(1,0)}} &=& -\frac{i}{2} \nonumber \\
S^J_{\langle\phi_0,\phi_1\rangle\widehat{(1,1)}} &=& \frac{i}{2} \nonumber
\end{eqnarray}
\begin{eqnarray}
S^J_{\widehat{(0,0)}\widehat{(0,0)}} &=& \frac{1}{2} \,e^{-\frac{\pi iN}{8}} -\frac{1}{2}\,P_{00} 
                                  \,\,=\,\, -i\,\, \frac{1}{2} \sin \left(\frac{\pi N}{8}\right) \nonumber \\
S^J_{\widehat{(0,0)}\widehat{(0,1)}} &=& -\frac{1}{2} \,e^{\frac{\pi iN}{8}} +\frac{1}{2}\,P_{00} 
                                  \,\,=\,\, -i\,\, \frac{1}{2} \sin \left(\frac{\pi N}{8}\right) \nonumber \\
S^J_{\widehat{(0,0)}\widehat{(1,0)}} &=& \frac{1}{2}\,i \,e^{-\frac{\pi iN}{8}} -\frac{1}{2}\,P_{01} 
                                  \,\,=\,\, i\,\, \frac{1}{2} \cos \left(\frac{\pi N}{8}\right) \nonumber \\
S^J_{\widehat{(0,0)}\widehat{(1,1)}} &=& \frac{1}{2}\,i \,e^{\frac{\pi iN}{8}} +\frac{1}{2}\,P_{01} 
                                  \,\,=\,\, i\,\, \frac{1}{2} \cos \left(\frac{\pi N}{8}\right) \nonumber
\end{eqnarray}
\begin{eqnarray}
S^J_{\widehat{(0,1)}\widehat{(0,1)}} &=& \frac{1}{2} \,e^{-\frac{\pi iN}{8}} -\frac{1}{2}\,P_{00} 
                                  \,\,=\,\, -i\,\, \frac{1}{2} \sin \left(\frac{\pi N}{8}\right) \nonumber \\
S^J_{\widehat{(0,1)}\widehat{(1,0)}} &=& \frac{1}{2}\,i \,e^{\frac{\pi iN}{8}} +\frac{1}{2}\,P_{01} 
                                  \,\,=\,\, i\,\, \frac{1}{2} \cos \left(\frac{\pi N}{8}\right) \nonumber \\
S^J_{\widehat{(0,1)}\widehat{(1,1)}} &=& \frac{1}{2}\,i \,e^{-\frac{\pi iN}{8}} -\frac{1}{2}\,P_{01} 
                                  \,\,=\,\, i\,\, \frac{1}{2} \cos \left(\frac{\pi N}{8}\right) \nonumber
\end{eqnarray}
\begin{eqnarray}
S^J_{\widehat{(1,0)}\widehat{(1,0)}} &=& -\frac{1}{2} \,e^{-\frac{\pi iN}{8}} -\frac{1}{2}\,P_{11} 
                                  \,\,=\,\, i\,\, \frac{1}{2} \sin \left(\frac{\pi N}{8}\right) \nonumber \\
S^J_{\widehat{(1,0)}\widehat{(1,1)}} &=& \frac{1}{2} \,e^{\frac{\pi iN}{8}} +\frac{1}{2}\,P_{11} 
                                  \,\,=\,\, i\,\, \frac{1}{2} \sin \left(\frac{\pi N}{8}\right) \nonumber 
\end{eqnarray}
\begin{eqnarray}
\label{B SJ J=1,0}
S^J_{\widehat{(1,1)}\widehat{(1,1)}} &=& -\frac{1}{2} \,e^{-\frac{\pi iN}{8}} -\frac{1}{2}\,P_{11} 
                                  \,\,=\,\, i\,\, \frac{1}{2} \sin \left(\frac{\pi N}{8}\right) \nonumber \\
&&
\end{eqnarray}

\subsection{$D(n)_1$ series}
The $D(n)_1=SO(N)_1$, $N=2n$, series has central charge $c=\frac{N}{2}$ and four primary fields $\phi_i$ with weight $h_i=0,\frac{N}{16},\frac{1}{2},\frac{N}{16}$ ($i=0,1,2,3$ respectively).
The $S$ matrix is given in table \ref{table S_Dn_1}.
\begin{table}[ht]
\caption{$S$ matrix for $D(n)_1$}
\centering
\begin{tabular}{c|c c c c}
\hline \hline\\
$S_{D(n)_1}$ & $h=0$ &  $h=\frac{N}{16}$ & $h=\frac{1}{2}$ & $h=\frac{N}{16}$ \\ 
\hline &&&\\
$h=0$             & $\frac{1}{2}$        & $\frac{1}{2}$         & $\frac{1}{2}$ &     $\frac{1}{2}$ \\
$h=\frac{N}{16}$  & $\frac{1}{2}$        & $\frac{(-i)^n}{2}$    & $-\frac{1}{2}$ &    $-\frac{(-i)^n}{2}$ \\
$h=\frac{1}{2}$   & $\frac{1}{2}$        & $-\frac{1}{2}$        & $\frac{1}{2}$ &     $-\frac{1}{2}$ \\
$h=\frac{N}{16}$  & $\frac{1}{2}$        & $-\frac{(-i)^n}{2}$   & $-\frac{1}{2}$ &    $\frac{(-i)^n}{2}$
\end{tabular}
\label{table S_Dn_1}
\end{table}

All the four fields of the $D(n)_1$ series are simple currents. 
In the permutation orbifold, they give rise to four integer spin simple currents, namely $(0,0)$, $(0,1)$, $(2,0)$ and $(2,1)$, and to four non-necessarily-integer spin simple current\footnote{For $n$ multiple of $4$, these currents have also integer spin. We will consider them in chapter \ref{paper2}.}, namely $(1,0)$, $(1,1)$, $(3,0)$ and $(3,1)$. In this chapter we focus on the former set, leaving the latter for the next chapter. Again, the current $(0,0)$ gives a trivial extension. The current $(0,1)$ gives back the tensor product $D(n)_1\times D(n)_1$, with the six off-diagonal fields ($h_{\langle0,2\rangle}=\frac{1}{2}$, $h_{\langle1,3\rangle}=\frac{N}{8}$, $h_{\langle0,1\rangle}=\frac{N}{16}$, $h_{\langle1,2\rangle}=\frac{N}{16}+\frac{1}{2}$, $h_{\langle0,3\rangle}=\frac{N}{16}$, $h_{\langle2,3\rangle}=\frac{N}{16}+\frac{1}{2}$) as fixed points; the $S^J$ matrix is again given by (\ref{SJ_(offdiag.-offdiag.)}).\\
The current $(2,1)$ gives a theory isomorphic to the tensor product. Its fixed points are the fields $\langle\phi_0,\phi_2\rangle$ ($h=\frac{1}{2}$), $\langle\phi_1,\phi_3\rangle$ ($h=\frac{N}{8}$), two twisted fields coming from $\phi_1$ (with $h=\frac{N}{16}$ and $\frac{N}{16}+\frac{1}{2}$) and other two from $\phi_3$ (also with $h=\frac{N}{16}$ and $\frac{N}{16}+\frac{1}{2}$), all with same weights as for the off-diagonal fields. The $S^J$ matrix is again equal to the one for $(0,1)$, if the fixed points are ordered suitably according to their weights. \\
As before, it is more difficult to derive the $S^J$ matrix for $J=(2,0)$, for which we need (\ref{main formula for f.p. resolution 2}).

\subsubsection{$(D(n)_1)_{\rm perm}$ $S^J$ matrix for $J=(2,0)$}
There are six fixed points for the current $J=(2,0)$ of the permutation orbifold $(D(n)_1)_{\rm perm}$, coming from off-diagonal and twisted fields. They are: $\langle\phi_0,\phi_2\rangle$ (with $h=\frac{1}{2}$), $\langle\phi_1,\phi_3\rangle$ (with $h=\frac{N}{8}$), $\widehat{(0,0)}$ ($h=\frac{N}{32}$), $\widehat{(0,1)}$ ($h=\frac{N}{32}+\frac{1}{2}$), $\widehat{(2,0)}$ ($h=\frac{N+8}{32}$) and $\widehat{(2,1)}$ ($h=\frac{N+8}{32}+\frac{1}{2}$).

The $S^J$ matrix can be derived following the same procedure as before. We know $\tilde{S}$ and $S^{BHS}$ and we still have (\ref{main formula for f.p. resolution simplified}). We use the same principle as before to choose the orbit representatives according to their minimal weight. The table in this case is:
\begin{itemize}
\item ${\rm if \,\,n\,\,is\,\,odd}$
\begin{eqnarray}
\langle\phi_1,\phi_3\rangle \,\, \longrightarrow&  (0,2N) & \&\qquad (0,14N) \nonumber
\end{eqnarray}
\item ${\rm if \,\,n\,\,is\,\,even}$
\begin{eqnarray}
\langle\phi_1,\phi_3\rangle \,\, \longrightarrow&  (1,14N+4) & \&\qquad (3,14N+12) \nonumber
\end{eqnarray}
\item ${\rm for \,\, all \,\, n}$
\begin{eqnarray}
\langle\phi_0,\phi_2\rangle \,\, \longrightarrow&  (1,4) & \&\qquad (N-1,16N-4) \nonumber \\
\widehat{(0,0)} \,\, \longrightarrow&  (0,N) & \&\qquad (0,15N) \nonumber \\
\widehat{(0,1)} \,\, \longrightarrow&  (2,15N+8) & \&\qquad (N-2,N-8) \nonumber \\
\widehat{(2,0)} \,\, \longrightarrow&  (N-1,N-4) & \&\qquad (1,15N+4) \nonumber \\
\widehat{(2,1)} \,\, \longrightarrow&  (N-1,15-N) & \&\qquad (1,N+4) \nonumber \\
&&\nonumber
\end{eqnarray}
\end{itemize}
This fixes our order of ``split field 1'' and ``split field 2''. We must use fields only from the first set or only from the second set as before. Orbits corresponding to these two split fields are conjugates of each other.

The $S^J$ matrix is given below. It depends on the original $S$ and $P$ matrices\footnote{The $P$ matrix for the $D(n)_1$ series is:
\begin{equation}
 P=\left(
\begin{array}{cccc}
\cos\left(\frac{\pi N}{8}\right) & 0 & \sin\left(\frac{\pi N}{8}\right)  & 0 \\
0 & e^{-\frac{i\pi N}{8}}\cos\left(\frac{\pi N}{8}\right)
& 0 & i\,\, e^{-\frac{i\pi N}{8}}\sin\left(\frac{\pi N}{8}\right) \\
\sin\left(\frac{\pi N}{8}\right) & 0 & -\cos\left(\frac{\pi N}{8}\right) & 0 \\
0 &  i\,\, e^{-\frac{i\pi N}{8}}\sin\left(\frac{\pi N}{8}\right)
& 0 & e^{-\frac{i\pi N}{8}}\cos\left(\frac{\pi N}{8}\right)
\end{array}
\right)\,. \nonumber
\end{equation}}
of the $D(n)_1$ theory. We have defined the quantity $r=n {\rm \,\,mod\,\,}2=n-2\left[\frac{n}{2}\right]$, which is $0$ if $n$ is even and $1$ if $n$ is odd. We recall that here $N=2n$. We have checked that it is unitary ($S^J (S^J)^{\dagger}=1$), modular invariant ($(S^J T^J)^3=-1=(S^J)^2$) and gives correct fusion coefficients.

\begin{eqnarray}
S^J_{\langle\phi_0,\phi_2\rangle\langle\phi_0,\phi_2\rangle} &=& \frac{1}{2} -(S_{00}\,S_{22}+S_{02}\,S_{02}) \,\,=\,\, 0 \nonumber \\
S^J_{\langle\phi_0,\phi_2\rangle\langle\phi_1,\phi_3\rangle} &=& \frac{1}{2} -(S_{01}\,S_{23}+S_{03}\,S_{21}) \,\,=\,\, 0 \nonumber \\
S^J_{\langle\phi_0,\phi_2\rangle\widehat{(0,0)}} &=& -\frac{i}{2} \nonumber \\
S^J_{\langle\phi_0,\phi_2\rangle\widehat{(0,1)}} &=& \frac{i}{2} \nonumber \\
S^J_{\langle\phi_0,\phi_2\rangle\widehat{(2,0)}} &=& -\frac{i}{2} \nonumber \\
S^J_{\langle\phi_0,\phi_2\rangle\widehat{(2,1)}} &=& \frac{i}{2} \nonumber
\end{eqnarray}
\begin{eqnarray}
S^J_{\langle\phi_1,\phi_3\rangle\langle\phi_1,\phi_3\rangle} &=& \frac{1}{2}\,\,i^N -(S_{11}\,S_{33}+S_{13}\,S_{13}) \,\,=\,\, 0 \nonumber \\
S^J_{\langle\phi_1,\phi_3\rangle\widehat{(0,0)}} &=& -\frac{1}{2}\,\,i^{n+\delta_{r,0}} \nonumber \\
S^J_{\langle\phi_1,\phi_3\rangle\widehat{(0,1)}} &=& \frac{1}{2}\,\,i^{n+\delta_{r,0}} \nonumber \\
S^J_{\langle\phi_1,\phi_3\rangle\widehat{(2,0)}} &=& \frac{1}{2}\,\,i^{n+\delta_{r,0}} \nonumber \\
S^J_{\langle\phi_1,\phi_3\rangle\widehat{(2,1)}} &=& -\frac{1}{2}\,\,i^{n+\delta_{r,0}} \nonumber
\end{eqnarray}
\begin{eqnarray}
S^J_{\widehat{(0,0)}\widehat{(0,0)}} &=& \frac{1}{2}\,\, e^{-\frac{\pi iN}{8}}- \frac{1}{2}\,\,P_{00}                                                    \,\,=\,\, -i\,\,\frac{1}{2}\,\sin\left(\frac{\pi N}{8}\right) \nonumber \\
S^J_{\widehat{(0,0)}\widehat{(0,1)}} &=& -\frac{1}{2}\,\, e^{\frac{\pi iN}{8}}+ \frac{1}{2}\,\,P_{00}                                                    \,\,=\,\, -i\,\,\frac{1}{2}\,\sin\left(\frac{\pi N}{8}\right) \nonumber \\
S^J_{\widehat{(0,0)}\widehat{(2,0)}} &=& \frac{1}{2}\,\,i\,\, e^{-\frac{\pi iN}{8}}- \frac{1}{2}\,\,P_{20}                                                    \,\,=\,\, i\,\,\frac{1}{2}\,\cos\left(\frac{\pi N}{8}\right) \nonumber \\
S^J_{\widehat{(0,0)}\widehat{(2,1)}} &=& \frac{1}{2}\,\,i\,\, e^{\frac{\pi iN}{8}}+ \frac{1}{2}\,\,P_{20}                                                    \,\,=\,\, i\,\,\frac{1}{2}\,\cos\left(\frac{\pi N}{8}\right) \nonumber
\end{eqnarray}
\begin{eqnarray}
S^J_{\widehat{(0,1)}\widehat{(0,1)}} &=& \frac{1}{2}\,\, e^{-\frac{\pi iN}{8}}- \frac{1}{2}\,\,P_{00}                                                    \,\,=\,\, -i\,\,\frac{1}{2}\,\sin\left(\frac{\pi N}{8}\right) \nonumber \\
S^J_{\widehat{(0,1)}\widehat{(2,0)}} &=& \frac{1}{2}\,\,i\,\, e^{\frac{\pi iN}{8}}+ \frac{1}{2}\,\,P_{20}                                                    \,\,=\,\, i\,\,\frac{1}{2}\,\cos\left(\frac{\pi N}{8}\right) \nonumber \\
S^J_{\widehat{(0,1)}\widehat{(2,1)}} &=& \frac{1}{2}\,\,i\,\, e^{-\frac{\pi iN}{8}}- \frac{1}{2}\,\,P_{20}                                                    \,\,=\,\, i\,\,\frac{1}{2}\,\cos\left(\frac{\pi N}{8}\right) \nonumber
\end{eqnarray}
\begin{eqnarray}
S^J_{\widehat{(2,0)}\widehat{(2,0)}} &=& -\frac{1}{2}\,\, e^{-\frac{\pi iN}{8}}- \frac{1}{2}\,\,P_{22}                                                    \,\,=\,\, i\,\,\frac{1}{2}\,\sin\left(\frac{\pi N}{8}\right) \nonumber \\
S^J_{\widehat{(2,0)}\widehat{(2,1)}} &=& \frac{1}{2}\,\, e^{\frac{\pi iN}{8}}+ \frac{1}{2}\,\,P_{22}                                                    \,\,=\,\, i\,\,\frac{1}{2}\,\sin\left(\frac{\pi N}{8}\right) \nonumber
\end{eqnarray}
\begin{eqnarray}
\label{D SJ J=2,0}
S^J_{\widehat{(2,1)}\widehat{(2,1)}} &=& -\frac{1}{2}\,\, e^{-\frac{\pi iN}{8}}- \frac{1}{2}\,\,P_{22}                                                    \,\,=\,\, i\,\,\frac{1}{2}\,\sin\left(\frac{\pi N}{8}\right) \nonumber \\
&&
\end{eqnarray}

\section{Conclusion}
In this chapter we have studied the simple current and fixed point structure of permutation orbifolds and we have asked the question of resolving fixed points in these extensions. We have not done it in general but only for the $SU(2)_2$ orbifolds and for the $B(n)_1$ and $D(n)_1$ series. The main results were presented in sections \ref{Fixed point resolution in SU(2)k orbifolds} and \ref{Fixed point resolution in SO(N)1 orbifolds}. Besides the particular expressions for the $S^J$ matrices in those cases, we have also showed the existence of a very special current, the un-orbifold current $J=(0,1)$, which is always present in the permutation orbifold and whose action is to un-do the orbifold giving back the initial tensor product. 

At this point we still have plenty of open questions. First we would like to solve the problem in full generality by giving a sensible ansatz for the $S^J$ matrix for any arbitrary CFT. We expect that this ansatz should depend neither on the particular CFT nor on the particular current used in the extension. The results for the special cases considered here give some hints about such a  general formula. 
This problem will be addressed and solved in chapter \ref{paper3}, where we will see how to make an educated guess for the $S^J$ matrices in full generality and how this guess can then be checked.

Secondly, the two $SO(N)_1$ series are interesting since they appear in the numerator of the coset CFT defining $N=2$ minimal models. However, once we have the general formula, it will contain the $SO(N)_1$ series as a particular example and we will not have to worry anymore about the specific details derived here. For example, in this chapter we have looked only at spin-1 currents of permutation orbifolds (apart from the special case $(0,1)$), but we will see in chapter \ref{paper3} how to generalize the results to arbitrary-spin currents. These currents will be relevant to impose world-sheet supersymmetry on the permutation orbifold of two identical $N=2$ minimal models. Moreover, using extensions by these higher-spin currents, we should be able to derive a ``super-BHS" formula for permutation orbifolds
of supersymmetric RCFT's. This will be the subject of chapter \ref{paper4}.

\chapter{Finishing the $D(n)_1$ orbifolds}
\label{paper2}

{\flushright
{\small 
\textit{For a moment, nothing happened.}\par
\textit{Then, after a second or so, nothing continued to happen.}\par
\textit{(D. Adams, So Long, and Thanks for All the Fish)}\par
}
}

\section{Introduction}
In the previous chapter we studied the structure of order-two simple currents in permutation orbifolds in two-dimensional conformal field theories \cite{Belavin:1984vu}. The main tool was the BHS $S$ matrix for the permutation orbifold \cite{Klemm:1990df,Borisov:1997nc}. We have seen that in general simple currents can only be generated from diagonal fields that correspond to simple currents in the mother theory, while their fixed points can come from both the untwisted (diagonal and off-diagonal) and the twisted sector. We have also considered extensions of the permutation orbifold and their fixed point resolution. 

Modular transformation matrices of simple current extensions  
\cite{Schellekens:1989am,Intriligator:1989zw,Schellekens:1989dq} are often
quite non-trivial due to fixed points \cite{Schellekens:1999yg,Schellekens:1989uf}. 
So far we have been able to derive the $S$ matrices corresponding to extensions in the case of $SU(2)_2$, $B(n)_1$ and $D(n)_1$ WZW models \cite{Knizhnik:1984nr,Gepner:1986wi}. The procedure was described in the previous chapter (see also original paper \cite{Maio:2009kb}). This was completely done for the first two models but only partially for the $D(n)_1$. In fact, we provided the $S$ matrix for the  integer spin simple currents that exist for any value of $n$, but sometimes additional currents appear in the $D(n)_1$ model whose fixed points must be resolved as well, in order to use them as extensions. Generically fixed points can arise for integer spin and half-integer spin simple currents \cite{Schellekens:1990xy}. We will see that for particular ranks of $D(n)_1$ there are addition currents whose fixed points must be resolved. In this chapter we address those additional problems, providing a complete picture for the fixed point resolution in $D(n)_1$ permutation orbifold.

Explicitly, there are two interesting situations where fixed points can occur and that we have not studied so far, both with even rank $n$. When $n$ is multiple of four, $n=4p$ with $p\in \mathbb{Z}$, there are additional integer-spin simple currents coming from the two spinor representations of the $D(n)_1$ WZW model. The spinor fields have weight $h=\frac{n}{8}$ and their symmetric and anti-symmetric representations in the $D(n)_1$ permutation orbifold have weight $h=\frac{n}{4}$. Similarly, when $n=4p+2$, the same two spinor currents generate half-integer spin simple currents in the $D(n)_1$ permutation orbifold. 
Although the latter cannot be used to extend the chiral algebra, they can be used in combination with half-integer spin currents of
another factor in a tensor product. For example, one may tensor the permutation orbifold with an Ising model, and consider 
the product of the half-integer spin current of the $D(n)_1$ permutation orbifold and the Ising spin field. This is not just of academic interest.
Extended tensor products of rational conformal field theories are an important tool in explicit four-dimensional string constructions, and
in the  vast majority of cases one encounters fixed points. For this reason the fixed point resolution matrices we determine
here and in the previous chapter have a range of applicability far beyond the special cases used here to 
determine them. 

There is no known algorithm for determining these matrices in generic rational CFT's, even if their matrix $S$ is known. In the previous chapter we made use of the fact that the extension currents
had spin 1 and led to
identifiable CFT's. This method will not work here except in the special case of $D(4)_1$, where the spinor currents of the permutation orbifold have spin 1.
In that case one can make use of triality of $SO(8)$ to determine the missing fixed point resolution matrices. Although triality does not extend
to larger ranks, it turns out that in the other cases the fixed point spectrum is sufficiently similar to allow us to generalize to $D(n)_1$, for any $n$.

The plan of the chapter is as follows.\\
In section \ref{D4p permutation orbifolds} we describe the $D(4p)_1$ permutation orbifolds extended by the two spinor currents and resolve the fixed points. In the special case $p=1$ we 
use triality of $SO(8)$ to determine the  set of $S^J$ matrices. From the case $p=1$ is indeed possible to generalize the result to arbitrary values of $p$.\\
In section \ref{D4p+2 permutation orbifolds} we repeat the procedure for $D(4p+2)_1$ permutation orbifolds. We can be fast here since very few changes are sufficient to write down consistent $S^J$ matrices.

The content of this chapter is based on \cite{Maio:2009cy}.

\section{$D(4p)_1$ orbifolds}
\label{D4p permutation orbifolds}
We start with the $D(n)_1$ WZW model as mother theory and focus on the spinor currents that for even rank $n$ can have (half-)integer spin.
Let us fix our notation. The $D(n)_1=SO(N)_1$, $N=2n$, series has central charge $c=n=\frac{N}{2}$ and four primary fields $\phi_i$ with weight $h_i=0,\frac{N}{16},\frac{1}{2},\frac{N}{16}$ ($i=0,1,2,3$ respectively).
The $S$ matrix is given in table \ref{table S_Dn_1}.


All the four fields of the $D(n)_1$ series are simple currents. 
In the permutation orbifold, they give rise to four integer-spin simple currents, namely $(0,0)$, $(0,1)$, $(2,0)$ and $(2,1)$, and to four non-necessarily-integer-spin simple currents, namely $(1,0)$, $(1,1)$, $(3,0)$ and $(3,1)$. For $n$ multiple of four, the latter currents have also integer spin. In this chapter we want to study precisely these currents, coming from the spinor representations $i=1,3$ of the $D(n)_1$ model.

There are already a few observations that we can make. First of all, there exists an automorphism that exchanges the fields $\phi_1$ and $\phi_3$. This will have the consequence that the permutation theories extended by the currents $(1,0)$ and $(3,0)$ will be isomorphic\footnote{The fields $\phi_1$ and $\phi_3$ also have same $P$-matrix entries. In fact, the $P$ matrix for $n=4p$ is
\begin{equation}
P=\left(
\begin{array}{cccc}
(-1)^p &  0  &    0       & 0 \\
0      &  1  &    0       & 0 \\
0      &  0  & (-1)^{p+1} & 0 \\
0      &  0  &    0       & 1
\end{array}
\right)\,. \nonumber
\end{equation}
We recall that the $P$ matrix, $P=\sqrt{T}ST^2S\sqrt{T}$, first introduced in \cite{Pradisi:1995qy}, enters the BHS formulas \cite{Borisov:1997nc} for the $S$ matrix of the permutation orbifold in the twisted sector.} (the fields having same weights and the two theories having equal central charge); this holds as well as for the extensions by $(1,1)$ and $(3,1)$. Secondly, when $n$ is multiple of four, i.e. $n=4p$ with $p\in\mathbb{Z}$, the $S$ matrix of the mother $D(n)_1$ theory is the same for every $p$. This will have the consequence that the fusion rules of these currents in the permutation orbifolds are the same for every value of $p$. Putting these two observations together, we conclude that for $n=4p$ there will be only two universal $S^J$ matrices to determine\footnote{They will in general depend on $p$ through a phase in order to satisfy modular invariance, since the $T$ matrix depends on $p$.}.

Let us illustrate these points with the explicit construction. Consider\footnote{The case $n=4$, that we will consider extensively later, is very interesting since it corresponds to $SO(8)_1$ where, due to triality, three out of four fields have equal weight.} the case with arbitrary $n=4 p$. The $D(n)_1$ weights are then $h=0,\frac{n}{8},\frac{1}{2},\frac{n}{8}$ and the orbit structure under the additional orbifold integer-spin simple currents (all with $h=\frac{n}{4}=p$) is as follows.

\begin{tabular}{l l l l}
&&&\\
$J\equiv(1,0)$ & \underline{Fixed points} &\phantom{$J\equiv(1,0)$}& \underline{Length-2 orbits}\\
& $\langle\phi_0,\phi_1\rangle$, \,$h=\frac{n}{8}$              && $\Big( (0,0),(1,0) \Big)$,\, $h=0$\\
& $\langle\phi_2,\phi_3\rangle$, \,$h=\frac{n}{8}+\frac{1}{2}$  && $\Big( (0,1),(1,1) \Big)$,\, $h=1$\\
& $\widehat{(0,0)}$,\, $h=\frac{n}{16}$             && $\Big( (2,0),(3,0) \Big)$,\, $h=1$\\
& $\widehat{(0,1)}$,\, $h=\frac{n}{16}+\frac{1}{2}$ && $\Big( (2,1),(3,1) \Big)$,\, $h=1$\\
& $\widehat{(1,0)}$,\, $h=\frac{n}{8}$              &&\\
& $\widehat{(1,1)}$,\, $h=\frac{n}{8}+\frac{1}{2}$  &&\\
&&&\\
\end{tabular}

\begin{tabular}{l l l l}
$J\equiv(1,1)$ & \underline{Fixed points} &\phantom{$J\equiv(1,1)$}& \underline{Length-2 orbits}\\
& $\langle\phi_0,\phi_1\rangle$, \,$h=\frac{n}{8}$                          && $\Big( (0,0),(1,1) \Big)$,\, $h=0$\\
& $\langle\phi_2,\phi_3\rangle$, \,$h=\frac{n}{8}+\frac{1}{2}$              && $\Big( (0,1),(1,0) \Big)$,\, $h=1$\\
& $\widehat{(2,0)}$,\, $h=\frac{n}{16}+\frac{1}{4}$             && $\Big( (2,0),(3,1) \Big)$,\, $h=1$\\
& $\widehat{(2,1)}$,\, $h=\frac{n}{16}+\frac{1}{4}+\frac{1}{2}$ && $\Big( (2,1),(3,0) \Big)$,\, $h=1$\\
& $\widehat{(3,0)}$,\, $h=\frac{n}{8}$                          &&\\
& $\widehat{(3,1)}$,\, $h=\frac{n}{8}+\frac{1}{2}$              &&\\
&&&\\
\end{tabular}

\begin{tabular}{l l l l}
$J\equiv(3,0)$ & \underline{Fixed points} &\phantom{$J\equiv(3,0)$}& \underline{Length-2 orbits}\\
& $\langle\phi_0,\phi_3\rangle$, \,$h=\frac{n}{8}$                && $\Big( (0,0),(3,0) \Big)$,\, $h=0$\\
& $\langle\phi_1,\phi_2\rangle$, \,$h=\frac{n}{8}+\frac{1}{2}$    && $\Big( (0,1),(3,1) \Big)$,\, $h=1$\\
& $\widehat{(0,0)}$,\, $h=\frac{n}{16}$               && $\Big( (1,0),(2,0) \Big)$,\, $h=1$\\
& $\widehat{(0,1)}$,\, $h=\frac{n}{16}+\frac{1}{2}$    && $\Big( (1,1),(2,1) \Big)$,\, $h=1$\\
& $\widehat{(3,0)}$,\, $h=\frac{n}{8}$                &&\\
& $\widehat{(3,1)}$,\, $h=\frac{n}{8}+\frac{1}{2}$    &&\\
&&&\\
\end{tabular}

\begin{tabular}{l l l l}
$J\equiv(3,1)$ & \underline{Fixed points} &\phantom{$J\equiv(3,1)$}& \underline{Length-2 orbits}\\
& $\langle\phi_0,\phi_3\rangle$, \,$h=\frac{n}{8}$               && $\Big( (0,0),(3,1) \Big)$,\, $h=0$\\
& $\langle\phi_1,\phi_2\rangle$, \,$h=\frac{n}{8}+\frac{1}{2}$   && $\Big( (0,1),(3,0) \Big)$,\, $h=1$\\
& $\widehat{(1,0)}$,\, $h=\frac{n}{8}$                           && $\Big( (1,0),(2,1) \Big)$,\, $h=1$\\
& $\widehat{(1,1)}$,\, $h=\frac{n}{8}+\frac{1}{2}$               && $\Big( (1,1),(2,0) \Big)$,\, $h=1$\\
& $\widehat{(2,0)}$,\, $h=\frac{n}{16}+\frac{1}{4}$              &&\\
& $\widehat{(2,1)}$,\, $h=\frac{n}{16}+\frac{1}{4}+\frac{1}{2}$  &&\\
&&&\\
\end{tabular}

\noindent Note that in going from the fixed points of $(1,\psi)$ to $(3,\psi)$, the fields $\phi_1$ and $\phi_3$ get interchanged: this provides isomorphic sets of fields in the extensions.

The fixed points get split into two fields in the extended permutation orbifold and hence all the theories above admit $2\cdot 6+4=16$ fields. By changing $n=4p$, the weights of the orbits and the ones of the fixed points might change, but there are a few things that remain invariant, namely: 1) the fact that the extension by the current $(1,0)$ (resp. $(1,1)$) is isomorphic (up to field reordering) to the one by $(3,0)$ (resp. $(3,1)$), as it can be seen by looking at the weights of the extended fields; 2) the orbit and fixed-point structure (i.e. the fusion rules of the currents with any other field in the permutation orbifold) remains the same for arbitrary $p$; this has the consequence that we will have to determine only two $S^J$ matrices instead of four.

\subsection{$S^J$ matrices for $D(4p)_1$ permutation orbifolds}
We have already noticed that there are in practice only two $S^J$ matrices to determine for the four above-mentioned integer-spin simple currents. So here we are going to derive $S^{(1,0)}$ and $S^{(1,1)}$; $S^{(3,0)}$ and $S^{(3,1)}$ are equal to the former two, after proper field ordering.

It is instructive to start with the $D(4)_1$ $(p=1)$ case. $SO(8)_1$ is special in the sense that the three non-trivial representations, i.e. the vector ${\bf 8_v}$ and the two spinors ${\bf 8_s}$ and ${\bf 8_c}$, have same weight ($h=\frac{1}{2}$) and same dimension (dim$=8$) and can be mapped into each other. This property of $SO(8)$ is triality. 
Due to triality of $SO(8)$, the extensions by the currents $(1,\psi)$, $(2,\psi)$ and $(3,\psi)$ must produce the same result. The extension by $(2,\psi)$ is already known from chapter \ref{paper1} and, according to our earlier arguments, the extensions by $(1,\psi)$ and $(3,\psi)$ are equal. 

Let us now work out the $S^J$ matrices corresponding to the two integer-spin simple currents $J=(1,0)$ and $J=(1,1)$. The extension by $(1,0)$ of the permutation orbifold is isomorphic to an extension of the tensor product of an $SU(8)$ and a $U(1)$ factor as done in section \ref{Fixed point resolution in SO(N)1 orbifolds}:
\begin{equation}
(D(4)_1 \times D(4)_1/ \mathbb{Z}_2)_{(1,0)}=(SU(8)_1 \times U(1)_{128})_{(4,16)}\,,
\end{equation}
while the extension by $(1,1)$ is isomorphic to the tensor product $D(4)_1 \times D(4)_1$. This is exactly what happened for the already known currents $(2,\psi)$; in fact, due to triality of $SO(8)$, the three theories extended by $(1,\psi)$ $(2,\psi)$ $(3,\psi)$ must be the same.

\subsubsection{\underline{$J=(1,0)$}}
We use the main formula (\ref{main formula for f.p. resolution}), that we repeat here for convenience,
\begin{equation}
\label{MainFormula}
\tilde{S}_{(a,i)(b,j)}=\frac{|G|}{\sqrt{|U_a||S_a||U_b||S_b|}}\sum_{J\in G}\Psi_i(J) S^J_{ab} \Psi_j(J)^{\star}
\end{equation}
to derive the $S^J$ matrix from the knowledge of the extended matrix $\tilde{S}$ and the permutation orbifold matrix $S^{(0,0)}\equiv S^{BHS}$. The prefactor in (\ref{MainFormula}) is a group theoretical factor and the $\Psi_i$'s are the group characters. As already done in the previous chapter, our field convention to distinguish between the two split fixed points is:
\begin{eqnarray}
\langle\phi_0,\phi_1\rangle \,\, \longrightarrow&  (1,4) & \&\qquad (7,124) \nonumber \\
\langle\phi_2,\phi_3\rangle \,\, \longrightarrow&  (1,116) & \&\qquad (3,124) \nonumber \\
\widehat{(0,0)} \,\, \longrightarrow&  (0,120) & \&\qquad (0,8) \nonumber \\
\widehat{(0,1)} \,\, \longrightarrow&  (6,0) & \&\qquad (2,0) \nonumber \\
\widehat{(1,0)} \,\, \longrightarrow&  (7,4) & \&\qquad (1,124) \nonumber \\
\widehat{(1,1)} \,\, \longrightarrow&  (1,12) & \&\qquad (3,4) \nonumber \\
&&\nonumber
\end{eqnarray}
where $(s,u)$ denotes a field in the extended theory ($s\equiv s+8$, $u\equiv u+128$). Here the first entry $s$ is the rank of the anti-symmetric representations of $SU(8)_1$, while the second entry $u$ gives the weights of the $U(1)_R$ representations (in this specific case $R=128$) according to $h_u=\frac{u^2}{2R}$ mod $\mathbb{Z}$.
Observe that field one and field two correspond to complementary orbits. The $S^J$ matrix for the $(D(4)_1\times D(4)_1/\mathbb{Z}_2)_{(1,0)}$ orbifold can be derived as done previously and is given in table \ref{table S^J=10_p=1}. We denote it by $S^J_{D4}$, with $J=(1,0)$, for reasons that will become clear later.
\begin{table}[ht]
\caption{Fixed point Resolution: Matrix $S^{J\equiv (1,0)}_{D4}$}
\centering
\begin{tabular}{c|c c c c c c c}
\hline \hline\\
$S^{J\equiv (1,0)}_{D4}$ & $\langle\phi_0,\phi_1\rangle$ & $\langle\phi_2,\phi_3\rangle$ & $\widehat{(0,0)}$ & $\widehat{(0,1)}$ & $\widehat{(1,0)}$ & $\widehat{(1,1)}$ \\ 
\hline &&&\\
$\langle\phi_0,\phi_1\rangle$   & $0$  & $0$     & $\frac{i}{2}$  & $-\frac{i}{2}$ & $-\frac{i}{2}$ & $-\frac{i}{2}$ \\
$\langle\phi_2,\phi_3\rangle$   & $0$  & $0$     & $\frac{i}{2}$  & $-\frac{i}{2}$ & $\frac{i}{2}$  & $\frac{i}{2}$ \\
$\widehat{(0,0)}$   & $\frac{i}{2}$  & $\frac{i}{2}$  & $0$  & $0$ & $\frac{i}{2}$  & $-\frac{i}{2}$ \\
$\widehat{(0,1)}$   & $-\frac{i}{2}$ & $-\frac{i}{2}$ & $0$  & $0$ & $\frac{i}{2}$  & $-\frac{i}{2}$ \\
$\widehat{(1,0)}$   & $-\frac{i}{2}$ & $\frac{i}{2}$  & $\frac{i}{2}$  & $\frac{i}{2}$  & $0$  & $0$ \\
$\widehat{(1,1)}$   & $-\frac{i}{2}$ & $\frac{i}{2}$  & $-\frac{i}{2}$ & $-\frac{i}{2}$ & $0$  & $0$ \\
\end{tabular}
\label{table S^J=10_p=1}
\end{table}

One can check that this matrix is unitary ($S^J (S^J)^\dagger=1$) and modular invariant ($(S^J)^2=(S^J T^J)^3$, where $T^J$ is the $T$ matrix restricted to the fixed points) and gives non-negative integer fusion coefficients. Moreover, one can see that unitarity and modular invariance are preserved for $p=1$ mod $4$: then this matrix can be used also in these situations.

Observe that rescaling the $S^J$ matrix by a phase does not destroy unitarity but it does affect modular invariance. By a suitable choice of the phase, it is possible to make a modular invariant matrix out of $S^{(1,0)}_{D4}$ valid for all $p$. The correct choice is:
\begin{equation}
\label{S10p}
S^{(1,0)}= (-i)^{p-1} \cdot S^{(1,0)}_{D4}= e^{-\frac{i\pi}{4}(m-2)} \cdot S^{(1,0)}_{D4}
\end{equation}
which will use for any value of $p$. Here $m=2p$ is an even integer such that $D(2m)_1\equiv D(4p)_1$. This is again unitary, modular invariant and gives non-negative integer fusion coefficients.

Let us make a final comment. What happens when we shift $p\rightarrow p+1$? Under this shift, the fixed point weights change differently. In particular, for the current $(1,0)$ the shifts are $h\rightarrow h+\{\frac{1}{2},\frac{1}{2},\frac{1}{4},\frac{1}{4},\frac{1}{2},\frac{1}{2}\}$. The $T^{(1,0)}$ matrix then changes as $T^{(1,0)}\rightarrow e^{-\frac{2\pi i}{3}}\,{\rm diag}(-1,-1,i,i,-1,-1)\cdot T^{(1,0)}$ (the phase in front coming from the central charge), while the $S^{(1,0)}$ takes a phase, $S^{(1,0)}\rightarrow -iS^{(1,0)}$. These changes are such that modular invariance is still preserved for every $p$.

\subsubsection{\underline{$J=(1,1)$}}
For this current, recall that
\begin{equation}
(D(4)_1 \times D(4)_1/ \mathbb{Z}_2)_{(1,1)}\sim D(4)_1 \times D(4)_1\,.
\end{equation}
The split fixed points correspond to fields in the tensor product theory. We choose conventionally the following scheme, but a few other choices are also possible.
\begin{eqnarray}
\langle\phi_0,\phi_1\rangle \,\, \longrightarrow&  \phi_0 \otimes \phi_1  & \&\qquad \phi_1 \otimes \phi_0 \nonumber \\
\langle\phi_2,\phi_3\rangle \,\, \longrightarrow&  \phi_2 \otimes \phi_3  & \&\qquad \phi_3 \otimes \phi_2 \nonumber \\
\widehat{(2,0)} \,\, \longrightarrow&  \phi_0 \otimes \phi_2  & \&\qquad \phi_2 \otimes \phi_0 \nonumber \\
\widehat{(2,1)} \,\, \longrightarrow&  \phi_1 \otimes \phi_3  & \&\qquad \phi_3 \otimes \phi_1 \nonumber \\
\widehat{(3,0)} \,\, \longrightarrow&  \phi_0 \otimes \phi_3  & \&\qquad \phi_3 \otimes \phi_0 \nonumber \\
\widehat{(3,1)} \,\, \longrightarrow&  \phi_1 \otimes \phi_2  & \&\qquad \phi_2 \otimes \phi_1 \nonumber \\
&&\nonumber
\end{eqnarray}
The next step is to compute the $S^J$ matrix for the $(D(4)_1 \times D(4)_1/ \mathbb{Z}_2)_{(1,1)}$ orbifold. We call it again $S^J_{D4}$, with $J=(1,1)$. Our strategy is as follows. We first go to the isomorphic tensor product theory and use
\begin{equation}
S^J_{\langle mn\rangle\langle pq\rangle}=S_{mp}S_{nq}-S_{mq}S_{np}
\end{equation}
as derived in (\ref{SJ_(offdiag.-offdiag.)}) to compute the $S^J$ matrix there and then we go back to the extended permutation orbifold using the field map. We obtain the $S^J$ matrix as in table \ref{table S^J=11_p=1}.
\begin{table}[ht]
\caption{Fixed point Resolution: Matrix $S^{J\equiv (1,1)}_{D4}$}
\centering
\begin{tabular}{c|c c c c c c c}
\hline \hline\\
$S^{J\equiv (1,1)}_{D4}$ & $\langle\phi_0,\phi_1\rangle$ & $\langle\phi_2,\phi_3\rangle$ & $\widehat{(2,0)}$ & $\widehat{(2,1)}$ & $\widehat{(3,0)}$ & $\widehat{(3,1)}$ \\ 
\hline &&&\\
$\langle\phi_0,\phi_1\rangle$   & $0$  & $0$     & $-\frac{1}{2}$ & $-\frac{1}{2}$ & $-\frac{1}{2}$ & $-\frac{1}{2}$ \\
$\langle\phi_2,\phi_3\rangle$   & $0$  & $0$     & $-\frac{1}{2}$ & $-\frac{1}{2}$ & $\frac{1}{2}$  & $\frac{1}{2}$ \\
$\widehat{(2,0)}$   & $-\frac{1}{2}$ & $-\frac{1}{2}$ & $0$  & $0$ & $-\frac{1}{2}$ & $\frac{1}{2}$ \\
$\widehat{(2,1)}$   & $-\frac{1}{2}$ & $-\frac{1}{2}$ & $0$  & $0$ & $\frac{1}{2}$  & $-\frac{1}{2}$ \\
$\widehat{(3,0)}$   & $-\frac{1}{2}$ & $\frac{1}{2}$  & $-\frac{1}{2}$ & $\frac{1}{2}$  & $0$  & $0$ \\
$\widehat{(3,1)}$   & $-\frac{1}{2}$ & $\frac{1}{2}$  & $\frac{1}{2}$  & $-\frac{1}{2}$ & $0$  & $0$ \\
\end{tabular}
\label{table S^J=11_p=1}
\end{table}

The $S^J$ matrix obtained in this way for $(D(4)_1 \times D(4)_1/ \mathbb{Z}_2)_{(1,1)}$ is unitary and modular invariant, so it is a good matrix for the extended theory. Moreover, this $S^J$ matrix is a good (i.e. unitary and modular invariant) matrix also for $p=1$ mod $4$.

In order to make this matrix modular invariant for any $p$, we again multiply by a phase. The choice is the same as before:
\begin{equation}
\label{S11p}
S^{(1,1)}= (-i)^{p-1} \cdot S^{(1,1)}_{D4}= e^{-\frac{i\pi}{4}(m-2)} \cdot S^{(1,1)}_{D4}
\end{equation}
which will use for any value of $p$. This is again unitary, modular invariant and gives non-negative integer fusion coefficients. The shift $n\rightarrow n+16$, corresponding to $p\rightarrow p+4$, changes all the weights by integers, does not change $S^{(1,1)}$, but does change $T^{(1,1)}$ by a phase which is a cubic root of unity, thus preserving modular invariance.

One can check formulas (\ref{S10p}) and (\ref{S11p}) in many explicit examples. For instance, one can see that they have good properties by looking at a few values of $p$, but also considering tensor products like $D(8)_1\times D(12)_1$ or $D(8)_1\times D(16)_1$ and extending with many current combinations $(J_1,J_2)$, where $J_1$ belongs to the first factor and $J_2$ to the second factor. In every example, the fusion rules give non-negative integer coefficients.

\section{$D(4p+2)_1$ orbifolds}
\label{D4p+2 permutation orbifolds}
So far we have not addressed half-integer spin simple currents. They might also admit fixed points that must be resolved in the extended theory. This happens for the $D(n)_1$ permutation orbifolds with $n=4p+2$. In fact, the four currents $(1,\psi)$ and $(3,\psi)$, with $\psi =0,1$, will have weight $h=\frac{2p+1}{2}$ and will admit fixed points. The orbit structure is in this case with $n=4p+2$ very similar to the previous situation with $n=4p$, except for the fact that the twisted fields get reshuffled. The fixed point structure is as follows. Observe that this is very similar to the structure for the previous case $n=4p$.

\begin{tabular}{l l l l}
&&&\\
$J\equiv(1,0)$ & \underline{Fixed points} &$J\equiv(3,0)$& \underline{Fixed points}\\
& $\langle\phi_0,\phi_1\rangle$,\,$h=\frac{n}{8}$               && $\langle\phi_0,\phi_3\rangle$, \,$h=\frac{n}{8}$\\
& $\langle\phi_2,\phi_3\rangle$,\,$h=\frac{n}{8}+\frac{1}{2}$   && $\langle\phi_1,\phi_2\rangle$, \,$h=\frac{n}{8}+\frac{1}{2}$\\
& $\widehat{(2,0)}$,\,$h=\frac{n}{16}+\frac{1}{4}$            && $\widehat{(2,0)}$,\, $h=\frac{n}{16}+\frac{1}{4}$\\
& $\widehat{(2,1)}$,\,$h=\frac{n}{16}+\frac{1}{4}+\frac{1}{2}$&& $\widehat{(2,1)}$,\,$h=\frac{n}{16}+\frac{1}{4}+\frac{1}{2}$\\
& $\widehat{(1,0)}$,\, $h=\frac{n}{8}$             && $\widehat{(3,0)}$,\, $h=\frac{n}{8}$\\
& $\widehat{(1,1)}$,\, $h=\frac{n}{8}+\frac{1}{2}$ && $\widehat{(3,1)}$,\, $h=\frac{n}{8}+\frac{1}{2}$\\
&&&\\
\end{tabular}

\begin{equation}
\label{D 4p+2 scheme}
\end{equation}

\begin{tabular}{l l l l}
&&&\\
$J\equiv(1,1)$ & \underline{Fixed points} &$J\equiv(3,1)$& \underline{Fixed points}\\
& $\langle\phi_0,\phi_1\rangle$, \,$h=\frac{n}{8}$              && $\langle\phi_0,\phi_3\rangle$, \,$h=\frac{n}{8}$\\
& $\langle\phi_2,\phi_3\rangle$, \,$h=\frac{n}{8}+\frac{1}{2}$  && $\langle\phi_1,\phi_2\rangle$, \,$h=\frac{n}{8}+\frac{1}{2}$\\
& $\widehat{(0,0)}$,\, $h=\frac{n}{16}$             && $\widehat{(0,0)}$,\, $h=\frac{n}{16}$\\
& $\widehat{(0,1)}$,\, $h=\frac{n}{16}+\frac{1}{2}$ && $\widehat{(0,1)}$,\, $h=\frac{n}{16}+\frac{1}{2}$\\
& $\widehat{(3,0)}$,\, $h=\frac{n}{8}$              && $\widehat{(1,0)}$,\, $h=\frac{n}{8}$\\
& $\widehat{(3,1)}$,\, $h=\frac{n}{8}+\frac{1}{2}$  && $\widehat{(1,1)}$,\, $h=\frac{n}{8}+\frac{1}{2}$\\
&&&\\
\end{tabular}

Again, the current $(1,0)$ (resp. $(1,1)$) generates the same fixed points as the current $(3,0)$ (resp. $(3,1)$), hence we have to determine only two, instead of four, $S^J$ matrices, since $S^{(1,\psi)}=S^{(3,\psi)}$, with $\psi=0,1$. Actually the study of the previous section helps us a lot, since it is easy to generate unitary and modular invariant matrices out of two matrices \emph{numerically} equal to the two $S^J_{D4}$ matrices of tables \ref{table S^J=10_p=1} and \ref{table S^J=11_p=1} with the fields ordered as above. More tricky is to check that also the fusion coefficients are non-negative integers if these currents are used in chiral algebra extensions (see comment below). 

The more sensible choice is the following. Let us have a closer look at the fixed point structure of the $n=4p$ and the $n=4p+2$ cases. They are very similar, but not quite. The weights of the fixed points of the current $(1,0)$ in the $n=4p$ case have the same expression as the weights of the fixed points of the current $(1,1)$ in the $n=4p+2$ case, and similarly for the $(3,\psi)$ current. So a natural guess for the $S^J$ matrices would involve interchanging the matrices in tables \ref{table S^J=10_p=1} and \ref{table S^J=11_p=1}. Equivalently, symmetric and anti-symmetric representations are interchanged in going from $n=4p$ to $n=4p+2$. Hence, we would expect $S^{(1,0)} \sim S^{(1,1)}_{D4}$ and $S^{(1,1)} \sim S^{(1,0)}_{D4}$. This is indeed the case. The unitary and modular invariant\footnote{Modular invariance reads here: $(S^J)^2=(-1)^p i \cdot 1=(S^JT^J)^3$ for $J=(1,0)$ and $(S^J)^2=(-1)^{p-1} i \cdot 1=(S^JT^J)^3$ for $J=(1,1)$, both with imaginary $(S^J)^2$.} combinations are in fact\rlap:\footnote{Note that in order to use these relations one must order the six fields as indicated above, without paying attention to the actual labelling of the fixed point fields.}
\begin{equation}
\label{S10p2bis}
S^{(1,0)}= e^{-\frac{i\pi}{4}} \cdot (-i)^{p-1} \cdot S^{(1,1)}_{D4}
= e^{-\frac{i\pi}{4}(m-2)} \cdot S^{(1,1)}_{D4}
\end{equation}
and
\begin{equation}
\label{S11p2bis}
S^{(1,1)}= e^{-\frac{i\pi}{4}} \cdot (-i)^{p-1} \cdot S^{(1,0)}_{D4}
= e^{-\frac{i\pi}{4}(m-2)} \cdot S^{(1,0)}_{D4}
\end{equation}
giving also acceptable fusion rules. Here $m=2p+1$ is an odd integer such that $D(2m)_1\equiv D(4p+2)_1$.

There are a few comments that we can make here. The first comment regards the labelling of the matrices just given. We observe that the matrix $S^{(1,0)}$ (resp. $S^{(1,1)}$) contains the same fields as the matrix $S^{(1,1)}_{D4}$ (resp. $S^{(1,0)}_{D4}$) except for the fact that the twisted fields corresponding to the spinors are interchanged (but they still have the same weights). We will then keep the same labels as given in the above scheme (\ref{D 4p+2 scheme}) and in table \ref{table S^J=11_p=1} (resp. table \ref{table S^J=10_p=1}).

The second comment regards the periodicity of the modular matrices. Observe that in (\ref{D 4p+2 scheme}) a shift $n\rightarrow n+16$ (corresponding to $m\rightarrow m+8$ and $p\rightarrow p+4$) changes all the weights by integers, but the $T^J$ matrices will be invariant. Similarly, the $S^J$ matrices are invariant under the same shift $m\rightarrow m+8$. This happened already for the modular matrices in the $n=4p$ case and it happens here again in the $n=4p+2$ case. Hence, it seems that in comparing phases one should consider situations which have the same $p$ mod $4$. On the other hand, in going from $n=4p$ to $n=4p+2$, the $S^J$ formulas are similar, but there is one main difference, namely $S^{(1,0)}_{D4}$ gets interchanged by $S^{(1,1)}_{D4}$ and this is a completely different matrix. The same consideration that we made after (\ref{S10p}) about the shift $p\rightarrow p+1$ can be repeated here.

The last comment regards the fusion coefficients. Note that when we check the fusion rules, we cannot do it directly from the single $D(n)_1$ permutation orbifolds, exactly because the spinor currents have half-integer spin. Instead, we have to tensor the $D(n)_1$ theory with another one which also has half-integer spin simple currents (e.g. Ising model or the $D(n)_1$ model itself, maybe with different values of $n$) such that the tensor product has integer spin simple currents that can be used for the extension: those integer spin currents will then have acceptable fusion coefficients. We have checked that this is indeed the case for tensor products of the permutation orbifold CFT's with the Ising model, and also in extensions of different permutation orbifold CFT's tensored with each other (we have also performed the latter check for $n=4p$, for combinations of integer spin currents).

\section{Conclusion}
In this chapter we have completed the analysis initiated in the previous chapter regarding extensions of $D(n)_1$ permutation orbifolds by additional integer spin simple currents arising when the rank $n$ is multiple of four and by additional half-integer spin simple currents arising when the rank $n$ is even but not multiple of four. In both situations fixed points occur that must be resolved in the extended theory. This means that we have to provide the $S^J$ matrices corresponding to those extra currents $J$. They will allow us to obtain the full $S$ matrix of the extended theory which satisfies all the necessary properties. 

The currents in question are those corresponding to the spinor representations $i=1$ and $i=3$ of $D(n)_1$, both with weight $h=\frac{n}{8}$. In the permutation orbifold they arise from the symmetric and the anti-symmetric representations of the spinors, both with weight $h=\frac{n}{4}$: so they have integer spin for $n=4p$ ($p$ is integer) and half-integer spin for $n=4p+2$. Moreover, they produce pairwise identical extensions of the permutation orbifold, such that there are only two unknown matrices to determine: $S^{(1,\psi)}=S^{(3,\psi)}$ ($\psi=0,1$). The solutions were given in sections \ref{D4p permutation orbifolds} and \ref{D4p+2 permutation orbifolds}. This completely solves the fixed point resolution in extension of $D(n)_1$ permutation orbifold.

There is still more work to do. First of all, we do not have any general expression yet for the $S^J$ matrix in terms of the $S$ (and maybe $P$) matrix of the mother theory. This should be independent of the particular CFT and/or the particular current used to extend the theory. Secondly, it would be interesting to apply these CFT results in String Theory. Suitable candidates appear to be the minimal models of the $N=2$ superconformal algebra, which are the building blocks of Gepner models \cite{Gepner:1987vz,Gepner:1987qi}. We will address the first problem in the next chapter, while string theory applications will be postponed to the second part of this work.

\chapter{The ansatz}
\label{paper3}

{\flushright
{\small 
\textit{I don't know,}\par
\textit{and I would rather not guess.}\par
\emph{(J. R. R. Tolkien, The Lord of the Rings)}\par
}
}

\section{Introduction}
In the first two chapters we have started to study the problem of resolving the fixed points 
\cite{Fuchs:1996dd,Schellekens:1999yg,Schellekens:1989uf} in simple current \cite{Schellekens:1990xy,Schellekens:1989am,Intriligator:1989zw,Schellekens:1989dq} extensions of permutation orbifold \cite{Klemm:1990df,Borisov:1997nc} conformal field theories \cite{Belavin:1984vu}. The aim of this chapter is to give a general solution to this problem, valid for all conformal field theories and all order-two simple currents, going much beyond the specific examples discussed previously. 
The strategy will be to obtain an {\it ansatz} for $S^J$ based on its modular properties. To arrive at this {\it ansatz} we make
use of the following pieces of information:
\begin{itemize}
\item{The BHS $S$ matrix, $S^{BHS}$, of the unextended $\mathbb{Z}_2$ orbifold, derived in \cite{Borisov:1997nc}. This is the matrix $S^J$ for the special case $J=0$, which fixes all fields in the CFT.}
\item{The matrix $S^J$ for the anti-symmetric component of the identity, the so-called un-orbifold current as described in chapter \ref{paper1}. This matrix could be derived because this simple current
undoes the permutation orbifold and gives back the original tensor product.}
\item{The matrix $S^J$ for some cases where $J$ has spin 1. Here we used the fact that the simple current extension can be identified with a known WZW model. This allowed us to determine $S^J$ for the vector current of $SO(N)$ level 1. This was described in chapter \ref{paper1}.}
\item{Using triality in $SO(8)$  this could be generalized to the spinor currents of $SO(8)$ level 1, and 
from there to all spinor currents of $SO(2n)$ level 1, which have very similar modular properties. This was described in chapter \ref{paper2}.}
\end{itemize}

Here we will use these previous works as ``stepping stones" towards a general ansatz, which includes all
the aforementioned results as special cases, and has a far larger range of validity. In particular, the results of the previous chapters were limited to low levels, such as in the permutation orbifold of $B(n)_1$, $D(n)_1$ and $A(1)_k$ (completely for $k=2$ and $k$ odd, partially for $k$ even). By an educated guess, one could very well suspect that this formula would depend on a few quantities of the original or mother CFT $\mathcal{A}$, such as its $S$ matrix, its $P$ matrix, the weight $h_J$ of the simple current $J$, etc. This is the problem that we address and solve in this chapter. The formula which we obtain is valid
for any order-two simple current $J$ of any order-two permutation orbifold. In particular, this extends the foregoing results
for $B(n)$, $D(2n)$ and $A(1)$ to arbitrary level, but it also includes permutation orbifolds of many other WZW models such as $C(n)$, 
$E(7)$, as well as the permutation orbifolds of 
many coset CFT's, such as the $N=0$ and $N=1$ minimal superconformal models and some of the currents of the $N=2$ minimal superconformal
models.   
Not included are fixed points of simple currents of orders larger than two, which occur for example in 
the permutation orbifolds
of $A(2)$ level $3k$, or $D(2n+1)$ for even level.

The plan of this chapter is as follows.\\
Since this chapter contains the main CFT result of this whole work, we would like to make it more or less independent from the previous chapter as well as self-contained, hence we start by fixing our notation and reviewing the construction of the permutation orbifold, its $S^{BHS}$ matrix, together with its simple current and fixed point structure.\\
In section \ref{section_ansatz}, we extend the ansatz to the most general case and comment about its unitarity and modular invariance. The complete proof that our ansatz is actually unitary and modular invariant is not given here, but can be found in the original paper \cite{Maio:2009tg}.

\section{The permutation orbifold}
In this section we review a few facts that will be relevant about permutation orbifolds, already described in chapter \ref{paper1}. 
The $\mathbb{Z}_2$-permutation orbifold 
\begin{equation}
\mathcal{A}_{\rm perm} \equiv (\mathcal{A}\times \mathcal{A})/\mathbb{Z}_2\,,
\end{equation}
by definition, contains fields that are symmetric under the interchange of the two $\mathcal{A}$ factors. Moreover, there is also a twisted sector, as demanded by modular invariance. 
The $S$ matrix of $\mathcal{A}_{\rm perm}$, denoted by $S^{BHS}$, has been already presented in chapter \ref{paper1}, but for convenience reasons we will recall it here:
\begin{subequations}
\label{BHS}
\begin{eqnarray}
S_{\langle mn\rangle \langle pq\rangle}&=&S_{mp}\,S_{nq}+S_{mq}\,S_{np} \\
S_{\langle mn\rangle \widehat{(p,\chi)}}&=&0 \\
S_{\widehat{(p,\phi)}\widehat{(q,\chi)}}&=&\frac{1}{2}\,e^{2\pi i(\phi+\chi)/2} \,P_{pq} \\
S_{(i,\phi)(j,\chi)}&=&\frac{1}{2}\,S_{ij}\,S_{ij} \\
S_{(i,\phi) \langle mn\rangle}&=&S_{im}\,S_{in} \\
S_{(i,\phi)\widehat{(p,\chi)}}&=&\frac{1}{2}\,e^{2\pi i\phi/2} \,S_{ip} \,,
\end{eqnarray}
\end{subequations}
where the $P$ matrix (introduced in \cite{Bianchi:1990yu}) is defined by $P=\sqrt{T}ST^2S\sqrt{T}$.

If there is any integer or half-integer spin simple current in $\mathcal{A}$, it gives rise to an integer spin simple current in $\mathcal{A}_{\rm perm}$, which can be used to extend the orbifold CFT. We can denote the extended permutation orbifold by $\tilde{\mathcal{A}}_{\rm perm}$. In the extension, some fields are projected out while the remaining organize themselves into orbits of the current. Typically untwisted and twisted fields do not mix among themselves. As far as the new spectrum is concerned, we do know that these orbits become the new fields of $\tilde{\mathcal{A}}_{\rm perm}$, but we do not normally know the new $S$ matrix, $\tilde{S}$.

In chapter \ref{paper1}, using the sufficient and necessary condition $S^{BHS}_{J0}=S^{BHS}_{00}$ \cite{Dijkgraaf:1988tf}, it was proved that orbifold simple currents correspond to the symmetric ($\psi=0$) and anti-symmetric ($\psi=1$) representations (namely diagonal fields) of the simple currents in the mother theory $\mathcal{A}$, hence the notation $(J,\psi)$, being $J$ the corresponding simple current in the mother theory. Consequently, one simple current in $\mathcal{A}$ generates two simple currents in $\mathcal{A}_{\rm perm}$. 
The fixed point structure arising in $\mathcal{A}_{\rm perm}$ was also determined in chapter \ref{paper1} for currents with (half-)integer spin. Here we want to consider currents with spin $h_J \in \frac{1}{4}\,\mathbb{Z}_{\rm odd}$ as well. In fact, if $h_J$ is quarter-integer, the resulting permutation orbifold current has half-integer weight, and hence could have fixed points. The generalization is straightforward and involves small changes only for twisted fixed points. In fact, by studying the fusion coefficients, we can show that:
\begin{itemize}
\item diagonal fields: $(i,\phi)$ is a fixed point of $(J,\psi)$ if  $\psi=0$ and if $i$ is a fixed  point of $J$, i.e. $Ji=i$;
\item off-diagonal fields: $\langle m,n\rangle$ is a fixed point of $(J,\psi)$ 
\begin{itemize}
\item either if $m$ and $n$ are both fixed points of $J$, i.e. $Jm=m$ and $Jn=n$,
\item or if $m$ and $n$ are in the same $J$-orbit, i.e. $Jm=n$;
\end{itemize}
\item twisted fields: $\widehat{(p,\phi)}$ is a fixed point of $(J,\psi)$ if $Q_J(p)=\frac{\psi}{2}+2\,h_J\,\,{\rm mod}\,\, \mathbb{Z}$, independently of $\phi$.
\end{itemize}
For the twisted fixed points, the proof can be found in the appendix of the original paper \cite{Maio:2009tg}. Observe that for (half-)integer spin simple currents we can drop the additional $2h_J$ from the monodromy charge.

Also note that there exist diagonal fixed points only for the symmetric representation of the simple current and that the twisted fixed points are determined by $Q_J(p)$, the monodromy charge of $p$ w.r.t. $J$. Moreover, we will often have to distinguish between the two types of fixed points coming from the off-diagonal sector: for obvious reasons, we will call them \textit{fixed-point-like off-diagonal fields} and \textit{orbit-like off-diagonal fields} respectively in the two cases.

\section{The general ansatz}
\label{section_ansatz}
Here we give the most general ansatz for the fixed-point resolution matrices $S^{(J,\psi)}$ of a $(J,\psi)$-extended permutation orbifold. It reads:
\begin{subequations}
\label{ansatz with fixed points}
\begin{eqnarray}
S^{(J,\psi)}_{\langle m,n\rangle\langle p,q\rangle}&=&S^J_{mp}\,S^J_{nq}+(-1)^\psi S^J_{mq}\,S^J_{np} \\
S^{(J,\psi)}_{\langle m,n\rangle\widehat{(p,\chi)}}&=&
\left\{
\begin{array}{cl}
0 & {\,\,\rm if\,\,}   J\cdot m=m\\
A\,S_{mp} & {\,\,\rm if\,\,}   J\cdot m=n
\end{array}
\right. \\
S^{(J,\psi)}_{\widehat{(p,\phi)}\widehat{(q,\chi)}}&=&
B\,\frac{1}{2}\,e^{i\pi\hat{Q}_J(p)}\,P_{Jp,q}\,e^{i\pi(\phi+\chi)} \\
S^{(J,\psi)}_{(i,\phi)(j,\chi)}&=&\frac{1}{2}\,S^J_{ij}\,S^J_{ij} \\
S^{(J,\psi)}_{(i,\phi)\langle m,n\rangle}&=&S^J_{im}\,S^J_{in} \\
S^{(J,\psi)}_{(i,\phi)\widehat{(p,\chi)}}&=&C\,\frac{1}{2}\,e^{i\pi\phi} \,S_{ip}\,.
\end{eqnarray}
\end{subequations}
The notation in the ansatz is as follows. We denote by $\hat{Q}_J (m)$ the combination of weights $\hat{Q}_J (m)= h_J+h_m-h_{J\cdot m}$, while $Q_J(m)$ is the monodromy charge of the field $m$ w.r.t. the current $J$ in the mother theory which gives rise to the current $(J,\psi)$ in the permutation orbifold (independently of its symmetric or anti-symmetric representation). These two quantities are obviously related by $Q_J(m)=\hat{Q}_J(m) \,\,{\rm mod}\,\, \mathbb{Z}$. 
Using modular invariance, one can show that these phases satisfy the following relations (see \cite{Maio:2009tg}):
\begin{equation}
B=(-1)^\psi\,e^{3i\pi h_J}\,\,,\qquad A^2=C^2=(-1)^\psi\,e^{2 i\pi h_J}\,,
\end{equation}
$h_J$ being the weight of the simple current, which might depend on the central charge, rank and level of the original CFT. These relations come from modular invariance: so, we can see that $B$ is \textit{fully} fixed, while $A$ and $C$ are fixed \textit{up to a sign}. We could 
also have inserted a phase $E$ in the matrix element $S^{(J,\psi)}_{(i,\phi)\langle m,n\rangle}$. Modular invariance would then constrain it to $E^2=1$, hence $E$ would have been just a sign. As before in the simplified ansatz, these sign ambiguities are completely
understood in terms of the general sign ambiguities of fixed point resolution matrices. Within the three blocks
(diagonal, off-diagonal, twisted) they are fixed because we write all matrix elements in terms of $S^J$, $S$ and $P$, but this still
leaves three relative signs between the blocks. These signs are fixed by requiring that the result should recover the
BHS matrix. The latter has no free signs, because it is defined by a character representation. This therefore defines a
convenient canonical choice for the signs. The special case of the BHS formula corresponds to 
$h_J=\psi=0$ for the identity, hence $B=1$, while $A$ and $C$ are just signs, that must be taken positive.
However, we emphasize that any other sign
choice for $A$, $C$ or $E$ is equally valid; it is analogous to a gauge choice. Note that some of the matrices presented in the previous chapters use different sign conventions.

This ansatz more or less interpolates our previous results of chapters \ref{paper1} and \ref{paper2}, up to the above sign conventions. The phase in the twisted-twisted sector containing the hatted monodromy charge is necessary in order to make $S^J$ symmetric\footnote{In fact one can check that
\begin{equation}
e^{i\pi \hat{Q}_J(m)}\,P_{Jm,p}=e^{i\pi \hat{Q}_J(p)}\,P_{m,Jp} \equiv A_{mp}\nonumber
\end{equation}
with 
\begin{equation}
A_{mp}= e^{i\pi h_J}\,\sqrt{T}_{mm}\sum_l\left(e^{2 i\pi Q_J(l)}\,S_{ml}T^2_{ll}S_{lp}\right) \sqrt{T}_{pp}
\nonumber
\end{equation}
and $A_{mp}$ is symmetric.} as it should be since, for order-two currents,  $S^J_{ab}=S^{J^{-1}}_{ba}$ \cite{Fuchs:1996dd}. We need to put a hat on $Q_J$ in order to avoid ambiguities deriving from having the monodromy charge in the exponent, since it is defined only modulo integers. Similarly to what happens in the BHS formula (\ref{BHS}), the $P$ matrix enters the twisted-twisted sector.

A comment about the matrix element $S^{(J,\psi)}_{\langle m,n\rangle\widehat{(p,\chi)}}$ is in order. We can actually \textit{prove} that the quantity $S_{mp}$ vanishes when $J\cdot m=m$ and $\psi=1$ and use the second line of the ansatz also in this case. In fact, first of all, since $\widehat{(p,\chi)}$ is a twisted fixed point of $(J,\psi)$ and since $h_J$ must be (half-)integer in order for $m$ to be fixed by $J$, we can drop the $2h_J$ contribution from the monodromy of $p$, i.e. $Q_{J}(p)=\frac{\psi}{2}$. Secondly, using $S_{Jm,p}=e^{2 i\pi Q_J(p)}\,S_{mp}$ \cite{Schellekens:1990xy}, we have:
\begin{equation}
S_{mp}=S_{Jm,p}=e^{2 i\pi Q_J(p)}\,S_{mp}=e^{2 i\pi \frac{\psi}{2}}\,S_{mp}\,,
\end{equation}
implying that the non-identically-to-zero option of $S^{(J,\psi)}_{\langle m,n\rangle\widehat{(p,\chi)}}$ actually also vanishes when $J\cdot m=m$ and $\psi=1$. So in our ansatz we are claiming that $S^{(J,\psi)}_{\langle m,n\rangle\widehat{(p,\chi)}}$ vanishes also for $\psi=0$ when $\langle m,n\rangle$ is fixed-point-like. We also recall that for orbit-like off-diagonal fields there exists a similar relation between $S_{mp}$ and $S_{np}$:
\begin{equation}
S_{np}=S_{Jm,p}=e^{2 i\pi Q_J(p)}\,S_{mp}=e^{2 i\pi \frac{\psi}{2}}\,S_{mp}\,,
\end{equation}
but we cannot infer much from here. It is crucial in these manipulations that the field $p$ gives rise to a twisted field in the extended orbifold.

\subsection{Unitarity and modular invariance}
The proofs of unitarity and modular invariance of the ansatz are referred to \cite{Maio:2009tg}. The calculation is interesting since we are able to derive a few aside identities having to do with projected sums of selected elements of the unitary $S$ and $P$ matrices of the original theory.
In order to prove unitarity, we show that $S^{(J,\psi)}\cdot S^{(J,\psi)\dagger}=1$. 
Modular invariance is the statement that $(S^{(J,\psi)})^2= (S^{(J,\psi)} \cdot T^{(J,\psi)})^3$, where $T^{(J,\psi)}$ is the $T$ matrix of the permutation orbifold restricted to the fixed points of $(J,\psi)$. Using this relation to prove modular invariance would be computationally heavy, due to the double sum arising in the cube. Instead we re-write the constraint as
\begin{equation}
{T^{(J,\psi)}}^{-1} S^{(J,\psi)} {T^{(J,\psi)}}^{-1}=S^{(J,\psi)} T^{(J,\psi)} S^{(J,\psi)}\,,
\end{equation}
which is simpler since it involves only one sum on the r.h.s. and no sums at all on the l.h.s. Surprisingly enough, we find that the phases in the ansatz do not depend explicitly on the central charge $c$ of the mother CFT (the central charge of the permutation orbifold is $\hat{c}=2c$). The reason for this is that the $T$ matrices of the orbifold theory re-arrange themselves into suitable functions of $T$ matrices of the original theory. Explicitly (recall $T$ is diagonal: $T_{ij}=T_i\,\delta_{ij}$):
\begin{equation}
T^{(J,\psi)}_{\langle m,n\rangle}=T_m\,T_n\,,\qquad
T^{(J,\psi)}_{(i,\phi)}=T^2_i\,,\qquad
T^{(J,\psi)}_{\widehat{(p,\chi)}}=e^{i\pi\chi}\,\sqrt{T}_p\,.
\end{equation}
hence the central charge gets always re-absorbed in $T$. The phases $A$, $B$ and $C$ will be constrained by this calculation to be equal to the expressions given earlier.

\subsection{Checks}

Although we have an explicit proof that our results satisfy the conditions of modular invariance (see \cite{Maio:2009tg}), we do not have a general proof
that all other RCFT conditions are satisfied, although the simplicity and generality of the answer suggests that this is indeed
the right answer. The next issue one could check is the fusion rules of the extended CFT. Currents of order two that have
fixed points must have integer or half-integer spin. In the latter case there is no extension, but one may consider instead
the tensor product with an Ising model, extended with an integer spin product of currents. Indeed, also for integer spin currents
one can consider arbitrarily complicated tensor products and any integer spin product current. All of these should give sensible
fusion rules. We have built (\ref{ansatz with fixed points}) into the program {\tt kac} \cite{kac}, which computes fusion rules
for simple current extended WZW models and coset CFT's, and this gives us access to a huge number of explicit examples. We have checked many simple extensions,
and also combinations of permutation orbifolds. For example, denote by $X$ the permutation orbifold of $C(3)_2$. It has
85 primaries and 
four simple currents, the identity, the anti-symmetric component of the latter (which has spin 1) and two spin 3 currents $K$ and $L$ originating
from symmetric and anti-symmetric product of the simple current of $C(3)_2$. We can now tensor $X$ with itself, and extend the result
with $(K,K)$ or $(K,L)$ or $(L,L)$. This gives three distinct CFT's with 2578, 2284 and 2102 primaries respectively. Checking all their
fusion rules is very time-consuming, so we have just checked a large sample.
The fusion rules we have checked in these cases, and many others, have indeed integer coefficients. Note that our formalism allows us to consider
also the permutation orbifold of $X \times X$, and the simple current extensions thereof. For all these CFT's the fusion rules are
now explicitly available. Furthermore, for all these cases we can compute the boundary and crosscap coefficients as well as 
the annulus, Moebius and Klein bottle amplitudes using the formalism of \cite{Fuchs:2000cm} (generalizing earlier
works, such as  \cite{Cardy:1989ir,Pradisi:1996yd,Pradisi:1995pp}, and references cited in this paper).

\section{Conclusion}
In this chapter we have addressed the  problem of fixed point resolution in (extensions of) permutation orbifolds or equivalently the problem of finding the $S^J$ matrices for those classes of theories.

The results of this chapter allow us to make extensions of permutation orbifolds. We propose an ansatz for the $S^J$ matrices valid in the general case of simple currents of order 2. We have also shown how to get back the BHS formula when we extend the permutation orbifold by the identity current $(J,\psi)=(0,0)$. 
This ansatz is unitary and modular invariant. Moreover, unlike the results of the previous chapters,
it does not depend on any explicit details of the
particular CFT used in the mother theory, other than its modular properties.
It depends only on the weight $h_J$ of the current used in the extension (via phases) and on the matrices $S$ and $T$ (via the matrix $P$) of the mother theory. This implies that it can be used freely in any sequence of extensions and $Z_2$ permutations of CFT's, thus
leading to a huge set of possible applications.

There are still further generalizations possible: the extension of this result to higher order permutations and the extension to higher order
currents, and the combination of both. However, we will not discuss these problems here.

\part{STRING THEORY}
\chapter*{About Part II}

{\flushright
{\small 
\textit{I think the Moon is a world like this one,}\par
\textit{and the Earth is its moon.}\par
\emph{(E. Rostand, Cyrano de Bergerac)}\par
}
}

Part II focuses on String Theory. In particular, we address the problem of constructing four-dimensional string theories using the permutation orbifold. Not surprisingly, our approach will be based on CFT and we will apply the knowledge and the results of Part I to build modular invariant partition functions. 

Our method of generating spectra consists of several ingredients. First of all, we adapt Gepner's construction \cite{Gepner:1987vz,Gepner:1987qi} to include the permutation orbifold. Secondly, we look at the spectra generated by these permuted Gepner models obtained by extending it by a subset of all possible simple currents.

Gepner models are constructed out of tensor products of smaller CFT's, the so-called $N=2$ superconformal minimal models. Moreover, the total central charge of the tensor product CFT must add up to the particular value of nine. There are finitely-many (and in fact only 168) combinations of the minimal models that have the correct value for the central charge. Furthermore, in order to guarantee space-time and world-sheet supersymmetry, additional constraints must be imposed or, equivalently, the tensor product theory must be extended by a suitable set of specific integer-spin simple currents. Sometimes, it happens that two (or more) of the factors are identical. When this is the case, we can replace the full block by its permutation orbifold. Since the latter must also be supersymmetric, before being able to use it in the Gepner model we need to super-symmetrize it. This is done again by a simple current extension.

Already by looking at the sub-block of the permutation orbifold for two $N=2$ minimal models, a very interesting mathematical structure appears. For example, we learn how to make the orbifold supersymmetric and we discover that extended $N=2$ permutations generate sometimes ``exceptional'' simple currents, that were not expected a priori, because they have a completely different origin from standard orbifold currents, and whose existence is related to the presence of special relations involving $S$-matrix elements. In some cases, these exceptional currents have fixed points that remain currently unresolved.

Having this machinery ready, we can build, \textit{mutatis mutandis}, the supersymmetric orbifold of $N=2$ minimal models into Gepner's scheme. Using the simple current formalism, we are able to construct hundreds or thousands of spectra corresponding to each permutation orbifold of standard Gepner models. All these spectra will have Standard-Model structure, since we explicitly break the $SO(10)$ coming from the fermionic sector of the heterotic string in Gepner construction into $SU(3)\times SU(2) \times U(1)$. 

As far as the number of families is concerned, one then notices that the number three is strongly suppressed. This was already the case for conventional Gepner models. However, there exists a way to deal with this problem and make the number three as abundant as two or four, or at least of the same order of magnitude. This is the ``lifting'' procedure \cite{GatoRivera:2010gv,GatoRivera:2010xn,GatoRivera:2010fi}, which allows to replace a sub-block from the tensor product in Gepner models by an isomorphic CFT with identical modular properties. 

\chapter{Permutation orbifolds of $N=2$ minimal models}
\label{paper4}

{\flushright
{\small 
\textit{All we have to decide is what to do with the time that is given us.}\par
\textit{(J. R. R. Tolkien, The Lord of the Rings)}\par
}
}

\section{Introduction}

In this and the next chapter we consider applications of the previous results on fixed point resolution in extensions of permutation orbifolds to string theory phenomenology, where one is interested in computing four-dimensional particle spectra, possibly close to the Standard Model. Generically rational CFT's are very useful tools for computing features of phenomenological interest in perturbative string theory. However, the set of Rational CFT's at our disposal is disappointingly small. The only interacting rational CFT's that we can really use for building exact string theories are tensor products of $N=2$ minimal models, also known as ``Gepner models" \cite{Gepner:1987vz,Gepner:1987qi}. 
Historically the first area of application of rational CFT model building was the heterotic string. 

The full power of rational CFT model building only manifests itself
if one uses the complete set \cite{Kreuzer:1993tf}
of simple current modular invariant partition functions (MIPF's)  \cite{Schellekens:1989am,Schellekens:1989dq} 
(See  \cite{Schellekens:1990xy} for a review of simple current MIPF's. The underlying symmetries were discovered independently in \cite{Intriligator:1989zw}). 
Already basic physical constraints like world-sheet and space-time supersymmetry require a simple current MIPF. 
As we know by now, although the simple current symmetries can be read off from the modular transformation matrix $S$, and the corresponding MIPF's can be readily constructed, often additional information is required when the simple current action has fixed points \cite{Schellekens:1990xy,Schellekens:1989uf}. In order to make full use of the complete simple current formalism we need the following data of the CFT under consideration:
\begin{itemize}
\item{The exact conformal weights.}
\item{The exact ground state dimensions.}
\item{The modular transformation matrix $S$.}
\item{The fixed point resolution matrices $S^J$, for simple currents $J$ with fixed points.}
\end{itemize}
Not all of this information is needed in all cases. 
In heterotic spectrum computations all we need to know is the first two items, plus the
simple current orbits implied by $S$. To compute the Hodge numbers of heterotic
compactifications, we only need to know the exact ground state dimensions of the Ramond
ground states.  

In addition to Gepner models, for which all this information is available, there is at least another class that is potentially usable: the permutation orbifolds. 
For permutation orbifolds, it has been known for a long time how to compute their weights and ground state
dimensions, but there was no formalism for computing $S$ and $S^J$. In this case
it has been possible to compute the Hodge numbers and even the number of singlets for
the diagonal invariants \cite{Klemm:1990df,Fuchs:1991vu}. However, meanwhile it as become
clear that the values of Hodge numbers offer a rather poor road map to the heterotic string landscape.
In particular they lead to the wrong impression that the number of families is large and very often a 
multiple of 4 or 6. The former problem disappears if one allows breaking of the gauge group 
$E_6$ to phenomenologically more attractive subgroups (ranging from $SO(10)$ via $SU(5)$
or Pati-Salam to just $SU(3)\times SU(2) \times U(1)$ (times other factors) by allowing asymmetric
simple current invariants \cite{GatoRivera:2010gv,Schellekens:1989wx}, whereas the second
problem can be solved by modifying the bosonic sector of the heterotic string, for example
by means of heterotic weight lifting \cite{GatoRivera:2010xn}, B-L lifting \cite{GatoRivera:2010fi}.
All of these methods require knowledge of the full simple current structure of the building blocks.
This in its turn requires knowing $S$.

A first step towards the computation of $S$ for $\mathbb{Z}_2$ permutation orbifolds
was made in \cite{Borisov:1997nc}, almost ten years after permutation orbifolds were first  studied.
While this might seem sufficient for permutation orbifolds in heterotic string model building,
we will see that even in that case more is needed. The crucial ingredient is fixed point resolution. Therefore we expect that significant progress can be made by applying the results of chapters \ref{paper1}, \ref{paper2} and especially \ref{paper3}, extending the BHS formula \cite{Borisov:1997nc} to fixed point resolution matrices $S^J$, for currents $J$ of order 2. 
Since in $N=2$ minimal models all currents with fixed points have order 2, this seems
to be precisely what is needed. The purpose of this chapter is to determine which of the CFT data
listed above can now be computed for permutations orbifolds of $N=2$ minimal models, and provide algorithms for doing so.

\subsection{Basic concepts}

Following the discussion so far, throughout this work we will always consider the permutation orbifold:
\begin{equation}
(\mathcal{A}\times\mathcal{A})/\mathbb{Z}_2\,.
\end{equation}
Moreover, we look at its simple-current extensions and its simple current MIPF's. We have already seen that the orbifold currents always admit fixed points, that were resolved by the formula (\ref{ansatz with fixed points}) for the $S^J$ matrices. 

Here we want to apply the results of the previous chapters about fixed point resolution in simple current extensions of permutation orbifolds to the physically interesting case of $N=2$ minimal models. 
This may seem to be
straightforward, as a supersymmetric CFT is just an example of a CFT, and the aforementioned results hold for {\it any} CFT. 
However, the permutation orbifold obtained by applying the BHS formula (\ref{BHS}) turns out {\it not} to have world-sheet supersymmetry. This is related
to the fact that a straightforward Virasoro tensor product (the starting point for the permutation orbifold) does not have world-sheet supersymmetry
either, for the simple reason that tensoring produces combinations of R and NS fields. The solution to this problem in the case of the tensor product
is to extend the chiral algebra by a simple current of spin 3, the product of the world-sheet supercurrents of the two factors (or any two factors if there
are more than two). One might call this the supersymmetric tensor product.
However for this extended tensor product the BHS formalism of \cite{Borisov:1997nc} is not available. One can follow two paths to solve that problem: either one can 
try to generalize \cite{Borisov:1997nc} to supersymmetric tensor products (or more generally to extended tensor products) or one can try to
supersymmetrize the permutation orbifold. We will follow the second path.

One might expect that the chiral algebra of permutation orbifold has to be extended in order to restore world-sheet supersymmetry. That is indeed
correct, but it turns out that there are {\it two} plausible candidates for this extension: the symmetric and the anti-symmetric combination of the 
world-sheet supercurrent of the minimal model. Denoting the latter as $T_F$, the two candidates are the spin-3 currents $(T_F,0)$ and $(T_F,1)$.
Somewhat counter-intuitively, it is the second one that leads to a CFT with world-sheet supersymmetry. The first one,  $(T_F,0)$, gives rise
to a CFT that is similar, but does not have a spin-3/2 current of order 2. 

Both $(T_F,0)$ and $(T_F,1)$ have fixed points, but we know their resolution matrices from the general results of chapter \ref{paper3}. They
come in handy, because it turns out that one of these fixed points is the off-diagonal field $\langle 0,T_F \rangle$ of conformal weight $\frac32$. As stated above,
this is not a simple current of the permutation orbifold, but it is a well-known fact that chiral algebra extensions can turn primaries into
simple currents. This is indeed precisely what happens here. Since we know the fixed point resolution matrices of  $(T_F,0)$ and $(T_F,1)$ we
can work out the orbits of this new simple current. It turns out that in the former extension  $\langle 0,T_F \rangle$ has order 4, whereas in the latter it has
order 2. We conclude that the latter must be the supersymmetric permutation orbifold; we will refer to the former CFT as ``$X$".
The fixed point resolution also  determines the action
of the new world-sheet supercurrent $\langle 0,T_F \rangle$ on all other fields, combining them into world-sheet superfields of either NS or R type. 

The current $\langle 0,T_F \rangle$ has no fixed points, as one would expect in an $N=2$ CFT (because it has two supercurrents of opposite charge, 
and acting with either one changes the charge). However, there are in general  more off-diagonal fields that turn into simple currents.
Some of these do have fixed points, and since the simple currents originate from fields that were not simple currents in the
permutation orbifold, our previous results do not allow us to resolve these fixed points. We find that this problem only occurs if $k=2 \mod 4$, where $k=1\ldots \infty$ is the integer parameter labelling the $N=2$ minimal models.

To prevent confusion we list here all the CFT's that play a r\^ole  in the story:

\begin{itemize}
\item{The $N=2$ minimal models.} 
\item{The tensor product of two identical $N=2$ minimal models. We will refer to this as $(N=2)^2$. } 
\item{The BHS-orbifold of the above. This is the permutation orbifold as described in \cite{Borisov:1997nc}. It will be denoted
$(N=2)^2_{\rm orb}$.} 
\item{The supersymmetric extension of the tensor product. This is the extension of the tensor product by the spin-3 current $T_F\otimes T_F$. 
We will call this CFT $(N=2)^2_{\rm Susy}$.} 
\item{The supersymmetric permutation orbifold $(N=2)^2_{\rm Susy-orb}$. This is BHS orbifold extended by the spin-3 current $(T_F,1)$.} 
\item{The non-supersymmetric permutation orbifold $X$. This is BHS orbifold extended by the spin-3 current $(T_F,0)$.}
\end{itemize}

The plan of this chapter is as follows. \\
In section \ref{N=2 min mods} we review the theory of $N=2$ minimal models, their spectrum and $S$ matrix. As far as the characters are concerned, we recall the coset construction and state a few known results from parafermionic theories, in particular the string functions.
In section \ref{perm orb recap}, for convenience reasons, we recall relevant properties about general permutation orbifolds, the BHS formalism and its generalization to fixed point resolution matrices, that we have already described in the first part of this work. 
Then in section \ref{perm of N=2 mods} we move to the permutation orbifold of $N=2$ minimal models. We consider extensions by the various currents related to the spin-$\frac{3}{2}$ world-sheet supercurrent and explain how the exceptional off-diagonal currents appear.
We also work out the special extensions of the orbifold by the symmetric and anti-symmetric representation of the world-sheet current. 
In section \ref{except curr and fp} we study the exceptional simple currents and in particular the ones that have got fixed points. We give the structure of these off-diagonal currents as well as of their fixed points, in the case they have any. We illustrate the general ideas with the example of the minimal model at level two. 
In section \ref{CFT summary} we summarize the orbit and fixed
point structures for the various CFT's we consider, we present the analogous results for $N=1$ minimal models, where similar issues arise, and also some interesting differences. 
In section \ref{conclusion paper4} we give our conclusions.
We collect some technical details in appendix \ref{Appendix Paper4}. This chapter is based on \cite{Maio:2010eu}.

\section{$N=2$ minimal models}
\label{N=2 min mods}
In this section we review the minimal model of the $N=2$ superconformal algebra.

\subsection{The $N=2$ SCFT and minimal models}
The $N=2$ superconformal algebra (SCA) was first introduced in \cite{Ademollo:1976pp}. It contains the stress-energy tensor $T(z)$ (spin 2), a $U(1)$ current $j(z)$ (spin 1) and two fermionic currents $T_F^{\pm}(z)$ (spin $\frac{3}{2}$). 
Using the mode expansion
\begin{equation}
T(z)=\sum_{n\in\mathbb{Z}}\frac{L_n}{z^{n+2}}\,\,,\qquad
j(z)=\sum_{n\in\mathbb{Z}}\frac{J_n}{z^{n+1}}\,\,,\qquad
T_F^{\pm}(z)=\sum_{r\in\mathbb{Z}\pm\nu}\frac{G_{r}^{\pm}}{z^{r+\frac{3}{2}}}\,,
\end{equation}
the (anti-)commutator algebra is
\begin{subequations}
\label{SCA modes}
\begin{eqnarray}
[L_m,L_n] &=& (m-n)L_{m+n}+\frac{c}{12}(m^3-m)\delta_{m,-n}\,\\
{[}L_m,J_n{]} &=&-n J_{m+n}\,, \\
\{G^+_r,G^-_s\} &=& 2L_{r+s}+(r-s)J_{r+s}+\frac{c}{3}(r^2-\frac{1}{4})\delta_{r,-s}\, \\
\{G^+_r,G^+_s\} &=& \{G^-_r,G^-_s\} =0\,, \\
{[}J_m,G^\pm_r{]} &=& \pm\frac{1}{c}G^\pm_{r+n}\,,\\
{[}J_m,J_n{]} &=& \frac{c}{3}m\delta_{m,-n} \,.
\end{eqnarray}
\end{subequations}
The shift $\nu$ can in principle be real, but for our considerations we take it to be integer (NS sector) or half-integer (R sector). 
Unitary representations of the $N=2$ SCA can exists for values of the central charge $c\geq3$ (infinite-dimensional representations) and for the discrete series $c<3$ (finite-dimensional representations). The latter ones are discrete conformal field theories, the $N=2$ minimal models, whose central charge is specified by an integer number $k$, called the level, according to:
\begin{equation}
\label{N=2 central charge}
c=\frac{3k}{k+2}\,.
\end{equation}
The Cartan subalgebra is generated by $L_0$ and $J_0$, hence primary fields, denoted by 
\begin{equation}
\phi_{l,m,s}\equiv (l,m,s) \,,
\end{equation}
are labelled by their weights $h$ and charges $q$:
\begin{equation}
L_0 |h,q \rangle =h |h,q\rangle\,\,,\qquad J_0 |h,q\rangle=q |h,q\rangle\,.
\end{equation}
The allowed values for $h$ and $q$ are given by 
\begin{equation}
\label{N=2 weights}
h_{l,m,s}=\frac{l(l+2)-m^2}{4(k+2)} +\frac{s^2}{8}\,\,,\qquad 
q_{m,s}=-\frac{m}{k+2}+\frac{s^2}{2}\,,
\end{equation}
where $l,\,m,\,s$ are integer numbers with the property that
\begin{itemize}
\item $l=0,\,1,\dots,\,k$
\item $m$ is defined ${\rm mod}\,\,2(k+2)$ (we will choose the range $-k-1\leq m\leq k+2$)
\item $s$ is defined ${\rm mod}\,\,4$ (we will choose the range $-1\leq s\leq 2$, with $s=0,\,2$ for the NS sector and $s=\pm1$ for the R sector).
\end{itemize}
In addition, in order to avoid double-counting, one has to take into account that not all the fields are independent but are rather pairwise identified:
\begin{equation}
\phi_{l,m,s}\sim\phi_{k-l,m+k+2,s+2}\,.
\end{equation}
This identification is realized as a formal simple current extension.

In order to be able to say something about the characters of the minimal model, let us mention the coset construction. The $N=2$ minimal models can be described in terms of the coset
\begin{equation}
\label{coset}
\frac{SU(2)_k \times U(1)_4}{U(1)_{2(k+2)}}\,.
\end{equation}
Throughout this work, we use the convention that $U(1)_N$ contains $N$ primary fields (with $N$ always even). 
The characters of this coset are decomposed according to
\begin{equation}
\chi^{SU(2)_k}_l(\tau) \cdot \chi^{U(1)_4}_s (\tau)=
\sum_{m=-k-1}^{k+2}
\chi^{U(1)_{2(k+2)}}_m(\tau) \cdot \chi_{l,m,s} (\tau)\,,
\end{equation}
where $\chi_{l,m,s}$ are the characters (branching functions) of the coset theory. 
Their conformal dimension can be read off from the above decomposition and agrees with (\ref{N=2 weights}).

\subsection{Parafermions}
We will soon see that $\chi_{l,m,s}$ will be determined in terms of the so-called \textit{string functions}, which are related to the characters of the \textit{parafermionic theories} \cite{Fateev:1985mm,Qiu:1986zf}. In order to determine $\chi_{l,m,s}$, let us consider $SU(2)_k$ representations. Using the Weyl-Kac character formula \cite{Kac,Kac:1984mq}, $SU(2)_k$ characters are given by a ratio of generalized theta functions:
\begin{equation}
\label{su2 characters}
\chi^{SU(2)_k}_l(\tau,z)=
\frac{\Theta_{l+1,k+2}(\tau,z)+\Theta_{-l-1,k+2}(\tau,z)}{\Theta_{1,2}(\tau,z)+\Theta_{-1,2}(\tau,z)}\,,
\end{equation}
where by definition
\begin{equation}
\Theta_{l,k}(\tau,z)=\sum_{n\in\mathbb{Z}+\frac{l}{2k}}q^{k n^2}e^{-2 i \pi n k z}\,.
\end{equation}
Parafermionic conformal field theories are given by the coset
\begin{equation}
\frac{SU(2)_k}{U(1)_{2k}}\,, \qquad c=\frac{2(k-1)}{k+2}\,.
\end{equation}
We can decompose $SU(2)_k$ characters in term of $U(1)_{2k}$ and parafermionic characters as
\begin{equation}
\chi^{SU(2)_k}_l(\tau,z) =
\sum_{m=-k+1}^{k}
\chi^{U(1)_{2k}}_m(\tau,z) \cdot \chi^{{\rm para}_k}_{l,m} (\tau)\,.
\end{equation}
This decomposition also gives the weight of the parafermions:
\begin{equation}
h_{l,m}=\frac{l(l+2)}{4(k+2)}-\frac{m^2}{4k}\,\,,\qquad
l=0,1,\dots,k\,,\qquad m=-k+1,\dots,k\,.
\end{equation}
Using the fact that $U(1)_{2k}$ characters are just theta functions,
\begin{equation}
\chi^{U(1)_{2k}}_m(\tau,z)=
\frac{\Theta_{m,k}(\tau,z)}{\eta(\tau)}\,,
\end{equation}
the $SU(2)_k$ characters become
\begin{equation}
\label{string function definition}
\chi^{SU(2)_k}_m(\tau,z)=
\sum_{m=-k+1}^{k}
\frac{\Theta_{m,k}(\tau,z)}{\eta(\tau)}\cdot \chi^{{\rm para}_k}_{l,m} (\tau)
\equiv
\sum_{m=-k+1}^{k}
\Theta_{m,k}(\tau,z)\cdot C^{(k)}_{l,m} (\tau)\,,
\end{equation}
being $C^{(k)}_{l,m} (\tau)=\frac{1}{\eta(\tau)}\chi^{{\rm para}_k}_{l,m} (\tau)$ the $SU(2)_k$ string functions. Here, $\eta(\tau)$ is the Dedekind eta function, which is a modular form of weight $\frac{1}{2}$,
\begin{equation}
\eta(\tau)=q^{\frac{1}{24}}\prod_{k=1}^{\infty}(1-q^k)\,,\qquad
\eta(\tau)^{-1}=q^{-\frac{1}{24}}\sum_{n=0}^{\infty}P(n)q^n\,,\qquad
q=e^{2i\pi\tau}\,,
\end{equation}
with $P(n)$ the number of partitions of $n$.

As an example, consider the case with $k=1$. Since the characters of $\chi^{SU(2)_1}_m$ are the same as the characters of $\chi^{U(1)_2}_m$, we have
\begin{equation}
\chi^{{\rm para}_1}_{0,0}(\tau)=\chi^{{\rm para}_1}_{1,1}(\tau)=1\,\,,\qquad
\chi^{{\rm para}_1}_{0,1}(\tau)=\chi^{{\rm para}_1}_{1,0}(\tau)=0\,.
\end{equation}
These relations for $k=1$ generalize to arbitrary $k$ to give selection rules for the string functions. 
By decomposing $SU(2)$ representations into $U(1)$ representations, the branching functions (i.e. the parafermions) should not carry $U(1)$ charge, since they correspond to the coset (\ref{coset}) where the $U(1)$ part has been modded out. Bearing this observation in mind, the general $SU(2)_k$-character decomposition, including the selection rules, is
\begin{equation}
\chi^{SU(2)_k}_l(\tau,z) =
\sum_{
{\small
\begin{array}{c}
 m=-k+1\\
l+m=0\,{\rm mod}\,2
\end{array}
}
}^k
C^{(k)}_{l,m} (\tau) \cdot \Theta_{m,k}(\tau,z) \,.
\end{equation}
The selection rule is clearly $l+m=0\,{\rm mod}\,2$, hence
\begin{equation}
C^{(k)}_{l,m}\equiv0 \qquad {\rm if} \qquad l+m\neq0\,{\rm mod}\,2\,.
\nonumber
\end{equation}

\subsection{String functions and $N=2$ Characters}
The string functions of $SU(2)_k$ are Hecke modular forms \cite{Kac:1984mq}. They can be expanded as a power sum with integer coefficients as
\begin{equation}
C^{(k)}_{l,m} (\tau)=\exp{\left[2i\pi\tau\left(\frac{l(l+2)}{4(k+2)}-\frac{m^2}{4k}-\frac{c}{24}\right)\right]}
\sum_{n=0}^{\infty}p_n q^n\,,
\end{equation}
with $c=\frac{3k}{k+2}$, where $p_n$ is the number of states in the irreducible representation with highest weight $l$ for which the value of $J_0^3$ and $N$ are $m$ and $n$. These integer coefficients depend in general on the string function labels $l$ and $m$ and are most conveniently extracted from the following expression\footnote{There exist many different ways of determining the $SU(2)_k$ string functions. See for example \cite{Jacob:2000gw}, where a derivation is given in terms of representation theory of the parafermionic conformal models, or \cite{Jacob:2001rd}, where a new basis of states is provided for the parafermions. 
Our formula is the standard one, given in \cite{Kac:1984mq}. 
It also agrees with \cite{Nemeschansky:1989wg,Nemeschansky:1989rx}
For equivalent, but different-looking, expressions, see \cite{Fortin:2006dn,SchillingWarnaar}.
}:
\begin{equation}
\label{string function}
C^{(k)}_{l,m} (\tau)=\eta(\tau)^{-3} \sum_{-|x|<y\le |x|} {\rm sign}{(x)}\,e^{2i\pi\tau[(k+2)x^2-ky^2]}\,,
\end{equation}
where $x$ and $y$ belong to the range
\begin{equation}
\label{range for string function}
(x,y)\,{\rm or}\,\left(\frac{1}{2}-x,\frac{1}{2}+y\right)\in \left(\frac{l+1}{2(k+2)},\frac{m}{2k}\right)+\mathbb{Z}^2\,.
\end{equation}
Equation (\ref{string function}) is actually \textit{the} solution to (\ref{string function definition}), when the l.h.s. is given as in (\ref{su2 characters}).

The string functions satisfy a number of properties, that can be proved by looking at (\ref{string function}) and at the summation range (\ref{range for string function}):
\begin{itemize}
\item $C^{(k)}_{l,m}=0$, if $l+m\neq 0$ mod $2$;
\item $C^{(k)}_{l,m}=C^{(k)}_{l,m+2k}$ , i.e. $m$ is defined mod $2k$;
\item $C^{(k)}_{l,m}=C^{(k)}_{l,-m}$;
\item $C^{(k)}_{l,m}=C^{(k)}_{k-l,k+m}$.
\end{itemize}

Using theta function manipulations, the characters of the $N=2$ superconformal algebra can be expressed  in terms of the string functions as \cite{Gepner:1987qi,Blumenhagen:2009zz}
\begin{equation}
\label{MinChar}
\chi_{l,m,s}(\tau,z)=\sum_{j \,{\rm mod}\, k} C^{(k)}_{l,m+4j-s}(\tau)\cdot \Theta_{2m+(4j-s)(k+2),2k(k+2)}(\tau,kz)\,.
\end{equation}
This expression is invariant under any of the transformations $s\rightarrow s+4$ and $m\rightarrow m+2(k+2)$, which shows that $m$ is defined modulo $2(k+2)$ and $s$ modulo $4$. Also, $\chi_{l,m,s}=0$ if $l+m+s\neq 0$ mod 2 and moreover $\chi_{l,m,s}$ is invariant under the simultaneous interchange $l\rightarrow k-l$, $m\rightarrow m+k+2$ and $s\rightarrow s+2$. In the following, we will choose the standard range
\begin{equation}
l=0,\dots,k\,,\qquad m=-k-1,\dots,k+2\,,\qquad s=-1,\dots,2\
\end{equation}
for the labels of the $N=2$ characters. This range would actually produce an overcounting of states, since there is still the identification $\phi_{l,m,s}\sim \phi_{k-l,m+k+2,s+2}$ to take into account. For this purpose, it is more practical to consider the smaller range
\begin{itemize}
\item for k=odd:
\begin{equation}
\{0\le l<\frac{k}{2} \,,\,\forall m,\,\forall s\}
\end{equation}
\item for k=even:
\begin{eqnarray}
&&\{0\le l<\frac{k}{2} \,,\,\forall m,\,\forall s\}
\bigcup
\{l=\frac{k}{2} \,,\,m=1,\dots,k+1,\,\forall s\}\bigcup\\
&&\bigcup
\{l=\frac{k}{2} \,,\,m=0,\,s=0,1\}
\bigcup
\{l=\frac{k}{2} \,,\,m=k+2,\,s=0,1\}\nonumber
\end{eqnarray}
\end{itemize}
which automatically implements the above identification as well as the constraint $l+m+s=0$ mod $2$
\footnote{Observe however that formula (\ref{N=2 weights}) might give a negative weight for a field with labels $(l,m,s)$ in the range above. When this happens, we consider its identified primary with labels $(k-l,m+k+2,s+2)$, which is guaranteed to have positive weight.}. Taking this into account, the number of independent representations is given by
\begin{equation}
\#({\rm fields})=\underbrace{(k+1)}_{{\rm from}\,\,l}\cdot \underbrace{2(k+2)}_{{\rm from}\,\,m}
\cdot \underbrace{4}_{{\rm from}\,\,s}\cdot \underbrace{\frac{1}{2}}_{\rm ident.}
\cdot \underbrace{\frac{1}{2}}_{\rm constr.}=
2(k+1)(k+2)\,,
\end{equation}
while the number of simple currents is 
\begin{equation}
\#({\rm simple\,\,currents})=4\,(k+2)\,,
\end{equation}
in correspondence with all the fields having $l=0$ (as we will see in the next subsection).

To actually compute the minimal model characters using (\ref{MinChar}) is a complicated matter that can only
be done reliably using computer algebra. Results for the ground state dimensions are readily available in the literature, but as we will see, this is not sufficient to determine the conformal weights and ground state dimensions of the permutation orbifold. 

\subsection{Modular transformations and fusion rules}
The coset construction has the additional advantage of making clear what the modular $S$ matrix is for the minimal models. It is just the product of the $S$ matrix of $SU(2)$ at level $k$, the (inverse) $S$ matrix of $U(1)$ at level $2(k+2)$ and the $S$ matrix of $U(1)$ at level $4$:
\begin{eqnarray}
\label{N=2 S matrix}
S^{N=2}_{(l,m,s)(l',m',s')}&=& 
S^{SU(2)_k}_{l,l'} \left(S^{U(1)_{2(k+2)}}\right)_{m,m'}^{-1} S^{U(1)_4}_{s,s'}=\\
&=&\frac{1}{2(k+2)} \sin{\left(\frac{\pi}{k+2}(l+1)(l'+1)\right)}\,
e^{-i\pi\left(\frac{ss'}{2}-\frac{mm'}{k+2}\right)}\,.\nonumber
\end{eqnarray}
The corresponding fusion rules are
\begin{equation}
(l,m,s)\cdot(l',m',s')=\sum_{\lambda,\mu,\sigma}
N^{\lambda}_{\mu,\sigma}\,\delta^{(2(k+2))}_{m+m'-\mu,\,0}\,\delta^{(4)}_{s+s'-\sigma,\,0}\,\,
(\lambda,\mu,\sigma)\,,
\end{equation}
where $N^{\lambda}_{\mu,\sigma}$ are the $SU(2)_k$ fusion coefficients. Here,  $\delta^{(p)}_{x,\,0}$ is equal to $0$, except if $x=0$ mod $p$, in which case it is $1$. Since the $SU(2)_k$ current algebra has only two simple currents, namely the fields with $l=0$ (the identity) and with $l=k$, then all the fields $\phi_{0,m,s}$, and only these, are simple currents (recall the field identification of the $N=2$ minimal models).
In particular, the field $T_F\equiv(0,0,2)$ (with $l=0$) will be relevant in the sequel. It has spin $\frac{3}{2}$ and multiplicity two: it contains the (two) fermionic generators $T^\pm(z)$ of the $N=2$ superconformal algebra. Another field which will be relevant in chapter \ref{paper5} is the so-called spectral-flow operator $S_F\equiv(0,1,1)$, which is also a simple current and has spin $h=\frac{c}{24}$.

\section{Permutation orbifold}
\label{perm orb recap}
Before going into the details of the permutation orbifold of the $N=2$ minimal models, let us recall a few properties of the BHS permutation orbifold \cite{Borisov:1997nc}, restricted to the $\mathbb{Z}_2$ case
\begin{equation}
\mathcal{A}_{\rm perm} \equiv (\mathcal{A}\times \mathcal{A})/\mathbb{Z}_2\,.
\end{equation}
If $c$ is the central charge of $\mathcal{A}$, then the central charge of $\mathcal{A}_{\rm perm}$ is $2c$.
The typical (for exceptions see chapter \ref{paper1}) weights of the fields are:
\begin{itemize}
\item $h_{(i,\xi)}=2 h_i$
\item $h_{\langle i,j\rangle}=h_i+h_j$
\item $h\widehat{(i,\xi)}=\frac{h_i}{2}+\frac{c}{16}+\frac{\xi}{2}$
\end{itemize}
for diagonal, off-diagonal and twisted representations. 
Sometimes it can happen that the naive ground state has dimension zero: then one must go to its first non-vanishing descendant whose weight is incremented by integers. 

For the sake of this chapter, we are mostly interested in the orbifold characters. Let us recall the BHS expressions \cite{Borisov:1997nc} for the diagonal, off-diagonal and twisted $\mathbb{Z}_2$-orbifold characters. 
We denote by $\chi$ the characters of the original (mother) CFT $\mathcal{A}$ and by $X$ the characters of the permutation orbifold $\mathcal{A}_{\rm perm}$:
\begin{subequations}
\label{BHS characters}
\begin{eqnarray}
X_{\langle i,j \rangle}(\tau)&=&\chi_{i}(\tau)\cdot\chi_{j}(\tau) \\
X_{(i,\xi)}(\tau)&=&\frac{1}{2}\chi_{i}^2(\tau)+e^{i\pi\xi}\frac{1}{2}\chi_{i}(2\tau) \\
X_{\widehat{(i,\xi)}}(\tau)&=&\frac{1}{2}\chi_{i}(\frac{\tau}{2})+e^{-i\pi\xi}\,T_i^{-\frac{1}{2}}\,\frac{1}{2}\chi_{i}(\frac{\tau+1}{2})
\end{eqnarray}
\end{subequations}
where $T_i^{-\frac{1}{2}}=e^{-i\pi(h_i-\frac{c}{24})}$. 
Now, each character in the mother theory can be expanded as
\begin{equation}
\chi(\tau)=q^{h_{\chi}-\frac{c}{24}}\,\sum_{n=0}^\infty d_n q^n \qquad \qquad ({\rm with}\,\,q=e^{2i\pi\tau})
\end{equation}
for some non-negative integers $d_n$. Observe that the $d_n$'s can be extracted from
\begin{equation}
\label{q-derivative}
d_n=\frac{1}{n!}\frac{\partial^n}{\partial q^n}\left.\left(\sum_{k=0}^\infty d_k q^k\right)\right|_{q=0}\,.
\end{equation}
Similarly, each character of the permutation orbifold can be expanded as
\begin{equation}
X(\tau)=q^{h_X-\frac{c}{12}}\,\sum_{n=0}^\infty D_n q^n
\end{equation}
for some non-negative integers $D_n$. A relation similar to (\ref{q-derivative}) holds for the $D_n$'s.

Using (\ref{BHS characters}) and (\ref{q-derivative}), we can immediately find the relationships between the $d_n$'s and the $D_n$'s. Here they are:
\begin{subequations}
\label{d-D relations}
\begin{eqnarray}
D^{\langle i,j \rangle}_k &=& \sum_{n=0}^k d_n^{(i)}\,d_{k-n}^{(j)} \\
D^{(i,\xi)}_k &=&  \frac{1}{2} \sum_{n=0}^k d_n^{(i)}\,d_{k-n}^{(i)}+
\left\{
\begin{array}{lr}
0 & {\rm if}\,\,k={\rm odd} \\
\frac{1}{2}\,e^{i\pi\xi}\,d_{\frac{k}{2}}^{(i)} & {\rm if}\,\,k={\rm even}
\end{array}
\right. \\
D^{\widehat{(i,\xi)}}_k &=& d_{2k+\xi}^{(i)}
\end{eqnarray}
\end{subequations}
These expressions are particularly interesting because they tell us that, if we want to have an expansion of the orbifold characters up to order $k$, then it is not enough to expand the original characters up to the same order $k$ (it would be enough for the untwisted fields), but rather we should go up to the higher order $2k+1$, as it is implied by the third line of (\ref{d-D relations}). 
Using the characters (\ref{BHS characters}), one can compute their modular transformation and find the orbifold $S$ matrix, $S^{BHS}$ \cite{Borisov:1997nc}, that we have already given in (\ref{BHS}).

\section{Permutations of $N=2$ minimal models}
\label{perm of N=2 mods}
In this section we consider the permutation orbifold of two $N=2$ minimal models at level $k$. The CFT resulting from modding out the $\mathbb{Z}_2$ symmetry in the tensor product $(N=2)_k\otimes(N=2)_k$ is known from \cite{Klemm:1990df,Borisov:1997nc,Fuchs:1991vu}. Here we focus mostly on the new interesting features arising when one extends the theory with various simple currents. 

As already mentioned, each $N=2$ minimal model at level $k$ admits a supersymmetric current $T_F(z)$ with 
ground state multiplicity equal to two and spin $h=\frac{3}{2}$. In the coset language, it corresponds to the NS field partner of the identity, namely $(l,m,s)=(0,0,2)$. This current transforms each NS field into its NS partner (with different $s$) and each R field into its R conjugate (corresponding to the other value of $s$). In order to see this, note that the $m$ and $s$ indices are just $U(1)$ labels, hence in the fusion of two representations they simply add up: $(s)\times(s')=(s+s'\,{\rm mod}\, 4)$ and $(m)\times(m')=(m+m'\,{\rm mod} \,2(k+2))$.

The field $T_F(z)$ has simple fusion rules with any other field and it generates two integer-spin simple currents in the permutation orbifold, corresponding to the symmetric and anti-symmetric representations $(T_F,0)$ and $(T_F,1)$ of diagonal-type fields, both with spin $h=3$. Both these currents can be used to extend the permutation orbifold. They are both of order two and, interestingly (but not completely surprisingly), their product gives back the anti-symmetric representation of the identity:
\begin{equation}
(T_F,0)\cdot (T_F,1)=(0,1)\,,
\end{equation}
with all the other possible products obtained from this one by using cyclicity of the order two. In other words, the fields $(0,0), (T_F,0), (0,1), (T_F,1)$ form a $\mathbb{Z}_4$ group under fusion.

We will study the extensions in the next two subsections, where we will also see the new CFT structure coming from interchanging extensions and orbifolds. Before we do this, however, let us first mention some generic properties of the orbifold. 
Consider the permutation orbifold of two $N=2$ minimal models at level $k$ and extend it by either the symmetric or the anti-symmetric representation of $T_F(z)$. The resulting theory has the old standard simple currents coming from $\phi_{0,m,s}$ (or equivalently $\phi_{k,m+k+2,s+2}$, by the identification) in the mother theory (in number equal to the number of simple currents of the $(N=2)_k$ minimal model and corresponding to the \textit{orbits} of their diagonal representations according to the fusion rules given in the next two subsections) and an equal number of \textit{exceptional} simple currents that were not simple currents before the extension (since coming from fixed off-diagonal orbits of $\phi_{0,m,s}$, as we will see below). 

The structure of the exceptional simple current is very generic: it is the same for both $(T_F,0)$ and $(T_F,1)$, so we can consider both here. The word exceptional means that they are simple currents just because their extended $S$ matrix satisfies the relation $S_{0J}=S_{00}$ \cite{Dijkgraaf:1988tf}. 
First of all, note that the orbifold simple currents come from symmetric and anti-symmetric representations of the mother simple currents, hence there are as many as twice the number of simple currents of the mother minimal theory. 
Secondly, all the exceptional currents correspond to the label $l=0$ (or equivalently $l=k$) as it should be, since related to the $SU(2)_k$ algebra. This has the following consequence. Recall the orbifold (BHS) $S$ matrix in the untwisted sector:
\begin{eqnarray}
S^{BHS}_{(i,\psi)(j,\chi)}&=& \frac{1}{2}\,S_{ij}\,S_{ij}\nonumber\\
S^{BHS}_{(i,\psi)\langle m,n \rangle}&=& S_{im}\,S_{in}\nonumber
\end{eqnarray}
Using the minimal-model $S$ matrix (\ref{N=2 S matrix}) one has:
\begin{equation}
S_{(0,0,0)(0,0,0)}=
\frac{1}{2(k+2)}\,\sin{\left(\frac{\pi}{k+2}\right)}
=S_{(0,0,0)(0,m,s)}\nonumber
\end{equation}
and hence
\begin{equation}
\label{2S=S in BHS}
S^{BHS}_{((0,0,0),0),\langle(0,m,s),(0,m,s+2)\rangle}=2\,S^{BHS}_{((0,0,0),0),((0,0,0),0)}\,.
\end{equation}
This equality will soon be useful. In particular, the factor $2$ will disappear in the extension, promoting the off-diagonal fields $\langle(0,m,s),(0,m,s+2)\rangle$ into simple currents. We will come back later to these exceptional currents.

Let us show now that these exceptional simple currents of the $(T_F,\psi)$-extended orbifold correspond exactly to those particular \textit{off-diagonal fixed points} whose $(T_F,\psi)$-orbits ($\psi=0,\,1$) are generated from the simple currents of the mother $N=2$ minimal model.\\
Consider off-diagonal fields of the form $\langle(0,m,s),(0,m,s+2)\rangle$. They are fixed points\footnote{This is proved in the next subsections.} of $(T_F,\psi)$, since $T_F \cdot (0,m,s)=(0,m,s+2)$. The number of such orbits is equal to half the number of simple currents in the original minimal model (i.e. those fields with $l=0$). In the extension, they must be resolved. This means that each of them will give rise to two ``split'' fields in the extension. Hence their number gets doubled and one ends up with a number of split fields again equal to the number of simple currents of the original minimal model. Moreover, the extended $S$ matrix, $\tilde{S}$, will be expressed in terms of the $S^{J}$ matrix corresponding to $J\equiv (T_F,\psi)$, according to
\begin{equation}
\tilde{S}_{(a,\alpha)(b,\beta)}=
C\cdot[S^{BHS}_{ab}+(-1)^{\alpha+\beta}\,S^{(T_F,\psi)}_{ab}]\,.
\end{equation}
Recall that the $S^J$ matrix is non-zero only if the entries $a$ and $b$ are fixed points. The labels $\alpha$ and $\beta$ keep track of the two split fields ($\alpha,\,\beta=0\,,1$). The factor $C$ in front is a group theoretical quantity, that in case $a$ and $b$ are both fixed, is equal to $\frac{1}{2}$.

The generic formula for $S^J$ as given in \cite{Maio:2009tg} was recalled in (\ref{ansatz with fixed points}). In particular, the untwisted (i.e. diagonal and off-diagonal) entries of $S^J$ vanish, since $T_F$ does not have fixed points:
\begin{eqnarray}
S^{(T_F,\psi)}_{\langle m,n \rangle\langle p,q \rangle} &=& S^{T_F}_{mp}\,S^{T_F}_{nq} +(-1)^\psi S^{T_F}_{mq}\,S^{T_F}_{np} \equiv 0\nonumber\\
S^{(T_F,\psi)}_{(i,\phi)(j,\chi)} &=& \frac{1}{2}\,S^{T_F}_{ij}\,S^{T_F}_{ij}\equiv 0\nonumber\\
S^{(T_F,\psi)}_{(i,\phi)\langle m,n \rangle} &=& S^{T_F}_{im}\,S^{T_F}_{in}\equiv 0\nonumber\,.
\end{eqnarray}
This implies that
\begin{equation}
\tilde{S}_{(a,\alpha)(b,\beta)}=
C\cdot S^{BHS}_{ab}
\end{equation}
for each split field corresponding to untwisted fixed points $a,\,b$. If either $a$ or $b$ are not fixed points, then $S^{(T_F,\psi)}$ is  automatically zero and the $\tilde{S}$ is given directly by $S^{BHS}$, up to the overall group theoretical factor $C$ in front, which is equal to $2$ if both $a$ and $b$ are not fixed points and $1$ if only one entry is fixed. Using (\ref{2S=S in BHS}), this implies that after fixed point resolution one would have
\begin{equation}
\tilde{S}_{((0,0,0),0)\langle(0,m,s),(0,m,s+2)\rangle_\alpha} = \tilde{S}_{((0,0,0),0)((0,0,0),0)}
\qquad (\alpha=0,\,1)\,.
\end{equation}
This means that
\begin{equation}
\label{exceptional simple currents}
\langle(0,m,s),(0,m,s+2)\rangle_\alpha\qquad \alpha=0,\,1
\end{equation}
are the exceptional simple currents in the extended theory, being $((0,0,0),0)$ the identity of the permutation orbifold and $(0,m,s)$ simple currents in the mother theory. The label $m$ runs over all the possible values, $m\in [-k-1,k+2]$; the label $s$ is fixed by the constraint $l+m+s=0$ mod $2$. This is the origin of the exceptional currents in the extended permutation orbifold of two $N=2$ minimal models. Note that, since in the off-diagonal currents both fields appear with $s$ and $s+2$, we can fix once and for all the $s$-labels in the exceptional currents to be $s=0$ in the NS sector and $s=-1$ in the R sector. 

These exceptional simple currents may in principle have fixed points. However, it turns out to be not the case in general: in fact, we will see that only four of the several exceptional currents have fixed points and only if $k=2$ mod $4$. We will come back to this later.

\subsection{Extension by $(T_F,1)$}
Let us start by studying how the current under consideration, $(T_F,1)$, acts on different fields in the orbifold.
By looking at some specific examples or by computing the fusion rules, one can show that the orbits are given as in the following list. We denote the $N=2$ minimal representations as $i\equiv (l,m,s)$ and the ``shifted'' representations as $T_F\cdot i\equiv (l,m,s+2)$.
\begin{itemize}
\item Diagonal fields $(i,\xi)$ (recall that $\xi$ is defined mod $2$)
\begin{equation}
(T_F,1)\cdot (i,\xi) = (T_F\cdot i,\xi+1)
\end{equation}
\item Off-diagonal fields $\langle i,j \rangle$
\begin{equation}
(T_F,1)\cdot \langle i,j \rangle = \langle T_F\cdot i,T_F\cdot j \rangle
\end{equation}
\item Twisted fields $\widehat{(i,\xi)}$ (recall that $\xi$ is defined mod $2$)
\begin{eqnarray}
(T_F,1)\cdot \widehat{(i,\xi)} &=& \widehat{(i,\xi+1)}
\qquad {\rm if}\,\, i\,\,{\rm is}\,\, NS\,\,(s=0,\,2) \nonumber\\
&&\\
(T_F,1)\cdot \widehat{(i,\xi)} &=& \widehat{(i,\xi)}
\qquad {\rm if}\,\, i\,\,{\rm is}\,\, R\,\,(s=-1,\,1) \nonumber
\end{eqnarray}
\end{itemize}
A comment about possible fixed points is in order, since they get split in the extension and need to be resolved.
Observe that there cannot be any fixed points from the diagonal representations, 
since $T_F$ does not leave anything fixed. They will become all orbits and will all be kept in the extension, since they have integer monodromy:
\begin{equation}
Q_{(T_F,1)}(i,\xi)=2h_{T_F}+2h_i-2\left(h_i+\frac{1}{2}\right)\quad\in\mathbb{Z}\,.\nonumber
\end{equation}
The number of such orbits is equal to the number of fields in the mother minimal model.\\
On the other hand, there are in general fixed points for off-diagonal and twisted representations.
The off-diagonal fixed points arise when $j=T_F\cdot i$, i.e. in our notation when $\langle i,j \rangle$ is of the form $\langle(l,m,s),(l,m,s+2)\rangle$. When $l=0$, after splitting, these will be the exceptional simple currents in the extended theory. 
The remaining off-diagonal fields organize themselves into orbits, of which some are kept and some are projected out, depending on their monodromy. In particular, using
\begin{equation}
Q_{(T_F,1)}\langle i,j\rangle=2h_{T_F}+(h_i+h_j)-(h_{T_F\cdot i}+h_{T_F\cdot j})\quad{\rm mod}\,\,\mathbb{Z}\,,\nonumber
\end{equation}
and the fact that, from the term $\frac{s^2}{8}$ in (\ref{N=2 weights}), $h_i-h_{T_F\cdot i}$ is $\frac{1}{2}$ if $i$ is NS and $0$ if $i$ is R, we see that the orbit $(\langle i,j\rangle, \langle T_F\cdot i,T_F\cdot j\rangle)$ is kept only if $i$ and $j$ are both NS or both R, otherwise they are projected out.\\
The twisted fixed points come from all the R representations and are kept in the extension,
while the twisted fields coming from NS representations are not fixed and projected out in the extension, since their monodromy charge 
\begin{equation}
Q_{(T_F,1)}\widehat{(i,\xi)}=2h_{T_F}+h_{\widehat{(i,\xi)}}-h_{\widehat{(i,\xi+1)}}\quad{\rm mod}\,\,\mathbb{Z}\nonumber
\end{equation}
is half-integer, being $(T_F,1)$ of integer spin and the difference of weights between a ($\psi=0$)-twisted field and the corresponding ($\psi=1$)-twisted field equal to $\frac{1}{2}$.

For $k=2$ mod $4$ some of the exceptional currents in the extension have fixed points, either of the off-diagonal or twisted type, none of diagonal kind. They are specific $(T_F,1)$-orbits of off-diagonal fields plus all the twisted $(T_F,1)$-fixed points (necessarily in the Ramond sector of the original minimal model). At the moment of writing this work, we are not able to resolve them: in other words, their $S^J$ matrices are unknown, $J$ denoting any of the exceptional currents.

One important exceptional current of the permutation orbifold is the \textit{world-sheet supersymmetry} current, which is the only current of order two and spin $h=\frac{3}{2}$: it is the off-diagonal field coming from the tensor product of the identity with $T_F(z)$. It does not have fixed points, because $T_F$ does not. Let us denote it by $J^{w.s.}_{orb}\equiv \langle 0, T_F \rangle$. By the argument given above, $J^{w.s.}_{\rm orb}$ is guaranteed to be fixed by $(T_F,1)$. This means that in the extension it gets split into two fields, that we denote by $\langle 0, T_F \rangle_\alpha$, with $\alpha=0$ or $1$. In appendix \ref{Appendix Paper4} we check that indeed $\langle 0, T_F \rangle_\alpha$ has order two:
\begin{equation}
\langle 0, T_F \rangle_\alpha \cdot \langle 0, T_F \rangle_\alpha = (0,0)\,,
\end{equation}
where $(0,0)$ is the identity orbit.

Now consider the tensor product of two minimal models. We can either extend by $T_F(z)\otimes T_F(z)$ to make the product supersymmetric or we can mod out the $\mathbb{Z}_2$ symmetry and end up with the permutation orbifold. Let us start with the latter option. We know from the first part of this work that one can go back to the tensor product by extending the orbifold by the anti-symmetric representation of the identity, $(0,1)$. What we do instead is extending the orbifold by $(T_F,1)$. The resulting theory is the $N=2$ supersymmetric permutation orbifold which has the world-sheet spin-$\frac{3}{2}$ current in its spectrum.

Alternatively, we can change the order and perform the extension before orbifolding. Note that each $N=2$ factor is supersymmetric, but the product is not. In order to make it supersymmetric, we have to extend it by the tensor-product current $T_F(z)\otimes T_F(z)$. As a result, in the tensor product only those fields survive whose two factors are either both in the NS or both in the R sector. In this way, the fields in the product have factors that are aligned to be in the same sector. Now we still have to take the $\mathbb{Z}_2$ orbifold. Starting from the supersymmetric product, by definition, we look for $\mathbb{Z}_2$-invariant states/combinations and add the proper twisted sector.
We will refer to this mechanism which transforms the supersymmetric tensor product into the supersymmetric orbifold as \textit{super-BHS}, in analogy with the standard BHS from the tensor product to the orbifold. 
The following scheme summarizes this structure:
\begin{displaymath}
\xymatrix{
\boxed{(N=2)^2} \ar@/^/[d]^{BHS} \ar[r]^{T_F\otimes T_F} & \boxed{(N=2)^2_{\rm Susy}} \ar@/^/@2{->}[d]^{\rm super-BHS} \\
\boxed{(N=2)^2_{\rm orb}} \ar@/^/@{.>}[u]^{(0,1)} \ar[r]^{(T_F,1)\quad} & \boxed{(N=2)^2_{\rm Susy-orb}} \ar@/^/@{.>}[u]^{(0,1)}
}
\end{displaymath}

As a check, let us consider the following example. Take the case of level $k=1$. The $(N=2)_{1}$ minimal model has central charge equal to one and twelve primary fields (all simple currents). Its tensor product has central charge equal to two, as well as its $T_F\otimes T_F$-extension and $\mathbb{Z}_2$-orbifold. \\
By extending the tensor product by the current $T_F\otimes T_F$, one obtains the supersymmetric tensor product, with 36 fields. Instead, by going to the orbifold and extending by the current $(T_F,1)$, one obtains the supersymmetric orbifold with 60 fields. As a side remark, there is only one theory with this exact numbers of fields and same central charge and that is in addition supersymmetric, but only by working out the spectrum one can prove without any doubt that the theory in question is the $(N=2)_4$ minimal model, which is indeed supersymmetric.\\
We can continue now and extend the supersymmetric orbifold by the current $(0,1)$. This operation is the inverse of the $\mathbb{Z}_2$-orbifold (super-BHS). As expected, we end up to the supersymmetric tensor product. Equivalently, the $\mathbb{Z}_2$-orbifold of the supersymmetric tensor product gives back the supersymmetric orbifold, consistently.

\subsection{Extension by $(T_F,0)$}
Many things here are similar to the previous case. Let us start by giving the fusion rules of the current $(T_F,0)$ with any other field in the permutation orbifold.
\begin{itemize}
\item Diagonal fields $(i,\xi)$ (recall that $\xi$ is defined mod $2$)
\begin{equation}
(T_F,0)\cdot (i,\xi) = (T_F\cdot i,\xi)
\end{equation}
\item Off-diagonal fields $\langle i,j \rangle$
\begin{equation}
(T_F,0)\cdot \langle i,j \rangle = \langle T_F\cdot i,T_F\cdot j \rangle
\end{equation}
\item Twisted fields $\widehat{(i,\xi)}$ (recall that $\xi$ is defined mod $2$)
\begin{eqnarray}
(T_F,0)\cdot \widehat{(i,\xi)} &=& \widehat{(i,\xi)}
\qquad {\rm if}\,\, i\,\,{\rm is}\,\, NS\,\,(s=0,\,2) \nonumber\\
&&\\
(T_F,0)\cdot \widehat{(i,\xi)} &=& \widehat{(i,\xi+1)}
\qquad {\rm if}\,\, i\,\,{\rm is}\,\, R\,\,(s=-1,\,1) \nonumber
\end{eqnarray}
\end{itemize}
Again, the current $(T_F,0)$ does not have diagonal fixed points, but does have off-diagonal 
and twisted fixed points. The off-diagonal ones are like before, while the twisted ones come
this time from NS fields. Twisted fields coming from R representations are projected out in 
the extension. Each fixed point is split in two in the extended permutation orbifold and must be resolved. 
Moreover, there will also be orbits coming from the diagonal and off-diagonal fields.

Also for $(T_F,0)$-extensions, a few exceptional currents might have fixed points.  They are either off-diagonal $(T_F,0)$-orbits or all the twisted $(T_F,0)$-fixed points (necessarily of Neveu-Schwarz origin).

As before, consider now the tensor product of two minimal models and its permutation orbifold. Extend the orbifold with the current $(T_F,0)$, i.e. the symmetric representation $T_F(z)$. One obtains a new, for the moment mysterious, CFT that we denote by $X$. $X$ is not supersymmetric, since it does not contain the world-sheet supercurrent of spin $h=\frac{3}{2}$. To be more precise, $X$ does contain a spin $\frac{3}{2}$-current, which is again the off-diagonal field $\langle 0, T_F \rangle$. However, it is \textit{not} the world-sheet supersymmetry current. The reason is that in this case $\langle 0, T_F \rangle$ (or rather the two split fields $\langle 0, T_F \rangle_\alpha$, with $\alpha=0$ or $1$) has order $4$, instead of order $2$: acting twice with $J^{w.s.}_{\rm orb}(z)$ we should get back to the same field, but we do not. As we prove in appendix \ref{Appendix Paper4}:
\begin{equation}
\langle 0, T_F \rangle_\alpha \cdot \langle 0, T_F \rangle_\alpha = (0,1)\,,
\end{equation}
with $(0,1)\cdot(0,1)=(0,0)$. Hence there is no such a current as $J^{w.s.}_{\rm orb}(z)$ in $X$. Continuing extending this time by the current $(0,1)$ we get back to the familiar theory $(N=2)^2_{\rm Susy}$. 
The summarizing graph is below:
\begin{displaymath}
\xymatrix{
\boxed{(N=2)^2} \ar@/^/[d]^{BHS} \ar[r]^{T_F\otimes T_F} & \boxed{(N=2)^2_{\rm Susy}}  \\
\boxed{(N=2)^2_{\rm orb}}  \ar@/^/@{.>}[u]^{(0,1)} \ar[r]^{(T_F,0)\quad} & \boxed{{\rm Non-Susy}\,\,X} \ar[u]^{(0,1)}
}
\end{displaymath}

\subsection{Common properties}
By looking at the two graphs, we notice that there are two distinct ways of reproducing the behavior of the current $T_F\otimes T_F$ which makes the tensor product of two minimal models supersymmetric. We can go either through the supersymmetric permutation orbifold or through the non-supersymmetric CFT $X$, as shown below.
\begin{displaymath}
\xymatrix{
& (N=2)^2 \ar@{.>}[ddd]_{T_F}^{T_F}|\otimes \ar@/^/[d]^{BHS} & \\
& \ar@/^/[u]^{(0,1)} \ar[dl]_{(T_F,0)} (N=2)^2_{\rm orb} \ar[dr]^{(T_F,1)} &\\
{\rm Non-Susy}\,\,X \ar[dr]_{(0,1)} & & (N=2)^2_{\rm Susy-orb} \ar[dl]^{(0,1)} \\
& (N=2)^2_{\rm Susy} & \\
}
\end{displaymath}
We can summarize the commutativity of this diagram as:
\begin{equation}
(T_F\otimes T_F) \circ (0,1) = (0,1) \circ (T_F,\psi)
\end{equation}
when acting on $(N=2)^2_{\rm orb}$. The small circle $\circ$ means composition of extensions, e.g. $(J_2 \circ J_1) \mathcal{A}$ means that we start with the CFT $\mathcal{A}$, then we extend it by the simple current $J_1$ and finally we extend it again by the simple current $J_2$.

It is useful to ask what happens to the exceptional current $\langle 0, T_F \rangle$ (which coincides with $J^{w.s.}_{\rm orb}(z)$ for the $(T_F,1)$-extension). Using the fusion rules given earlier, it is easy to see that $\langle 0, T_F \rangle$ is fixed by both $(T_F,0)$ and $(T_F,1)$, because of the shift by $T_F$ in both the factors in off-diagonal fields and the symmetrization of the tensor product. As a consequence, the fixed point resolution is needed in both situations for the field $\langle 0, T_F \rangle$. 

Let us make a comment on the nature of the CFT $X$. We have already stressed enough that it is not supersymmetric. However, by looking at it more closely, it is quite similar to the supersymmetric orbifold $(N=2)^2_{\rm Susy-orb}$. For example, they contain the same number of fields and in particular they have the same diagonal and off-diagonal fields. They only differ for their twisted fields, being of R type in the supersymmetric orbifold and of NS type in $X$.

Another interesting point is that the $(0,1)$ extension of both $X$ and $(N=2)^2_{\rm Susy-orb}$ gives back the same answer, namely the $(N=2)^2_{\rm Susy}$. One could ask how this happens in detail. The reason is that, after the $(T_F,\psi)$-extension (either $\psi=0$ or $1$) of the orbifold, one is left with orbits and/or fixed points corresponding to orbifold fields of diagonal, off-diagonal and twisted type. In particular, as we already mentioned before, from the twisted fields only the fixed points survive, with the difference that for $\psi=1$ they come from the Ramond sector and for $\psi=0$ from the NS sector. However, they are completely projected out by the $(0,1)$-extension, which leaves only untwisted (i.e. off-diagonal and diagonal -both symmetric and anti-symmetric-) fields in the supersymmetric tensor product\footnote{The reason is that the current $(0,1)$ always couples a twisted field $\widehat{(p,0)}$ to its partner $\widehat{(p,1)}$, as it is shown in appendix \ref{Appendix Paper4}. Since these fields have weights which differ by $\frac{1}{2}$, then their monodromy will be half-integer and they will be projected out in the $(0,1)$-extension.}.

\section{Exceptional simple currents and fixed points}
\label{except curr and fp}
Let us be a bit more precise on the exceptional simple currents which admit fixed points. There are four of them and they are always related to the following mother-theory simple currents 
\begin{equation}
J_+\equiv(l,m,s)\equiv(0,\frac{k+2}{2},s)\equiv(k,-\frac{k+2}{2},s+2)
\end{equation}
and
\begin{equation}
J_-\equiv (0,-\frac{k+2}{2},s)\equiv(k,\frac{k+2}{2},s+2)
\end{equation}
(with $s=0$ in the NS sector, $s=-1$ in the R sector). We will soon prove that $s$ must be in the NS sector. i.e. $s=0$, otherwise there are no fixed points. Using the facts that $m$ is defined mod $2(k+2)$ and that $s$ is defined mod $4$, together with the identification $(l,m,s)=(k-l,m+k+2,s+2)$, it is easy to show that $J_+$ and $J_-$ are of order four, i.e. $J_+^4=J_-^4=1$. 
Moreover, we will soon show that off-diagonal fixed points of the exceptional currents originate from fields in the mother $N=2$ theory with $l$-label equal to $l=\frac{k}{2}$. One can easily check that, on these fields, the square of $J_\pm$, $J^2_\pm$, acts as follows. 
For $J_\pm$ in the R sector, $J_\pm^2$ fixes any other field (either R or NS) of the original minimal model:
\begin{equation}
(J_\pm\in R)\quad J_\pm^2:\,\, (l=\frac{k}{2},m,s) \longrightarrow (l=\frac{k}{2},m,s)
\Longrightarrow J_\pm^2\simeq 0\equiv (0,0,0)\,,
\end{equation}
acting on them effectively as the identity; for $J_\pm$ in the NS sector, $J_\pm^2$ takes an R (NS) field into its conjugate R (NS) field:
\begin{equation}
(J_\pm\in NS)\quad J_\pm^2:\,\, (l=\frac{k}{2},m,s) \longrightarrow (l=\frac{k}{2},m,s+2)
\Longrightarrow J_\pm^2\simeq T_F\equiv (0,0,2)\,,
\end{equation}
acting effectively as the supersymmetry current.

Having introduced now the currents $J_\pm$ in the mother theory, we can write down the four simple currents in the orbifold theory extended by $(T_F,\psi)$ which admit fixed points. Recalling that $T_F=(0,0,2)$ acts by shifting by two the $s$-labels in the original minimal model, we can consider the following off-diagonal fields in the permutation orbifold:
\begin{equation}
\label{formual for simple currents in orb}
\langle J_\pm,T_F\cdot J_\pm \rangle\,.
\end{equation}
The two off-diagonal combinations above satisfy the condition (\ref{2S=S in BHS}); hence, after fixed point resolution, each of them generates two exceptional simple currents (for a total of four) in the $(T_F,\psi)$-extended theory:
\begin{equation}
\label{formual for exceptional simple currents}
\langle J_\pm,T_F\cdot J_\pm \rangle_\alpha\,\,,\qquad \alpha=0,\,1\,,
\end{equation}
being $T_F\cdot J_\pm =(0,\pm \frac{k+2}{2},s+2)$. This is another way of re-writing (\ref{exceptional simple currents}), specialized to the exceptional currents that have fixed points.

If one wants to be very precise about the fixed points, one should study the fusion coefficients, which is in the present case very complicated, but in principle doable. However, we can still make some preliminary progress using intuitive arguments. First of all, since the resolved currents (\ref{exceptional simple currents}) carry an index $\alpha$ which distinguishes them, but are very similar otherwise, it is reasonable to expect that they might have the same fixed points and that hence the fixed-point conformal field theories corresponding to the exceptional currents might be pairwise identical.\\
Secondly, observe that in (\ref{exceptional simple currents}) the field $(0,m,s)$ is equivalent to $(k,m+k+2 \,\,{\rm mod}\,\,2(k+2),s+2\,\, {\rm mod}\,\, 4)$. From the $SU(2)_k$ algebra, the field labelled by $l=k$ is the only non-trivial simple current with fusion rules given by
\begin{equation}
(k)\cdot(j)=(k-j)\,,
\end{equation}
so in order for it to have fixed points, $k$ must be at least even. Moreover, $j$ is a fixed point of the $SU(2)_k$ algebra if and only if $j=\frac{k}{2}$. This argument tells us that off-diagonal fixed points of (\ref{exceptional simple currents}) must be orbits whose component fields have $l$-label equal to $l=\frac{k}{2}$. 

Actually there are only four (coming from the above two resolved) exceptional simple currents which have fixed points and the corresponding four fixed-point conformal field theories are pairwise identical. Indeed, the exceptional simple currents have $m$-label equal to $m=\pm\frac{k+2}{2}$, even $s$-label and hence the generic constraints $l+m+s=0 \,\,{\rm mod} \,\,2$ implies that $k=2$ mod $4$. 

Let us describe more in detail the exceptional simple currents with fixed points. Consider again (\ref{formual for exceptional simple currents}) and study the fusion rules of (\ref{formual for simple currents in orb}). We are most interested in off-diagonal fixed points, because they have an interesting structure; as far as the other kind (namely twisted) of fixed points is concerned, they are as already reported in the previous section (namely of NS type for $(T_F,0)$ and of R type for $(T_F,1)$). Compute the fusion rule of the current $(J_\pm,T_F J_\pm)$ with any field of the form:
\begin{equation}
\label{offdiag f.p. of except curr}
\langle f,J_\pm f'\rangle\,,
\end{equation}
where $f'$ has either the same $s$-label as $f$ or different; in other words, either $f'=f$ or $f'=T_F\cdot f$. Here, $f$ and $f'$ label primaries of the original $N=2$ minimal model which might be fixed points of (\ref{formual for exceptional simple currents}), having their $l$-values equal to $l=\frac{k}{2}$. Explicitly, $f=(\frac{k}{2},m,s)$ and $f'=(\frac{k}{2},m,s')$, with $s'=s$ or $s'=s+2$.

We would like to show that the fields $\langle f,J_\pm f'\rangle$ constitute the subset of off-diagonal fixed points for the exceptional currents. For most of them, this subset will be empty, but not for (\ref{formual for exceptional simple currents}). As a remark, note that not all the fields in (\ref{offdiag f.p. of except curr}) are independent, since they are identified pairwise by the extension. We will come back to this at the end of this subsection.

Now let us compute the fusion rules. Naively:
\begin{eqnarray}
\langle J_\pm,T_F J_\pm\rangle \cdot \langle f,J_\pm f'\rangle&\propto&
(J_\pm\otimes T_F J_\pm+T_F J_\pm\otimes J_\pm)\cdot
(f\otimes J_\pm f'+J_\pm f'\otimes f)\nonumber\\
&=&
(J_\pm f\otimes T_F f'+J_\pm^2 f'\otimes T_F J_\pm f +\nonumber\\
&&
\qquad\qquad\qquad +T_F J_\pm f\otimes J_\pm^2 f'+T_F J_\pm^2 f'\otimes J_\pm f)\,.\nonumber
\end{eqnarray}
For currents in the R sector, $J_\pm^2=1$, while $J_\pm^2=T_F$ in the NS sector; hence the above expression simplifies in both cases:
\begin{equation}
\langle J_\pm,T_F J_\pm\rangle\cdot\langle f,J_\pm f'\rangle \propto \dots=
\left\{
\begin{array}{cc}
(J_\pm f\otimes T_F f'+ f'\otimes T_F J_\pm f +& {\rm R\,\,sector}\\
\qquad+T_F J_\pm f\otimes f'+T_F f'\otimes J_\pm f)\,.\nonumber & \\
&\\
(J_\pm f\otimes f'+ T_F f'\otimes T_F J_\pm f +& {\rm NS\,\,sector}\\
\qquad+T_F J_\pm f\otimes T_F f'+f'\otimes J_\pm f) &
\end{array}
\right.
\nonumber
\end{equation}
In terms of representations, we can decompose the r.h.s. in two pieces corresponding to the following symmetric representations:
\begin{eqnarray}
\label{temp fp of excep curr}
({\rm R})\quad \langle J_\pm,T_F J_\pm\rangle \cdot \langle f,J_\pm f'\rangle&=& 
\langle J_\pm f,T_F f'\rangle+\langle f',T_F J_\pm f\rangle\nonumber\\
({\rm NS})\quad \langle J_\pm,T_F J_\pm\rangle \cdot\langle f,J_\pm f'\rangle&=& 
\langle f',J_\pm f\rangle+\langle T_F f',T_F J_\pm f\rangle 
\end{eqnarray}
We have replaced here the proportionality symbol with an equality: a more accurate calculation of the fusion coefficients would show that the proportionality constant is indeed one. 
It is crucial that none of the two pieces in the first line (R sector) reduces to $\langle f,J_\pm f'\rangle$ as on the l.h.s.; on the contrary, either of them does, respectively if $f=f'$ and $f'=T_F\cdot f$, in the second line (NS sector). For example, in the NS situation, this is obvious in the case $f=f'$; if $f'=T_F\cdot f$ instead, we  must remember that the  brackets means symmetrization and that off-diagonal fields that are equal up to the action of $(T_F,\psi)$ are actually \textit{identified} by the extension. Similar arguments hold for the R situation as well.

Note here that the two pieces in (\ref{temp fp of excep curr}) are related by the application of $T_F$: if we talked about tensor product fields then the relation would be given by the tensor product $T_F\otimes T_F$, but since we are working in the orbifold, it is actually provided by the diagonal representation $(T_F,\psi)$. 

Let us move now to the extended orbifold. 
From the fusion rules given earlier, in the permutation orbifold extended by $(T_F,\psi)$, off-diagonal fields belong to the same orbit if and only if
\begin{equation}
(T_F,\psi)\cdot\langle i,j \rangle=\langle T_F\cdot i,T_F\cdot j\rangle\,.
\end{equation}
Since
\begin{equation}
(T_F,\psi)\cdot \langle f,J_\pm f'\rangle=\langle T_F f,T_F J_\pm f'\rangle\,,
\end{equation}
then the two quantities appearing on the r.h.s. of (\ref{temp fp of excep curr}) are identified by the extension and add up to give
\begin{eqnarray}
\label{fp of excep curr}
({\rm R})\qquad \langle J_\pm,T_F J_\pm\rangle\cdot\langle f,J_\pm f'\rangle&=& \langle J_\pm f,T_F f'\rangle\,,\nonumber\\
({\rm NS})\qquad \langle J_\pm,T_F J_\pm\rangle\cdot\langle f,J_\pm f'\rangle &=& \langle f',J_\pm f\rangle\,.
\end{eqnarray}
As a consequence, exceptional currents coming from R fields never have fixed points (neither if $f=f'$ nor if $f'=T_F\cdot f$), while NS fields do have. 
This shows that the exceptional simple currents with fixed points arise only for NS fields in the mother theory and they are exactly of the desired form.

As a consistency check, let us give the following argument about the currents (\ref{formual for exceptional simple currents}) (equivalently, identify $l\rightarrow k-l\,,\dots$ etc). We have already established that $k$ must be even in order for the currents to have fixed points, so we can discuss the two options of $k=4p$ and $k=2+4p$ (for $p\in\mathbb{Z}$) separately. In the former case, $k=4p$,
\begin{equation}
h_{\langle J_\pm,T_F\cdot J_\pm\rangle_\alpha}=h_{J_\pm}+h_{T_F\cdot J_\pm}=2\cdot\frac{3k}{16}=\frac{3p}{2}\,.
\end{equation}
This is either integer or half-integer, depending on $p$, so the currents might admit fixed points. However, the current $m$-label is equal to $2p+1\in\mathbb{Z}_{\rm odd}$; since the $l$-label is even, then the $N=2$ constraint forces the $s$-label to be $\pm 1$. As a consequence, the currents (\ref{formual for exceptional simple currents}) are of Ramond-type and hence cannot have fixed points.
In the latter case, $k=2+4p$,
\begin{equation}
h_{\langle J_\pm,T_F\cdot J_\pm\rangle_\alpha}=h_{J_\pm}+h_{T_F\cdot J_\pm}=
\left(\frac{3k}{16}-\frac{1}{8}\right)+\left(\frac{3k}{16}+\frac{3}{8}\right)=1+\frac{3p}{2}\,.
\end{equation}
This is either integer or half-integer, depending on $p$, so the current can have fixed points. Moreover, since the $m$-label of the exceptional current is now equal to $2p+2\in\mathbb{Z}_{\rm even}$, the currents (\ref{formual for exceptional simple currents}) are now of NS-type, hence they \textit{will have} fixed points.

Needless to say, we do expect all a priori possible fields of the form (\ref{offdiag f.p. of except curr}) to survive the $(T_F,\psi)$-extension, the reason being that their $(T_F,\psi)$-orbits must have zero monodromy charge with respect to the current $(T_F,\psi)$. As an exercise, let us compute this charge and prove that it vanishes (mod integer). For this purpose, we need to know the weight of (\ref{offdiag f.p. of except curr}). Since
\begin{equation}
h_{J_\pm f}=h_f-\frac{1}{16}(k+2\pm 4m)
\end{equation}
$m$ being the $m$-label of the field $f$, then 
\begin{equation}
h_{\langle f,J_\pm f'\rangle}=h_f+h_{J_\pm f'}=2h_f-\frac{1}{8}(k+2\pm 4m)+\frac{1}{2}\,\delta_{f',T_F f}\,.
\end{equation}
Similarly, we need to compute $h_{\langle T_F f,T_F J_\pm f'\rangle}$. Since 
\begin{equation}
h_{T_F J_\pm f}=h_{T_F f}-\frac{1}{16}(k+2\pm 4m)
\end{equation}
then again 
\begin{equation}
h_{\langle T_F f,T_F J_\pm f'\rangle}=h_{T_F f}+h_{T_F J_\pm f'}=2h_{T_F f}-\frac{1}{8}(k+2\pm 4m)+\frac{1}{2}\,\delta_{f',T_F f}\,.
\end{equation}
Hence:
\begin{equation}
Q_{(T_F,\psi)}\big(\langle f,J_\pm f'\rangle\big)=h_{(T_F,\psi)}+h_{\langle f,J_\pm f'\rangle}- h_{\langle T_F f,T_F J_\pm f'\rangle}=0\quad({\rm mod}\,\,\mathbb{Z})\,,
\end{equation}
i.e. these fields are kept in the extension and  organize themselves into orbits. Still, some fields seem not to appear among the off-diagonal fields that we would expect. The solutions to this problem is provided by the extension: \textit{fields are pairwise identified}. In fact, as a consequence of (\ref{temp fp of excep curr}), two fields related by the action of (\ref{formual for simple currents in orb}) are mapped into each other by $(T_F,\psi)$ and hence are identified by the currents (\ref{formual for exceptional simple currents}) in the extension.

What happens in determining the fixed points of the exceptional currents is the following. Start with a field $f$ which has $l$-label equal to $\frac{k}{2}$ and apply $J_\pm$ on $f$, recalling that $J^4_\pm=1$ and $J^2_\pm=T_F$ for NS-type currents,
\begin{displaymath}
\xymatrix{
& f \ar@/^/[dr]^{J_\pm} & \\
J_\pm T_F f \ar@/^/[ur]^{J_\pm} & & J_\pm f \ar@/^/[dl]^{J_\pm} \\
& T_F f \ar@/^/[ul]^{J_\pm} &
}
\end{displaymath}
as shown in the graph. The four fields organize themselves pairwise into two $J_\pm$-orbits which are related by the action of $T_F$, or better of $(T_F,\psi)$. In fact, from the fusion rules of $(T_F,\psi)$ with off-diagonal fields it follows that
\begin{equation}
(T_F,\psi)\cdot\langle f,J_\pm f\rangle=\langle T_F f, J_\pm T_F f\rangle\,.
\end{equation}
Each $J_\pm$-orbit has the same form as (\ref{offdiag f.p. of except curr}). 
In the $(T_F,\psi)$-extension they are identified and become fixed points of the exceptional simple currents (\ref{formual for exceptional simple currents}). \\
Similarly, we can organize the fields differently. For instance, by starting from the $J_\pm$-orbit $\langle f,J_\pm T_F f\rangle$, we have
\begin{equation}
(T_F,\psi)\cdot\langle f,J_\pm T_F f\rangle=\langle T_F f, J_\pm f\rangle\,,
\end{equation}
where we used $T_F^2=1$. The same argument holds if we start from any $J_\pm$-orbit of two consecutive fields in the graph above: the $(T_F,\psi)$-extension will always identify it with the remaining orbit.

In the next subsection we give and explicit example corresponding to the ``easy'' case of minimal models at level two.

\subsection{$k=2$ Example}
In order to better visualize the structure of the exceptional simple currents and their fixed points, let us consider the $k=2$ case, where we permute two $N=2$ minimal models at level two. This case is easy enough to be worked out explicitly, but complicated enough to show all the desired properties. This minimal model has 24 fields (12 in the R sector and 12 in the NS sector), of which 16 simple currents. Its permutation orbifold has got 372 fields, of which 32 simple currents coming from diagonal (symmetric and anti-symmetric) combinations of the original simple currents. The ones with (half-)integer spin have generically got fixed points which we know how to resolve from chapter \ref{paper3}.

In the $(T_F,\psi)$-extended orbifold theory, the exceptional currents with fixed points are
\begin{equation}
\langle J_\pm,T_F\cdot J_\pm\rangle_\alpha\,\,,\qquad \alpha=0,\,1\,,
\end{equation}
with
\begin{equation}
J_+=(0,2,0)\qquad{\rm and}\qquad J_-=(0,-2,0)\,.
\end{equation}
Their off-diagonal fixed points are of the form
\begin{equation}
\langle f,J_\pm f'\rangle\,,
\end{equation}
with $f$ and $J_\pm f'$ given by
\begin{eqnarray}
f=(1,1,0) &{\rm and}& J_\pm f'=(1,-1,0)\nonumber\\
f=(1,2,1) &{\rm and}& J_\pm f'=(1,0,1)\nonumber\\
f=(1,-1,0) &{\rm and}& J_\pm f'=(1,1,2)\nonumber\\
f=(1,2,1) &{\rm and}& J_\pm f'=(1,0,-1)\nonumber
\end{eqnarray}
To these, we still have to add the twisted fixed points, but we know already exactly what they are. 
One can observe that some fields appear twice, e.g. $(1,2,1)$, and other fields never appear, e.g. $(1,2,-1)$. This can be easily explained.
The reason why some of them appear more than once is because $f$ and $f'$ can have either equal or different $s$-values ($J_\pm$ only acts on the $m$-values).\\
Similarly, some fields are identified by the $(T_F,\psi)$-extension and hence they \textit{seem} never to appear. For example, the off-diagonal field $\langle(1,2,-1),(1,0,1)\rangle$ seems not to be there, but it is actually identified  with $\langle(1,2,1),(1,0,-1)\rangle$, which appears in the last line of the list above; similarly $\langle(1,2,-1),(1,0,-1)\rangle$ seems again not to be there as well, but it is identified with $\langle(1,2,1),(1,0,1)\rangle$ which is there in the second line of the same list.

More in general, this is a consequence of (\ref{temp fp of excep curr}). In the present situation we see this explicitly. Let us look at the  current 
\begin{equation}
\langle(0,2,0),(0,2,2)\rangle
\end{equation}
in the permutation orbifold and compute its fusion rules with the off-diagonal field $\langle(1,2,-1),(1,0,1)\rangle$:
\begin{equation}
\label{example fusion}
\langle(0,2,0),(0,2,2)\rangle\cdot \langle(1,2,-1),(1,0,1)\rangle= 
\langle(1,2,-1),(1,0,1)\rangle+\langle(1,2,1),(1,0,-1)\rangle\,.
\end{equation}
We see the appearance of the second term on the r.h.s., which is also an off-diagonal field, so we are led to ask about its fusion as well:
\begin{equation}
\label{example fusion 2}
\langle(0,2,0),(0,2,2)\rangle\cdot \langle(1,2,1),(1,0,-1)\rangle= 
\langle(1,2,-1),(1,0,1)\rangle+\langle(1,2,1),(1,0,-1)\rangle\,,
\end{equation}
which is exactly the same as the first one. However, observe that the current $(T_F,\psi)$ relates the two terms on both r.h.s.'s:
\begin{eqnarray}
(T_F,\psi)\cdot \langle(1,2,-1),(1,0,1)\rangle&=&\langle(1,2,1),(1,0,-1)\rangle\nonumber\\
(T_F,\psi)\cdot \langle(1,2,1),(1,0,-1)\rangle&=&\langle(1,2,-1),(1,0,1)\rangle\,.
\end{eqnarray}
Then, they form one orbit in the $(T_F,\psi)$-extension and, since they have integer monodromy charge, this off-diagonal orbit survives the projection. Due to (\ref{example fusion}) and (\ref{example fusion 2}), this orbit becomes an off-diagonal fixed point of the exceptional current in the $(T_F,\psi)$-extended orbifold.

As a comment, we remark that it is not known at the moment how to resolve these fixed points. The reason is that they are fixed points of an off-diagonal current for which there is no solution yet, unlike for the fixed points of diagonal currents for which the solution exists and was provided by our ansatz in chapter \ref{paper3}.

\section{Orbit structure for $N=2$ and $N=1$}
\label{CFT summary}
In this section we want to summarize the simple current orbits for the theories considered here, and give the analogous results
for $N=1$ minimal models for comparison. Most of the construction, and in particular the definition of the six kinds
of CFT listed in the introduction works completely analogously for $N=2$ and $N=1$.
The world-sheet supercurrent, originating from the diagonal field $\langle 0,T_F \rangle$, comes in both cases from a 
fixed point. However, a novel feature occurring for $N=1$ but not for $N=2$ is that this supercurrent itself has  fixed points 
whose resolution requires additional data. 

Another important difference between the $N=2$ and $N=1$ permutation orbifolds is that in the latter case the supersymmetric and
the non-supersymmetric orbifold (the extensions of the BHS orbifold by $(T_F,1)$ or $(T_F,0)$ respectively) have a different
number of primaries, whereas for $N=2$ this is the same.

The simple current groups of all these theories are as described below. A few currents always play a special r\^ole, namely
\begin{itemize}
\item{The ``un-orbifold" current. This is the current that undoes the permutation orbifold. In the BHS orbifold this is
the anti-symmetric diagonal field $(0,1)$, which has spin-1. If the theories are extended by $(T_F,1)$ or $(T_F,0)$ this
field becomes part of a larger module, but is still the ground state of that module.}
\item{The world-sheet supercurrent(s). This has always weight $\frac32$, and can have fixed points only for $N=1$ (and then
it usually does). The supersymmetric permutation orbifolds always have two of them, which originate from the
split fixed points of the off-diagonal field $\langle 0, T_F \rangle$. Note that this multiplicity, two, has nothing to do with
the number of supersymmetries. The latter is given by the dimension of the ground state of the supercurrent module.
The fusion product of the two supercurrents is always the un-orbifold current. These spin-$\frac32$ currents also occur in the non-supersymmetric
theory $X$, except in that case they generate a $\mathbb{Z}_4$ group, whereas in the supersymmetric case the discrete group they generate is
 $\mathbb{Z}_2\times \mathbb{Z}_2$.}
\item{The Ramond ground state simple currents. These exist only for the $N=2$ and not for the $N=1$ superconformal models.}
\end{itemize}

In the following we call a fixed point ``resolvable" if we have explicit formulas for the fixed point resolution matrices,
and unresolvable otherwise. Therefore, ``unresolvable" does not mean that the fixed points cannot be resolved in principle, but
simply that it is not yet known how to do it.
Note that the choices of generators of discrete
groups described below are not unique, but we made convenient choices.  As much as possible, we try to choose the special
currents listed above as generators of the discrete group factors.
\begin{itemize}
\item{$N=2,\,\, k=1\mod 2$. 
\begin{itemize}
\item{The minimal models have a simple current group $\mathbb{Z}_{4k+8}$. As its generator one can take the
Ramond ground state simple current. The power $2k+4$ of this generator is the world-sheet supercurrent. None of the
simple current has fixed points.}
\item{The supersymmetric permutation orbifold has a group structure $\mathbb{Z}_{4k+8} \times \mathbb{Z}_2$. The first factor is generated by
the Ramond ground state simple current. The power $2k+4$ of this generator is the un-orbifold current. This is the only current that has fixed points, which are resolvable. The factor $\mathbb{Z}_2$
is generated by the world-sheet supercurrent.}
\item{The non-supersymmetric permutation orbifold $X$  also has a group structure $\mathbb{Z}_{4k+8} \times \mathbb{Z}_2$. The spin-$\frac32$
fields originating from the diagonal field $\langle 0,T_F \rangle$ have order 4, and generate a $\mathbb{Z}_{k+2}$ subgroup of  $\mathbb{Z}_{4k+8}$.
The order-two element of $\mathbb{Z}_{4k+8}$ is, just as above, the un-orbifold current. Also in this case it has resolvable fixed points. 
}
\end{itemize}
}
\item{$N=2, \,\,k=0\mod 4$. 
\begin{itemize}
\item{The minimal models have a simple current group $\mathbb{Z}_{2k+4}\times \mathbb{Z}_2$. As the generator of the first factor one can take the
Ramond ground state  simple current, and the world-sheet supercurrent can be used as the generator of the second. 
The middle element of the  $\mathbb{Z}_{2k+4}$ factor is an integer spin current with resolvable fixed points.}
\item{The supersymmetric permutation orbifold has a group structure $\mathbb{Z}_{2k+4} \times \mathbb{Z}_2 \times \mathbb{Z}_2$. The first factor is generated by
the Ramond ground state simple current. The second factor by the 
un-orbifold current. The last factor is generated by the world-sheet supercurrent. The middle element of the first factor and
the generator of the second factor, as well as their product have resolvable fixed points.}
\item{The non-supersymmetric permutation orbifold $X$  has a group structure $\mathbb{Z}_{2k+4} \times \mathbb{Z}_4$. The spin-$\frac32$
fields originating from the diagonal field $\langle 0,T_F \rangle$ have order 4 and
can be chosen as generators of the $\mathbb{Z}_4$ factor. 
There are three non-trivial currents with resolvable fixed points, which have the same origin (in terms of minimal model fields)
 as the ones in the supersymmetric orbifold.
}
\end{itemize}
}
\item{$N=2,\,\, k=2\mod 4$. 
\begin{itemize}
\item{The minimal models have a simple current group $\mathbb{Z}_{2k+4}\times \mathbb{Z}_2$. The structure is exactly as for $k=0\mod4$.}
\item{The supersymmetric permutation orbifold has a group structure $\mathbb{Z}_{2k+4} \times \mathbb{Z}_2 \times \mathbb{Z}_2$. One can choose
the same generators as above for $k=0\mod4$. The fixed point structure is also identical, except that there are four
additional currents with unresolvable fixed points. These four currents are the two order 4 currents of $\mathbb{Z}_{2k+4}$ multiplied with
each of the two world-sheet supercurrents.} 
\item{The non-supersymmetric permutation orbifold $X$  has a group structure $\mathbb{Z}_{2k+4} \times \mathbb{Z}_4$. As in the
supersymmetric case, there are three non-trivial currents with resolvable fixed points, and four with unresolvable fixed points.  These currents have
the same origin as those of the supersymmetric orbifold.
}
\end{itemize}
}
\item{$N=1,\,\, k=1\mod 2$. 
\begin{itemize}
\item{The minimal models have a simple current group $\mathbb{Z}_{2}$, generated by the world-sheet supercurrent. This current has
resolvable fixed points.}
\item{The supersymmetric permutation orbifold has a group structure $\mathbb{Z}_{2} \times \mathbb{Z}_2$. The two factors can be
generated by the un-orbifold current and by the world-sheet current. The fourth element also has spin-$\frac32$, and is an alternative world-sheet
supercurrent. The un-orbifold current has resolvable fixed points, the supercurrents have unresolvable fixed points. }
\item{The non-supersymmetric permutation orbifold $X$  has a group structure $\mathbb{Z}_{8}$. The order-2 element in this
subgroup is the un-orbifold current, which has resolvable fixed points. None of the other currents have fixed points.}
\end{itemize}
}
\item{$N=1,\,\, k=0\mod 2$. 
\begin{itemize}
\item{The minimal models have a simple current group $\mathbb{Z}_{2}\times \mathbb{Z}_2$. All currents have resolvable fixed points.
One of them is the world-sheet supercurrent.}
\item{The supersymmetric permutation orbifold has a group structure $\mathbb{Z}_{2} \times \mathbb{Z}_2 \times \mathbb{Z}_2$. Two of the three factors
are generated by the un-orbifold current and one of the world-sheet supercurrents. All currents have fixed points, and for four of them,
including the supersymmetry generators, they are unresolvable.}
\item{The non-supersymmetric permutation orbifold $X$  has a group structure $\mathbb{Z}_{4} \times \mathbb{Z}_2$. All currents have
fixed points, and for four of them they are unresolvable. 
}
\end{itemize}
}
\end{itemize}

\section{Conclusion}
\label{conclusion paper4}
In this chapter we have studied permutations and extensions of $N=2$ minimal models at arbitrary level $k$. These models are very interesting for several reason: not only because they are non-trivial solvable conformal field theories, but also because they are the building blocks of Gepner models which have some relevance in string theory phenomenology.

Our main points are two. First of all, a new structure arises relating conformal field theories built out of minimal models. Starting from the tensor product we perform $\mathbb{Z}_2$-orbifold and extension in both possible orders, generating in this way new CFT's. Some of them are easily recognizable, such as the $N=2$ supersymmetric orbifold obtained by extending the standard permutation orbifold by the current $(T_F,1)$. Some others are however not known, like the CFT that we have denoted by $X$, obtained by extending the orbifold by $(T_F,0)$. 
Secondly, unexpected off-diagonal simple currents appear due to the interplay of the orbifold and the extension procedure. Sometimes they have fixed points that need to be resolved. However, because they are related to off-diagonal currents, we do not know how to resolve them at the moment.

\chapter{Permutation orbifolds of heterotic Gepner models}
\label{paper5}

{\flushright
{\small 
\textit{What is it that breathes fire into the equations}\par
\textit{and makes a universe for them to describe?}\par
\textit{The usual approach of science}\par
\textit{of constructing a mathematical model}\par
\textit{cannot answer the questions}\par
\textit{of why there should be a universe for the model to describe.}\par
\textit{(S. Hawking, A Brief History of Time)}\par
}
}

\section{Introduction}

We are finally able to apply our previous results on permutation orbifolds to the phenomenologically interesting case of four-dimensional string model building. The traditional way of constructing particle spectra is due to Gepner, who used special tensor products of $N=2$ minimal models on which space-time and world-sheet supersymmetries can be imposed by suitable simple current extensions. The models that we are going to construct can be called permuted Gepner models, since the $N=2$ building blocks will be replaced, when possible, by their $N=2$ supersymmetric permutation orbifolds, described in the last chapter. Moreover, we will deal with heterotic Gepner models, where Gepner's construction is carried on only on the right supersymmetric sector of the string. In fact, heterotic string theory \cite{Gross:1984dd} is the oldest approach towards the construction of the standard model in string theory. It
owes its success to the fact that the gross features of the standard model appear to come out nearly automatically:
families of chiral fermions in representations that are structured as in $SO(10)$-based GUT models.

In constructing spectra, CFT's \cite{Belavin:1984vu} turn out to be very useful. A general heterotic CFT consists of a right-moving sector that has $N=2$ world-sheet supersymmetry and a non-supersymmetric
left-moving sector. Most existing work has been limited either to free CFT's (bosons, fermions or orbifolds) for these two sectors, 
or to interacting CFT's where the bosonic sector is essentially a copy of the fermionic one. Furthermore the interacting CFT's
themselves have mostly been limited to tensor products of $N=2$ minimal models \cite{Gepner:1987vz,Gepner:1987qi}.

Already in the late eighties of last century ideas were implemented to reduce some of these limitations of interacting CFT's.
Instead of minimal models, Kazama-Suzuki  models were used \cite{Kazama:1988qp}. Another extension was to consider permutation orbifolds
of $N=2$ minimal models \cite{Klemm:1990df,Fuchs:1991vu}. But both of these ideas could only be analyzed in a very limited way themselves. The real power of
interacting CFT construction comes from the use of simple current invariants  
\cite{Schellekens:1989am,Intriligator:1989zw,Schellekens:1989dq,Kreuzer:1993tf,GatoRivera:1991ru},
which greatly enhance the number and scope
of the possible constructions. In particular
the left-right symmetry of the original Gepner models could be broken by
considering asymmetric simple current invariants \cite{Schellekens:1989wx}, allowing for example a breaking of the canonical $E_6$ subgroup to
$SO(10)$, $SU(5)$, Pati-Salam models  or even just the standard model (with some additional factors in the
gauge group). However, precisely this powerful tool is not available at present in either Kazama-Suzuki models or
permutation orbifolds. The original computations were limited to diagonal invariants, where with a combination of a variety
of tricks the spectrum could be obtained. Up to now, all that is available in the literature is a very short list of Hodge numbers
and singlets for $(2,2)$ spectra with $E_6$ gauge groups \cite{Klemm:1990df,Fuchs:1991vu,Font:1989qc,Schellekens:1991sb,Fuchs:1993av,Aldazabal:1994rm} (the last paper discusses permutation orbifolds of Kazama-Suzuki models).
To use the full power of simple current methods we need to know the exact CFT spectrum
and the fusion rules of the primary fields of the  building blocks. 
The former has never been worked out for Kazama-Suzuki models, and the latter
was not available for permutation orbifolds until recently. 

Using pioneering work by Borisov, Halpern and Schweigert  \cite{Borisov:1997nc}, in chapters \ref{paper1}-\ref{paper3} we have extended their results to fixed point resolution matrices \cite{Maio:2009kb,Maio:2009cy,Maio:2009tg}, while in chapter \ref{paper4} we have
constructed the $\mathbb{Z}_2$ permutation orbifolds of $N=2$ minimal models \cite{Maio:2010eu}. These can now be used as building blocks
in heterotic CFT constructions, on equal footing, and in combination with all other building blocks, such as  the minimal
models themselves and free fermions. Furthermore we can now for the first time apply the full simple current machinery in exactly
the same way as for the minimal models. 

Meanwhile, another method was added to this toolbox, allowing us to advance a bit more deeply into the heterotic landscape, and
away from free or symmetric CFT's. This is called ``heterotic weight lifting" \cite{GatoRivera:2009yt}, a replacement of $N=2$ building blocks in the bosonic
sector by isomorphic (in the sense of the modular group) $N=0$ building blocks (more precisely, replacing $N=2$ building blocks together
with the extra $E_8$ factor). This method requires knowledge of the exact CFT spectrum, which indeed we have. A variant of this idea is
the replacement of the $U(1)_{B-L}$ factor (times $E_8$) by an isomorphic CFT. This has been called ``B-L lifting".

The purpose of this chapter is to put all these ingredients together using permutation orbifolds of $N=2$ minimal models as building 
blocks in combination with minimal models. We want to do this for the following reasons:
\begin{itemize}
\item{Check the consistency of the permutation orbifold CFT's we presented in chapter \ref{paper4}. Chiral heterotic spectra are very sensitive to the correctness of conformal weights and ground state dimensions of the CFT, as well as the
correctness of the simple current orbits. This is especially true for weight-lifted spectra, because
they have non-trivial Green-Schwarz anomaly cancellations.}
\item{Compare our results with those of previous work on permutation orbifolds \cite{Klemm:1990df,Fuchs:1991vu}. These results
were obtained using a rather different method, by applying permutations directly to complete heterotic string spectra.}
\item{Check if the generic trends on fractional charges and family number are confirmed also in the class of permutation orbifolds.}
\item{Add a few more items to the growing list of potentially interesting three-family interacting CFT models.}
\end{itemize}

The key ingredient of the present discussion is our previous chapter \ref{paper4}, where we have studied permutations, together with extensions in all possible order, and found very interesting novelties. For example, we have determined how to construct a supersymmetric permutation of minimal models: in particular, the world-sheet supersymmetry current in the supersymmetric orbifold turns out to be related to the anti-symmetric representation of the world-sheet supersymmetry current of the original minimal model. When the symmetric representation is used, instead, one ends up with a conformal field theory, which is isomorphic to the supersymmetric orbifold, but it is not supersymmetric itself. 

In the extended permuted orbifolds so-called exceptional simple currents appear, which originate from off-diagonal representations. Generically, there are many of them, depending on the particular model under consideration, and they do not have fixed points. However, if and only if the ``level'' of the minimal model is equal to $k=2$ mod $4$, four of all these exceptional currents do admit fixed points. As a consequence, in those cases the knowledge of the modular $S$ matrix is plagued by the existence of non-trivial and unknown $S^J$ matrices (one $S^J$ matrix for each exceptional current $J$). The full set of $S^J$ matrices is available for standard $\mathbb{Z}_2$ orbifolds (see \cite{Maio:2009kb,Maio:2009cy,Maio:2009tg}), but not for their (non-)supersymmetric extensions, due to these four exceptional currents with fixed points 
\cite{Schellekens:1990xy,Schellekens:1989uf,Fuchs:1996dd,Fuchs:1995zr,Schellekens:1999yg,Fuchs:1995tq}.

Here we consider permutations in Gepner models. One starts with Gepner's standard construction where the internal CFT is a product of $N=2$ minimal models. Sometimes there are (at least) two $N=2$ identical factors in the tensor product. When it is the case, we can replace these two factors with their permutation orbifold. Moreover, one also has to impose space-time and world-sheet supersymmetry, which is achieved by suitable simple-current extensions.

This chapter is organized as follows. 
In section \ref{Section: Heterotic Gepner models} we review the standard construction of heterotic Gepner models. In section \ref{Section: Permutation orbifold of N=2 minimal models} we review the main ingredients and the most relevant results of $\mathbb{Z}_2$ permutation orbifolds when applied to $N=2$ minimal models. 
In section \ref{Section: Lifts} we describe the heterotic weight lifting and the B-L lifting procedures, which allow us to replace the trivial $E_8$ factor plus either one $N=2$ minimal model or the $U(1)_{B-L}$ with a different CFT, which has identical modular properties, in the bosonic (left) sector. 
In section \ref{Section: Comparison} we compare our results on (2,2) spectra with the known literature. 
In section \ref{Section: Results} we present our phenomenological results concerning the family number distributions,
gauge groups, fractional charges and other relevant data. 
In appendix \ref{Appendix Paper5} we derive a few facts about simple current invariants.  
Appendix \ref{Appendix Paper5: tables} contains
tables summarizing the main results for the four cases (standard Gepner models and the three kinds of lifts). The content of this chapter is based on \cite{Maio:2011qn}.

\section{Heterotic Gepner models}
\label{Section: Heterotic Gepner models}
In this section we review the construction of four-dimensional heterotic string theory. The starting point is a set of bosons $X^\mu$ ($\mu=0,\dots,3$) for both the right and left movers, a right-moving set of NSR fermions $\psi^\mu$, plus corresponding ghosts, and an internal CFT with central charges $(c_L,c_R)=(22,9)$, that we denote by $\mathscr{C}_{22,9}=\mathscr{C}_{22}\times\mathscr{C}_{9}$. Observe that the right-moving superconformal field theory $(X,\psi)$+ghosts has central charge $c=3$. Equivalently, one can think of it as the conformal field theory of two bosons $X^i$ and their fermionic superpartners $\psi^i$ in light-cone gauge. The fermions $\psi^i$ form an $SO(2)_1$ abelian algebra, with central charge $c=1$.

The next step is to replace the NSR $SO(2)_1$ fermions by a set of $13$ bosonic fields living in the maximal torus of an $(E_8)_1\times SO(10)_1$ affine Lie algebra. This is the bosonic string map \cite {Lerche:1986cx}, which transforms the fermionic CFT into a bosonic one with same modular properties. The total right-moving CFT has now central charge equal to $c_R=2+9+13=24$, as the left-moving bosonic theory. Hence, all four-dimensional heterotic strings correspond to all compactified bosonic strings with an internal sector:
\begin{equation}
\mathscr{C}_{22,9}\times \left( (E_8)_1\times SO(10)_1 \right)_R \,.
\end{equation}
To summarize:
\begin{eqnarray}
\hbox{Left-moving}&&(X^\mu,{\rm ghost})\times \mathscr{C}_{22}\nonumber\\
\hbox{Right-moving}&&(X^\mu,{\rm ghost})\times \mathscr{C}_{9}\times (E_8)_1\times SO(10)_1\nonumber
\end{eqnarray}
with $\mu=0,\dots,3$. Equivalently, in light-cone gauge one uses $X^i$ instead of $(X^\mu,{\rm ghost})$.

In the right-moving sector, all the CFT building blocks have  $N=2$ worldsheet supersymmetry. This implies the existence of two operators
with simple fusion rules: the worldsheet supercurrent $T_F$ and the spectra flow operator $S_F$. In general, the
internal CFT in the fermionic sector is itself built out of $N=2$ building blocks, that have such currents as well. 

In order to preserve right-moving world-sheet supersymmetry, the total supercurrent $T^{\rm st}_F+T_F^{\rm int}$ must have a well-defined periodicity, since it couples to the gravitino. Here, $T^{\rm st}_F=\psi^\mu \partial X_\mu$ is the world-sheet supercurrent in space-time and $T_F^{\rm int}$ is the supercurrent of the internal sector. Hence the allowed states will have the same spin structure in all the subsectors of the tensor product, namely the R (NS) sector of $SO(10)_1$ must be coupled to the R (NS) sector of the internal CFT. This result is achieved by an integer-spin simple current extension of the full right-moving algebra, where the current is given by the product of the supercurrents $T^{\rm st}_F\cdot T_F^{\rm int}$: it corresponds to projecting out all the combinations of mixed spin structures. When the internal CFT is a product of many sub-theories, as in the case of Gepner models, each with its own world-sheet supercurrent $T_{F,i}$, then one has to extend the full right-moving algebra by all the currents $T^{\rm st}_F\cdot  T_{F,i}$. In simple current language this means that we extend the chiral
algebra by all currents
\begin{equation}
\label{WS}
W_i= (0,\ldots,0,T_{F,i},0,\ldots,0;V)\,,
\end{equation}
where we use a semi-colon to separate the internal and space-time part, and we use the standard notation $0,V,S,C$ for $SO(10)_1$ simple currents (or conjugacy classes).

A sufficient and necessary condition for space-time supersymmetry is the presence of a right-moving spin-$1$ chiral current transforming as an $SO(10)_1$ spinor. Hence this current must be equal to the product of the spinor $S$ of the $SO(10)_1$, which has spin $h=\frac{5}{8}$, times an operator $S^{\rm int}$ from the Ramond sector of the internal CFT $\mathscr{C}_9$, which must then have spin $h=\frac{3}{8}$. This last value saturates the chiral bound $h\geq\frac{c}{24}$ for the internal right-moving CFT which has central charge $c=9$, hence $S^{\rm int}$ corresponds to a Ramond ground state. 

Among the Ramond ground states, one is very special. $N=2$ supersymmetry possesses a one-parameter continuous automorphism of the algebra, known as spectral flow, which, when restricted to half-integer values of the parameter, changes the spin structures and maps Ramond fields to NS fields, hence uniquely relating fermionic to bosonic fields. In particular, under spectral flow, the NS field corresponding to the identity is mapped to a Ramond ground state which has $h=\frac{c}{24}$ and is called the spectral-flow operator. Not surprisingly, the spectral flow operator is related to the $N=1$ space-time supersymmetry charge. We will denote it as $S_F$.

In our set-up of four dimensional heterotic string theories, $N=1$ space-time supersymmetry is achieved again by a simple current extension. The current in question is the product of the space-time spin field $S_F^{\rm st}$ with $S_F^{\rm int}$, where $S_F^{\rm int}$ is the spectral-flow operator. If the internal CFT is built out of many factors, then $S_F^{\rm int}=\bigotimes_i S_{F,i}$, where $S_{F,i}$ is the spectral-flow operator in each factor. 
In simple current language, the space-time supersymmetry condition amounts to extending the chiral algebra of the CFT by
the simple current
\begin{equation}
\label{SS}
S_{\rm susy}=(S_{F,1},\ldots,S_{F,r};S)\,,
\end{equation}
where $r$ denotes the number of factors.
Obviously these simple current extensions must be closed under fusion, in combination with all world-sheet supersymmetry extensions
discussed above. Modular invariance of the final theory is then guaranteed by the simple current construction.

So far everything holds for any combination of superconformal $N=2$ building blocks. The only ones available in practice
(prior to this work) are suitable combinations of free bosons and/or fermions, and $N=2$ minimal models. 
We have already discussed $N=2$ minimal models in chapter \ref{paper4}. These are 
unitary finite-dimensional representations of the $N=2$ superconformal algebra, which exist only for  $c\leq3$. They are labelled by an integer $k$, in terms of which the central charge is 
\begin{equation}
c=\frac{3k}{k+2}\,.
\end{equation}
Using the coset description of the $N=2$ minimal models
\begin{equation}
\frac{SU(2)_k \times U(1)_4}{U(1)_{2(k+2)}}\,
\end{equation}
one can label representations by three integers $(l,m,s)$, where $l$ is an $SU(2)_k$ quantum number and $m$ and $s$ are $U(1)$ labels. The range is: $l=0,\dots,k$, $m=-k-1,\dots,k+2$, $s=-1,\dots,2$ ($s=0,2$ for NS sector, $s=\pm1$ for R sector). Moreover, fields satisfy the constraint $l+m+s=$ even and are pairwise identified according to $\phi_{l,m,s}\sim\phi_{k-l,m+k+2,s+2}$, which is realized as a formal simple current extension.

Now consider the right-moving algebra of the heterotic string. The internal CFT $\mathscr{C}_9$ can be built as a product of a sufficient number of $N=2$ minimal models such that 
\begin{equation}
\label{MinSum}
\sum_i^r\frac{3k_i}{k_i+2}=9\,,
\end{equation}
so the full algebra is
\begin{equation}
\bigotimes_i (N=2)_i \otimes (E_8)_1\otimes SO(10)_1
\end{equation}
and representations are labelled by
\begin{equation}
\bigotimes_i (l_i,m_i,s_i) \otimes (0) \otimes (s_0)\,.
\end{equation}
Observe that the $(E_8)_1$ algebra has only one representation, i.e. the identity, and it is often omitted in the product. Here $s_0$ denotes one of the four $SO(10)_1$ representations, $s_0=O,V,S,C$. As discussed above, we impose world-sheet and space-time supersymmetry by simple-current extensions. The world-sheet supercurrent for each $N=2$ minimal model is labelled by $T_{F,i}=(0,0,2)$ and the spectral-flow operator is $S_{F,i}=(0,1,1)$.  These are used in the world-sheet and space-time chiral algebra extensions (\ref{WS}) and (\ref{SS}).

These chiral algebra extensions are mandatory only in the fermionic sector. However, modular invariance does not allow
an extension in just one chiral sector. The most common way of dealing with this is to use exactly the same CFT in the 
left-moving sector, with exactly the same extensions. Of course any $N=2$ CFT is a special example of an $N=0$ CFT.
This construction leads to $(2,2)$ theories, with spectra analogous to Calabi-Yau compactifications, characterized by Hodge
number pairs and with a certain number of families in the $(27)$ of $E_6$. 
On the other hand, modular invariance is blind to most features of the CFT spectrum. It only
sees the modular group representations. This makes it possible to use in the left, bosonic, sector a different set of extension
currents than on the right. In particular one can replace the image of the space-time current by something else, thus breaking
$E_6$ to $SO(10)$. Furthermore one can break world-sheet supersymmetry in the bosonic sector. One can even go a step
further and break $SO(10)$ and $E_8$ to any subgroup, as long as this breaking can be restored by means of simple
currents. Those currents are then mandatory in the fermionic sector (since otherwise the bosonic string map cannot be used),
but can be replaced by isomorphic alternatives in the left sector. In general, we will call this class $(0,2)$ models. 

All the aforementioned possibilities will be considered in this chapter, except $E_8$ breaking. The $SO(10)$ breaking we consider is to $SU(3)\times SU(2) \times U(1)_{30} \times U(1)_{20}$, where the first three factors are the standard model gauge groups with the standard $SU(5)$-GUT normalization for the $U(1)$. The fourth factor corresponds in certain cases to $B-L$. It is known that under such a breaking fractionally-charged particles may arise \cite{Schellekens:1989qb,Wen:1985qj,Athanasiu:1988uj}. They can be either chiral or non-chiral, or even absent in the massless sector. We will investigate when these options occur.

\section{Orbifolds of $N=2$ minimal models}
\label{Section: Permutation orbifold of N=2 minimal models}
In chapter \ref{paper4} the permutation orbifold of $N=2$ minimal models was studied. Extensions and permutations were performed in all possible orders and a nice structure was seen to arise, together with exceptional off-diagonal simple currents appearing in the extended orbifolds. In this section we recall the procedure of how to build a supersymmetric permutation orbifolds starting from $N=2$ minimal models. We will restrict ourselves to $\mathbb{Z}_2$ permutations, because a formalism to build permutation orbifold CFT's for higher
cyclic orders is not yet  available.

Consider the internal CFT $\mathscr{C}_9$ to be a tensor product of $r$ minimal models such that the total central charge is equal to $9$. We denote such a theory as\footnote{Note that here we mean the unextended tensor product. In particular, world-sheet supersymmetry 
extensions are not implied.}
\begin{equation}
(k_1,k_2,k_3\dots,k_r)\,,
\end{equation}
each $k_i$ parametrizing the $i^{\rm th}$ minimal model. Suppose that two of the $k_i$'s are equal: then the two corresponding minimal models are also identical and one can apply the orbifold mechanism to interchange them. We will use brackets to label the block corresponding to the orbifold CFT: e.g. if $k_2=k_3$, then the permutation orbifold will be denoted by 
\begin{equation}
(k_1,\langle k_2,k_3\rangle\dots,k_r)\,.
\end{equation}
Multiple permutations are of course also possible. For convenience, we will follow the standard notation, used extensively in literature, of ordering the minimal models according to increasing level, namely $k_i\leq k_{i+1}$. Consequently, identical factors will always appear next to each other.
The orbifolded theory has the same central charge of the original one, namely $\sum_i^r c_i=9$, and hence can be used to build four dimensional string theories. 

Note that by $\langle k,k\rangle$ we mean the {\it supersymmetric} permutation orbifold, which, as explained in chapter \ref{paper4}, is obtained from the minimal model with level $k$ by first constructing the non-supersymmetric BHS orbifold (which we will denote as
$[ k,k]$), extending this CFT by
the anti-symmetric combination of the world-sheet supercurrent $(T_F,1)$, and resolving the fixed points occurring as a result
of that extension. This fixed point resolution promotes some fields to simple currents. All these simple currents will be used
to build MIPF's, using the general formalism presented in \cite{Kreuzer:1993tf}. 

Fixed point resolution enters the discussion at various points, and to prevent confusion we summarize here some relevant
facts. In the following we consider chains of extensions of the chiral algebra of a {\rm CFT}, and denote them as $({\rm CFT})_n$. Here
$({\rm CFT})_0$ is the original {\rm CFT}, $({\rm CFT})_1$ a first extension, $({\rm CFT})_2$ a second extension etc. In this process the chiral
algebra is enlarged in each step. The number of primary fields can decrease because some are projected out and others are
combined into new representations, but it can also increase due to fixed point resolution (apart from some special cases
the decrease usually wins over the increase). We are not assuming that each extension is itself ``indecomposable" ({\it i.e.} not
the result of several smaller extensions), but in practice the case of most interest will be a chain of extensions of order 2.
The following facts are important.
\begin{itemize}
\item{Simple currents $J$ are characterized by the identity $S_{0J}=S_{00}$, where $S$ is the modular transformation matrix.
For all other fields $i$, $S_{0i} > S_{00}$.}
\item{In an extension by a simple current of order $N$, the matrix elements $S_{0f}$ of fixed point fields are reduced by a factor of $N$. 
For this reason a fixed point field of $({\rm CFT})_n$ can be a simple current of $({\rm CFT})_{n+1}$. We will call these ``exceptional simple currents".}
\item{Exceptional simple currents can be used to build new MIPF's in $({\rm CFT})_{n+1}$, but such MIPF's are not simple
current MIPF's of $({\rm CFT})_n$. They are exceptional MIPF's.}
\item{If the fixed point resolution matrices of $({\rm CFT})_n$ are known, 
we can promote the exceptional simple currents of $({\rm CFT})_{n+1}$ to ordinary ones. This makes it possible to treat them on equal footing with
all other simple currents of $({\rm CFT})_{n+1}$.}
\item{Obviously, this process can be iterated: exceptional simple currents of $({\rm CFT})_{n+1}$ can themselves have fixed points, which can become
simple currents of $({\rm CFT})_{n+2}$.}
\item{If we know the fixed point resolution matrices of $({\rm CFT})_n$, we also know all the fixed point resolution matrices
of the ordinary simple currents of $({\rm CFT})_{n+1}$, but if the exceptional simple currents have fixed points, there is currently
no formalism available to determine their fixed point resolution matrices.}
\end{itemize}

In the previous chapters \ref{paper1}-\ref{paper3} we have developed a formalism for all fixed point resolution matrices
of the BHS permutation orbifolds. This plays the r\^ole of $({\rm CFT})_0$ in the foregoing. The supersymmetric permutation orbifold
$\langle k, k\rangle$ is $({\rm CFT})_1$. It always has exceptional simple currents, but only for $k=2~\hbox{mod}~4$ they have fixed 
points. As explained above, we cannot resolve these fixed points, but in heterotic spectrum computations
this is not necessary. This would be necessary if we want to go beyond spectrum computations to determine couplings. In spectrum
computations, fixed point fields $f$ appear in the partition function as character combinations of the form
\begin{equation}
N_f\, \bar\chi_f(\bar\tau)\chi_f(\tau), \  \  \  \  \ N_f>1 ,
\end{equation}
which is resolved into a certain number of distinct representations $(f,\alpha)$ that contribute to the partition function as in (\ref{paper1: fp mipf contribution}). 
Note that for $N_f\geq4$ the last condition has several solutions, and to find out which one is the right one the
twist on the stabilizer of the fixed point must be determined \cite{Fuchs:1996dd}. However, here
we merely want to add up the values of $N_f$ for a left-right combination of interest, and the individual values of $m_\alpha$ do not matter.

A few fields of the supersymmetric orbifold will be relevant in the following, all of untwisted type. They are:
\begin{itemize}
\item The symmetric representation of the spectral flow operator $(S_F,0)$, with $S_F=(0,1,1)$. It will be relevant to make the whole theory supersymmetric.
\item The world-sheet supercurrent of the supersymmetric orbifold, that we denote by $\langle 0,T_F \rangle$\footnote{Actually, since $\langle 0,T_F \rangle$ is a fixed point of $(T_F,1)$ in the unextended orbifold, there exist two fields $\langle 0,T_F \rangle_\alpha$ (with $\alpha=0,1$) in the supersymmetric orbifold corresponding to the two resolved fixed points. One can use any of them, since they produce the same CFT.}.
\item The anti-symmetric representation of the identity, denoted by $(0,1)$. We will call it  the ``un-orbifold current" since the extension by this current undoes the orbifold, giving back the original tensor product. 
\end{itemize}

The un-orbifold current exists in the BHS orbifold $[k,k]$ as well as in the 
supersymmetric orbifold $\langle k, k\rangle$. Denoting extension currents by means of a subscript, we have
the following CFT relations
\begin{eqnarray*}
(k,k) &= [ k, k]_{\rm unorb}\\
(k,k)_{(T_F,T_F)} &= \langle k, k\rangle_{\rm unorb}
\end{eqnarray*}
that can be checked using the box diagrams given in chapter \ref{paper4}.

In general, the full set of simple current MIPF's obtained from the permutation orbifold CFT  $(k_1,\langle k_2,k_3\rangle\dots,k_r)$ will
have a partial overlap with those of straight tensor product $(k_1, k_2,k_3,\ldots,k_r)$. Since the set of simple currents
of  $(k_1,\langle k_2,k_3\rangle\dots,k_r)$ includes the un-orbifold current one might expect that the latter set is entirely included in the
former. However, this is not quite correct, since the supersymmetric permutation orbifold has fewer simple currents than the
tensor product from which it originates, as explained above. In the extension chain,  $\langle k, k\rangle_{\rm unorb}$
is $({\rm CFT})_2$. In both steps in the chain

\begin{eqnarray*}
\hfill ({\rm CFT})_0 &= & [k,k] \\ &\downarrow\\  ({\rm CFT})_1 &=& \langle k, k\rangle\\ & \downarrow\\  ({\rm CFT})_2& =& \langle k, k\rangle_{\rm unorb} = (k,k)_{(T_F,T_F)}
\end{eqnarray*}
exceptional simple currents appear. Those of the first step are promoted to ordinary simple currents using
fixed point resolution in the BHS orbifold. We then work directly with $\langle k, k\rangle$ as a building block, but
by doing so we cannot use the exceptional simple currents emerging in the second step. In this case the
exceptional simple currents could be used by working with  $(k,k)_{(T_F,T_F)}$ directly, but then we are back in the unpermuted theory.
So the point is not that these MIPF's are unreachable, just that they cannot be reached using the simple currents of $\langle k, k\rangle $. Obviously, if we were to use a different exceptional simple current in the second extension, such that $({\rm CFT})_2$ is
a new, not previously known CFT  with exceptional simple currents, some of its MIPF's cannot be reached using simple current methods  
neither from $({\rm CFT})_1$ nor from $({\rm CFT})_2$. In all cases, one can try to derive such MIPF's explicitly as exceptional invariants, and
they can then be taken into account in heterotic spectrum computations, but this requires tedious and strongly case-dependent 
calculations. But in this chapter we only consider simple current invariants, without any claim regarding completeness of the
set of MIPF's we obtain.

The phenomenon of  exceptional simple currents is nothing new, and occurs for example in the D-invariant
of $A_{1,4}$ (which is isomorphic to $A_{2,1}$), or the extension of the tensor products of two Ising models extended by the product of the fermions (turning it into a free boson).

The simplest explicit example occurs for $k=1$. In this case the discussion can be made a bit more explicit, since
the permutation orbifold is itself a minimal model, namely the one with $k=4$: 
\begin{eqnarray*}
 \langle 1,1\rangle  &=& (4)\\
 (1,1)_{(T_F,T_F)} &=&(4)_{\rm unorb} = (4)_D
\end{eqnarray*}
The minimal $k=1$ model has 12 primaries, all simple currents, and hence the tensor product $(1,1)$ has 144 simple currents. To make the tensor product world-sheet supersymmetric we have to extend it by $(T_F,T_F)$, reducing the number of simple currents by a factor of four\footnote{Of the 12 simple currents of the minimal $k=1$ model, 6 are in the Ramond and 6 in the NS sector. In the extended tensor product, only fields with factors both in the R or in the NS sector survive (thus reducing their number by a factor of two) and they are moreover pairwise identified by the extension (thus giving another factor of two).} to 36. 
The $k=4$ minimal model has 24 simple currents. If we extend the $k=4$ minimal model by the un-orbifold current (which can be identified as such in the $\langle 1,1\rangle$
interpretation), these 24 original simple currents are reduced
to 12. Since the resulting CFT is isomorphic to $(1,1)_{(T_F,T_F)}$ there must be 24 additional simple currents. Indeed there are, but they 
are exceptional. They are related to the aforementioned exceptional currents in the D-invariant of $A_{1,4}$. This is
also the only example of exceptional simple currents in $N=2$ minimal models, and clearly in this case no MIPF's are
missed, since we can explicitly consider $(1,1)$ as well as $(4)_D$.  There might exist additional examples of exceptional
simple currents in tensor products of $N=2$ minimal models.

If the chiral algebra contains the un-orbifold current of a permutation orbifold, we obviously get nothing new. Therefore we
demand that this current is not in the chiral algebra. In general, it would be possible  to forbid it in either the left or the
right chiral algebra. This is already sufficient to find new cases. We do this, for example, with the $SU(5)$ extension
currents of the standard model, which are required in the right (fermionic string) chiral algebra, but not in the left one. However,
it turns out that the un-orbifold current is local with respect to all other simple currents.

In appendix \ref{Section: Simple current invariants} we prove a small theorem about simple current invariants. Consider a simple current  modular invariant partition function
\begin{equation}
Z(\tau,\bar{\tau})=\sum_{k,\,l}\bar{\chi}_k(\bar{\tau})M_{kl}\chi_l(\tau)\,.
\end{equation}
In the theorem it is shown that:
if a current $J$ that is local with respect to all currents used to construct the modular invariant appears on the right hand side (holomorphic sector) of the algebra, then
it will also appear on the left hand side (anti-holomorphic sector):
\begin{equation}
M_{0J}\neq 0 \qquad\Leftrightarrow \qquad M_{J0}\neq 0\,.
\end{equation}
Furthermore we show that the un-orbifold current is local with respect to all other currents.
Therefore the existence of the un-orbifold current on one side implies its existence also on the other side.
Hence it is sufficient to forbid its occurrence in either the left or the right sector.

However, there are a few cases where it cannot be forbidden at all, because it is generated by combinations
of world-sheet and space-time supersymmetry in the right (fermionic) sector, where such chiral algebra extensions 
are required. In general, a tensor product is extended by the currents $S_{\rm susy}$ and $W_i$, as explained in the previous 
section. 

If $k$ is even, the un-orbifold current does not appear on the orbit of the Ramond spinor current $S_F$, and hence can never
be generated. For arbitrary $k$ we have in the supersymmetric permutation orbifold
\begin{equation}
(S_F,0)^{2(k+2)}=
\left\{
\begin{array}{lc}
(0,0) & {\rm if}\,\,k\,\,{\rm even}\\
(0,1) & {\rm if}\,\,k\,\,{\rm odd}
\end{array}
\right.
\,,
\end{equation}
so that for $k$ odd one can obtain the un-orbifold current as a power of $(S_F,0)$. Note that instead of $2(k+2)$ one could use
any odd multiple of $2(k+2)$. In the tensor product $S_F$ is combined with the spinor currents of all the other factors,
which will be raised to the same power. Now note that in minimal models of level $k$ the following is true
\begin{equation}
S_F^{\phantom{F}2(k+2)}=
\left\{
\begin{array}{lc}
0 & {\rm if}\,\,k\,\,{\rm even}\\
T_F & {\rm if}\,\,k\,\,{\rm odd}
\end{array}
\right.
\,.
\end{equation}
Furthermore, the value $2(k+2)$ is the first non-trivial power for which either the identity or the world-sheet supercurrent is
reached. It follows that if the tensor product contains a factor with $k_i$ even, 
the complete susy current  $(S_{F,1},\ldots,S_{F,r};S)$ must be raised to a power that is a multiple of four in order
to reach either the identity or a world-sheet supercurrent. This is true for minimal model factors as well as supersymmetric permutation
orbifolds $\langle k_i, k_i\rangle$. 

Consider then a tensor product $(k_1,\ldots,k_{m-1},\langle k_m,k_m\rangle,k_{m+2},\ldots,k_r)$. 
Take the susy current 
$$(S_{F,1},\ldots ,S_{F,m-1},(S_F,0),S_{F,m+2},\ldots S_{F,r};S)$$ 
to the power $2 M$, where $M$ is the smallest common multiple of $k_i+2$, for all $i$ (including $i=m$). If all $k_i$ are odd, 
this yields 
\begin{equation}
\label{Power}
(T_{F,1},\ldots,T_{F,m-1},(0,1),T_{F,m+2},\ldots T_{F,r};V)
\end{equation}
Since this is a power of an integer spin current, the
susy current, it must have integer spin. Therefore the number of $T_{F,i}$ must be odd. Indeed, it is not hard to show that
eqn. (\ref{MinSum}) can only be satisfied with all $k_i$ odd if the total number of factors, $r$, is odd. It then follows that all entries
$T_{F,i}$  as well as the representation $V$ of $SO(10)_1$ can be nullified by world-sheet supersymmetry. Hence it follows that
the un-orbifold current of $\langle k_m,k_m\rangle$ is automatically in the chiral algebra. It also follows that if one of the $k_i$ 
is even the un-orbifold current is {\it not} in the chiral algebra generated by $S_{\rm susy}$ and $W_i$. The same reasoning can be
applied to tensor products containing more than one permutation orbifold. The conclusion is that the un-orbifold currents of each factor separately are not generated by $S_{\rm susy}$ and $W_i$, but if all $k_i$ (in minimal models as well as the permutation
orbifolds) are odd, the combination $(0,\ldots,(0,1),\ldots,(0,1),\ldots,0;0)$, with an un-orbifold component in each permutation
orbifold, will automatically appear. Obviously, if there is more than one permutation orbifold factor this does not undo the permutation. 

The set of tensor combinations with only odd factors is rather limited, namely
\begin{eqnarray*}
&(1,1,1,1,1,1,1,1,1)\\
 &(3,3,3,3,3)\\
 &(1,3,3,3,13)\\
 &(1,1,7,7,7)\\
 &(1,1,5,5,19)\\
 &(1,1,3,13,13)
 \end{eqnarray*}
We will not consider permutations of $k=1$, because $\langle1,1\rangle=4$, and hence nothing new can be found by allowing $\langle1,1\rangle$. Furthermore there is no need to consider any single permutations in the foregoing tensor products. However, we do expect the combinations 
$(3,\langle 3,3\rangle,\langle 3,3\rangle)$, 
$(\langle 1,1\rangle,\langle 7,7\rangle,7)\equiv(4,\langle 7,7\rangle,7)$ and 
$(\langle 1,1\rangle,\langle 5,5\rangle,19)\equiv(4,\langle 5,5\rangle,19)$ 
to yield something new.

For technical reasons in this work we consider only permutations of minimal models having level $k \leq 10$: computing time and memory use become just too large for large $k$. Nevertheless, the interval $k\in[2,10]$ still covers almost all the standard Gepner models where at least two factors can be permuted.

\subsection{Permutations of permutations}
An additional thing that one could try to do (and which we can in principle do with our formalism, since we know all the relevant data that are needed) is to consider permutations of permutations. Permutations of permutations are possible only for a few Gepner models, because one would need to have a number of factors in the tensor product which is larger than four and with at least four identical minimal models. Out of the 168 possibilities, there are only a few combinations that have these properties. They are:
\begin{eqnarray}
 &(6,6,6,6)&\nonumber\\
 &(1,4,4,4,4)&\nonumber\\
 &(3,3,3,3,3)&\nonumber\\
 &(1,2,2,2,2,4)&\nonumber\\
 &(2,2,2,2,2,2)&
\end{eqnarray}
As before we restrict the $k>1$. Observe that the maximal level is $k=6$, so these cases are actually all the possibilities that one can consider and one can relax here our previous restriction to $k\leq10$.

The approach one should take is the following. Consider a block of four identical minimal models. As before we can permute the factors pairwise and obtain a tensor product of two larger blocks, but again identical. Hence we can permute them again and end up with only one big block which replaces the four ones that we started with:
\begin{displaymath}
\xymatrix{
(k,k,k,k) \ar[d] \\
(\langle k,k \rangle, \langle k,k\rangle) \ar[d]\\
\big\langle \langle k,k \rangle, \langle k,k\rangle\big\rangle
}
\end{displaymath}

Although straightforward, we have not performed this calculation here. There are only very few cases to analyze, namely the five listed above, but, on the one hand, it is a pretty lengthy computation and on the other hand we do not expect drastically different spectra in comparison with normal permutations.

\section{Lifts}
\label{Section: Lifts}
In \cite{GatoRivera:2009yt} the authors describe a new method for constructing heterotic Gepner-like four-dimensional string theories out of $N=2$ minimal models. The method consists of replacing one $N=2$ minimal model together with the $E_8$ factor by a non-supersymmetric CFT with identical modular properties. Generically this method produces a spectrum with fewer massless states. Surprisingly, it is possible to get chiral spectra and gauge groups such as $SO(10)$, $SU(5)$ and other subgroups including the Standard Model. However, the most interesting feature is probably the abundant appearance of three-family models, which are very rare in standard Gepner models \cite{GatoRivera:2010gv}. Let us review how it is done in more detail, at least in the simplest case.

Start from the coset representation of the minimal model:
\begin{equation}
\frac{SU(2)_k\times U(1)_4}{U(1)_{2(k+2)}}\,,
\end{equation}
subject to field identification by the simple current $(J,2,k+2)$. Here $J$ is the simple current of the $SU(2)_k$ factor and the $U_N$ fields are labelled by their charges as $0,\dots,N-1$. 
The product of the $N=2$ minimal model and the $E_8$ factor is then
\begin{equation}
\left(\frac{SU(2)_k\times U(1)_4}{U(1)_{2(k+2)}}\right)_{(J,2,k+2)}\times E_8\,,
\end{equation}
where the brackets denote this identification. The next step is to remove the identification and mod out $E_8$ by $U(1)_{2(k+2)}$: the new CFT is then
\begin{equation}
SU(2)_k\times U(1)_4\times \frac{E_8}{U(1)_{2(k+2)}}\,.
\end{equation}
Finally we restore the identification by a standard order-2 current extension of the resulting CFT. This procedure works provided we can embed the $U(1)_{2(k+2)}$ factor into $E_8$. Some examples of how to embed $U(1)_{2(k+2)}$ into $E_8$ are given in \cite{GatoRivera:2009yt}. Finally, one can check explicitly that the modular $S$ and $T$ matrices are the same as for the $N=2$ 
minimal model times $E_8$, as they must be by construction. 
The resulting CFT is $SU(2)_k\times U(1)_4\times X_7$, where $X_7$ is the reminder of $E_8$ after dividing out $U(1)_{2(k+2)}$. $X_7$ has central charge $7$ and modular matrices $S$ and $T$ given by the complex conjugates of those of $U(1)_{2(k+2)}$ (since the ones of $E_8$ are trivial). Generically, this procedure raises the weights of the primaries in the new CFT, hence the name ``weight lifting''.

As it appears from above, the lifting of Gepner models is achieved by only a slight modification of standard Gepner models. All one has to
do is to shift the weights of some fields in the left-moving CFT by a certain integer, and replace the ground state dimensions by another, usually larger, value.  
In \cite{GatoRivera:2009yt} a list of possible lifts is given for $N=2$ minimal models at level $k$. Not for any level there exists a lift and sometimes for fixed $k$ there are more lifts. When applied to standard Gepner models, a lot of new ``lifted'' Gepner models are generated. Notationally, if a Gepner model is denoted by $(k_1,\dots,k_i,\dots, k_r)$, the corresponding lifted model will be denoted by $(k_1,\dots,\hat{k}_i,\dots, k_r)$, where the lift is done on the $i^{\rm th}$ $N=2$ factor. If for a given $k$ there
exists more than one lift, we use a tilde to denote it.

In \cite{GatoRivera:2010fi} a different class of lifts was considered, the so called B-L lifts. In this case one replaces the $U(1)_{20}$ (with 20 primaries), that is the remainder of $SU(3)\times SU(2)\times U(1)$ embedded in $SO(10)$. In the Standard Model the abelian factor is the $U(1)_{Y}$ hypercharge (denoted also as $U(1)_{30}$, with 30 primaries). The $U(1)_{20}$ that we replace here corresponds to $B-L$, hence the name ``B-L lifting''. 
It is not possible to simply replace the $U(1)_{20}$ by an isomorphic CFT with 20 primaries, central charge $c=1$ and same modular $S$ and $T$ matrices, since all the $c=1$ CFT's are classified. Again, what one can do is to add the $E_8$ factor and replace the $E_8\times U(1)_{20}$ block, which has central charge $c=9$. 
As it turns out, there are only two possible B-L lifts, that we denote by $A$ and $B$. In terms of 
compactifications from ten dimensions these possibilities can be distinguished as follows. If one compactifies the $E_8\times E_8$ heterotic string one gets $SO(10)\times E_8$ in
four dimensions. The standard model can be embedded in $SO(10)$ (trivial lift, {\it i.e.} standard, unlifted $B-L$) or $E_8$ (lift A). If one compactifies the $SO(32)/\mathbb{Z}_2$ heterotic
string, one gets $SO(26)$, in which the standard model can then be embedded via an $SO(10)$ subgroup; this yields lift B. As explained in  \cite{GatoRivera:2010fi} both lift A and
lift B yield, perhaps counter-intuitively, chiral spectra. In the unlifted case, the number of families is typically a multiple of 6, and sometimes 2; for lift A, the family number quantization unit was found to be usually 1, whereas for lift B it was usually 2. 

In this chapter we will apply all these kinds of lifts to permuted Gepner models. This means that we make, when possible, all sorts of known lifts (namely, standard weight lifting and B-L lifting) for the $N=2$ factors that do \textit{not} belong to the sub-block(s) of the permutation orbifold. Note that permutations and lifting act independently: a given minimal
model factor is either unchanged, or lifted, or interchanged with another, identical factor. It may well be possible to construct lifted CFT's for the permutation orbifolds themselves,
but no examples are known, and they are in any case not obtainable by the methods of  \cite{GatoRivera:2009yt}, because there only a single minimal model factor is lifted. There
is one exception to this: there is one known simultaneous lift of two minimal model factors with $k=1$. Conceivably one could apply a permutation to those two identical
factors in combination with this lift. We have however not investigated this possibility.

\section{Comparison with known results}
\label{Section: Comparison}

To compare our results with previous ones on permutation orbifolds \cite{Klemm:1990df,Fuchs:1991vu}, it is  important to understand the
differences in these approaches. These authors first construct the basic Gepner model with all world-sheet and space-time supersymmetry
projections already in place in the left- as well as the right-moving sector.
 
They start from either the diagonal (A-type) invariants of all the minimal models, or the D and E-type
(exceptional) invariants. They then apply a cyclic permutation to the minimal model factors that are identical. They allow for additional
phase symmetries occurring in combination with the permutations. This combined operation
is applied to the full partition function. 

By contrast, we first build an $N=2$ permutation orbifold, then tensor it with other building blocks (either minimal models or other
$N=2$ permutation orbifolds), then impose world-sheet and space-time supersymmetry, but only on the fermionic sector, and consider general simple current modular invariants. 

So the differences can be summarized as follows
\begin{itemize}
\item{In \cite{Klemm:1990df,Fuchs:1991vu} general cyclic $\mathbb{Z}_L$ permutations are considered, while our results are limited to $L=2$.}
\item{In \cite{Klemm:1990df,Fuchs:1991vu} extra phases are modded out in combination with the permutations.}
\item{In \cite{Klemm:1990df,Fuchs:1991vu} permutations of D and E-invariants are considered.}
\item{We only consider permutations of factors with $2 \leq k \leq 10$.}
\item{We consider general simple current invariants.}
\item{We consider not only $(2,2)$ but also $(0,2)$ invariants and breaking of $SO(10)$.}
\end{itemize}

In order to make a comparison we will ignore the last point and focus on $(2,2)$ models. 
Since simple current invariants
include D-invariants as special cases, and because they involve monodromy phases of currents with respect to fields, one might expect
that at least part of the limitations in the second and third point are overcome. Exceptional invariants can be taken into account
in our method by multiplying the simple current modular matrix with an explicit exceptional modular matrix. Indeed, in standard Gepner
models we {\it have} taken them into account, and analysed the class of $(1,16^*,16^*,16^*)$ three-family models \cite{Gepner:1987hi}.
In the present case one could easily use exceptional invariants in those factors that are not permuted. To use permutations of exceptional
invariants we would first have to construct the exceptional MIPF explicitly in the permutation orbifold CFT, which can be done in principle
with a tedious computation. The first point is, however, much harder to overcome, because it would involve extending the BHS construction
to higher cyclic orders.  

Now let us see how the comparison works out in practice. In \cite{Fuchs:1991vu} a table is presented with
all models where cyclic permutations, phase symmetries and cyclic permutations together with phase symmetries have been modded out. For each model the authors give the number of generations $\rm n_{27}$, anti-generations $\rm n_{\overline{27}}$ and singlets $\rm n_{1}$. 
The first two numbers are equal to Hodge numbers of Calabi-Yau manifolds, namely $h_{21}= \rm n_{27}$ and 
$h_{11}=\rm n_{\overline{27}}$.
These quantities are first obtained by using modular invariance of the partition function of the cyclically-orbifolded Gepner models and are then compared with the same quantities derived by using topological arguments applied to the smooth Calabi-Yau manifold after that the singularities have been resolved. 
The number of families is specified by $\rm n_{gen}=n_{27}-n_{\overline{27}}$. The total number of singlets is strongly dependent on the multiplicities of the (descendants) states of the $N=2$ minimal models, which can be read off directly from the character expansions. The singlet number $\rm n_{1}$ turns out to be crucial for differentiating different models with equal $\rm n_{27}$ and $\rm n_{\overline{27}}$. Our comparison is based on these three numbers.  In table (\ref{HodgeNumbers}) we list the values 
we obtained for these three numbers in the cases we considered. Note that these are the numbers obtained without any
simple current extensions or automorphisms. The cases marked with a $*$ are $K_3 \times T_2$ type compactifications
with an $E_7$ spectrum; the numbers that are indicated are the ones obtained after decomposing $E_7$ to $E_6$.

In comparing the A-type invariants without phase symmetries, we get agreement, but in a somewhat unexpected way.
In \cite{Fuchs:1991vu} one-permutation models are not considered, because the authors argue that they always produce the same spectra as unpermuted Gepner models. However, we do manage to build one-permutation models as explained in section \ref{Section: Permutation orbifold of N=2 minimal models}. The only Gepner combinations for which the one-permutation models yield nothing new are the purely-odd combinations. 
Furthermore, the one-permutation orbifolds do indeed yield new results. For example, for the combinations $(2,2,2,2,\langle 2,2\rangle)$
the three numbers are $(90,0,284)$ as opposed to $(90,0,285)$ for the unpermuted case; for $(6,6,\langle6,6\rangle)$ we find 
$(106,2,364)$ as opposed to $(149,1,503)$; for $(\langle 3,3\rangle,10,58)$ we get $(75,27,392)$ as opposed to $(85,25,425)$. These
three example illustrate three distinct situations. In the first example, the only difference with the unpermuted case is that the number
of singlets is reduced by one. In the second example, the Hodge pair $(106,2)$ does occur for a non-trivial simple current invariant
of the tensor product $(6,6,6,6)$, namely $(6_A,6_A,6_A,6_D)$, but with 365 singlets instead of 364. In the
last example the Hodge pair $(75,27)$ does not occur for any simple current MIPF of $(3,3,10,58)$ (the only other combination that
occurs is $(53,41,401)$ plus the mirrors, so that even the Euler number of the permutation orbifold is new)\rlap.\footnote{For a complete
list of Hodge number and singlets of Gepner models see \cite{hodge}.}

In order to make a non-trivial comparison between
our spectra and those of \cite{Fuchs:1991vu} we have to look at Gepner models with {\it two} permutations. It turns out that our spectra (specified by $\rm n_{27}$, $\rm n_{\overline{27}}$ and $\rm n_{1}$) do agree with those of \cite{Fuchs:1991vu}. However, to get the full match, we always have to extend the model by one current. This current is (see section \ref{Section: Permutation orbifold of N=2 minimal models}) the double un-orbifold current, which has the un-orbifold current in each of the two factors corresponding to the permutation orbifold and the identity current in the remaining factors. 
Also in this case we already get new spectra even if we do {\it not} extend by this current. Consider for example $(\langle 6,6\rangle,\langle 6,6\rangle)$. As mentioned above, the $(6,6,6,6)$ gives $(149,1,503)$; the completely unextended spectrum we
get for the $(\langle 6,6\rangle,\langle 6,6\rangle)$ yields $(77,1,269)$; if we extend the two permutation orbifold CFT's by
the current combination $((0,1),(0,1))$ (where $(0,1)$ is the un-orbifold current) we find $(83,3,301)$, which is precisely the result quoted in \cite{Fuchs:1991vu}  for
the permutation orbifold. It is noteworthy that \cite{Fuchs:1991vu} lists a triplet $(77,1,271)$ for the combination
$(6_D,6_D,6_A,6_A)$, which from our perspective is a simple current invariant of $(6,6,6,6)$. Again we see two spectra with a minor
difference only in the number of singlets, which we will comment on below. In  one case we could not make
a comparison, because in \cite{Fuchs:1991vu} no result is listed for $(2,2,\langle 2,2\rangle,\langle 2,2\rangle)$ without extra phases.
In all other cases our results agree with \cite{Fuchs:1991vu}.
The need for extending by a combination of un-orbifold currents suggests that such currents are automatically generated or implicitly
present
in the
formalism used in \cite{Fuchs:1991vu}, for reasons we do not fully understand, but which are presumably related to an interchange
in the order of two operations: permutation and simple current extension. This is also consistent with the fact that these authors
find no new results for single permutations: if an un-orbifold current is automatically present in that case, one inevitably returns to the
unpermuted case. Note that for $(3,\langle3,3\rangle,\langle3,3\rangle)$ we have seen before that the separate un-orbifold current of
each permutation orbifold is automatically present in the chiral algebra, and hence so is the combination of the two. Therefore in this
case we do not have to extend by $(0,(0,1),(0,1))$ to find agreement with \cite{Fuchs:1991vu} because the extension is already automatically present.

Let us now compare the cases with extra phase symmetries.
In almost all cases, using the simple-current formalism, we recover for a given suitably-extended model the same Hodge numbers and the same singlet number as in those spectra where both the phase symmetry and the permutation symmetry have been modded out. In a sense, these phase symmetries correspond to simple current extensions or automorphisms. The only two exceptions, out of the many successful instances, both coming from the $2^6$ Gepner model (nr. 21 of Table II in \cite{Fuchs:1991vu}) with two permutations and phase symmetries, are
\begin{itemize}
\item (21)(43)56, 111100 ($\rm n_{27}=21$, $\rm n_{\overline{27}}=21$, $\rm n_{1}=180$, $\chi=0$),
\item (21)3(54)6, 333111 ($\rm n_{27}=44$, $\rm n_{\overline{27}}=8$, $\rm n_{1}=199$, $\chi=-72$),
\end{itemize}
where the first entry is the permutation orbifold and the second one is the phase symmetry. We were not able to find these two cases using our procedure.

There are a few other cases that we do not have, but for reasons that are easy to understand. Consider model nr. 168 in the same table. It corresponds to the $6^4$ Gepner model. The double permutation that we reproduce is the one labelled as
\begin{itemize}
\item $6_A 6_A 6_A 6_A$: (21)(43) ($\rm n_{27}=83$, $\rm n_{\overline{27}}=3$, $\rm n_{1}=301$, $\chi=-160$).
\end{itemize}
The other two, with D invariants
\begin{itemize}
\item $6_A 6_A 6_D 6_D$: (21)(43) ($\rm n_{27}=45$, $\rm n_{\overline{27}}=1$, $\rm n_{1}=181$, $\chi=-88$),
\item $6_D 6_D 6_D 6_D$: (21)(43) ($\rm n_{27}=35$, $\rm n_{\overline{27}}=3$, $\rm n_{1}=154$, $\chi=-64$),
\end{itemize}
are not present. However these are not comparable with our $6^4$ since they come out of a different construction. In fact, the D invariant is obtained as a simple current automorphism of the $k=6$ Gepner models by the $SU(2)_k$ current $(k,0,0)$ (with $k=6$). This current has spin $h=\frac{k}{4}=\frac32$. In \cite{Fuchs:1991vu} the authors 
consider the permutation of two such $k=6$ models, each with such a simple current automorphism. 
This is different from what happens here. Here, we immediately replace the block by its permutation orbifold; moreover, when we extend it by the current $(T_F,1)$ to build the supersymmetric permutation orbifold, the off-diagonal field $\langle (0,0,0)(6,0,0)\rangle$ with spin $h=\frac{3}{2}$ (the obvious candidate for creating the automorphism invariant) is not a simple current. We expect that the permutation
orbifold of two $6_D$ models is present as an exceptional invariant of $\langle6,6\rangle$.

The spectra mentioned in the last two paragraphs, that were present in \cite{Fuchs:1991vu} but absent in our results, might also be understood
as follows.  As explained in \ref{Section: Permutation orbifold of N=2 minimal models}, one may consider simple current extension
chains of the form
$$ {\rm (CFT)}_0 \rightarrow {\rm (CFT)}_1 \rightarrow {\rm (CFT)}_2 \rightarrow \ldots $$
In this chain, the supersymmetric permutation orbifold is ${\rm (CFT)}_1$. We can use all its simple currents to build
MIPF's, and in particular we find all simple current extensions ${\rm (CFT)}_2$. However there are situations where ${\rm (CFT)}_2$
itself has new simple currents that are exceptional, and whose orbits cannot be fully resolved because we do not have
the complete fixed point resolution formalism for ${\rm (CFT)}_1$ available. Therefore MIPF's generated by such second order
exceptional simple currents cannot be obtained. At best, one could try to get them by explicit computation as exceptional MIPF's
of ${\rm (CFT)}_1$. The problem of unresolvable fixed points occurs precisely for supersymmetric permutation orbifolds when
$k=2\mod 4$, and therefore might be relevant precisely in these examples. 

\LTcapwidth=14truecm
\begin{center}
\vskip .7truecm
\begin{longtable}{|c||c|c|c|}
\caption{{Hodge data for permutation orbifolds of Gepner models.}}\\
\hline
 \multicolumn{1}{|c||}{Tensor product}
& \multicolumn{1}{c|}{$h_{21}$}
& \multicolumn{1}{l|}{$h_{11}$ }
& \multicolumn{1}{c|}{Singlets}\\
\hline
\endfirsthead
\multicolumn{4}{c}%
{{\bfseries \tablename\ \thetable{} {\rm-- continued from previous page}}} \\
\hline 
 \multicolumn{1}{|c||}{model}
& \multicolumn{1}{c|}{$h_{21}$}
& \multicolumn{1}{l|}{$h_{11}$}
& \multicolumn{1}{c|}{Singlets}\\
\hline
\endhead
\hline \multicolumn{4}{|r|}{{Continued on next page}} \\ \hline
\endfoot
\hline \hline
\endlastfoot\hline
\label{HodgeNumbers}
$(1,1,1,1,1,1,\langle 2,2\rangle)$ &    $23^*$&     $23^*$&   177 \\ 
$(1,1,1,1,1,\langle 4,4\rangle)$ &    84 &      0 &   249 \\ 
$(1,1,1,1,\langle 10,10\rangle)$ &    57 &      9 &   248 \\ 
$(1,1,1,1,\langle 2,2\rangle,4)$ &    35 &     11 &   229 \\ 
$(1,1,1,\langle 2,2\rangle,2,2)$ &    $23^*$&     $23^*$&   175 \\ 
$(1,1,1,2,\langle 6,6\rangle)$ &    $23^*$&     $23^*$&   173 \\ 
$(1,1,1,\langle 4,4\rangle,4)$ &    73 &      1 &   242 \\ 
$(1,1,1,\langle 3,3\rangle,8)$ &    $23^*$&     $23^*$&   173 \\ 
$(1,1,1,\langle 2,2\rangle,\langle 2,2\rangle)$ &    $23^*$&     $23^*$&   173 \\ 
$(1,1,2,2,\langle 4,4\rangle)$ &    35 &     11 &   211 \\ 
$(1,1,\langle 2,2\rangle,2,10)$ &    46 &     10 &   234 \\ 
$(1,1,4,\langle 10,10\rangle)$ &    75 &      3 &   279 \\ 
$(1,1,\langle 6,6\rangle,10)$ &    37 &     13 &   211 \\ 
$(1,1,\langle 2,2\rangle,4,4)$ &    51 &      3 &   250 \\ 
$(1,1,\langle 2,2\rangle,\langle 4,4\rangle)$ &    35 &     11 &   209 \\ 
$(1,2,2,\langle 10,10\rangle)$ &    61 &      1 &   251 \\ 
$(1,\langle 2,2\rangle,2,2,4)$ &    61 &      1 &   260 \\ 
$(1,2,4,\langle 6,6\rangle)$ &    51 &      3 &   235 \\ 
$(1,2,\langle 4,4\rangle,10)$ &    62 &      2 &   241 \\ 
$(1,2,\langle 3,3\rangle,58)$ &    41 &     17 &   273 \\ 
$(1,\langle 4,4\rangle,4,4)$ &    84 &      0 &   279 \\ 
$(1,\langle 2,2\rangle,10,10)$ &    89 &      5 &   343 \\ 
$(1,\langle 3,3\rangle,4,8)$ &    41 &      5 &   219 \\ 
$(1,\langle 2,2\rangle,5,40)$ &    35 &     35 &   329 \\ 
$(1,\langle 2,2\rangle,6,22)$ &    68 &      8 &   297 \\ 
$(1,\langle 2,2\rangle,7,16)$ &    43 &     19 &   289 \\ 
$(1,\langle 2,2\rangle,8,13)$ &    27 &     27 &   249 \\ 
$(1,\langle 2,2\rangle,\langle 2,2\rangle,4)$ &    61 &      1 &   259 \\ 
$(\langle 2,2\rangle,2,2,2,2)$ &    90 &      0 &   284 \\ 
$(2,2,2,\langle 6,6\rangle)$ &    73 &      1 &   251 \\ 
$(2,2,\langle 4,4\rangle,4)$ &    51 &      3 &   242 \\ 
$(2,2,\langle 3,3\rangle,8)$ &    41 &      5 &   218 \\ 
$(2,2,\langle 2,2\rangle,\langle 2,2\rangle)$ &    90 &      0 &   283 \\ 
$(2,\langle 10,10\rangle,10)$ &   105 &      3 &   380 \\ 
$(2,\langle 8,8\rangle,18)$ &    79 &      7 &   322 \\ 
$(\langle 2,2\rangle,2,3,18)$ &    65 &      5 &   279 \\ 
$(2,\langle 7,7\rangle,34)$ &    63 &     15 &   312 \\ 
$(\langle 2,2\rangle,2,4,10)$ &    69 &      3 &   265 \\ 
$(\langle 2,2\rangle,2,6,6)$ &    86 &      2 &   297 \\ 
$(2,\langle 2,2\rangle,\langle 6,6\rangle)$ &    73 &      1 &   250 \\ 
$(3,\langle 6,6\rangle,18)$ &    51 &     11 &   254 \\ 
$(3,\langle 5,5\rangle,68)$ &    53 &     29 &   328 \\ 
$(3,\langle 8,8\rangle,8)$ &    99 &      3 &   346 \\ 
$(3,\langle 3,3\rangle,\langle 3,3\rangle)$ &    59 &      3 &   228 \\ 
$(4,4,\langle 10,10\rangle)$ &    94 &      4 &   334 \\ 
$(4,\langle 6,6\rangle,10)$ &    55 &      7 &   238 \\ 
$(4,\langle 5,5\rangle,19)$ &    41 &     17 &   238 \\ 
$(4,\langle 7,7\rangle,7)$ &    66 &      6 &   270 \\ 
$(\langle 5,5\rangle,5,12)$ &    83 &      5 &   308 \\ 
$(\langle 6,6\rangle,6,6)$ &   106 &      2 &   364 \\ 
$(\langle 4,4\rangle,10,10)$ &   101 &      5 &   370 \\ 
$(\langle 3,3\rangle,10,58)$ &    75 &     27 &   392 \\ 
$(\langle 3,3\rangle,12,33)$ &    47 &     31 &   306 \\ 
$(\langle 3,3\rangle,13,28)$ &    97 &     13 &   404 \\ 
$(\langle 3,3\rangle,18,18)$ &   125 &      9 &   490 \\ 
$(\langle 2,2\rangle,3,3,8)$ &    39 &     15 &   249 \\ 
$(\langle 2,2\rangle,4,4,4)$ &    60 &      6 &   285 \\ 
$(\langle 4,4\rangle,5,40)$ &    65 &     17 &   334 \\ 
$(\langle 4,4\rangle,6,22)$ &    70 &     10 &   304 \\ 
$(\langle 4,4\rangle,7,16)$ &    79 &      7 &   308 \\ 
$(\langle 4,4\rangle,8,13)$ &    48 &     12 &   242 \\ 
$(\langle 3,3\rangle,9,108)$ &    69 &     49 &   466 \\ 
$(\langle 6,6\rangle,\langle 6,6\rangle)$ &    77 &      1 &   269 \\ 
$(\langle 2,2\rangle,\langle 4,4\rangle,4)$ &    51 &      3 &   240 \\ 
$(\langle 2,2\rangle,\langle 3,3\rangle,8)$ &    41 &      5 &   216 \\ 
$(\langle 2,2\rangle,\langle 2,2\rangle,\langle 2,2\rangle)$ &    90 &      0 &   282 \\ 
$(1,\langle 2,2\rangle,\langle 10,10\rangle)$ &    61 &      1 &   250 \\ 
$(\langle 4,4\rangle,\langle 10,10\rangle)$ &    75 &      3 &   273 \\ 
$(1,\langle 4,4\rangle,\langle 4,4\rangle)$ &    73 &      1 &   234 \\ 
\end{longtable}
\end{center}

As already mentioned, the list of Hodge numbers and singlets in table (\ref{HodgeNumbers}) is obtained
without any simple current extensions other than those required to get a $(2,2)$ model. The complete list obtained with arbitrary simple currents can be found on the website \cite{hodge}. 

Although the 
results in table (\ref{HodgeNumbers}) are for $(2,2)$ models, the focus of the present chapter was on $(0,2)$ models.
We can compare the results with those of \cite{GatoRivera:2010gv} and ask what permutation orbifolds add. Consider
first the set of $(0,2)$ models closest to $(2,2)$ models, namely those with an $E_6$ gauge symmetry. They are
characterized by the same three numbers ${\rm n}_{27}, {\rm n}_{\overline{27}}$ and  ${\rm n}_1$, but since there
is not necessarily a world-sheet supersymmetry in the bosonic sector they may not have a Calabi-Yau interpretation. For simplicity
we will refer to these  as ``pseudo Hodge pairs" and ``pseudo Hodge triplets".
In
the complete set of standard Gepner models without exceptional invariants we obtained a total of 1418\footnote{For the standard, unpermuted Gepner models, the number of genuine 
Hodge number pairs with world-sheet  supersymmetry in both sectors is 906. A list can be found on the website \cite{hodge}.}
different pseudo Hodge  pairs
and 9604 different pseudo Hodge triplets. For the genuine permutation orbifolds (without extensions by un-orbifold currents)
these numbers are respectively  498 and 3830. Note that some permutation orbifolds with $k > 10$ were not considered.
How many of the permutation orbifold numbers are new? If we combine the data for pseudo Hodge pairs and remove identical ones,
we obtain a total of 1447 pseudo Hodge pairs, so that the total has increased by a mere 29. But if we look at pseudo Hodge triplets, the
increase is much more substantial. This number increases from 9604 to 12145, an increase of 2541 or about $26\%$. We tentatively
conclude that permutation orbifolds mainly give new points in existing moduli spaces. The following observation is further evidence
in that direction.

One remarkable feature of the permutation orbifold spectra is the occurrence of identical Hodge numbers and a number of singlets
that is almost the same. For example, in the set of permutation orbifolds obtained from the $(2,2,2,2,2,2)$ tensor product
we find spectra with (genuine) Hodge numbers $(90,0)$, and either 282, 283, 284 or 285 singlets.  A closer look at the spectrum
reveals what is going on here. We also compute the number of massless vector bosons in these spectra, and it turns out that
this is respectively 2,3,4 and 5 (in addition to those of $E_6$) in these cases. This is consistent with the occurrence of a Higgs mechanism that has made
one or more of the vector bosons heavy by absorbing the corresponding number of singlets. So apparently we are finding
points in the same moduli space, but with a vev for certain moduli fields so that some of the $U(1)$'s are removed. 
This is expected to occur in Gepner models, but it is nice to see this happen entirely within RCFT. The same observation was
made in \cite{Fuchs:1991vu}.
The reduction of the number of $U(1)$'s
by itself has a straightforward reason: each $N=2$ model has an intrinsic $U(1)$, and replacing two minimal models by a
permutation reduces the number of $U(1)$'s by 1. Hence the $(2,2,2,2,2,2)$ model generically has five $U(1)$'s (six, minus one 
combination
that becomes an $E_6$ Cartan-subalgebra generator), and $(\langle 2,2\rangle,\langle 2,2\rangle,\langle 2,2\rangle)$ generically has
only two. However, the number of vector bosons can be larger than that because the simple current MIPF's add extra
generators to the chiral algebra. Indeed, among the MIPF's of $(\langle 2,2\rangle,\langle 2,2\rangle,\langle 2,2\rangle)$ we
do not only find $(90,0,282,2)$ (where the last entry is the number of $U(1)$'s), but also $(90,0,283,3)$ and $(90,0,284,4)$.

\section{Results}
\label{Section: Results}

The CFT approach, based on simple currents extensions, turns out to be extremely powerful. Although we have considered in this work only order-two permutations, the number of new modular invariant partition functions or, equivalently, the number of new spectra for each model is huge, in the order of a few thousands. Simple currents allow us to generate a huge number of four dimensional spectra.

Here we discuss the more phenomenological aspects of our results, considering the breaking of $SO(10)$ into subgroups, including the Standard Model. 
Conceptually this is very similar to work on unpermuted Gepner models presented in 
\cite{GatoRivera:2010gv,GatoRivera:2010xn,GatoRivera:2010fi}, to which we refer for  more detailed  descriptions. In these papers several, mostly empirical, observations were made regarding the resulting
spectra. The main question of interest here is if these observations continue to hold as we extend the
scope of RCFT's considered. 

\subsection{Gauge groups}

Within $SO(10)$, all the simple currents of the conformal field theory constructed out of the Standard Model in the left (bosonic) sector extend the algebra to one of the following gauge groups: $SO(10)$ itself and any of the seven rank-5 subgroups, namely the  Pati-Salam group 
$SU(4)\times SU(2)\times SU(2)$, the Georgi-Glashow GUT group $SU(5)\times U(1)$, two  global realizations of left-right symmetric algebra 
$SU(3)\times SU(2)\times SU(2)\times U(1)$, and three global realizations of the standard model algebra $SU(3)\times SU(2)\times U(1)\times U(1)$. Counted as Lie-algebras there are just five of them, but the last two come in several varieties when we describe them as CFT chiral algebras. These are distinguished by the fractionally charged (here ``charge" refers to unconfined electric charge)
representations that are allowed. For the left-right algebra this can be either $\frac13$ or $\frac16$, 
(we call these ``LR, Q=1/3" and ``LR, Q=1/6" respectively) 
and for the standard model this can be $\frac12$, $\frac13$ or $\frac16$
(SM, Q=1/2, 1/3 or 1/6).  In the string chiral algebra  these different global realizations are distinguished by the presence of certain integer spin currents. If these
currents have conformal weight one, they manifest themselves in the massless spectrum as extra gauge bosons. 
This happens in particular for the highly desirable global group corresponding to the standard model with only integer unconfined electric charge. In this class of heterotic strings this necessarily implies an extension of the standard model to (at least) $SU(5)$. Furthermore, if the standard model gauge group is extended to $SU(5)$, this group cannot be broken by a field-theoretic Higgs mechanism, because the required Higgs scalar, a $(24)$, cannot be massless in the heterotic string spectrum. A heterotic string spectrum contains either these massless vector bosons, or fractionally charged states that forbid the former because they are non-local with respect to them \cite{Schellekens:1989qb} (see also \cite{Wen:1985qj,Athanasiu:1988uj}).

These eight gauge groups are obtained as extensions of the affine Lie algebra $SU(3)_1\times SU(2)_1\times U_{30}$, with a $U(1)$ normalization that gives rise to $SU(5)$-GUT type unification. In general, there is an additional $U(1)$ factor
that corresponds to a gauged $B-L$ symmetry in certain cases. In B-L lifted spectra this $U(1)$ is replaced by a non-abelian group.
In addition, the gauge group consists out of a $U(1)$ factor for each superconformal building block, which sometimes is extended
to a larger group, depending on the MIPF considered. There may also be extensions of the standard model gauge group outside
$SO(10)$, such as $E_6$ or trinification, $SU(3)^3$.
In standard Gepner models there is furthermore an unbroken $E_8$ factor, which
in lifted Gepner models is replaced by certain combinations of abelian and non-abelian groups. In scanning spectra we focus only
on the aforementioned  eight (extended) standard model groups. 

\subsection{MIPF scanning}

Since it is essentially impossible to construct the complete set of distinct MIPF's, we use a random scan. This is done
by choosing 10.000 randomly chosen simple current subgroups ${\cal H}$ (see appendix \ref{Appendix Paper5}) generated by at most three simple currents.  
Furthermore, if the number of distinct torsion matrices $X$ is larger than 100, we make 100 random choices. The entire set
is guaranteed to be mirror symmetric, because for every given spectrum one  can always construct a mirror by multiplying the
MIPF with a simple current MIPF of $SO(10)_1$ that flips the chirality of all spinors. Note that this does not imply anything
about mirror symmetry of an underlying geometrical interpretation. It is a trivial operation on the spectrum that can however be used
to get some idea on the completeness of the scan. 

\subsection{Fractional Charges} 

Fractional charges can appear either in the form of chiral particles, or as vector-like particles  (where ``vector-like" is defined
with respect to the standard model gauge group) or only as massive particles, with masses of order the string scale. If a spectrum has
chiral fractionally charged particles, we reject it after counting it. In nearly all remaining cases the spectrum contains 
massless vector-like fractional charges (unless there is GUT unification). 
We regard such spectra as acceptable at this stage. Since no evidence for fractionally charged
particles exists in nature, with a limit of less than $10^{-20}$ per nucleon \cite{Perl:2009zz}, clearly these vector-like particles will have to acquire a mass.
Furthermore this will almost certainly have to be a huge (GUT scale or string scale) mass, since otherwise their abundance cannot
be credibly expected to be below the experimental limit. This can in principle happen if the vector-like particles couple to moduli that
get a vev. An analysis of existence of couplings is in principle doable in this class of models, although there may be some technical complications
in those cases where no fixed point resolution procedure is available at present (namely  the permutation orbifolds with
$k=2\ {\rm mod}\ 4$).  However,  this analysis is beyond the scope of this work, and we treat spectra with 
vector-like fractional charges as valid candidates, for the time being. Just as in previous work \cite{GatoRivera:2010gv,GatoRivera:2010xn,GatoRivera:2010fi,FF2}, there are extremely rare
occurrences of spectra without any massless fractionally charged particles at all, but we only found examples with an
even number of families. Examples with three families were found in \cite{FF2} by scanning part of the
free-fermion landscape. In the context of orbifold models and Calabi-Yau compactifications, it is known that GUT breaking 
by modding out freely acting discrete symmetries leads to spectra without massless fractional charges (\cite{Witten:1985xc}; see
\cite{Blaszczyk:2009in} for a recent implementation of this idea in the context of the ``heterotic mini-landscape" \cite{Buchmuller:2005jr,Lebedev:2006kn,Lebedev:2008un}). While these models do fit the data on fractional charges, 
the question remains for which fundamental reasons such vacua are preferred over all others, especially if they are much rarer.

In table (\ref{FreqTable}) we display how often four mutually exclusive types of spectra occur in the total sample,
before distinguishing MIPF's. The types are: spectra with chiral, fractionally charged exotics, 
chiral spectra with a GUT gauge group $SU(5)$ or $SO(10)$, non-chiral spectra (no exotics and no families), spectra with $N$ families and massless $SU(5)$ vector bosons and vector-like fractionally charged exotics, and  the same without massless fractionally charged exotics.  For comparison, we include some results based on the data of \cite{GatoRivera:2010gv,GatoRivera:2010xn,GatoRivera:2010fi}\rlap.\footnote{We thank the authors for making their raw data available to us.} All lines refer to Gepner models, except the one labelled ``free fermions". 
The results on free fermions are based on a special class that can be analysed with simple current in a way analogous to Gepner models, as explained in \cite{GatoRivera:2010gv}. It does not represent the entire class of free fermionic models. For other work on this kind of construction, including three family models, see \cite{FF2,FF1} and references therein.

  \vskip 1.truecm
\begin{table}[h]
\begin{center}
~~~~~~~~~~~~~~~~\begin{tabular}{|l|c|c|c|c|c|cl} \hline
Type &  Chiral Exotics & GUT & Non-chiral & $N > 0$ fam. & No frac.\\ \hline \hline
Standard$^*$ & 37.4\% &32.7\% & 20.5 \% &  9.3\% & 0 \\
Standard, perm. & 29.7\% & 33.4 \% & 27.9 \% & 8.9\% & 0 \\ 
Free fermionic &  1.5\% &  2.9\% &   94.4\% &  1.1\% & 0.072\% \\ 
Lifted  & 28.3\% &  18.7\% &   51.9\% & 1.1\% & 0.00051\%\\ 
Lifted, perm. & 26.8\% & 8.9\% & 62.7 \% & 1.6\%  & 0.00078\% \\ 
$({\hbox{B-L}})_{\hbox{\footnotesize Type-A}}^*$ &  21.3\% &  28.0\% &  50.4 \% & 0.3\% & 0.00017\% \\ 
$({\hbox{B-L}})_{\hbox{\footnotesize Type-A}}$, perm. & 22.8\% & 8.1 \% & 69.1 \% & 0.03\% & 0 \\ 
$({\hbox{B-L}})_{\hbox{\footnotesize Type-B}}^*$ &  38.5\% &   8.7\% &  52.1\% &0.6\% & 0 \\ 
$({\hbox{B-L}})_{\hbox{\footnotesize Type-B}}$, perm. & 27.6\% & 7.3 \% & 65.0 \% & 0.1\% & 0 \\ 
\hline
 \end{tabular}
\caption[Relative frequency of various types of spectra]{Relative frequency of various types of spectra. An asterisk indicates that exceptional minimal model MIPF's are included.}
\label{FreqTable}
\end{center}
\end{table}

In table (\ref{MIPFTable}) we specify the absolute number of distinct MIPF's (more precisely, distinct spectra, based on the criteria spelled out in \cite{GatoRivera:2010gv,GatoRivera:2010xn,GatoRivera:2010fi}) with
non-chirally-exotic spectra. The column marked ``Total" specifies the total number of distinct spectra without chiral exotics, the third column lists the number of distinct 3-family spectra and the last column the number of distinct $N$-family spectra, in both cases regardless of gauge group and without modding out mirror symmetry.

 \vskip 1.truecm
\begin{table}[h]
\begin{center}
~~~~~~~~~~~~~~~~\begin{tabular}{|l|c|c|c|c|} \hline
Type &  Total & 3-family & $N$ family, $N>0$ \\ \hline \hline
Standard$^*$ & 927.100 & 1.220 & 369.089   \\
Standard, perm. & 245.821 & 0 & 64.085   \\ 
Free fermionic &  504.312 &   0 &  19.655   \\ 
Lifted  &     3.177.493 &  85.864 &  537.581 \\ 
Lifted, perm. & 601.452 & 4.702   &   54.926 \\ 
$({\hbox{B-L}})_{\hbox{\footnotesize Type-A}}^*$ &  445.978 &  24.203 & 155.425  \\ 
$({\hbox{B-L}})_{\hbox{\footnotesize Type-A}}$, perm. & 155.784 & 778 & 6.758   \\ 
$({\hbox{B-L}})_{\hbox{\footnotesize Type-B}}^*$ & 206949 &0 &   55917 \\ 
$({\hbox{B-L}})_{\hbox{\footnotesize Type-B}}$, perm. & 156.309 & 0 & 6.861  \\ 
\hline
 \end{tabular}
\caption{Total numbers of distinct spectra. }
\label{MIPFTable}
\end{center}
\end{table}

\subsection{Family number}
In this subsection we would like to say something about the distributions of the number of families emerging from the spectra of permuted Gepner models. The common features of all the different cases is that an even number of families is always more favourable than an odd one and these distributions decrease exponentially when the number of families increases.

Figure \ref{famplot_standard} shows the distribution of the number of families for permutation orbifolds of standard Gepner Models. 
All family numbers are even, as is the case for unpermuted Gepner models (we did not include exceptional MIPF's, which provides
the only way to get three families in standard Gepner models). The greatest common denominator $\Delta$ of the number of families
for a given tensor combination displays a similar behavior as  observed in \cite{GatoRivera:2010gv}. Two classes can be distinguished.
Either $\Delta=6$  (or in a few cases a multiple of $6$), or $\Delta=2$ (sometimes 4), but there are no MIPF's with a number of families 
that is a multiple of three. In other words, the set of family numbers occurring in these two classes have no overlap whatsoever. It follows
that in the second class there are no spectra with zero families.  An interesting example is\
$(3,\langle6,6\rangle,18)$. It has no spectra with chiral exotics, all spectra are chiral and have 4, 8, 16, 20, 28, 32, 40 or 56 families, of types
$SO(10)$, Pati-Salam, $SU(5)\times U(1)$ or $SM, Q=1/2$. If we compare this with the unpermuted Gepner model we find some
striking differences. In that case the same group types occur, but now there are spectra with chiral exotics, and the family quantization is
in units of 2, not 4. In  \cite{GatoRivera:2010gv} an intriguing observation was made regarding the occurrence of these two classes. 
The second class was found to occur if all values $k_i$ of the factors in the tensor product are divisible by 3. This observation
also holds for permutation orbifolds, if one uses the values of $k_i$ of the unpermuted theory.

In figure \ref{famplot_lift} we show the family distribution for lifted Gepner models. As expected, this distribution
looks a lot more favourable for three family models.
The number three appears with more or less the same order of magnitude as two or four. However, there are some clear peaks at even
family numbers, which were not visible in the analogous distribution for unpermuted Gepner models presented in  
\cite{GatoRivera:2010xn}. For this reason three families are still suppressed by a factor of 3 to 4 with respect to 2 or 4 families.

B-L lifts give similar results to those presented in  \cite{GatoRivera:2010fi}.
Figure \ref{famplot_liftA} contains the distribution of the number of families for permutation orbifolds of B-L lifted (lift A) Gepner models. Figure \ref{famplot_liftB} contains the same, but for the lift B. Here, odd numbers are completely absent. 
Note that certain group types (namely those without a ``$B-L$" type $U(1)$ factor) cannot occur in chiral spectra in these models, and
that in the  type that do occur the $U(1)$ is replaced by a non-abelian group.

\section{Conclusions}

In this chapter we have considered $\mathbb{Z}_2$ permutation orbifolds of heterotic Gepner models. This should be viewed as an application of the previous chapter \ref{paper4} where $\mathbb{Z}_2$ permutations were studied for $N=2$ minimal models, which are the building blocks in Gepner construction.

Our main conclusion is that these new building blocks work as they should. They can be used on completely equal
footing with all other available ones, which are the $N=2$ minimal models and some free-fermionic building blocks.  We have
checked the combination with minimal models and found full agreement with previous results on permutation
orbifolds whenever they were available. The comparison did bring a few surprises, especially the fact that we were able
to get new spectra for single permutations,  where the old method of \cite{Fuchs:1991vu} gave nothing new.

We were able to go far beyond the old approach by finding many more $(2,2)$ models, as well as new $(0,2)$ models with
$SO(10)$ breaking. We combined all this with heterotic weight lifting and B-L lifting. The main conclusion is that in most respects
all observations 
concerning family number and fractional charges made for minimal models
continue to hold in this new class. Also in this case weight lifting greatly enhances  the
set of three family models in comparison to neighboring numbers. Although this appears to give some entirely new models
(Hodge number pairs that were not seen before), we found additional evidence for the observation of \cite{Fuchs:1991vu} that
many of these models look like additional rational points in existing moduli spaces.

\begin{figure}[p]
\begin{center}
\includegraphics[angle=90,scale=0.70]{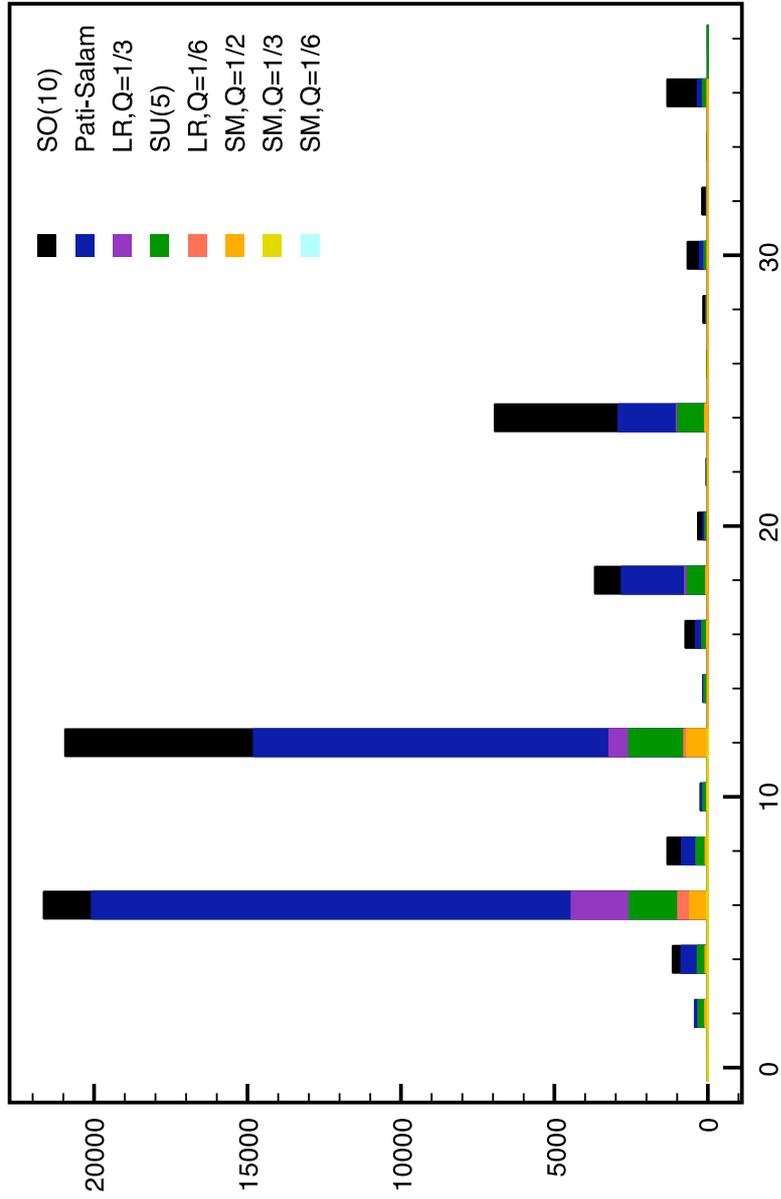}
\caption[Families in standard Gepner models]{\small Distribution of the number of families for permutation orbifolds of standard Gepner Models.}
\label{famplot_standard}
\end{center}
\end{figure}

\begin{figure}[p]
\begin{center}
\includegraphics[angle=90,scale=0.70]{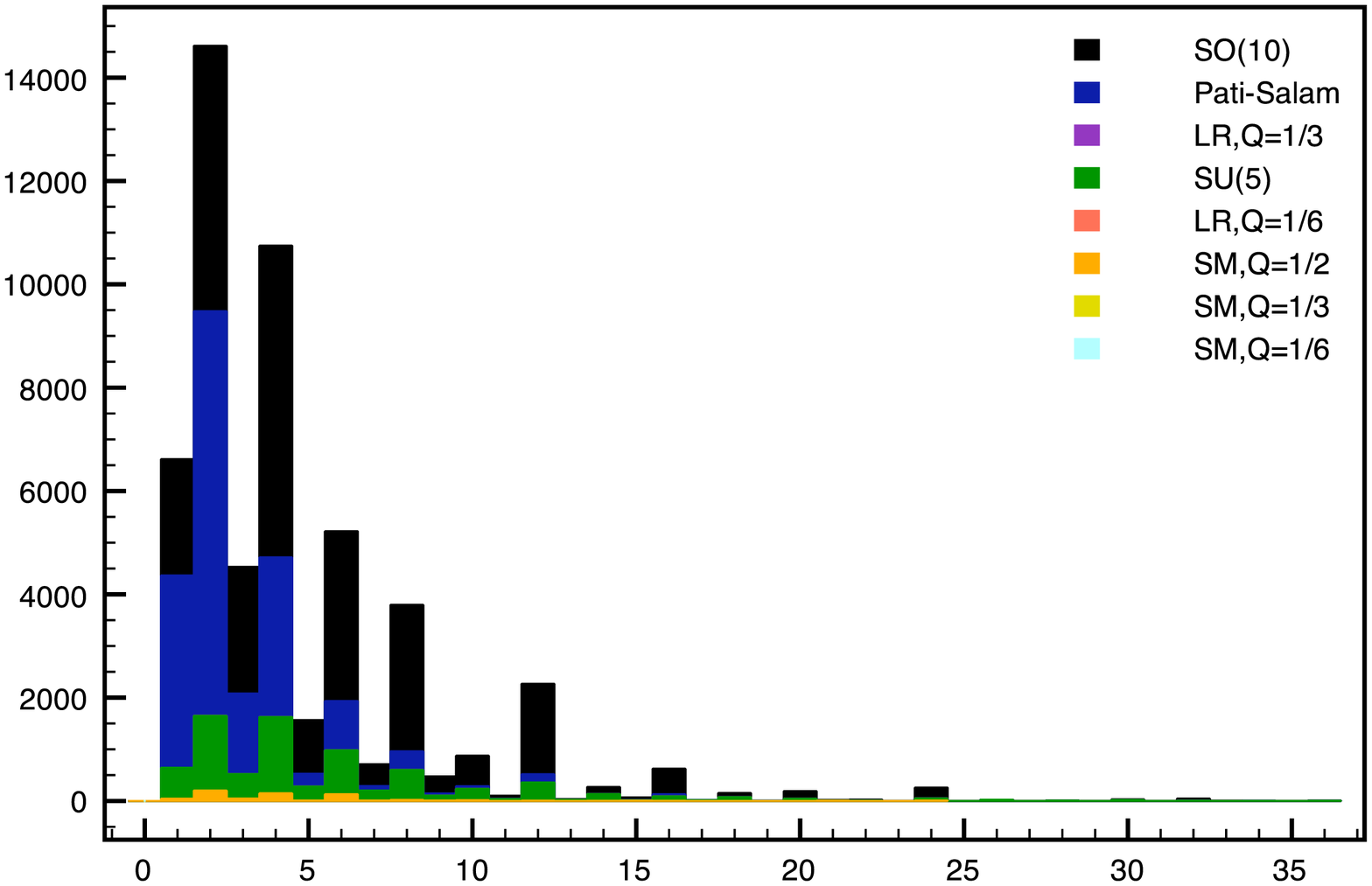}
\caption[Families in permuted Gepner models]{\small Distribution of the number of families for permutation orbifolds of lifted Gepner Models.}
\label{famplot_lift}
\end{center}
\end{figure}

\begin{figure}[p]
\begin{center}
\includegraphics[angle=90,scale=0.70]{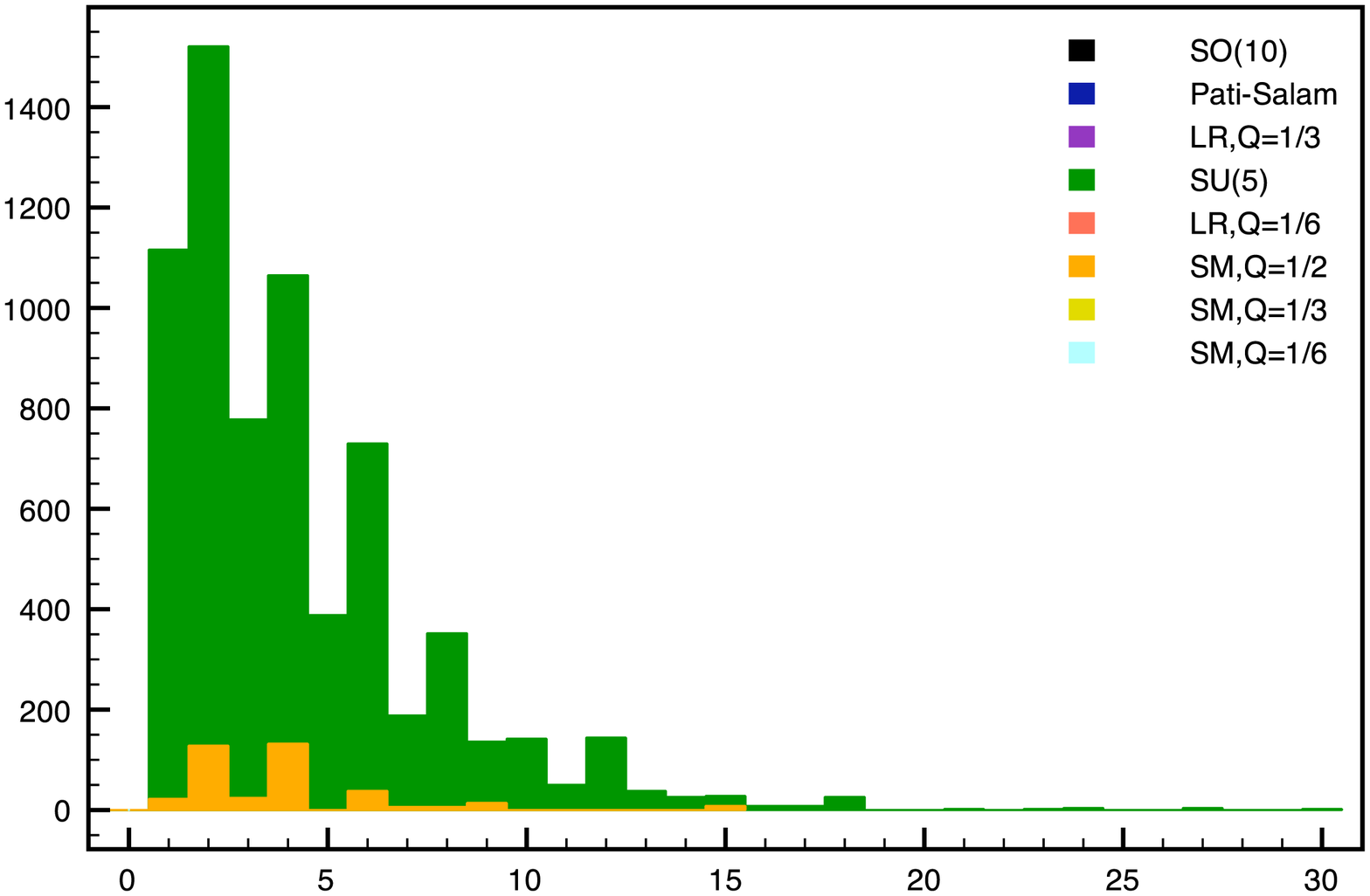}
\caption[Families in lifted permuted Gepner models (Lift A)]{\small Distribution of the number of families for permutation orbifolds of B-L lifted (lift A) Gepner Models.}
\label{famplot_liftA}
\end{center}
\end{figure}

\begin{figure}[p]
\begin{center}
\includegraphics[angle=90,scale=0.70]{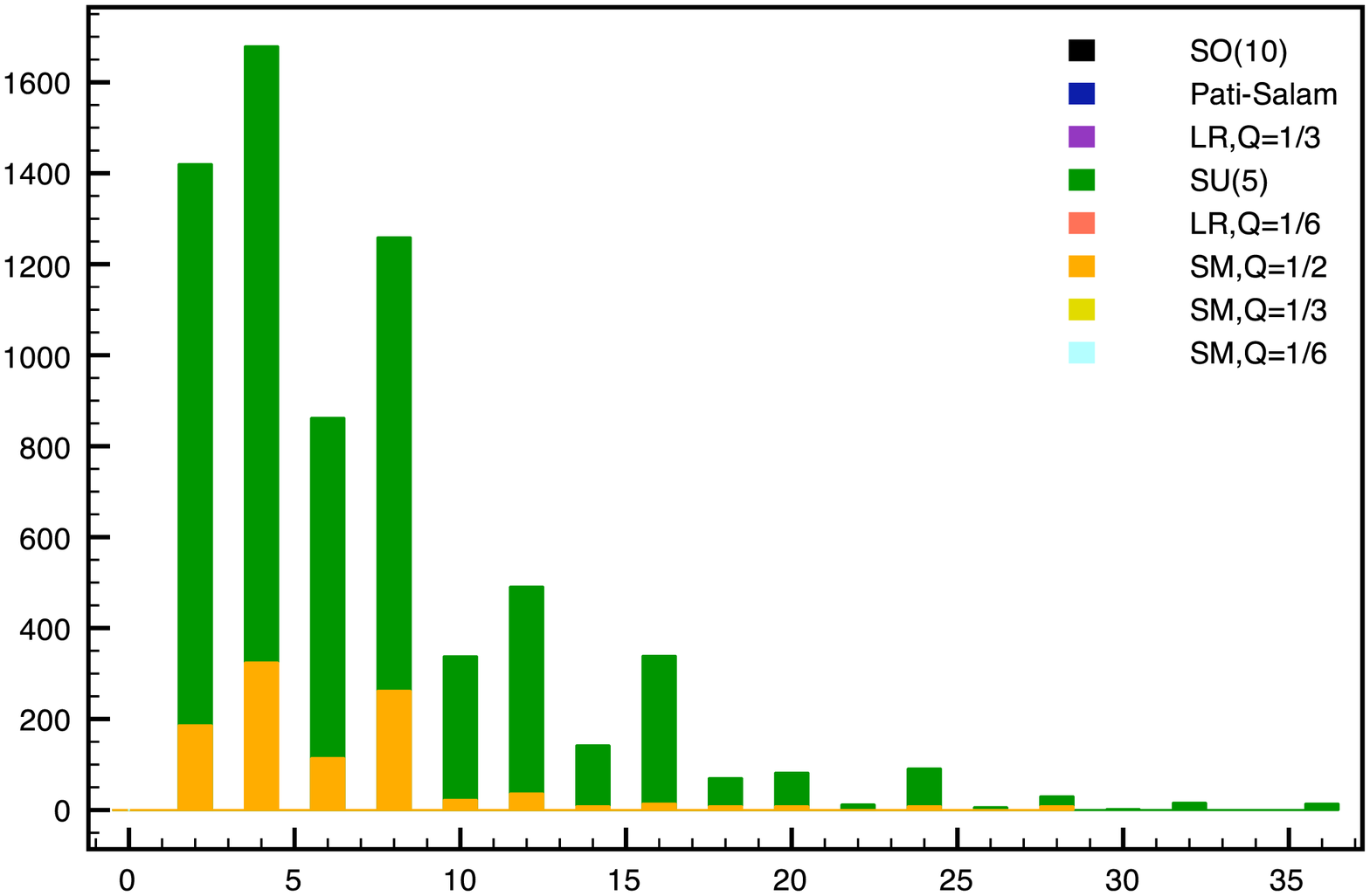}
\caption[Families in lifted permuted Gepner models (Lift B)]{\small Distribution of the number of families for permutation orbifolds of B-L lifted (lift B) Gepner Models.}
\label{famplot_liftB}
\end{center}
\end{figure}

\part{DISCUSSION}
\chapter{Conclusion}
\label{conclusions}

{\flushright
{\small 
\textit{But break, my heart,}\par
\textit{for I must hold my tongue.}\par
\textit{(W. Shakespeare, Hamlet)}\par
}
}

In this thesis we have considered topics in two-dimensional conformal field theory that have relevance in string theory, in particular concerning the phenomenological purposes of describing the low-energy four-dimensional physics as we know it. 

This work consists mainly of two parts. The first part deals with mathematical aspects of $2d$ CFT's. Although technical, the results obtained here have general validity and are applicable to many contexts. 
In details, we start from CFT's which admit permutation symmetries that can be modded out. This happens for example when a CFT is built as a tensor product of smaller CFT's with some identical factors. Then, we look at all possible extensions and provide the full $S$ matrix of the full resulting CFT. Here the word extension is not just a mere undefined mechanism that could in principle be performed in several arbitrary ways, but is instead a very well defined and powerful procedure that allows us to generate many new CFT's out of a single existing one. The crucial ingredient is the existence of particular fields, known as simple currents, in the original CFT: the more simple currents there are, the more new theories can be generated. Simple current extensions exhibit the full power of CFT's. 

All the quantities, in particular characters and modular matrices, of the extended theories can be derived in terms of analogous quantities of the original theory. However this is not straightforward at all when simple currents leave some fields, known as fixed points, unchanged under fusion rules. When this is the case, one has to go through a non-trivial formalism which eventually leads to the determination of the desired quantities. This problem is known as the fixed point resolution. In the first part of this work we have showed how to accomplish this goal in the case of extensions of permutation orbifolds, at least when two factors are identical. This means that we have solved the problem for permutations of two factors, or equivalently for the $\mathbb{Z}_2$ orbifold. Generalizations to any number of factors are much more complicated and a full formalism including fixed point resolution is not available at present. 
The main results of this work were obtained in chapter \ref{paper3}, where in particular the fixed point resolution matrices for the $\mathbb{Z}_2$ orbifold were given (see formula (\ref{ansatz with fixed points})) in full generality, for any CFT $\mathcal{A}$ and simple current $J$. They are expressed in terms of the weight $h_J$ of the simple current and the modular $S$ and $T$ matrices of the original $\mathcal{A}$. Formula (\ref{ansatz with fixed points}) also includes the results of chapters \ref{paper1}-\ref{paper2} as particular cases.

The second part deals with the physical implications of the mathematical methods. Eventually one is interested in computing particle spectra, with maybe $N=1$ space-time supersymmetry in four dimensions. As it is known, in order to have $N=1$ supersymmetry in space-time, four-dimensional string theories must have an internal sector with $N=2$ world-sheet supersymmetry. This is normally achieved by taking tensor products of $N=2$ minimal models and adding some extra constraints. It can happen that this product shows an explicit permutation symmetry: this is the case when some of the factors are identical. Modding out this symmetry is equivalent to replacing the block of identical factors by its permutation orbifold. Also in these physical applications we have considered again the $\mathbb{Z}_2$ orbifold.

The first thing we have done is to look at permutations of $N=2$ minimal models and their possible extensions. In particular, although the $N=2$ factors are by definition supersymmetric, their tensor product is not, since all the fields in a supersymmetric theory should have a well-defined periodicity. In order to make the tensor product supersymmetric one has to extend it by a particular simple current. Similar considerations apply to the $\mathbb{Z}_2$ orbifold: it is not supersymmetric, but a particular simple current extension is enough to make it so. Moreover, a third different simple current extension allows us to recover both the standard tensor product of $N=2$ minimal models from their standard permutation orbifold and the supersymmetric tensor product from their supersymmetric permutation orbifold. 
These facts are illustrated by the box diagrams in section \ref{perm of N=2 mods}. Many surprises show up here, in particular concerning the existence of exceptional simple currents. These were a priori not expected to be there: they have a completely different origin from standard simple currents and arise as a consequence of the extension procedure. Sometimes they also admit fixed points that must be resolved.

Because of their tensor product structure, these CFT's have in general a very large number of simple currents, that in turn can be used to generate a huge amount of new theories. In this spirit, we have constructed thousands of theories with associated particle spectra and studied relevant properties, such as the number of families, fractional charges and gauge groups. One can also modify the construction in several ways, for example by introducing suitable lifts of one of the factors in the tensor product. This in general improves the results about the family number, but leaving undesired fractional charges in the matter content.

\part{APPENDIX}
\appendix
\addcontentsline{toc}{chapter}{Appendix}
\chapter{Facts on $\langle 0,T_F\rangle$-fusions}
\label{Appendix Paper4}

{\flushright
{\small 
\textit{To conceited men, all other men are admirers.}\par
\textit{(A. de Saint-Exup$\acute{e}$ry, The Little Prince)}\par
}
}

\section{Twisted-fields orbits of the $(0,1)$-current}
In this appendix we compute the fusion rules of $\langle 0,T_F\rangle$. Before doing that however we need to prove, as an intermediate result, that in any permutation orbifold the simple current $(0,1)$ (anti-symmetric representation of the identity) always couples a twisted field to its own (un)excited partner, i.e.
\begin{equation}
\widehat{(p,0)} \stackrel{(0,1)}{\leftrightarrow} \widehat{(p,1)}\,.
\end{equation}
To prove this, let us compute the fusion coefficients:
\begin{equation}
(0,1)\cdot \widehat{(p,\xi)} = \sum_K N_{(0,1)\widehat{(p,\xi)}}^{\phantom{(0,1)\widehat{(p,\xi)}}K} (K)\,,
\end{equation}
where the sum runs over all the fields $K$ in the orbifold. By Verlinde's formula \cite{Verlinde:1988sn}:
\begin{eqnarray}
N_{(0,1)\widehat{(p,\xi)}}^{\phantom{(0,1)\widehat{(p,\xi)}}K} 
&=&  \sum_{N} 
\frac{S_{(0,1)N}S_{\widehat{(p,\xi)}N}S_{\phantom{\dagger}N}^{\dagger \phantom{N}K}}{S_{(0,0)N}}=
\nonumber\\  &=&  
\sum_{\langle i,j\rangle}
\frac{S_{(0,1)\langle i,j \rangle}S_{\widehat{(p,\xi)}\langle i,j \rangle}S_{\phantom{\dagger}\langle i,j \rangle}^{\dagger\phantom{\langle i,j \rangle}K}}{S_{(0,0)\langle i,j \rangle}}+
\nonumber\\ &+&  
\sum_{(j,\chi)}
\frac{S_{(0,1)(j,\chi)}S_{\widehat{(p,\xi)}(j,\chi)}S_{\phantom{\dagger}(j,\chi)}^{\dagger\phantom{(j,\chi)}K}}{S_{(0,0)(j,\chi)}}+
\nonumber\\ &+& 
\sum_{\widehat{(j,\xi)}}
\frac{S_{(0,1)\widehat{(j,\chi)}}S_{\widehat{(p,\xi)}\widehat{(j,\chi)}}S_{\phantom{\dagger}\widehat{(j,\chi)}}^{\dagger\phantom{\widehat{(j,\chi)}}K}}{S_{(0,0)\widehat{(j,\chi)}}}
\nonumber\,.
\end{eqnarray}
Now use the orbifold $S$ matrix (\ref{BHS}): the first line automatically vanishes, since $S^{BHS}$ vanishes when one entry is a twisted field and the other one is off-diagonal. The other two lines give
\begin{equation}
N_{(0,1)\widehat{(p,\xi)}}^{\phantom{(0,1)\widehat{(p,\xi)}}K} =
\frac{1}{2}\sum_{\chi=0}^1\sum_j e^{i\pi\chi}\, S_{pj} \cdot S_{(j,\chi)}^{\star\phantom{(j,\chi)}K} -
\frac{1}{2}\sum_{\chi=0}^1\sum_j e^{i\pi(\xi+\chi)}\, P_{pj} \cdot S_{\widehat{(j,\chi)}}^{\star\phantom{\widehat{(j,\chi)}}K}\,.
\nonumber
\end{equation}
The two contributions both vanish if $K$ is of diagonal type or of off-diagonal type, as one can easily verify by using (\ref{BHS}). On the other hand, if $K$ is of twisted type, we find a non-vanishing answer that can be written as
\begin{equation}
N_{(0,1)\widehat{(p,\xi)}}^{\phantom{(0,1)\widehat{(p,\xi)}}\widehat{(k,\eta)}} =
\frac{1}{2}\, \delta_p^k\,(1-e^{i\pi(\xi-\eta)})=\delta_p^k\,\delta_{\xi+1}^\eta\,.
\end{equation}
Here we have used unitarity of the $S$ and $P$ matrices. In other words,
\begin{equation}
(0,1)\cdot \widehat{(p,0)} = \widehat{(p,1)}\,,
\end{equation}
as well as the other way around, being the current $(0,1)$ of order two.

\section{Fusion rules of $\langle 0,T_F\rangle$}
In this section we would like to show that the fusion coefficients of $\langle 0,T_F\rangle$ with itself, before and after the $(T_F,\psi)$-extension, do not depend on the sign choice for the coefficients $A$ and $C$ appearing in the $S^J$ ansatz (\ref{ansatz with fixed points}). In particular, the intrinsic ambiguity related to the freedom of ordering twisted fields (i.e. which one we label by $\chi=0$ and which one by $\chi=1$) should not make any difference in the calculation of the fusion rules. 
The calculation is straightforward and relatively short before making the $(T_F,\psi)$-extension, since it involves only the BHS $S$ matrix: we will describe it in detail. 

However, after taking the $(T_F,\psi)$-extension, the full extended $S$ matrix must be used. This means that the BHS $S$ matrix appears together with the $S^{(T_F,\psi)}$ matrix; moreover, fixed point resolution implies that the fixed points of $(T_F,\psi)$ are split, hence there will be twice their number, while non-fixed points form orbits and only half of them will be independent.
The calculation in this case is lengthy and more involved, so we will only point out where the sign ambiguities mentioned above could (but will not) play a role.

\subsection{Before $(T_F,\psi)$-extension}
The quantity that we want to compute is
\begin{equation}
\langle 0,T_F\rangle \cdot \langle 0,T_F\rangle =
\sum_{K} N_{\langle 0,T_F\rangle\langle 0,T_F\rangle}^{\phantom{\langle 0,T_F\rangle\langle 0,T_F\rangle}K} (K)\,,
\end{equation}
where the sum runs over all the fields $K$ of the permutation orbifold. The quantity $N_{\langle 0,T_F\rangle\langle 0,T_F\rangle}^{\phantom{\langle 0,T_F\rangle\langle 0,T_F\rangle}K}$ is given by Verlinde's formula \cite{Verlinde:1988sn}
\begin{equation}
N_{\langle 0,T_F\rangle\langle 0,T_F\rangle}^{\phantom{\langle 0,T_F\rangle\langle 0,T_F\rangle}K}=
\sum_{N} \frac{S_{\langle 0,T_F\rangle N}S_{\langle 0,T_F\rangle N}S_{\phantom{\dagger}N}^{\dagger \phantom{N}K}}{S_{(0,0)N}}\,.
\end{equation}
Let us start with the case that $K$ is a diagonal field, $K=(k,\chi)$, and use the BHS expression for the orbifold $S$ matrix:
\begin{eqnarray}
N_{\langle 0,T_F\rangle\langle 0,T_F\rangle}^{\phantom{\langle 0,T_F\rangle\langle 0,T_F\rangle}(k,\chi)}&=&
\sum_{m<n}\frac{(S_{0m}S_{T_F, n}+S_{0n}S_{T_F, m})^2\cdot(S^\star_{mk}S^\star_{nk})}{S_{0m}S_{0n}}+\nonumber\\
&+&
\sum_{\phi=0}^1\sum_i\frac{(S_{0i}S_{T_F, i})^2\cdot(\frac{1}{2}S^{\star 2}_{ik})}{(\frac{1}{2}S^2_{0i})}+
0\,.\nonumber
\end{eqnarray}
The zero in the second line comes from the twisted contribution, since from the BHS formula $S_{\langle mn\rangle \widehat{(i,\chi)}}=0$. The sum over $\phi$ gives a factor of $2$ in the diagonal contribution. In the first sum we can use
\begin{equation}
\sum_{m,\,n}=2\sum_{m<n}+\sum_{m=n}\,.
\end{equation}
The sum $\sum_{m=n}$ will cancel the diagonal contribution. Eventually we are left only with three terms coming from expanding the square in the sum over $m$ and $n$. The two sums are now independent and factorize:
\begin{eqnarray}
N_{\langle 0,T_F\rangle\langle 0,T_F\rangle}^{\phantom{\langle 0,T_F\rangle\langle 0,T_F\rangle}(k,\chi)}&=&
\frac{1}{2}\sum_m S^\star_{mk}S_{0m}\sum_n \frac{S^\star_{nk}S^2_{T_F, n}}{S_{0n}}+\nonumber\\
&&
\frac{1}{2}\sum_n S^\star_{nk}S_{0n}\sum_m \frac{S^\star_{mk}S^2_{T_F, m}}{S_{0m}}+\nonumber\\
&&
+\sum_m S^\star_{mk}S_{0m}\sum_n S^\star_{nk}S_{0n}=\nonumber\\
&=&
\delta_{k,0}N_{T_F T_F}^{\phantom{T_F T_F}k} +\delta_{k,T_F}=\nonumber\\
&=&
\delta_{k,0} +\delta_{k,T_F}\,,
\end{eqnarray}
where we have used the fact that $T_F$ has order two, i.e. $N_{T_F T_F}^{\phantom{T_F T_F}k}=\delta_{k,0}$. Note that the answer does not depend on $\chi$.

We can now repeat the same steps in the case that $K$ is off-diagonal, $K=\langle k_1,k_2\rangle$ (with $k_1<k_2$). 
We get:
\begin{equation}
N_{\langle 0,T_F\rangle\langle 0,T_F\rangle}^{\phantom{\langle 0,T_F\rangle\langle 0,T_F\rangle}\langle k_1,k_2\rangle}\propto \delta_{0,k_1}\cdot\delta_{0,k_2}=0\,,
\end{equation}
since $k_1\neq k_2$.

Similarly, for $K$ twisted, $K=\widehat{(k,\chi)}$:
\begin{equation}
N_{\langle 0,T_F\rangle\langle 0,T_F\rangle}^{\phantom{\langle 0,T_F\rangle\langle 0,T_F\rangle}\widehat{(k,\chi)}}=0+0+0=0\,,
\end{equation}
where the first and third contributions vanish because $S_{\langle mn\rangle\widehat{(i,\chi)}}=0$ in the BHS $S$ matrix, while the second one vanishes because $\sum_{\phi=0}^1 e^{i\pi\phi}=0$.

Putting everything together we have the following fusion rules for $\langle 0,T_F\rangle$ with itself before the $(T_F,\psi)$-extension:
\begin{equation}
\langle 0,T_F\rangle \cdot \langle 0,T_F\rangle=
(0,0)+(0,1)+(T_F,0)+(T_F,1)\,.
\end{equation}

\subsection{After $(T_F,\psi)$-extension}
After the extension by $(T_F,\psi)$, the off-diagonal field $\langle 0,T_F\rangle$ becomes a simple current. Moreover, since it is fixed by $(T_F,\psi)$, as well as $(0,\psi)$, it gets split and originates two simple currents, $\langle 0,T_F\rangle_\alpha$ with $\alpha=0,\,1$.

In order to compute the fusion rules between $\langle 0,T_F\rangle_\alpha$ and $\langle 0,T_F\rangle_\beta$, we need to know the full $S$ matrix of the extension. It is given by \cite{Fuchs:1996dd}
\begin{equation}
\tilde{S}_{a_\alpha b_\beta}= {\rm Const} \cdot (S_{ab}+(-1)^{\alpha+\beta}S^{(T_F,\psi)}_{ab})\,.
\end{equation}
Here, $S_{ab}$ is the BHS $S$ matrix and $S^{(T_F,\psi)}_{ab}$ is the fixed-point resolution matrix $S^J$ corresponding to the current $J=(T_F,\psi)$. The overall constant is a group-theoretical factor such that
\begin{equation}
{\rm Const}=
\left\{
\begin{array}{lc}
\frac{1}{2}  & {\rm if\,\,both\,\,}a\,\&\,b\,\,{\rm are\,\,fixed\,\,points}\\
1 & {\rm if\,\,either\,\,}a\,{\rm or}\,b\,\,({\rm not\,\,both})\,\,{\rm is\,\,fixed\,\,point}\\
2 & {\rm if\,\,neither\,\,}a\,\&\,b\,\,{\rm are\,\,fixed\,\,points}
\end{array}
\right.
\end{equation}
As mentioned in chapter \ref{paper4}, $S^{(T_F,\psi)}_{ab}$ in the untwisted sector vanishes, because $T_F$ does not have fixed points.

We want to compute:
\begin{equation}
\langle 0,T_F\rangle_\alpha\cdot\langle 0,T_F\rangle_\beta=
\sum_Q N_{\langle 0,T_F\rangle_\alpha \langle 0,T_F\rangle_\beta}^{\phantom{\langle 0,T_F\rangle_\alpha \langle 0,T_F\rangle_\beta}Q} (Q)\,,
\end{equation}
where
\begin{equation}
N_{\langle 0,T_F\rangle_\alpha \langle 0,T_F\rangle_\beta}^{\phantom{\langle 0,T_F\rangle_\alpha \langle 0,T_F\rangle_\beta}Q}=
\sum_{N} \frac{\tilde{S}_{\langle 0,T_F\rangle_\alpha N}\tilde{S}_{\langle 0,T_F\rangle_\beta N}\tilde{S}_{\phantom{\dagger}N}^{\dagger \phantom{N}Q}}{\tilde{S}_{(0,0)N}}\,.
\end{equation}

Consider $Q$ to be diagonal, $Q=(q,\chi)$. Diagonal fields are never fixed points of $(T_F,\psi)$, hence if the $S^{(T_F,\psi)}$ has at least one diagonal entry it vanishes. Then we have: 
\begin{eqnarray}
N_{\langle 0,T_F\rangle_\alpha \langle 0,T_F\rangle_\beta}^{\phantom{\langle 0,T_F\rangle_\alpha \langle 0,T_F\rangle_\beta}(q,\chi)}
&=&
\sum_{N} \frac{\tilde{S}_{\langle 0,T_F\rangle_\alpha N}\tilde{S}_{\langle 0,T_F\rangle_\beta N}\tilde{S}_{\phantom{\dagger}N}^{\dagger \phantom{N}(q,\chi)}}{\tilde{S}_{(0,0)N}}=\\
&=&
\sum_{\langle mn \rangle} \frac{\tilde{S}_{\langle 0,T_F\rangle_\alpha \langle mn \rangle}\tilde{S}_{\langle 0,T_F\rangle_\beta \langle mn \rangle}\tilde{S}_{\phantom{\dagger}\langle mn \rangle}^{\dagger \phantom{\langle mn \rangle}(q,\chi)}}{\tilde{S}_{(0,0)\langle mn \rangle}}+\nonumber\\
&+&
\sum_{(p,\phi)} \frac{\tilde{S}_{\langle 0,T_F\rangle_\alpha (p,\phi)}\tilde{S}_{\langle 0,T_F\rangle_\beta (p,\phi)}\tilde{S}_{\phantom{\dagger}(p,\phi)}^{\dagger \phantom{(p,\phi)}(q,\chi)}}{\tilde{S}_{(0,0)(p,\phi)}}+\nonumber\\
&+&
\sum_{\gamma=0}^1\sum_{\widehat{(p,\phi)}_\gamma} \frac{\tilde{S}_{\langle 0,T_F\rangle_\alpha \widehat{(p,\phi)}_\gamma}\tilde{S}_{\langle 0,T_F\rangle_\beta \widehat{(p,\phi)}_\gamma}\tilde{S}_{\phantom{\dagger}\widehat{(p,\phi)}_\gamma}^{\dagger \phantom{\widehat{(p,\phi)}_\gamma}(q,\chi)}}{\tilde{S}_{(0,0)\widehat{(p,\phi)}_\gamma}}\,.\nonumber
\end{eqnarray}
Let us stress a few points here. First, the sum over $\langle m,n \rangle$ is symbolic: we must consider both the situations when $\langle m,n \rangle$ is a fixed point of $(T_F,\psi)$ (in which case it will carry an extra label $\langle m,n \rangle_\gamma$, with $\gamma=0$ or $1$) and when it is just an orbit representative (in which case we should not include its partner $\langle T_F\cdot m,T_F\cdot n \rangle$ in the sum in order to avoid double counting). \\
Diagonal fields are always orbit representatives, while twisted fields are always fixed points. In principle, the $S^{(T_F,\psi)}$ matrix can appear in the sums over $\langle m,n \rangle$ and over $\widehat{(p,\phi)}$, but in practice it only appear in the latter, since it vanishes for untwisted-untwisted entries. So the possible ambiguity might play a role only in the last line. Hence let us have a closer look there. For off-diagonal-twisted entries, the BHS $S$ matrix is identically zero, so we can replace $\tilde{S}$ with $S^{(T_F,\psi)}$, up to the overall constant. Using the ansatz (\ref{ansatz with fixed points}), the contribution to the fusion rules from the last line is then
\begin{eqnarray}
&& 2
\sum_{\begin{array}{c}
\widehat{(p,\phi)}\\
\widehat{(p,\phi)}\,\,{\rm f.p.\,of\,\,}(T_F,\psi)
\end{array}}
\frac{
\frac{1}{2}(-1)^{\alpha+\gamma}A S_{0p} \cdot \frac{1}{2}(-1)^{\beta+\gamma}A S_{0p} \cdot C^\star\frac{1}{2}e^{-i\pi\chi} S^{\star}_{pq}
}{
C\frac{1}{2}S_{0p}
}=\nonumber \\
&=&
\frac{1}{2}\,A^2\,\frac{C^\star}{C}\,(-1)^{\alpha+\beta}\,e^{-i\pi\chi}\,\,\cdot
\sum_{\begin{array}{c}
\widehat{(p,\phi)}\\
\widehat{(p,\phi)}\,\,{\rm f.p.\,of\,\,}(T_F,\psi)
\end{array}}
S_{0p}S^\star_{pq}\nonumber\,.
\end{eqnarray}
The sum over the twisted fixed points $\widehat{(p,\phi)}$ of $(T_F,\psi)$ contains the $\psi$ dependence. What is relevant for our discussion here is the prefactor: there is no ambiguity related to different choices for the coefficients $A$ and $C$, since changing $A\rightarrow -A$ and/or $C\rightarrow -C$ would not alter the result.

The full and exact calculation of the fusion rules  after the $(T_F,\psi)$-extension is too lengthy to be repeated and we will not do it here. In particular, the cases when $Q$ is off-diagonal or twisted are not very relevant, since then the fusion coefficients vanish identically, as one can check numerically. We simply state the outcome of the complete calculation:
\begin{itemize}
\item For the $(T_F,0)$-extension:
\begin{eqnarray}
\langle 0,T_F\rangle_\alpha \cdot \langle 0,T_F\rangle_\alpha &=& (0,1) \qquad \alpha=0,\,1\nonumber\\
\langle 0,T_F\rangle_\alpha \cdot \langle 0,T_F\rangle_\beta &=& (0,0) \qquad \alpha\neq\beta\,;
\end{eqnarray}
hence $\langle 0,T_F\rangle_\alpha$ is of order four, being $(0,1)\cdot(0,1)=(0,0)$, so it cannot be a supersymmetry current.
\item For the $(T_F,1)$-extension:
\begin{eqnarray}
\langle 0,T_F\rangle_\alpha \cdot \langle 0,T_F\rangle_\alpha &=& (0,0) \qquad \alpha=0,\,1\nonumber\\
\langle 0,T_F\rangle_\alpha \cdot \langle 0,T_F\rangle_\beta &=& (0,1) \qquad \alpha\neq\beta\,;
\end{eqnarray}
hence $\langle 0,T_F\rangle_\alpha$ is of order two, as a supersymmetry current should be.
\end{itemize}
Note that in both cases only a particular diagonal field contributes to the fusion rules, namely the identity, as one could have expected because of the order two of $T_F$.


\chapter{MIPF's and tables}
\label{Appendix Paper5}

{\flushright
{\small 
\textit{But the conceited man did not hear him.}\par
\textit{Conceited people never hear anything but praise.}\par
\textit{(A. de Saint-Exup$\acute{e}$ry, The Little Prince)}\par
}
}

\section{Simple current invariants}
\label{Section: Simple current invariants}
Consider a simple current $J$ or order $N$, i.e. $J^N=1$. Define the \textit{monodromy parameter} $r$ as
\begin{equation}
h_J=\frac{r(N-1)}{2N}\quad {\rm mod}\,\,\mathbb{Z}\,.
\end{equation}
Also, define the \textit{monodromy charge} $Q_J(\Phi)$ of $\Phi$ w.r.t. $J$ as
\begin{equation}
Q_J(\Phi)=h_J+h_{\Phi}-h_{J\phi}\quad {\rm mod}\,\,\mathbb{Z}\,.
\end{equation}
The monodromy charge takes values $t/N$, with $t\in\mathbb{Z}$. The current $J$ organizes fields into orbits $(\Phi,J\Phi,\dots,J^d\Phi)$, where $d$ (not necessarily equal to $N$) is a divisor of $N$. On each orbit the monodromy charge is
\begin{equation}
Q_J(J^n\Phi)=\frac{t+rn}{N}\quad {\rm mod}\,\,\mathbb{Z}\,.
\end{equation}

If a simple current $J$ exists in a (rational) CFT, and if it satisfies the condition that $N$ times its conformal weight is an integer\rlap,\footnote{This is sometimes called the ``effective center condition" and eliminates for example the odd level simple currents of $A_1$, which have
order two, but quarter-integer spins.}
then it is known how to associate a modular invariant partition function to it. Suppose that the current $J$ has integer spin and order $N$. Then a MIPF is given by
\begin{equation}
\label{MIPF}
Z(\tau,\bar{\tau})=\sum_{k,\,l}\bar{\chi}_k(\bar{\tau})M_{kl}(J)\chi_l(\tau)\,.
\end{equation}
One way of expressing $M_{kl}(J)$ is \cite{Kreuzer:1993tf,Schellekens:1989wx}:
\begin{equation}
\label{M_kl}
M_{kl}(J)=\sum_{p=1}^N \delta(\Phi_k,J^p \Phi_l)\cdot\delta^1(\hat{Q}_J(\Phi_k)+\hat{Q}_J(\Phi_l))\,
\end{equation}
where $\delta^1(x)=1$ for $x=0$ mod $\mathbb{Z}$ and $\hat{Q}$ is defined on $J$-orbits as
\begin{equation}
\hat{Q}_J(J^n\Phi)=\frac{(t+rn)}{2N} \quad{\rm mod}\,\,\mathbb{Z}\,.
\end{equation}
Morally speaking, $\hat{Q}$ is half the monodromy charge. Formula (\ref{M_kl}) defines a modular invariant partition function, since it commutes with the $S$ and $T$ modular matrices, as shown in \cite{Schellekens:1989dq}.
The set of all the simple currents forms an abelian group $\mathcal{G}$ under fusion multiplication. It is always a product of cyclic factors generated by a (conventionally chosen) complete subset of independent simple currents.

The foregoing associates a modular invariant partition function with a single simple current. One can construct even more of them
by multiplying the matrices $M$. The most general simple current MIPF associated with a given subgroup of $\mathcal{G}$ can be
obtained as follows \cite{GatoRivera:2010gv,Kreuzer:1993tf}. Choose a subgroup of $\mathcal{G}$ denoted $\mathcal{H}$, such that
each element satisfies the effective center condition $N h_J \in \mathbb{Z}$.  Its generators are simple currents $J_s$, $s=1,\dots,k$ for some $k$. 
They have relative monodromies $Q_{J_s}(J_t)=R_{st}$. Take any matrix $X$ that satisfies the equation
\begin{equation}
X+X^T=R\,.
\end{equation}
The matrix $X$ (called the {\it torsion matrix}) determines the multiplicities $M_{ij}$ according to
\begin{equation}
\label{M=nr of sols}
M_{ij}(\mathcal{H},X)={\rm nr.\,\,of\,\,solutions}\,\,K\,\,{\rm to\,\,the\,\,conditions:}
\end{equation}
\begin{itemize}
\item $j=Ki$, $K\in\mathcal{H}$.
\item $Q_M(i)+X(M,K)=0$  mod 1 for all $M\in\mathcal{H}$\,.
\end{itemize}
Here $X(K,J)$ is defined in terms of the generating current $J_s$ as
\begin{equation}
 X(K,J)\equiv \sum_{s,t}n_s m_t X_{st}\,,
\end{equation}
with $K=(J_1)^{n_1}\dots(J_k)^{n_k}$ and $J=(J_1)^{m_1}\dots(J_k)^{m_k}$.

\subsection{A small theorem}
In this subsection we prove the following theorem.

\begin{theorem}
The following statements are true.\\
i) If a simple current $J$ is local w.r.t. any other current $K$, i.e. $Q_K(J)=0$ (mod $\mathbb{Z}$), then $M_{JJ}(K)\neq 0$.\\
ii) For a simple current $J$, which is local w.r.t. any other current $K$, $M_{J0}(K)=M_{0J}(K)$. 
In particular, if $M_{J0}(K)\neq 0$, then also $M_{0J}(K)\neq 0$.
\end{theorem}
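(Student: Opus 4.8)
The plan is to work directly from the defining formula (\ref{M_kl}) for the simple current invariant associated with a single current $K$, namely
\begin{equation}
M_{kl}(K)=\sum_{p=1}^{N'}\delta(\Phi_k,K^p\Phi_l)\cdot\delta^1\big(\hat{Q}_K(\Phi_k)+\hat{Q}_K(\Phi_l)\big)\,,\nonumber
\end{equation}
where $N'$ is the order of $K$. The two statements to be proved concern the matrix elements $M_{JJ}(K)$, $M_{J0}(K)$ and $M_{0J}(K)$, so the strategy is simply to evaluate the two delta-function conditions on the relevant pairs of fields and show that the hypotheses force them to be satisfiable.

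For part (i), I would take $\Phi_k=\Phi_l=J$ and examine the two factors in the sum. The first delta, $\delta(J,K^pJ)$, is satisfied at least when $K^p=1$ (for instance $p=N'$), so there is always a candidate term; more to the point, the locality hypothesis $Q_K(J)=0\bmod\mathbb{Z}$ is exactly what makes the charge-selection delta $\delta^1(\hat Q_K(J)+\hat Q_K(J))=\delta^1(2\hat Q_K(J))$ nonzero. Here I would use the relation $Q_K(J)=\hat Q_K(J)\bmod\mathbb{Z}$ (the hatted charge is half the monodromy charge, as recalled after (\ref{M_kl})), so that $Q_K(J)\in\mathbb{Z}$ gives $2\hat Q_K(J)\in\mathbb{Z}$, hence $\delta^1$ evaluates to one on the $p=N'$ term. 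This shows at least one term in the sum survives, so $M_{JJ}(K)\ge 1\neq 0$.

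For part (ii), the cleanest route is to exhibit a symmetry of the summand under interchange of the two field-arguments. I would compare $M_{J0}(K)=\sum_p\delta(J,K^p\,0)\,\delta^1(\hat Q_K(J)+\hat Q_K(0))$ with $M_{0J}(K)=\sum_p\delta(0,K^pJ)\,\delta^1(\hat Q_K(0)+\hat Q_K(J))$. The charge factor is manifestly identical in the two expressions since addition is symmetric and $\hat Q_K(0)=0$; under the locality hypothesis it equals $\delta^1(\hat Q_K(J))$, which is nonzero. The incidence factors are related by relabelling the summation index: $\delta(J,K^p)$ for the first and $\delta(0,K^pJ)$, i.e. $\delta(K^{-p},J)$, for the second; since the sum runs over the full cyclic group generated by $K$, the map $p\mapsto -p$ is a bijection of the index set, establishing that the two sums are term-by-term equal after reindexing. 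This yields $M_{J0}(K)=M_{0J}(K)$, and in particular the claimed equivalence $M_{J0}(K)\neq0\Leftrightarrow M_{0J}(K)\neq0$.

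The main obstacle I anticipate is bookkeeping with the two distinct charge quantities $Q_K$ and $\hat Q_K$: one must be careful that the locality condition is phrased in terms of the integer-valued monodromy charge $Q_K$, whereas the partition-function formula uses the half-charge $\hat Q_K$ defined only on $K$-orbits and only modulo integers. The delicate point is to verify that $Q_K(J)=0\bmod\mathbb{Z}$ is genuinely sufficient to kill the $\delta^1$ ambiguity in $\hat Q_K$, rather than merely consistent with it; this requires tracking that the orbit of $J$ under $K$ is well-behaved (the effective center condition $N'h_K\in\mathbb{Z}$ guarantees the charges lie in $\tfrac{1}{N'}\mathbb{Z}$) so that no fractional obstruction remains. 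Once this charge-arithmetic lemma is in place, both statements follow by the elementary index-relabelling argument sketched above.
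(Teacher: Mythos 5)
Your argument is correct as far as it goes, but it proves the theorem only for the special case of an invariant built from a single cyclic current via formula (\ref{M_kl}), whereas the theorem in the paper concerns (and is applied to) the general simple current invariant $M_{ij}(\mathcal{H},X)$ of formula (\ref{M=nr of sols}), built from an arbitrary subgroup $\mathcal{H}$ of the center together with a discrete torsion matrix $X$. That generality is not optional: the application is the MIPF scan, where $\mathcal{H}$ is generated by up to three currents and $X$ is chosen randomly, and what is needed is that the un-orbifold current appears on the left side iff it appears on the right for \emph{every} such invariant. The paper's proof works directly with (\ref{M=nr of sols}): for (i), the only $K\in\mathcal{H}$ solving $KJ=J$ is $K=0$, and the selection condition $Q_M(J)+X(M,0)=0$ holds for all $M\in\mathcal{H}$ because $X(M,0)=0$ and $Q_M(J)=0$ by locality; for (ii), both $M_{J0}$ (which forces $K=J^c$) and $M_{0J^c}$ reduce, using $Q_M(0)=0$ together with the locality hypothesis $Q_M(J)=0$, to one and the same torsion condition $X(M,J^c)=0$ for all $M\in\mathcal{H}$, whence $M_{J0}=M_{0J^c}$, after which $J^c$ is traded for $J$ (for the order-two un-orbifold current $J^c=J$ anyway).

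Your index-relabelling argument has no analogue of this torsion bookkeeping, and the tell-tale sign is that your proof of (ii) never invokes locality at all: in the single-current formula the charge factor is symmetric in its two arguments, so $M_{J0}=M_{0J}$ comes out automatically. For general $(\mathcal{H},X)$ that symmetry is false --- asymmetric invariants with $M_{J0}\neq M_{0J}$ are precisely what the heterotic constructions exploit --- and the locality hypothesis is what rescues the equality. So the missing idea is the treatment of the torsion matrix $X$, and the statement you prove is genuinely weaker than the one the paper needs. Two smaller points inside your special case: the relation you quote, $Q_K(J)=\hat Q_K(J)\bmod\mathbb{Z}$, is not the one relevant here (in the appendix conventions $\hat Q$ is half the monodromy charge, so locality gives $\hat Q_K(J)\in\tfrac12\mathbb{Z}$, which still makes $\delta^1\bigl(2\hat Q_K(J)\bigr)=1$ in part (i)); consequently your claim in part (ii) that $\delta^1\bigl(\hat Q_K(J)\bigr)$ is nonzero is unjustified, since $\hat Q_K(J)$ may be half-integer --- fortunately that claim is not needed for the equality you derive.
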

\begin{proof}
For the proof we use the statement (\ref{M=nr of sols}).\\
Let us start with {\it i)} and consider $M_{JJ}(K,X)$. The first condition in (\ref{M=nr of sols}) has only one solution, namely $K=0$. The second condition reads
\begin{equation}
Q_M(J)+X(M,0)=0 \qquad \forall M
\end{equation}
and is always true, because the two terms vanish separately. This proves that $M_{JJ}(K)\neq 0$.

Point {\it ii)} goes as follows. Consider $M_{0J^c}(K,X)$. There is again only one solution to the first condition, namely $K=J^c$. The second condition reads
\begin{equation}
Q_M(0)+X(M,J^c)=0\,.
\end{equation}
The first term vanishes by hypothesis, while the second is either zero (in which case $M_{0J^c}(K,X)\neq0$) or non-zero (in which case $M_{0J^c}(K,X)=0$). \\
Similarly, look at $M_{J0}(K,X)$. There is again only one solution to the first condition, namely $K=J^c$. The second condition reads
\begin{equation}
Q_M(0)+X(M,J^c)=0\,.
\end{equation}
The first term vanishes by hypothesis, while the second is either zero or non-zero. In any case, the same condition holds for both $M_{0J^c}(K,X)$ and $M_{J0}(K,X)$. This implies that $M_{J0}(K,X)=M_{0J^c}(K,X)$ (note that these matrix elements can only be 0 or 1).
By closure of the algebra, and because $J^c$ is always a
power of $J$, we may replace $J^c$ by $J$ in this relation.
\end{proof}

Consider now the permutation orbifold. This theorem applies in particular to the un-orbifold current $J=(0,1)$ when coupled to any other current $K$, which is either a standard (diagonal) or an exceptional (off-diagonal) one. In fact, using the same procedure as we did in chapter \ref{paper1} to compute the simple current and fixed point structure of the permutation orbifold, one can show that
\begin{eqnarray}
N_{(J,\psi)\langle p,q\rangle}^{\phantom{(J,\psi)\langle p,q\rangle}\langle p',q'\rangle} &=&
N_{Jp}^{\phantom{Jp}p'}N_{Jq}^{\phantom{Jq}q'}+N_{Jp}^{\phantom{Jp}q'}N_{Jq}^{\phantom{Jq}p'}\,,\nonumber\\
N_{(J,\psi)(i,\chi)}^{\phantom{(J,\psi)(i,\chi)}(i',\chi')} &=&
\frac{1}{2} N_{Ji}^{\phantom{Ji}i'} (N_{Ji}^{\phantom{Ji}i'}+e^{i\pi(\psi+\chi-\chi')})\,.\nonumber
\end{eqnarray}
Hence, for the current $(J,\psi)=(0,1)$, we have
\begin{equation}
N_{(0,1)\langle p,q\rangle}^{\phantom{(0,1)\langle p,q\rangle}\langle p',q'\rangle} =
\delta_p^{p'}\delta_q^{q'}+\delta_p^{q'}\delta_q^{p'}=\delta_p^{p'}\delta_q^{q'}\,,\nonumber
\end{equation}
namely $\langle p,q\rangle$ must be fixed by $(0,1)$ in order for this to be non-zero (recall that $p<q$ and $p'<q'$), and
\begin{equation}
N_{(0,1)(i,\chi)}^{\phantom{(0,1)(i,\chi)}(i',\chi')} =
\frac{1}{2} \delta_i^{i'}(\delta_i^{i'}-e^{i\pi(\chi-\chi')})\,\nonumber
\end{equation}
which is non-zero only if $i=i'$ and $\chi\neq\chi'$ (recall that we can think of $\chi$ as defined mod $2$). Equivalently, in the fusion language:
\begin{equation}
(0,1)\cdot \langle p,q\rangle=\langle p,q\rangle\quad,\qquad
(0,1)\cdot (i,\chi)=(i,\chi+1)\,.
\end{equation}
This implies that $(0,1)$ has zero monodromy charge w.r.t. any other current, since
\begin{eqnarray}
Q_{\langle p,q\rangle}\big( (0,1) \big) &=&
h_{\langle p,q\rangle}+h_{(0,1)}-h_{\langle p,q\rangle}=0\quad{\rm mod}\,\,\mathbb{Z}\,,\nonumber\\
Q_{(i,\chi)}\big( (0,1) \big) &=&
h_{(i,\chi)}+h_{(0,1)}-h_{(i,\chi+1)}=0\quad{\rm mod}\,\,\mathbb{Z}\,.\nonumber
\end{eqnarray}

Now, the un-orbifold current $(0,1)$ has order two, hence $J^c=J$ and $M_{J0}(K,X)=M_{0J}(K,X)$. This also implies that its existence on left-moving sector is guaranteed by its existence on the right-moving sector (and vice-versa).

\subsection{Summary of results}
\label{Appendix Paper5: tables}

Here we present four tables summarizing the results on the number of families for the
standard, heterotic weight lifted, B-L lifted (lift A) and B-L lifted (lift B) cases. These tables contain
information about spectra in which the un-orbifold current is not allowed in the chiral algebra. This
means that these are genuine permutation orbifold spectra. By inspection, we do indeed find that these
spectra are usually different than those obtained in the unpermuted case.
In the columns we specify
respectively the tensor combination, the greatest common divisor $\Delta$ of the number of families for all MIPF's of
the tensor product and the maximal net number of families encountered. In the next column we indicate which of the seven
$SO(10)$ subgroups occur, with the labelling 
\begin{itemize}
\item{ 0: SM,  Q=1/6} 
\item{ 1: SM,  Q=1/3} 
\item{ 2: SM,  Q=1/2} 
\item{ 3: LR, Q=1/6} 
\item{ 4: $SU(5)\times U(1)$} 
\item{ 5: LR, Q=1/3} 
\item{ 6: Pati-Salam. }
\end{itemize}
Since $SO(10)$ can always occur there is no need to indicate it. In \cite{GatoRivera:2010gv} a simple criterion was derived
to determine which subgroups can occur in each standard Gepner model. The allowed subgroups for permutation orbifolds
of Gepner models are a subset of these. In some cases, such as $(\langle 5,5\rangle,5,12)$, some of the subgroups cannot be realized.
In the column labelled ``Exotics" we indicate if, for a given tensor product, spectra with chiral exotics occur. Note that 
in most cases the absence of such spectra is a consequence of the fact that only GUT gauge groups occur. In the
next columns we list
the number of distinct  three family and in column 6
the number of distinct $N$-family $(N>0)$ spectra.
In these tables
only cases with $\Delta > 0$ are shown. If a permutation orbifold seems to be missing, than either it is a  permutation
for $k > 10$, or it is a purely odd tensor product for which all permutations are trivial, or it has only non-chiral spectra and
hence $\Delta=0$ and there are no chiral exotics.  
In column 1 of the second table, $\langle \cal{A},\cal{A}\rangle$  denotes the permutation
orbifold of CFT ${\cal A}$, a hat indicates the lifted CFT, and a tilde indicates the second lift of a CFT.
It turns out that in the only permutation orbifold with two distinct
lifts of the same factor, $(\hat 5,\langle5,5\rangle,12)$ and  $(\tilde 5,\langle5,5\rangle,12)$, $\Delta=0$ in both cases,
which is why a tilde never occurs in the tables. The last column indicates which percentage of the spectra has no mirror.
Since mirror symmetry is exact in the full set, this gives an indication of how close our random scan is to a full enumeration.

\LTcapwidth=14truecm
\begin{center}
\vskip .7truecm
\begin{longtable}{|c||c|c|c|c|c|c|c|}
\caption{{Results for standard Gepner models}}\\
\hline
 \multicolumn{1}{|c||}{model}
& \multicolumn{1}{c|}{$\Delta$}
& \multicolumn{1}{l|}{Max. }
& \multicolumn{1}{c|}{Groups }
& \multicolumn{1}{l|}{Exotics }
& \multicolumn{1}{c|}{3 family}
& \multicolumn{1}{c|}{$N$ fam.}  
& \multicolumn{1}{c|}{Missing} \\ 
\hline
\endfirsthead
\multicolumn{8}{c}%
{{\bfseries \tablename\ \thetable{} {\rm-- continued from previous page}}} \\
\hline 
 \multicolumn{1}{|c||}{model}
& \multicolumn{1}{c|}{$\Delta$}
& \multicolumn{1}{l|}{Max. }
& \multicolumn{1}{c|}{Groups }
& \multicolumn{1}{l|}{Exotics}
& \multicolumn{1}{c|}{3 family}
& \multicolumn{1}{c|}{$N$ fam.}  
& \multicolumn{1}{c|}{Missing} \\ 
\hline
\endhead
\hline \multicolumn{8}{|r|}{{Continued on next page}} \\ \hline
\endfoot
\hline \hline
\endlastfoot\hline
\label{StandardSummary}
$(1,1,1,1,1,\langle 4,4\rangle)$ & 6 & 84 & 3,5,6 & Yes & 0 & 342 &    6.14\% \\ 
$(1,1,1,1,\langle 10,10\rangle)$ & 6 & 48 & 3,5,6 & Yes & 0 & 124 &    4.84\% \\ 
$(1,1,1,1,\langle 2,2\rangle,4)$ & 6 & 48 & 3,5,6 & Yes & 0 & 75 &    6.67\% \\ 
$(1,1,1,\langle 4,4\rangle,4)$ & 6 & 84 & 3,5,6 & Yes & 0 & 2717 &   22.89\% \\ 
$(1,1,2,2,\langle 4,4\rangle)$ & 6 & 24 & 3,5,6 & Yes & 0 & 106 &    0.00\% \\ 
$(1,1,\langle 2,2\rangle,2,10)$ & 6 & 48 & 3,5,6 & Yes & 0 & 662 &    7.70\% \\ 
$(1,1,4,\langle 10,10\rangle)$ & 6 & 72 & 3,5,6 & Yes & 0 & 493 &    7.10\% \\ 
$(1,1,\langle 6,6\rangle,10)$ & 12 & 24 & 3,5,6 & Yes & 0 & 63 &    0.00\% \\ 
$(1,1,\langle 2,2\rangle,4,4)$ & 6 & 48 & 3,5,6 & Yes & 0 & 226 &    6.19\% \\ 
$(1,1,\langle 2,2\rangle,\langle 4,4\rangle)$ & 12 & 24 & 3,5,6 & Yes & 0 & 73 &    6.85\% \\ 
$(1,2,2,\langle 10,10\rangle)$ & 6 & 60 & 3,5,6 & Yes & 0 & 191 &    4.71\% \\ 
$(1,\langle 2,2\rangle,2,2,4)$ & 12 & 60 & 3,5,6 & Yes & 0 & 363 &    3.31\% \\ 
$(1,2,4,\langle 6,6\rangle)$ & 12 & 48 & 3,5,6 & Yes & 0 & 57 &    3.51\% \\ 
$(1,2,\langle 4,4\rangle,10)$ & 6 & 60 & 3,5,6 & Yes & 0 & 1605 &   14.08\% \\ 
$(1,2,\langle 3,3\rangle,58)$ & 6 & 24 & 0,1,2,3,4,5,6 & Yes & 0 & 102 &    0.00\% \\ 
$(1,\langle 4,4\rangle,4,4)$ & 6 & 84 & 3,5,6 & Yes & 0 & 5605 &    6.57\% \\ 
$(1,\langle 2,2\rangle,10,10)$ & 6 & 84 & 3,5,6 & Yes & 0 & 989 &    6.47\% \\ 
$(1,\langle 3,3\rangle,4,8)$ & 6 & 36 & 0,1,2,3,4,5,6 & Yes & 0 & 37 &    0.00\% \\ 
$(1,\langle 2,2\rangle,6,22)$ & 6 & 60 & 3,5,6 & Yes & 0 & 985 &    3.25\% \\ 
$(1,\langle 2,2\rangle,7,16)$ & 12 & 48 & 3,5,6 & Yes & 0 & 41 &    0.00\% \\ 
$(1,\langle 2,2\rangle,\langle 2,2\rangle,4)$ & 12 & 60 & 3,5,6 & Yes & 0 & 165 &    0.61\% \\ 
$(\langle 2,2\rangle,2,2,2,2)$ & 6 & 90 & 6 & Yes & 0 & 1849 &    5.19\% \\ 
$(2,2,2,\langle 6,6\rangle)$ & 12 & 72 & 6 & Yes & 0 & 245 &    0.00\% \\ 
$(2,2,\langle 4,4\rangle,4)$ & 6 & 48 & 3,5,6 & Yes & 0 & 250 &    0.00\% \\ 
$(2,2,\langle 3,3\rangle,8)$ & 6 & 36 & 2,4,6 & Yes & 0 & 55 &    0.00\% \\ 
$(2,2,\langle 2,2\rangle,\langle 2,2\rangle)$ & 6 & 90 & 6 & Yes & 0 & 1580 &    1.58\% \\ 
$(2,\langle 10,10\rangle,10)$ & 6 & 102 & 3,5,6 & Yes & 0 & 328 &    0.00\% \\ 
$(2,\langle 8,8\rangle,18)$ & 6 & 72 & 2,4,6 & Yes & 0 & 316 &    0.00\% \\ 
$(\langle 2,2\rangle,2,3,18)$ & 6 & 60 & 2,4,6 & Yes & 0 & 780 &    4.36\% \\ 
$(2,\langle 7,7\rangle,34)$ & 12 & 48 & 3,5,6 & Yes & 0 & 9 &    0.00\% \\ 
$(\langle 2,2\rangle,2,4,10)$ & 6 & 66 & 3,5,6 & Yes & 0 & 1550 &    3.81\% \\ 
$(\langle 2,2\rangle,2,6,6)$ & 6 & 84 & 6 & Yes & 0 & 1735 &    3.80\% \\ 
$(2,\langle 2,2\rangle,\langle 6,6\rangle)$ & 12 & 72 & $SO(10)$ only & No & 0 & 219 &    0.00\% \\ 
$(3,\langle 6,6\rangle,18)$ & 4 & 56 & 2,4,6 & No & 0 & 232 &    0.00\% \\ 
$(3,\langle 5,5\rangle,68)$ & 24 & 24 & 4 & No & 0 & 18 &    0.00\% \\ 
$(3,\langle 8,8\rangle,8)$ & 6 & 96 & 2,4,6 & Yes & 0 & 1909 &    1.41\% \\ 
$(3,\langle 3,3\rangle,\langle 3,3\rangle)$ & 2 & 56 & 4 & No & 0 & 126 &    0.00\% \\ 
$(4,4,\langle 10,10\rangle)$ & 6 & 90 & 5 & Yes & 0 & 188 &    0.00\% \\ 
$(4,\langle 6,6\rangle,10)$ & 12 & 48 & 5 & Yes & 0 & 70 &    0.00\% \\ 
$(4,\langle 5,5\rangle,19)$ & 12 & 24 & 5 & Yes & 0 & 6 &    0.00\% \\ 
$(4,\langle 7,7\rangle,7)$ & 12 & 60 & 5 & Yes & 0 & 11 &    0.00\% \\ 
$(\langle 5,5\rangle,5,12)$ & 6 & 78 & $SO(10)$ only & No & 0 & 44 &    0.00\% \\ 
$(\langle 6,6\rangle,6,6)$ & 2 & 104 & 6 & Yes & 0 & 1230 &    0.00\% \\ 
$(\langle 4,4\rangle,10,10)$ & 6 & 96 & 3,5,6 & Yes & 0 & 693 &    0.72\% \\ 
$(\langle 3,3\rangle,10,58)$ & 6 & 60 & 0,1,2,3,4,5,6 & Yes & 0 & 97 &    0.00\% \\ 
$(\langle 3,3\rangle,12,33)$ & 2 & 20 & 4 & No & 0 & 30 &    0.00\% \\ 
$(\langle 3,3\rangle,13,28)$ & 6 & 84 & 1,4,5 & Yes & 0 & 587 &    0.00\% \\ 
$(\langle 3,3\rangle,18,18)$ & 2 & 116 & 2,4,6 & Yes & 0 & 681 &    0.00\% \\ 
$(\langle 2,2\rangle,3,3,8)$ & 6 & 48 & 2,4,6 & Yes & 0 & 332 &    3.61\% \\ 
$(\langle 2,2\rangle,4,4,4)$ & 6 & 54 & 5 & Yes & 0 & 75 &    0.00\% \\ 
$(\langle 4,4\rangle,5,40)$ & 6 & 48 & 3,5,6 & Yes & 0 & 98 &    0.00\% \\ 
$(\langle 4,4\rangle,6,22)$ & 6 & 60 & 3,5,6 & Yes & 0 & 440 &    0.00\% \\ 
$(\langle 4,4\rangle,7,16)$ & 6 & 72 & 3,5,6 & Yes & 0 & 271 &    0.00\% \\ 
$(\langle 4,4\rangle,8,13)$ & 6 & 48 & 0,1,2,3,4,5,6 & Yes & 0 & 180 &    0.00\% \\ 
$(\langle 3,3\rangle,9,108)$ & 2 & 28 & 4 & No & 0 & 30 &    0.00\% \\ 
$(\langle 6,6\rangle,\langle 6,6\rangle)$ & 4 & 80 & $SO(10)$ only & No & 0 & 152 &    0.00\% \\ 
$(\langle 2,2\rangle,\langle 4,4\rangle,4)$ & 6 & 48 & 5 & Yes & 0 & 103 &    0.00\% \\ 
$(\langle 2,2\rangle,\langle 3,3\rangle,8)$ & 6 & 36 & 4 & No & 0 & 37 &    0.00\% \\ 
$(\langle 2,2\rangle,\langle 2,2\rangle,\langle 2,2\rangle)$ & 6 & 90 & $SO(10)$ only  & No & 0 & 224 &    1.34\% \\ 
$(1,\langle 2,2\rangle,\langle 10,10\rangle)$ & 12 & 60 & 3,5,6 & Yes & 0 & 155 &    0.00\% \\ 
$(\langle 4,4\rangle,\langle 10,10\rangle)$ & 6 & 72 & 5 & Yes & 0 & 142 &    0.00\% \\ 
$(1,\langle 4,4\rangle,\langle 4,4\rangle)$ & 6 & 84 & 3,5,6 & Yes & 0 & 848 &    0.83\% \\ 
\end{longtable}
\end{center}

\LTcapwidth=14truecm
\begin{center}
\vskip .7truecm
\begin{longtable}{|c||c|c|c|c|c|c|c|}
\caption{{Results for lifted Gepner models}}\\
\hline
 \multicolumn{1}{|c||}{model}
& \multicolumn{1}{c|}{$\Delta$}
& \multicolumn{1}{l|}{Max. }
& \multicolumn{1}{c|}{Groups }
& \multicolumn{1}{l|}{Exotics }
& \multicolumn{1}{c|}{3 family}
& \multicolumn{1}{c|}{$N$ fam.}  
& \multicolumn{1}{c|}{Missing} \\ 
\hline
\endfirsthead
\multicolumn{8}{c}%
{{\bfseries \tablename\ \thetable{} {\rm-- continued from previous page}}} \\
\hline 
 \multicolumn{1}{|c||}{model}
& \multicolumn{1}{c|}{$\Delta$}
& \multicolumn{1}{l|}{Max. }
& \multicolumn{1}{c|}{Groups }
& \multicolumn{1}{l|}{Exotics}
& \multicolumn{1}{c|}{3 family}
& \multicolumn{1}{c|}{$N$ fam.}  
& \multicolumn{1}{c|}{Missing} \\ 
\hline
\endhead
\hline \multicolumn{8}{|r|}{{Continued on next page}} \\ \hline
\endfoot
\hline \hline
\endlastfoot\hline
\label{HWLSummary}
$(\widehat{1},1,1,1,1,\langle 4,4\rangle)$ & 3 & 33 & 3,5,6 & Yes & 45 & 205 &   16.10\% \\ 
$(\widehat{1},1,1,1,\langle 10,10\rangle)$ & 3 & 24 & 3,5,6 & Yes & 0 & 39 &    2.56\% \\ 
$(\widehat{1},1,1,1,\langle 2,2\rangle,4)$ & 3 & 18 & 3,5,6 & Yes & 16 & 50 &   14.00\% \\ 
$(\widehat{1},1,1,\langle 4,4\rangle,4)$ & 3 & 33 & 3,5,6 & Yes & 549 & 1016 &   28.54\% \\ 
$(\widehat{1},1,2,2,\langle 4,4\rangle)$ & 3 & 12 & 3,5,6 & Yes & 17 & 60 &    0.00\% \\ 
$(\widehat{1},1,\langle 2,2\rangle,2,10)$ & 3 & 24 & 3,5,6 & Yes & 123 & 283 &    7.42\% \\ 
$(\widehat{1},1,4,\langle 10,10\rangle)$ & 3 & 24 & 3,5,6 & Yes & 33 & 206 &    7.77\% \\ 
$(\widehat{1},1,\langle 6,6\rangle,10)$ & 6 & 6 & 3,5,6 & Yes & 0 & 15 &    0.00\% \\ 
$(\widehat{1},1,\langle 2,2\rangle,4,4)$ & 3 & 24 & 3,5,6 & Yes & 34 & 237 &    4.64\% \\ 
$(\widehat{1},1,\langle 2,2\rangle,\langle 4,4\rangle)$ & 6 & 12 & 3,5,6 & Yes & 0 & 38 &    0.00\% \\ 
$(\widehat{1},2,2,\langle 10,10\rangle)$ & 12 & 24 & 3,5,6 & Yes & 0 & 18 &    0.00\% \\ 
$(\widehat{1},\langle 2,2\rangle,2,2,4)$ & 6 & 24 & 3,5,6 & Yes & 0 & 71 &    7.04\% \\ 
$(\widehat{1},2,\langle 3,3\rangle,58)$ & 1 & 8 & 0,1,2,3,4,5,6 & Yes & 2 & 40 &    0.00\% \\ 
$(\widehat{1},\langle 2,2\rangle,10,10)$ & 6 & 24 & 3,5,6 & Yes & 0 & 105 &    5.71\% \\ 
$(\widehat{1},\langle 3,3\rangle,4,8)$ & 2 & 8 & 0,1,2,3,4,5,6 & Yes & 0 & 23 &    0.00\% \\ 
$(\widehat{1},\langle 2,2\rangle,6,22)$ & 3 & 24 & 3,5,6 & Yes & 58 & 281 &    5.69\% \\ 
$(\widehat{1},\langle 2,2\rangle,7,16)$ & 6 & 12 & 3,5,6 & Yes & 0 & 13 &    0.00\% \\ 
$(\widehat{2},\langle 2,2\rangle,2,2,2)$ & 1 & 36 & 6 & Yes & 587 & 10481 &    6.91\% \\ 
$(\widehat{2},2,2,\langle 6,6\rangle)$ & 2 & 36 & 6 & Yes & 0 & 595 &    0.17\% \\ 
$(\widehat{2},2,\langle 3,3\rangle,8)$ & 1 & 10 & 2,4,6 & Yes & 3 & 51 &    0.00\% \\ 
$(\widehat{2},2,\langle 2,2\rangle,\langle 2,2\rangle)$ & 2 & 24 & 6 & Yes & 0 & 807 &    1.73\% \\ 
$(\widehat{2},\langle 10,10\rangle,10)$ & 4 & 8 & 3,5,6 & Yes & 0 & 24 &    0.00\% \\ 
$(\widehat{2},\langle 8,8\rangle,18)$ & 1 & 12 & 2,4,6 & Yes & 6 & 85 &    0.00\% \\ 
$(\widehat{2},\langle 2,2\rangle,3,18)$ & 1 & 24 & 2,4,6 & Yes & 26 & 225 &    4.00\% \\ 
$(\widehat{2},\langle 2,2\rangle,4,10)$ & 2 & 24 & 3,5,6 & Yes & 0 & 89 &    3.37\% \\ 
$(\widehat{2},\langle 2,2\rangle,6,6)$ & 1 & 24 & 6 & Yes & 9 & 305 &    1.97\% \\ 
$(\widehat{3},\langle 6,6\rangle,18)$ & 2 & 20 & 2,4,6 & No & 0 & 85 &    0.00\% \\ 
$(\widehat{3},\langle 5,5\rangle,68)$ & 12 & 12 & 4 & No & 0 & 4 &    0.00\% \\ 
$(\widehat{3},\langle 8,8\rangle,8)$ & 1 & 48 & 2,4,6 & Yes & 146 & 1709 &    0.06\% \\ 
$(\widehat{3},\langle 3,3\rangle,\langle 3,3\rangle)$ & 1 & 28 & 4 & No & 11 & 80 &    0.00\% \\ 
$(\widehat{4},4,\langle 10,10\rangle)$ & 2 & 16 & 5 & Yes & 0 & 47 &    0.00\% \\ 
$(\widehat{4},\langle 7,7\rangle,7)$ & 1 & 3 & 5 & Yes & 2 & 6 &    0.00\% \\ 
$(\widehat{6},\langle 6,6\rangle,6)$ & 1 & 8 & 6 & Yes & 7 & 85 &    0.00\% \\ 
$(\langle 3,3\rangle,10,\widehat{58})$ & 1 & 6 & 0,1,2,3,4,5,6 & Yes & 2 & 11 &    0.00\% \\ 
$(\langle 3,3\rangle,\widehat{12},33)$ & 2 & 6 & 4 & No & 0 & 5 &    0.00\% \\ 
$(\langle 3,3\rangle,\widehat{13},28)$ & 1 & 30 & 1,4,5 & Yes & 29 & 170 &    0.00\% \\ 
$(\langle 2,2\rangle,\widehat{3},3,8)$ & 1 & 24 & 2,4,6 & Yes & 14 & 479 &    4.38\% \\ 
$(\langle 2,2\rangle,\widehat{4},4,4)$ & 2 & 24 & 5 & Yes & 0 & 111 &    0.90\% \\ 
$(\langle 4,4\rangle,\widehat{6},22)$ & 8 & 8 & 3,5,6 & Yes & 0 & 5 &    0.00\% \\ 
$(\langle 4,4\rangle,\widehat{8},13)$ & 4 & 16 & 0,1,2,3,4,5,6 & Yes & 0 & 17 &    0.00\% \\ 
$(\langle 3,3\rangle,\widehat{9},108)$ & 2 & 6 & 4 & No & 0 & 6 &    0.00\% \\ 
$(\widehat{4},\langle 2,2\rangle,\langle 4,4\rangle)$ & 4 & 20 & 5 & Yes & 0 & 53 &    0.00\% \\ 
$(\langle 2,2\rangle,\langle 3,3\rangle,\widehat{8})$ & 2 & 6 & 4 & No & 0 & 11 &    0.00\% \\ 
$(1,1,1,\widehat{4},\langle 4,4\rangle)$ & 1 & 24 & 3,5,6 & Yes & 78 & 859 &   20.84\% \\ 
$(1,1,\widehat{2},2,\langle 4,4\rangle)$ & 1 & 6 & 3,5,6 & Yes & 0 & 75 &    4.00\% \\ 
$(1,1,\widehat{2},\langle 2,2\rangle,10)$ & 1 & 24 & 3,5,6 & Yes & 20 & 323 &    8.05\% \\ 
$(1,1,\widehat{4},\langle 10,10\rangle)$ & 2 & 16 & 3,5,6 & Yes & 0 & 191 &    4.71\% \\ 
$(1,1,\langle 2,2\rangle,\widehat{4},4)$ & 2 & 32 & 3,5,6 & Yes & 0 & 297 &    5.05\% \\ 
$(1,\widehat{2},2,\langle 10,10\rangle)$ & 1 & 16 & 3,5,6 & Yes & 28 & 262 &    0.00\% \\ 
$(1,\langle 2,2\rangle,2,2,\widehat{4})$ & 4 & 32 & 3,5,6 & Yes & 0 & 118 &    8.47\% \\ 
$(1,\widehat{2},4,\langle 6,6\rangle)$ & 2 & 24 & 3,5,6 & Yes & 0 & 77 &    0.00\% \\ 
$(1,\widehat{2},\langle 4,4\rangle,10)$ & 1 & 16 & 3,5,6 & Yes & 147 & 1160 &    8.71\% \\ 
$(1,\widehat{2},\langle 3,3\rangle,58)$ & 1 & 8 & 0,1,2,3,4,5,6 & Yes & 3 & 56 &    0.00\% \\ 
$(1,\widehat{4},\langle 4,4\rangle,4)$ & 1 & 24 & 3,5,6 & Yes & 425 & 3219 &    6.28\% \\ 
$(1,\langle 3,3\rangle,\widehat{4},8)$ & 2 & 10 & 0,1,2,3,4,5,6 & Yes & 0 & 27 &    0.00\% \\ 
$(1,\langle 2,2\rangle,\widehat{6},22)$ & 1 & 48 & 3,5,6 & Yes & 46 & 645 &    3.41\% \\ 
$(2,2,\widehat{4},\langle 4,4\rangle)$ & 4 & 20 & 3,5,6 & Yes & 0 & 93 &    0.00\% \\ 
$(2,2,\langle 3,3\rangle,\widehat{8})$ & 2 & 8 & 2,4,6 & Yes & 0 & 29 &    0.00\% \\ 
$(\langle 2,2\rangle,2,\widehat{3},18)$ & 2 & 30 & 2,4,6 & Yes & 0 & 380 &    2.63\% \\ 
$(\langle 2,2\rangle,2,\widehat{4},10)$ & 4 & 24 & 3,5,6 & Yes & 0 & 107 &    0.93\% \\ 
$(\langle 2,2\rangle,2,\widehat{6},6)$ & 2 & 48 & 6 & Yes & 0 & 477 &    2.73\% \\ 
$(3,\widehat{8},\langle 8,8\rangle)$ & 1 & 32 & 2,4,6 & Yes & 24 & 480 &    0.00\% \\ 
$(\langle 5,5\rangle,5,\widehat{12})$ & 1 & 6 & $SO(10)$ only & No & 2 & 8 &    0.00\% \\ 
$(\langle 2,2\rangle,3,3,\widehat{8})$ & 1 & 18 & 2,4,6 & Yes & 6 & 116 &    0.00\% \\ 
$(\langle 4,4\rangle,8,\widehat{13})$ & 2 & 14 & 0,1,2,3,4,5,6 & Yes & 0 & 20 &    0.00\% \\ 
$(1,\widehat{2},\langle 2,2\rangle,2,4)$ & 2 & 24 & 3,5,6 & Yes & 0 & 1092 &    4.85\% \\ 
$(1,2,\langle 3,3\rangle,\widehat{58})$ & 2 & 12 & 0,1,2,3,4,5,6 & Yes & 0 & 11 &    0.00\% \\ 
$(1,\langle 3,3\rangle,4,\widehat{8})$ & 2 & 6 & 0,1,2,3,4,5,6 & Yes & 0 & 10 &    0.00\% \\ 
\end{longtable}
\end{center}

\LTcapwidth=14truecm
\begin{center}
\vskip .7truecm
\begin{longtable}{|c||c|c|c|c|c|c|c|}
\caption{{Results for B-L lifted (lift A) Gepner models}}\\
\hline
 \multicolumn{1}{|c||}{model}
& \multicolumn{1}{c|}{$\Delta$}
& \multicolumn{1}{l|}{Max. }
& \multicolumn{1}{c|}{Groups }
& \multicolumn{1}{l|}{Exotics }
& \multicolumn{1}{c|}{3 family}
& \multicolumn{1}{c|}{$N$ fam.}  
& \multicolumn{1}{c|}{Missing} \\ 
\hline
\endfirsthead
\multicolumn{8}{c}%
{{\bfseries \tablename\ \thetable{} {\rm-- continued from previous page}}} \\
\hline 
 \multicolumn{1}{|c||}{model}
& \multicolumn{1}{c|}{$\Delta$}
& \multicolumn{1}{l|}{Max. }
& \multicolumn{1}{c|}{Groups }
& \multicolumn{1}{l|}{Exotics}
& \multicolumn{1}{c|}{3 family}
& \multicolumn{1}{c|}{$N$ fam.}  
& \multicolumn{1}{c|}{Missing} \\ 
\hline
\endhead
\hline \multicolumn{8}{|r|}{{Continued on next page}} \\ \hline
\endfoot
\hline \hline
\endlastfoot\hline
\label{BLASummary}
$(1,2,\langle 3,3\rangle,58)$ & 1 & 6 & 0,1,2,3,4,5,6 & Yes & 4 & 33 &    0.00\% \\ 
$(1,\langle 3,3\rangle,4,8)$ & 2 & 6 & 0,1,2,3,4,5,6 & Yes & 0 & 12 &    0.00\% \\ 
$(2,2,\langle 3,3\rangle,8)$ & 2 & 8 & 2,4,6 & Yes & 0 & 25 &    0.00\% \\ 
$(2,\langle 8,8\rangle,18)$ & 1 & 12 & 2,4,6 & Yes & 14 & 90 &    0.00\% \\ 
$(\langle 2,2\rangle,2,3,18)$ & 2 & 18 & 2,4,6 & Yes & 0 & 364 &    3.85\% \\ 
$(3,\langle 6,6\rangle,18)$ & 2 & 14 & 2,4,6 & No & 0 & 84 &    0.00\% \\ 
$(3,\langle 5,5\rangle,68)$ & 6 & 6 & 4 & No & 0 & 12 &    0.00\% \\ 
$(3,\langle 8,8\rangle,8)$ & 1 & 30 & 2,4,6 & Yes & 238 & 1799 &    0.11\% \\ 
$(3,\langle 3,3\rangle,\langle 3,3\rangle)$ & 1 & 18 & 4 & No & 15 & 84 &    0.00\% \\ 
$(\langle 3,3\rangle,10,58)$ & 1 & 10 & 0,1,2,3,4,5,6 & Yes & 1 & 19 &    0.00\% \\ 
$(\langle 3,3\rangle,12,33)$ & 2 & 4 & 4 & No & 0 & 6 &    0.00\% \\ 
$(\langle 3,3\rangle,13,28)$ & 1 & 9 & 1,4,5 & Yes & 57 & 346 &    0.00\% \\ 
$(\langle 3,3\rangle,18,18)$ & 1 & 14 & 2,4,6 & Yes & 30 & 200 &    0.00\% \\ 
$(\langle 2,2\rangle,3,3,8)$ & 1 & 12 & 2,4,6 & Yes & 30 & 246 &    0.00\% \\ 
$(\langle 4,4\rangle,8,13)$ & 2 & 8 & 0,1,2,3,4,5,6 & Yes & 0 & 49 &    0.00\% \\ 
$(\langle 3,3\rangle,9,108)$ & 2 & 4 & 4 & No & 0 & 6 &    0.00\% \\ 
$(\langle 2,2\rangle,\langle 3,3\rangle,8)$ & 2 & 6 & 4 & No & 0 & 12 &    0.00\% \\ 
\end{longtable}
\end{center}

\LTcapwidth=14truecm
\begin{center}
\vskip .7truecm
\begin{longtable}{|c||c|c|c|c|c|c|c|}
\caption{{Results for B-L lifted (lift B) Gepner models}}\\
\hline
 \multicolumn{1}{|c||}{model}
& \multicolumn{1}{c|}{$\Delta$}
& \multicolumn{1}{l|}{Max. }
& \multicolumn{1}{c|}{Groups }
& \multicolumn{1}{l|}{Exotics }
& \multicolumn{1}{c|}{3 family}
& \multicolumn{1}{c|}{$N$ fam.}  
& \multicolumn{1}{c|}{Missing} \\ 
\hline
\endfirsthead
\multicolumn{8}{c}%
{{\bfseries \tablename\ \thetable{} {\rm-- continued from previous page}}} \\
\hline 
 \multicolumn{1}{|c||}{model}
& \multicolumn{1}{c|}{$\Delta$}
& \multicolumn{1}{l|}{Max. }
& \multicolumn{1}{c|}{Groups }
& \multicolumn{1}{l|}{Exotics}
& \multicolumn{1}{c|}{3 family}
& \multicolumn{1}{c|}{$N$ fam.}  
& \multicolumn{1}{c|}{Missing} \\ 
\hline
\endhead
\hline \multicolumn{8}{|r|}{{Continued on next page}} \\ \hline
\endfoot
\hline \hline
\endlastfoot\hline
\label{BLBSummary}
$(1,2,\langle 3,3\rangle,58)$ & 2 & 10 & 0,1,2,3,4,5,6 & Yes & 0 & 32 &    0.00\% \\ 
$(1,\langle 3,3\rangle,4,8)$ & 2 & 10 & 0,1,2,3,4,5,6 & Yes & 0 & 10 &    0.00\% \\ 
$(2,2,\langle 3,3\rangle,8)$ & 2 & 14 & 2,4,6 & Yes & 0 & 34 &    0.00\% \\ 
$(2,\langle 8,8\rangle,18)$ & 2 & 16 & 2,4,6 & Yes & 0 & 108 &    0.00\% \\ 
$(\langle 2,2\rangle,2,3,18)$ & 2 & 36 & 2,4,6 & Yes & 0 & 476 &    3.99\% \\ 
$(3,\langle 6,6\rangle,18)$ & 4 & 28 & 2,4,6 & No & 0 & 82 &    0.00\% \\ 
$(3,\langle 5,5\rangle,68)$ & 8 & 16 & 4 & No & 0 & 12 &    0.00\% \\ 
$(3,\langle 8,8\rangle,8)$ & 2 & 56 & 2,4,6 & Yes & 0 & 1781 &    0.00\% \\ 
$(3,\langle 3,3\rangle,\langle 3,3\rangle)$ & 2 & 32 & 4 & No & 0 & 81 &    0.00\% \\ 
$(\langle 3,3\rangle,10,58)$ & 2 & 18 & 0,1,2,3,4,5,6 & Yes & 0 & 18 &    0.00\% \\ 
$(\langle 3,3\rangle,12,33)$ & 2 & 8 & 4 & No & 0 & 6 &    0.00\% \\ 
$(\langle 3,3\rangle,13,28)$ & 2 & 18 & 1,4,5 & Yes & 0 & 322 &    0.00\% \\ 
$(\langle 3,3\rangle,18,18)$ & 2 & 26 & 2,4,6 & Yes & 0 & 191 &    0.00\% \\ 
$(\langle 2,2\rangle,3,3,8)$ & 2 & 24 & 2,4,6 & Yes & 0 & 226 &    0.00\% \\ 
$(\langle 4,4\rangle,8,13)$ & 4 & 16 & 0,1,2,3,4,5,6 & Yes & 0 & 45 &    0.00\% \\ 
$(\langle 3,3\rangle,9,108)$ & 2 & 10 & 4 & No & 0 & 6 &    0.00\% \\ 
$(\langle 2,2\rangle,\langle 3,3\rangle,8)$ & 2 & 10 & 4 & No & 0 & 10 &    0.00\% \\ 
\end{longtable}
\end{center}

\cleardoublepage

\pagestyle{plain}

\cleardoublepage


\cleardoublepage

\phantomsection
\addcontentsline{toc}{chapter}{Summary}
\chapter*{Summary}

{\flushright
{\small 
\textit{Had the routine of our life at this}\par
\textit{place been known to the world,}\par
\textit{we should have been regarded as madmen}\par
\textit{- although, perhaps, as madmen of a harmless nature.}\par
\textit{(E. A. Poe, The Murders In The Rue Morgue)}\par
}
}

\section*{Quantum Field Theory and the Standard Model}
The general lesson in Physics to be learned from the XX century is that the world at extremely large distances as well as at extremely short distances is not described anymore by the Galilean physics that we are used to since the XVI-XVII centuries. In these regimes, in fact, other effects become relevant, either of geometrical or of probabilistic origin.

Large distance physics is captured by Einstein's theory of gravity. Within this framework, the gravitational constant is still $G$, but  a few unifications occur: space and time are unified into a single entity, the space-time; mass and energy become equivalent; the speed of light $c$ becomes a universal constant, with the same value in any reference frame. Einstein's theory also tells us how matter propagates in curved space-times and at the same time how space-time is curved by matter. It is also used to study the motion of planets, various kinds of black holes and the universe itself.

Short distance physics on the other hand is governed by quantum mechanics. The unity of action is $\hbar$ and another unification occurs: particles are described as waves. Consequently, strange and classically-impossible phenomena can happen. For example, particles cannot be localized anymore at a given position, instead they only have a probability of being around that position and are in principle spread out through large domains with some probability distribution; similarly, one cannot give them a specific momentum, since also momenta follow probability distributions; they can tunnel through walls and move from one side to the other side of the barrier with some probability; they can also interfere and produce a fringe pattern for their probability densities. A generic state is then characterized by a superposition of pure states each coming with a given probability. Determinism is lost and replaced by probabilism, and classical results are obtained as expectation values.

When very many (ideally, infinitely many) particles start interacting at microscopic scales, quantum mechanics is upgraded to Quantum Field Theory. In this framework, fields are the fundamental objects and particles are excitations of fields. For example, in this way, one thinks of photons as excitations of the electro-magnetic field. Interactions are described perturbatively in terms of higher order corrections, that can also be visualized pictorially via Feynman diagrams, where forces are carried by intermediate bosonic particles, and it is possible to compute scattering amplitudes and probabilities for particular processes involving in principle any number of particles at a given order in perturbation theory.

When computing amplitudes, consistent cancellations of infinities appear order by order, yielding a finite answer for the event probabilities. However, this is completely true if and only if gravity is not included in the picture. Hence, quantum theories of electro-magnetism, weak and strong interactions all make perfect sense. In accelerator physics this is just what is needed. When gravity is taken into account, the infinities do not cancel anymore, instead divergent amplitudes appear and new counter-terms must be added at any order of the perturbation expansion. Hence a quantum theory of gravity based purely on quantum field theory does not make sense.

Luckily most of the times gravity can indeed be reasonably neglected. In fact, at microscopic scales its strength is so small compared to the other forces that it can be safely ignored. However, this is still quite unsatisfactory for several reasons. First of all, as a matter of principle, two theories, namely Einstein's gravity and quantum mechanics, which work perfectly well in their own regimes, respectively large and short distances, seem to be incompatible with each other. Secondly, there exist instances where in order to fully understand physics, both theories together must be used. One example is a black hole, where gravity is strong and the matter is localized in a very small region around the singularity; another example is the universe in its first moments, when it was an extremely dense plasma of matter and radiation with dominant quantum effects.

We will come back to this problem later, but for the moment let us ignore gravity and discuss the current theory of particle physics tested everyday in accelerators (energy $\sim$ TeV). It is known as the Standard Model of particle physics (figure\footnote{Figure taken from the website $http://en.wikipedia.org/wiki/Standard\_Model$.} \ref{Standard_Model_of_Elementary_Particles}).

\begin{figure} [ht]
\centering
\includegraphics[scale=1]{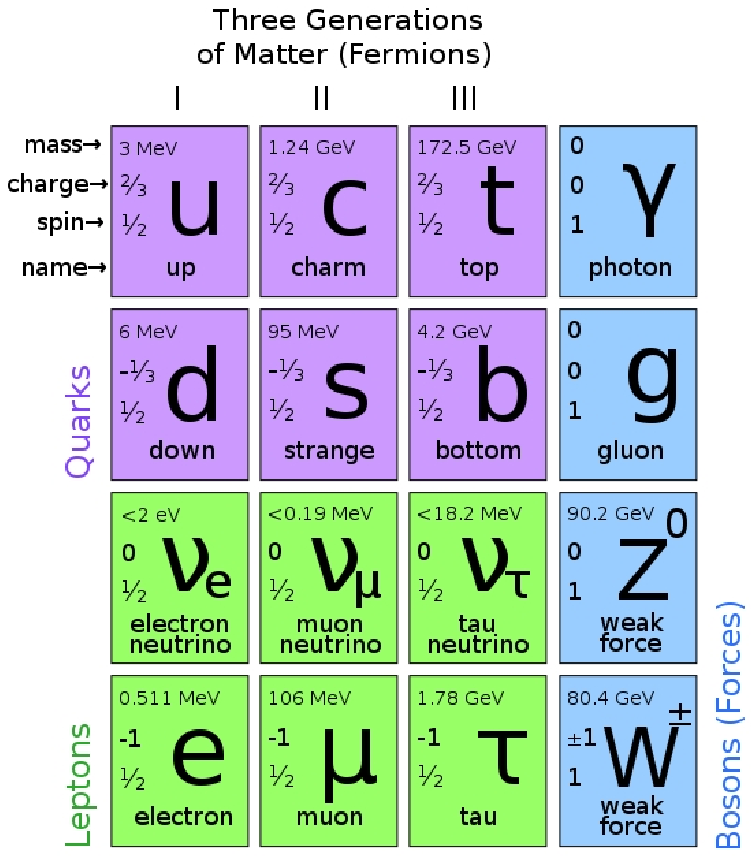}
\caption{Standard Model of particle physics.}
\label{Standard_Model_of_Elementary_Particles}
\end{figure} 

Fields arise as representations of particular Lie (gauge) groups. The Standard Model (gauge) group is 
$SU(3)\times SU(2)\times U(1)$, where the $SU(3)$ factor refers to the strong interactions and the
$SU(2)\times U(1)$ to the unified electro-weak interactions. Matter fields are fermions (quarks and leptons), while the force-mediating fields are (gauge) bosons (gluons, photon, $Z^0$, $W^{\pm}$).
Quarks interact strongly, leptons weakly, but both of them appear in three generations (families). 
Still missing in the picture is an additional particle, the Higgs boson, which is needed to explain the origin of mass for all the other elementary particles. 

The Standard Model is a very good theory in describing elementary particle physics. However, it still presents some obvious problem. First of all, gravity is not included. Secondly, all the couplings in the theory are free and can in principle have any value. Hence, one would like to have a quantum theory that at the same time can describe gravity and justify the value of the Standard Model coupling constants, maybe as vacuum expectation value of more fundamental fields. A candidate for such a theory exists and it is String Theory.

\section*{String Theory and Conformal Field Theory}
The idea behind the theory is very simple: elementary particles are not point-like, instead they are small oscillating one-dimensional strings which look zero-dimensional when observed from a distance. 
The theory was born initially as an attempt to describe phenomena such as flux tubes in strong interactions. It was then abandoned, due to the appearance in the spectrum of a spin-two particle which had been never observed and the simultaneous advent of the $SU(3)$ gauge theory (QCD) which was very successful since the very beginning in describing strong interactions. 
The renaissance of the theory arrived in the eighties, after reinterpreting the spin-two particle as the graviton. In this way String Theory became a theory of gravity. In the nineties, more ingredients were added to the framework, in particular branes, higher dimensional surfaces inside a ten-dimensional space-time.

Strings come in two types: open and closed. 
Open strings are characterized by the fact that they have endpoints. However, the endpoints are not free to move in space, but must be attached to branes. The reason for this is charge conservation: strings carry charges and the charge cannot simply disappear when it reaches the endpoints. The quantization of a string implies the existence of a set of creation and annihilation operators that are used to construct the Hilbert space of states out of the vacuum. Each excited state corresponds to a particle with given mass, charge and spin. Generically, among these states, one finds the gauge bosons.
Closed strings have no endpoints and are free to move inside the whole ten-dimensional space. The quantization implies the existence of two sets of creation and annihilation operators, since both left-movers and right-movers can propagate through the loop. Generically, among the exited states, one finds the spin-two particle that is interpreted as the graviton.

A string moving in space-time sweeps out a two-dimensional surface, the world-sheet. The field theory defined on the world-sheet is conformal, namely admits additional symmetries. Conformal Field Theories in two dimensions are very special, since the symmetry group is infinite dimensional and makes it possible to exactly solve them. The main ingredients of a Conformal Field Theory are the conformal fields, called primary fields. Theories with a finite number of primaries are called rational. By acting with the symmetry generators on the primaries, one builds the whole Hilbert space corresponding to a given representation. All the information about a given Hilbert space is summarized into character functions, which are then used to write down modular invariant partition functions and hence generate particle spectra. A partition function tells us how left-movers are coupled to right-movers.

Conformal Field Theories are used in String Theory in many places. The use that we have focused on here is the construction of four-dimensional string theories and corresponding particle spectra. The standard way of constructing such theories is to start with a four-dimensional theory and add an internal sector that takes care of the extra dimensions. The internal sector must have very special properties, that are guaranteed by carefully choosing the building blocks and by imposing specific projections. 

The full power of a Conformal Field Theory shows itself in the production of a huge numbers of partition functions. Each of them correspond to a particle spectrum with features that vary from one to another: generically, they will all have different predictions about the number of families and gauge groups, and furthermore they will admit fractionally-charged particles. The abundance of spectra makes it impossible to pick the right one: there is no reason and no selection principle why we would have to live in a Standard-Model-like world as we observe it. In particular, from the study of the family-number distributions that one gets with these methods it appears that the number three is more disfavored than other small numbers, for example two or four. One could then ask why do we not observe two or four families, instead of three; or maybe our tools are still too primitive to produce three-family models in abundance. Probably the landscape of possible four-dimensional vacua of String Theory is too big, it will never be fully explored and we will never know; or maybe in the long term it will be possible to find all the solutions and get a complete understanding of the matter, but when, and if, this will happen, we will not be around to taste the end.

\cleardoublepage

\phantomsection
\addcontentsline{toc}{chapter}{Samenvatting}
\chapter*{Samenvatting}

{\flushright
{\small 
\textit{And I find it kind of funny, I find it kind of sad,}\par
\textit{the dreams in which I'm dying are the best I've ever had.}\par
\textit{(R. Orzabal, Mad World)}\par
}
}

\section*{Snaartheorie ontmoet de echte wereld}
Onze huidige kennis van de wereld op zijn fundamentele niveau dateert van de jaren zeventig toen het \textit{Standaard Model van de deeltjesfysica} werd gebouwd, met behulp van quantum velden theorie, als een fusie tussen de twee belangrijkste ontdekkingen van de vorige eeuw, namelijk kwantummechanica en de relativiteitstheorie. Het Standaard Model is de best werkende theorie van de deeltjesfysica die we op dit moment hebben. Niet alleen omvat het elementaire deeltjes, zoals elektronen, en de fundamentele krachten, zoals de elektro-magnetische kracht, in een uiterst elegante wiskundige formulering, maar het voldoet ook aan de experimenten op een ongelooflijk hoog niveau van precisie. Daarom heeft het een zeer sterke voorspellende kracht.

Het Standaard Model beschrijft echter \textit{niet} de zwaartekracht. De pogingen van de afgelopen jaren 
om het Standaard Model uit te breiden door ook de zwaartekracht op te nemen zijn allemaal jammerlijk mislukt. Een compleet nieuwe en meer fundamentele theorie is nodig, die \textit{zowel} het Standaard Model als limiet moet bevatten \textit{en} het moet veralgemeniseren om de zwaartekracht op te nemen. Een kandidaat bestaat en heet \textit{Snaartheorie}.

Snaartheorie werd in de jaren zeventig geboren. Het idee erachter is dat elementaire deeltjes niet puntvormig zijn, maar snaarvormig: het zijn heel kleine filamenten, die in de ruimte bewegen en trillen. Tijdens het verplaatsen bestrijken ze een twee-dimensionaal oppervlak in ruimte en tijd, het zogenaamde wereldoppervlak.

In haar eerste formulering bleek Snaartheorie te veel problemen te hebben, waarbij de lastigste van allemaal de voorspelling van extra dimensies is: precies zes meer dan we in het dagelijks leven observeren. De uitdaging was vervolgens uit te leggen waarom we een wereld met vier dimensies (drie ruimtelijke dimensies plus tijd) ervaren, terwijl de theorie tien voorspelt. Het antwoord op deze vraag staat bekend als \textit{compactificatie}.

Door middel van een oud mechanisme uit het begin van de vorige eeuw kon dit probleem worden aangepakt: de reden waarom we de zes extra dimensies niet zien is omdat ze zo ``klein'' zijn  dat het onmogelijk is om ze met versnellers waar te nemen. Hun wiskundige structuur is onderhevig aan verschillende technische beperkingen en is in het algemeen zeer gecompliceerd. Er zijn er veel mogelijke keuzes, en elk van hen geeft een totaal andere vier-dimensionale fysica.

Naast alle mogelijke constructies waarin de zes dimensies een direct meetkundige interpretatie hebben, zijn er andere wiskundige manieren om een hoger-dimensionale theorie tot vier dimensies te compactificeren. Helaas staan deze methodes niet altijd een geometrische interpretatie toe als compactificatieruimte en zijn er zeer abstracte concepten mee gemoeid. In ons onderzoek gebruiken we juist deze concepten die nauwelijks kunnen worden gevisualiseerd, maar zorgen voor zeer krachtige gereedschappen om vier-dimensionale modellen van de echte wereld vanuit een tien-dimensionale Snaartheorie te bouwen.

De bouw van deze vier-dimensionale modellen gaat als volgt. Men begint met een vier-dimensionale Snaartheorie en, omwille van de consistentie, voegt men er een \textit{interne sector} met specifieke eigenschappen aan toe. In de interne sector zal rekening moeten worden gehouden met de vrijheidsgraden uit de zes extra dimensies die men in het begin verwaarloosd heeft.

De interne theorie is verantwoordelijk voor de fysica die we in vier dimensies observeren. Ondanks haar ingewikkeldheid is zij onder goede controle, met name dankzij haar symmetrie$\ddot{e}$n. Manipulaties van de bouwstenen staan ons toe om een ​​groot aantal fenomenologisch aantrekkelijke modellen te produceren. Sommigen van hen hebben eigenschappen die heel dicht bij het Standaard Model van de deeltjesfysica liggen. Met ons onderzoek bouwen we dergelijke modellen. De meerderheid van hen is echter heel verschillend van wat we ervaren: er worden extra deeltjes verwacht en nieuwe symmetrie$\ddot{e}$n van de natuur voorspeld. Dit roept onmiddellijk een andere vraag op: als Snaartheorie correct is, waarom leven we in deze bijzondere wereld met zijn eigenaardige deeltjes en symmetrie$\ddot{e}$n, terwijl er vele andere werelden mogelijk zijn in dit \textit{landschap}? Het doel van ons onderzoek is ook om deze vraag te proberen te beantwoorden. Het woord landschap hier wordt vaak gebruikt in de literatuur om naar een dergelijk groot aantal mogelijkheden te verwijzen.

Het hoofdthema van dit proefschrift is het besturen van permutatiesymmetrie$\ddot{e}$n van de bouwstenen in de interne theorie te bestuderen. Dit project heeft een wiskundig en een natuurkundig aspect. De wiskundige kant omvat de definitie van het probleem en de oplossing voor zeer technische (en niet-triviale) kwesties. We konden met succes een cruciale formule ontdekken die geldig is in veel belangrijke situaties. Deze wees de weg naar de meest generieke oplossing. Vanuit fysisch oogpunt hebben we de bovengenoemde formule toegepast op de bouwstenen van de interne theorie. Daarna hebben wij deze resultaten toegepast op snaarfenomenologie om vier-dimensionale modellen te bouwen.

Snaartheorie is een spannend gebied en het is nu het juiste moment voor conceptuele doorbraken, die nodig zijn om het hele gebied vooruit te brengen en misschien voorspellingen te geven. Precies nu zal de LHC in Geneve, de grootste deeltjesversneller ooit, de eerste resultaten produceren en inzicht geven in fysica voorbij het Standaard Model. We zullen misschien binnenkort erachter komen of nieuwe symmetrie$\ddot{e}$n, deeltjes en extra dimensies werkelijk bestaan.

\cleardoublepage

\phantomsection
\addcontentsline{toc}{chapter}{Curriculum Vit\ae}
\chapter*{Curriculum Vit\ae}

{\flushright
{\small 
\textit{I am what I am.}\par
\textit{-M.}\par
}
}

Michele Maio was born in Avellino (Italy) on $22^{\rm nd}$ March 1981. He lived in the nearby village of Montella until the age of 19, where he was educated, and then moved to Bologna, where he studied Physics at the University. He received his \textit{Laurea} degree in 2005 with a thesis on black hole radiation. In the following September, Michele moved to the Netherlands and attended a two-year master's programme at the University of Amsterdam, with a scholarship from the Institute of Theoretical Physics. Here he was introduced to the subject of string theory. In December 2007, he started his Ph.D. at the National Institute for Subatomic Physics (Nikhef, Amsterdam), studying topics in (conformal) field theories and string theory, and formally graduating in October 2011. In June of the same year, he started a junior postdoc position at Nikhef.


\end{document}